\documentclass[11pt]{article}

\usepackage{float}
\usepackage[normalem]{ulem}
\usepackage[a4paper,margin=1in]{geometry}
\usepackage{amsmath,amssymb,amsthm,mathtools,mathrsfs}
\usepackage{xcolor}
\usepackage{enumitem}
\usepackage{booktabs}
\usepackage{tikz}
\usepackage{hyperref}
\usepackage{aliascnt}
\usepackage[nameinlink,capitalize]{cleveref}

\usetikzlibrary{arrows.meta,positioning,calc,fit}

\hypersetup{
  colorlinks=true,
  linkcolor=blue,
  citecolor=blue,
  urlcolor=blue
}

\newtheorem{theorem}{Theorem}
\newtheorem*{informaltheorem}{Theorem}
\newaliascnt{lemma}{theorem}
\newtheorem{lemma}[lemma]{Lemma}
\aliascntresetthe{lemma}
\newaliascnt{proposition}{theorem}
\newtheorem{proposition}[proposition]{Proposition}
\aliascntresetthe{proposition}
\newaliascnt{corollary}{theorem}
\newtheorem{corollary}[corollary]{Corollary}
\aliascntresetthe{corollary}
\newaliascnt{assumption}{theorem}

\aliascntresetthe{assumption}

\crefname{proofinput}{model condition}{model conditions}
\Crefname{proofinput}{Model Condition}{Model Conditions}
\theoremstyle{definition}
\newaliascnt{definition}{theorem}
\newtheorem{definition}[definition]{Definition}
\aliascntresetthe{definition}
\newaliascnt{remark}{theorem}
\newtheorem{remark}[remark]{Remark}
\aliascntresetthe{remark}

\newcommand{\C}{\mathbb C}

\newcommand{\ran}{\operatorname{ran}}

\newcommand{\gap}{\operatorname{gap}}

\newcommand{\cB}{\mathcal B}

\newcommand{\cG}{\mathcal G}
\newcommand{\cH}{\mathcal H}

\definecolor{figureInterface}{HTML}{008779}
\definecolor{figureComplement}{HTML}{7955A4}
\definecolor{figureCoupling}{HTML}{C45D43}
\definecolor{figureGeometry}{HTML}{287DAD}
\definecolor{figureEstimate}{HTML}{347C9D}
\definecolor{figureMechanism}{HTML}{70548C}
\definecolor{figureAmber}{HTML}{D08B2D}
\definecolor{figurePink}{HTML}{B45182}

\usepackage{needspace}
\usepackage{tabularx}
\usepackage{placeins}
\usepackage{authblk}

\usepackage[most]{tcolorbox}
\renewenvironment{quote}
  {\list{}{\leftmargin=1em\rightmargin=1em}\item\relax}
  {\endlist}
\newcolumntype{Y}{>{\raggedright\arraybackslash}X}

\newcommand{\PathFam}{\ensuremath{\mathscr P}}

\newif\ifcoauthorchecklist
\coauthorchecklisttrue
\theoremstyle{plain}
\newaliascnt{claim}{theorem} 
\newtheorem{claim}[claim]{Claim}
\aliascntresetthe{claim}
\crefname{claim}{claim}{claims}
\Crefname{claim}{Claim}{Claims}
\theoremstyle{definition}
\newtcolorbox{examplebox}[1]{
  enhanced,
  breakable,
  colback=blue!2,
  colbacktitle=blue!8,
  colframe=blue!35!black,
  coltitle=black,
  boxrule=0.45pt,
  arc=1pt,
  left=6pt,
  right=6pt,
  top=5pt,
  bottom=5pt,
  before skip=8pt,
  after skip=8pt,
  fonttitle=\bfseries,
  title={Example: #1}
}

\title{Double Localization for Quantum Gibbs Sampler Gaps:\\
From an Abstract Framework to Finite-Group Models}

\author[1]{Ryu Hayakawa
\thanks{Email: \texttt{ryu.hayakawa@yukawa.kyoto-u.ac.jp}}}
\author[2]{Angus Southwell}
\author[2,4]{Caesnan M. G. Leditto}
\author[3]{\authorcr Kuo-Chin Chen}
\author[3]{Min-Hsiu Hsieh
\thanks{Email: \texttt{min-hsiu.hsieh@foxconn.com}}}
\affil[1]{Yukawa Institute for Theoretical Physics \& The Hakubi Center, Kyoto University, Japan}
\affil[2]{School of Physics and Astronomy, Monash University, Clayton, VIC 3168, Australia}
\affil[3]{Hon-Hai Research Institute, Taipei, Taiwan}
\affil[4]{Faculty of Science and Computer, Universitas Kristen Immanuel, Yogyakarta, Indonesia}

\date{}

\begin{document}

\noindent
\hspace{\fill} YITP-26-129
\begingroup
\let\newpage\relax
\maketitle
\endgroup

\vspace{-1em}
\begin{abstract}
We introduce \emph{double localization}, a framework for proving spectral gaps of quantum Gibbs samplers through two complementary operations: \emph{geometric localization}, which selects updates supported in small spatial regions, and \emph{interface localization}, which focuses on a model-defined subspace of observables while remaining geometrically global. Our abstract gap theorem combines a global bound on this subspace, called the \emph{interface}, with local quantum gap estimates under quantitative coupling and geometric assembly conditions. The resulting explicit lower bound controls the gap of the full dynamics without requiring the interface to be invariant under the generator. This separation allows global estimates, including classical comparison, to be combined with local control of the remaining quantum directions.

We apply the framework to Hamiltonians built from the vertex terms of Kitaev's finite-group quantum double construction, without plaquette terms, on finite simple three-regular graphs of girth at least six. For the specified local Davies dynamics, we prove an unconditional spectral-gap lower bound by a positive constant independent of graph size for every fixed nontrivial finite group and each fixed inverse temperature $\beta$ with $0\le\beta J<\log(5/3)$, where $J>0$ is the coupling strength. For the smallest non-Abelian group $S_3$, double localization yields a positive lower bound uniform in both graph size and the entire interval $0\le\beta J\le\log 3$, providing a guarantee that does not follow directly from existing results.
This demonstrates how model-specific finite calculations can extend spectral-gap guarantees for the full quantum dynamics.
\end{abstract}

\section{Introduction}
\label{sec:new-introduction}

{
How quickly does a quantum many-body system approach thermal equilibrium?
Answering this question is crucial to understanding thermalization in many-body physics and to designing Gibbs-state preparation algorithms for quantum computers
\cite{kastoryano2016quantum,chen2023exact,rouze2024optimal}.
Beyond its natural role in many-body physics \cite{temme2011metropolisintro}, quantum Gibbs-state preparation is a useful ingredient in algorithms for optimisation \cite{brandao2017sdpintro} and in quantum machine-learning models \cite{amin2018boltzmannintro}.
The computational study of thermal equilibrium also has close connections to quantum complexity theory \cite{bravyi2022thermalintro}.
These connections motivate the search for efficient methods of preparing Gibbs states for different classes of Hamiltonians.

A central approach to Gibbs-state preparation is to simulate dynamics that relaxes toward the target Gibbs state.
Such dynamics can be described by a Lindbladian, a linear map that generates the evolution of an open quantum system, analogous to the generator of a continuous-time Markov chain.
The Davies construction derives such generators from the weak-coupling limit of a system interacting with a thermal bath \cite{davies1974markovian}.
Alternatively, samplers can be designed for implementation on a quantum computer, as in the Chen--Kastoryano--Gily\'en (CKG) construction \cite{chen2023exact}.
These approaches motivate studying both the relaxation of specified physical dynamics and the design of efficiently implementable Gibbs samplers.

To quantify the rate of convergence to equilibrium, one studies the generator of the evolution, whose spectral gap provides a bound on that rate
\cite{temme2013lower,kastoryano2016quantum}.
Bounding this gap is thus key to quantifying thermal relaxation and establishing guarantees for Gibbs-state preparation.
Inverse-polynomial gap bounds can support polynomial-time preparation guarantees, while a size-independent lower bound gives stronger control of relaxation as the system grows.

Determining the spectral gap and relaxation rate of a quantum many-body system remains a major challenge, even when its dynamics consists of local updates.
Even for commuting Hamiltonians, commutativity alone does not provide a useful bound on the relaxation rate.
A variety of methods have advanced our understanding of relaxation in such locally updated systems, including approaches based on clustering, tensor networks, and quantum-to-classical comparison \cite{kastoryano2016quantum,lucia2023thermalization,basso2025quantum}.
However, verifying the conditions required by these methods can itself be difficult for a given many-body system.
Overcoming this difficulty and developing methods that extend quantitative relaxation bounds to a broader range of systems are therefore urgent tasks.
\par}
Our starting point is two observations suggested by existing approaches to spectral-gap bounds: one concerns spatial locality, and the other concerns model-specific structure.
First, spatial locality can make a global relaxation problem accessible through estimates on smaller regions. Clustering and tensorization methods control dependence between regions and use this control to combine regional estimates into global spectral-gap or mixing bounds \cite{kastoryano2016quantum,bardet2022approximate}. This perspective is illustrated in the left panel of \Cref{fig:intro-prior-perspectives}.
Second, model-specific structure can make classical methods useful for quantum problems. For certain commuting Hamiltonians, Gibbs-state preparation can be reduced to classical sampling \cite{hwang2024gibbs,paez-velasco2025efficient}. 
For specified Davies dynamics, quantum-to-classical comparison can instead use classical gap estimates to bound the gap of the full quantum dynamics \cite{basso2025quantum}.
The right panel of \Cref{fig:intro-prior-perspectives} distinguishes these two routes: one concerns state preparation, while the other concerns the gap of a specified dynamical generator.

\begin{figure}[!t]
  \centering
  \resizebox{0.95\linewidth}{!}{
  \begingroup
\begin{tikzpicture}[font=\footnotesize,>=Stealth,
  flow/.style={->,black!55,line width=.8pt}]
  \path[use as bounding box] (0,0) rectangle (15.4,4.4);
  \node[font=\small\bfseries] at (3.35,4.3) {Geometric locality};
  \node[font=\small\bfseries,align=center] at (11.4,4.3)
    {Classical methods\\via model structure};
  \draw[gray!25] (7.05,.25)--(7.05,4.35);
  \begin{scope}[shift={(.45,1.35)}]
    \fill[figureInterface!14,rounded corners=3pt] (.15,-.18) rectangle (1.3,1.68);
    \fill[figureCoupling!16,rounded corners=3pt] (4.25,-.18) rectangle (5.4,1.68);
    \foreach \i in {0,...,11} \draw[gray!30] ({.5*\i},-.1)--({.5*\i},1.6);
    \foreach \j in {0,...,3} \draw[gray!30] (-.1,{.5*\j})--(5.6,{.5*\j});
    \foreach \i in {0,...,11} \foreach \j in {0,...,3}
      \fill[gray!55] ({.5*\i},{.5*\j}) circle (1.2pt);
    \node at (.725,1.95) {$A$};
    \node at (4.825,1.95) {$B$};
    \foreach \y in {.52,.78}
      \draw[figureCoupling!65,dashed,line width=.8pt] (1.35,\y)
        .. controls (2.2,{\y+.45}) and (3.3,{\y+.45}) .. (4.2,\y);
  \end{scope}
  \node[text=black!65] at (3.35,.78) {Control between regions};
  \node[text=black!65,font=\scriptsize] at (3.35,.38) {Regional dependence weakens with separation};
  \node[anchor=west,font=\footnotesize\bfseries] at (7.6,3.45) {Reduction / duality};
  \node[align=center] at (8.85,2.85) {Quantum Gibbs\\preparation};
  \draw[flow] (10.15,2.85)--(12.4,2.85);
  \node[align=center] at (13.85,2.85) {Classical\\sampling};
  \draw[gray!25] (7.6,2.18)--(15.2,2.18);
  \node[anchor=west,font=\footnotesize\bfseries] at (7.6,1.82) {Dynamical comparison};
  \filldraw[fill=figureInterface!9,draw=figureInterface!65,rounded corners=2pt]
    (7.35,.55) rectangle (10.35,1.4);
  \node[font=\scriptsize] at (8.85,1.08) {$V_0=\{X:[X,H]=0\}$};
  \node[font=\scriptsize,text=black!65] at (8.85,.76) {Sector gap};
  \draw[flow] (10.55,.97)--(12.4,.97);
  \node[font=\scriptsize,text=black!65,align=center] at (11.35,.39)
    {Spectral\\hypotheses};
  \node[align=center] at (13.85,.97) {Full Davies\\gap};
\end{tikzpicture}
\endgroup
  }
  \caption{Two perspectives motivating our approach. Left: clustering controls
  quantum dependence between regions \cite{kastoryano2016quantum,bardet2022approximate}.
  Right: reductions connect Gibbs preparation to classical sampling
  \cite{hwang2024gibbs,paez-velasco2025efficient}, whereas dynamical comparison
  lifts a sector gap under spectral hypotheses \cite{basso2025quantum}.
  }
  \label{fig:intro-prior-perspectives}
\end{figure}

These observations suggest combining the two perspectives within a unified framework. We therefore ask: 
\vspace{-0.5em}
\begin{quote}
\textbf{Can spatial locality and model-specific structure be combined to establish a global spectral gap under tractable conditions?}
\end{quote}

\noindent
Here, by ``tractable'' we mean conditions that are easier to verify than a direct estimate of the global spectral gap.
We address this question by developing a framework that combines these two perspectives and demonstrating its effectiveness in a concrete family of quantum many-body systems.

\subsection{Our Contributions}
\label{sec:intro-contributions}

Our main contributions are twofold. First, we prove an abstract spectral-gap
theorem based on a method we call \emph{double localization}: a combination
of (1) \emph{geometric localization} and (2) \emph{interface localization}, specified below.
This technique offers considerable flexibility in choosing both a subset of updates and a subspace of observables.
Consequently, deriving bounds on the spectral gap becomes more tractable.
Second, we apply this theorem to Hamiltonians built from the vertex terms of Kitaev's quantum double construction
\cite{kitaev2003faulttolerant}.
For the associated Davies dynamics,\footnote{Davies generators describe
the weak-coupling limit of a quantum system interacting with a thermal bath
and have the Gibbs state as a stationary state
\cite{davies1974markovian}.}
we establish a spectral gap bounded below by a positive constant
independent of system size over the temperature ranges specified below.
For the smallest non-Abelian group $S_3$, double localization extends
this guarantee into a temperature range beyond the reach of known
techniques. Although the vertex-term Hamiltonians provide a simplified
setting, their connection to quantum double models and fault-tolerant
quantum computation \cite{kitaev2003faulttolerant,lucia2023thermalization}
makes this result relevant to understanding thermal relaxation in
such systems.

\begin{figure}[!t]
  \centering
  \color{black}
  \resizebox{0.9\linewidth}{!}{%
    \begingroup
\colorlet{axisblue}{figureGeometry}
\colorlet{axisamber}{figureAmber}
\colorlet{axispink}{figurePink}
\colorlet{axisteal}{figureInterface}
\colorlet{axispurple}{figureComplement}
\colorlet{axiscoral}{figureCoupling}
\begin{tikzpicture}[font=\small,>=Stealth,
  flow/.style={->,black!55,line width=.8pt},
  split/.pic={
    \fill[axisteal!12,rounded corners=3pt] (0,0) rectangle (2,1.85);
    \fill[axispurple!13,rounded corners=3pt] (2.12,0) rectangle (4.24,1.85);
    \node[text=axisteal,font=\scriptsize\bfseries] at (1,1.48) {Interface};
    \node[text=axispurple,font=\scriptsize\bfseries] at (3.18,1.48) {Complement};
    \node[text=axisteal] at (1,.97) {$\mathcal M$};
    \node[text=axispurple] at (3.18,.97) {$\mathcal M^\perp$};
    \node[text=axisteal,font=\footnotesize] at (1,.48) {$A\blockindex$};
    \draw[<->,axiscoral,line width=.8pt] (1.52,.5) .. controls (1.82,.85) and (2.42,.85) .. (2.72,.5);
    \node[text=axiscoral,font=\footnotesize] at (2.12,.2) {Coupling};
  }]
  \path[use as bounding box] (0,-.1) rectangle (15.4,7.5);
  \begin{scope}[shift={(.3,4.1)}]
    \foreach \i in {0,...,10} \draw[gray!35] ({.64*\i},0)--({.64*\i},3.2);
    \foreach \j in {0,...,5} \draw[gray!35] (0,{.64*\j})--(6.4,{.64*\j});
    \foreach \i in {0,...,10} \foreach \j in {0,...,5}
      \fill[gray!55] ({.64*\i},{.64*\j}) circle (1.4pt);
    \draw[axisblue,line width=2.2pt,line join=round] (1.28,1.92)--(1.92,1.92)--(2.56,1.92)--(2.56,1.28);
    \draw[axisamber,line width=2.2pt,line join=round] (2.61,1.92)--(2.61,1.28)--(3.25,1.28)--(3.89,1.28);
    \draw[axispink,line width=2.2pt,line join=round] (3.2,1.23)--(3.84,1.23)--(3.84,.59)--(4.48,.59);
    \node[text=axisblue] at (1.6,2.25) {$p_1$};
    \node[text=axisamber] at (3.2,1.64) {$p_2$};
    \node[text=axispink] at (4.25,.24) {$p_3$};
  \end{scope}
  \node[font=\small\bfseries] at (12.82,7.25) {Global operator $K$};
  \def\blockindex{}
  \pic at (10.7,4.75) {split};
  \draw[flow] (7.25,5.75)--(10.25,5.75);
  \node[align=center,text=black!65,font=\footnotesize] at (8.75,6.35) {Interface\\localization};
  \draw[flow] (3.5,3.75)--(3.5,2.65);
  \node[anchor=west,align=left,text=black!65,font=\footnotesize] at (3.8,3.2) {Geometric\\localization};
  \draw[flow] (12.82,4.3)--(12.82,2.65);
  \node[anchor=west,align=left,text=black!65,font=\footnotesize] at (13,3.35)
    {Same geometric\\localization};
  \node[font=\small\bfseries] at (3.5,2.2) {Overlapping update collections};
  \foreach \x/\col/\idx in {.7/axisblue/1,3/axisamber/2,5.3/axispink/3} {
    \begin{scope}[shift={(\x,1.25)}]
      \ifnum\idx=1
        \draw[\col,line width=1.7pt] (0,0)--(.5,0)--(1,0)--(1,-.5);
        \foreach \a/\b in {0/0,.5/0,1/0,1/-.5} \filldraw[fill=white,draw=\col] (\a,\b) circle (1.8pt);
      \else\ifnum\idx=2
        \draw[\col,line width=1.7pt] (0,0)--(0,-.5)--(.5,-.5)--(1,-.5);
        \foreach \a/\b in {0/0,0/-.5,.5/-.5,1/-.5} \filldraw[fill=white,draw=\col] (\a,\b) circle (1.8pt);
      \else
        \draw[\col,line width=1.7pt] (0,0)--(.5,0)--(.5,-.5)--(1,-.5);
        \foreach \a/\b in {0/0,.5/0,.5/-.5,1/-.5} \filldraw[fill=white,draw=\col] (\a,\b) circle (1.8pt);
      \fi\fi
      \node[text=\col] at (.5,-.95) {$p_{\idx}$};
    \end{scope}
  }
  \node[font=\small\bfseries] at (12.82,2.2) {Local operator $K_p$};
  \def\blockindex{_p}
  \pic at (10.7,0) {split};
  \draw[flow] (7.25,1)--(10.25,1);
  \node[align=center,text=black!65,font=\footnotesize] at (8.75,1.65) {Apply the\\same split};
\end{tikzpicture}
\endgroup%
  }
  \caption{Double localization. Geometric localization selects local updates,
  while interface localization focuses on a chosen subspace of observables.
  For each fixed $p$, the two operations commute.
  Here $A$ and $A_p$ denote the interface restrictions of $K$ and $K_p$,
  respectively, using the same projection onto $\mathcal M$.
  The lattice is schematic, not the graph of the finite-group application.}
  \label{fig:intro-two-axis}
\end{figure}

\subsubsection{Double localization and an abstract gap theorem}
\label{sec:intro-contribution-abstract}

We first introduce our double localization technique. Let $\mathcal H$ be the system Hilbert space and equip its observable
space $\mathcal B(\mathcal H)$ with an inner product. We consider a
positive operator $K$ with a specified decomposition
$K=\sum_{i\in\mathcal I}K_i,$
where each $K_i\succeq0$ acts only on a bounded set of sites\footnote{For the thermal dynamics studied
here, $K=-\mathcal L$, where $\mathcal L$ is the Lindbladian governing
the evolution of observables, $e^{t\mathcal L}=e^{-tK}$.
The inner product is chosen so that $K$ is self-adjoint
and positive.}.

(1) \emph{Geometric localization.}
Fix a finite family $\mathcal P$ of update collections
$p\subseteq\mathcal I$, each supported within a small spatial region,
and write $K_p:=\sum_{i\in p}K_i$ for $p\in\mathcal P$.
Geometric localization replaces $K$ by $K_p$, restricting attention
to updates in the corresponding region.
The collections in $\mathcal P$ may overlap, allowing the chosen regions to collectively capture global behavior while keeping local gap estimation tractable.

(2) \emph{Interface localization.}
Interface localization selects a model-dependent subspace $\mathcal M\subseteq\mathcal B(\mathcal H)$, called the \emph{interface}, with orthogonal projection $\Pi$ (in the chosen inner product), so that $\mathcal B(\mathcal H)=\mathcal M\oplus\mathcal M^\perp$.
This allows us to focus on a smaller class of observables in $\mathcal M$, while remaining geometrically global. For a suitable choice of interface $\mathcal M$, its observables may depend only on the outcomes in a preferred basis, rather than on coherences between basis states. In such cases, comparison with classical dynamics may yield a useful lower bound on the interface gap.

For each fixed $p$, the two localization operations commute, as illustrated in \Cref{fig:intro-two-axis}.

\begin{informaltheorem}[Abstract gap theorem, informal]
Suppose that the smallest positive eigenvalues of $A:=\left.\Pi K\Pi\right|_{\mathcal M}$ and of each $K_p$, $p\in\mathcal P$, have known positive lower bounds.
Under additional quantitative conditions on the coupling between $\mathcal M$ and $\mathcal M^\perp$ and on the assembly of local bounds,\footnote{
These conditions control the loss of dissipation due to coupling, ensure that the local operators cover complementary observables orthogonal to $\ker K$, bound their overlap, and ensure that the retained local interface bounds together control $A$.}
$K$ has a positive global spectral gap with an explicit lower bound in terms of these input values.
\end{informaltheorem}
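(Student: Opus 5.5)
The plan is to lower-bound the Rayleigh quotient $\langle X,KX\rangle/\|X\|^2$ uniformly over $X\perp\ker K$. Write $X=Y+Z$ with $Y=\Pi X\in\mathcal M$ and $Z=(1-\Pi)X\in\mathcal M^\perp$, and expand
\[
\langle X,KX\rangle=\langle Y,AY\rangle+2\,\mathrm{Re}\,\langle Y,\Pi K(1-\Pi)Z\rangle+\langle Z,KZ\rangle .
\]
We then control the three pieces separately. The diagonal interface term is bounded below by the assumed gap $\lambda_A$ of $A$, applied to the part of $Y$ orthogonal to $\ker A$. The coupling condition is taken in the form of a Cauchy--Schwarz bound relative to the dissipation, $|\langle Y,\Pi K(1-\Pi)Z\rangle|\le \langle Y,AY\rangle^{1/2}\langle Z,KZ\rangle^{1/2}\,\theta$ with $\theta<1$. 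Alternatively it can be phrased with an explicit constant $c$ times $\|Y\|\,\|Z\|$. Both phrasings come from the positivity of $K=K^{1/2}K^{1/2}$ and the quantitative hypothesis on the off-diagonal block.

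The complement term $\langle Z,KZ\rangle$ is handled by geometric localization. Since every $K_i\succeq0$, we have $K\ge \frac1{m}\sum_{p\in\mathcal P}K_p$, where $m$ is the maximal number of collections containing a given $i$; this is the overlap/assembly condition. For each $p$ we apply the local gap $\lambda_p$ of $K_p$ to the component of $Z$ orthogonal to $\ker K_p$. The covering condition guarantees that $\sum_p\|(1-P_{\ker K_p})Z\|^2\ge c_{\mathrm{cov}}\|Z\|^2$ for $Z\in\mathcal M^\perp$ with $X\perp\ker K$. This yields $\langle Z,KZ\rangle\ge \frac{c_{\mathrm{cov}}\min_p\lambda_p}{m}\|Z\|^2$. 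The retained local interface bounds are used in the same way: they show that the local operators restricted through $\Pi$ dominate a multiple of $A$. This lets us borrow part of $\sum_pK_p$ to absorb cross terms that are local in nature.

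Finally we combine the pieces as a $2\times2$ quadratic form in $(\|Y\|,\|Z\|)$. Its lower bound is the smallest eigenvalue of
\[
\begin{pmatrix}\lambda_A & -c\\ -c & \lambda_\perp\end{pmatrix},
\]
namely $\tfrac12\bigl(\lambda_A+\lambda_\perp-\sqrt{(\lambda_A-\lambda_\perp)^2+4c^2}\bigr)$. This quantity is positive whenever $c^2<\lambda_A\lambda_\perp$, which is exactly the role of the quantitative coupling condition. We must also check that $\ker K$ is compatible with this decomposition, so that $X\perp\ker K$ forces $Y\perp\ker A$ up to a controlled error. This is where the model-specific assumption that $\ker K\subseteq\mathcal M$ (or an analogous alignment) enters.

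The main obstacle is that $\mathcal M$ is not invariant under $K$. As a result, $\ker K$, $\ker A$ and the local kernels $\ker K_p$ need not align, and the cross term cannot be discarded. My first attempt would be to work entirely with the Dirichlet form and use $\|K^{1/2}(1-\Pi)\|$-type bounds, so that the coupling is always charged against dissipation rather than against norms. If that fails, I would fall back on a Schur-complement estimate $K\ge(A-B D^{-1}B^*)\oplus 0+\ldots$, with $D$ the compression to $\mathcal M^\perp$ bounded below by the local assembly.
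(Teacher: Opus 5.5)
\textbf{Genuine gap: the coupling bound.} Your skeleton matches the paper's. You split along $\Pi\oplus\Pi^\perp$, bound $T=\Pi^\perp K\Pi^\perp$ below using the local gaps, the cover condition and $\sum_pK_p\preceq r_KK$, and finish with a two-block estimate. The $T$ step is exactly the paper's derivation of $T\succeq\tau\Pi^\perp$ with $\tau=g_{\mathrm{loc}}\kappa_{\mathrm{cov}}/r_K$. The gap is in the coupling term, which is where the theorem's content lies.

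You take $|\langle Y,\Pi K\Pi^\perp Z\rangle|\le\theta\langle Y,AY\rangle^{1/2}\langle Z,TZ\rangle^{1/2}$ with $\theta<1$ as coming from positivity of $K$ plus a hypothesis on the off-diagonal block. Positivity only gives $\theta\le1$. For example, take $K=\begin{pmatrix}1&-1\\-1&1\end{pmatrix}$ with $\Pi$ the first coordinate: then $A=T=1$, $\theta=1$, and $K$ has a kernel. A uniform $\theta<1$ is equivalent to the global Schur margin $\mathcal S_\Pi(K)=A-B^\dagger T^{-1}B\succeq(1-\theta^2)A$. Assuming it therefore assumes the global statement rather than deriving it from local data.

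Your sentence about ``local operators restricted through $\Pi$'' dominating $A$ does not supply it, because the compressions $\Pi K_p\Pi$ ignore the off-diagonal blocks and carry no information about coupling. The missing idea has two parts:
\begin{enumerate}
\item Impose the local coupling hypothesis on the shorted operators: $\mathcal S_\Pi(K_p)\succeq c_{\mathrm{Sch}}C_p$, together with $\sum_pC_p\succeq r_AA$.
\item Assemble these using that shorting is homogeneous, monotone and superadditive, all immediate from $\langle x,\mathcal S_\Pi(G)x\rangle=\min_y\langle x+y,G(x+y)\rangle$. This gives $r_K\mathcal S_\Pi(K)\succeq\mathcal S_\Pi(\sum_pK_p)\succeq\sum_p\mathcal S_\Pi(K_p)\succeq c_{\mathrm{Sch}}r_AA$.
\end{enumerate}
This produces your $\theta$ as $\sqrt{1-\eta}$ with $\eta=\min\{c_{\mathrm{Sch}}r_A/r_K,1\}$. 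Your Schur-complement fallback is the right object, but it has to be certified path by path and summed in this way.

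\textbf{The final combination.} Do not combine in the variables $(\|Y\|,\|Z\|)$. A norm bound $c\|Y\|\|Z\|$ on the cross term requires $c^2<\lambda_A\lambda_\perp$, i.e.\ $\|B\|^2<a\tau$. Nothing in the hypotheses provides this, and it fails whenever the off-diagonal block is not small compared with $\sqrt{a\tau}$. With the energy-relative form you get directly $\langle X,KX\rangle\ge(1-\theta)(\langle Y,AY\rangle+\langle Z,TZ\rangle)\ge(1-\theta)\min\{a,\tau\}\|X\|^2$. This is the paper's bound $(1-\sqrt{1-\eta})\min\{a,\tau\}$.

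\textbf{The kernel-alignment step.} This step is unnecessary, and as formulated it does not work. $\Pi$ must be a fixed projection on a fixed space, so you cannot restrict to the unknown $(\ker K)^\perp$ or invoke $\ker K\subseteq\mathcal M$. The paper instead works on a $K$-invariant space from which only the known stationary direction has been removed (the mean-zero space), and assumes $A\succeq a\Pi$ there. The conclusion $K\succeq cI$ then itself shows that no further kernel exists.
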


The precise assumptions and quantitative bound are given in \cref{thm:abstract-two-axis-gap-assembly}.
The proof combines the interface
gap bound for $A$ with the local gap bounds for $K_p$, using the coupling
and assembly conditions to obtain a gap for $K$, as illustrated in
\Cref{fig:intro-assembly}.

\begin{figure}[htbp]
  \centering
  \begingroup
\resizebox{\linewidth}{!}{%
\begin{tikzpicture}[font=\footnotesize,>=Stealth,
  flow/.style={->,black!55,line width=.8pt}]
  \path[use as bounding box] (0,0) rectangle (15.4,5.5);
  \node[anchor=west,font=\small\bfseries] at (0,5.2)
    {\textcircled{\scriptsize 1}\quad Global interface control};
  \node[anchor=west,text=black!65] at (0,4.77)
    {Classical comparison on the whole system};
  \node at (.3,3.65) {$K=$};
  \filldraw[fill=figureInterface!12,draw=figureInterface] (.7,3.65) rectangle (2.8,4.4);
  \filldraw[fill=figureCoupling!5,draw=gray!45] (2.8,3.65) rectangle (3.65,4.4);
  \filldraw[fill=figureCoupling!5,draw=gray!45] (.7,2.9) rectangle (2.8,3.65);
  \filldraw[fill=figureComplement!7,draw=gray!45] (2.8,2.9) rectangle (3.65,3.65);
  \node[text=figureInterface] at (1.75,4.14) {$A=\Pi K\Pi$};
  \node[font=\tiny] at (1.75,3.83) {Global interface};
  \node[text=figureCoupling!65] at (3.225,4.025) {$-B^\dagger$};
  \node[text=figureCoupling!65] at (1.75,3.275) {$-B$};
  \node[text=figureComplement!65] at (3.225,3.275) {$T$};
  \draw[flow] (3.85,3.65)--(4.5,3.65);
  \filldraw[fill=figureInterface!12,draw=figureInterface,rounded corners=2pt]
    (4.7,3.25) rectangle (8.25,4.05);
  \node at (6.475,3.65) {Global interface gap};
  \draw[flow] (8.5,3.65)--(8.95,3.65)
    .. controls (9.85,3.65) and (9.85,3.4) .. (9.85,2.82);
  \draw[gray!25] (0,2.65)--(8.3,2.65);
  \node[anchor=west,font=\small\bfseries] at (0,2.25)
    {\textcircled{\scriptsize 2}\quad Local quantum control};
  \begin{scope}[shift={(.4,.95)},scale=.9]
    \foreach \i in {0,...,6} \draw[gray!35] ({.48*\i},0)--({.48*\i},.96);
    \foreach \j in {0,...,2} \draw[gray!35] (0,{.48*\j})--(2.88,{.48*\j});
    \draw[figureGeometry,line width=.7pt,rounded corners=1.5pt,fill=none]
      (-.10,.86)--(.86,.86)--(.86,.38)--(1.54,.38)
      --(1.54,.58)--(1.06,.58)--(1.06,1.06)--(-.10,1.06)--cycle;
    \draw[figureAmber,line width=.7pt,rounded corners=1.5pt,fill=none]
      (.90,.42)--(1.86,.42)--(1.86,-.06)--(2.46,-.06)
      --(2.46,.06)--(1.98,.06)--(1.98,.54)--(.90,.54)--cycle;
    \foreach \i in {0,...,6} \foreach \j in {0,...,2}
      \fill[gray!60] ({.48*\i},{.48*\j}) circle (1.4pt);
    \foreach \x in {.96,1.44}
      \fill[red] (\x,.48) circle (1.8pt);
  \end{scope}
  \node[text=black!65] at (1.75,.55) {Overlapping update collections};
  \draw[flow] (3.85,1.4)--(4.5,1.4);
  \filldraw[fill=figureInterface!12,draw=figureInterface,rounded corners=2pt]
    (4.8,1.05) rectangle (6.35,1.85);
  \filldraw[fill=figureComplement!12,draw=figureComplement,rounded corners=2pt]
    (6.4,1.05) rectangle (8,1.85);
  \node[font=\scriptsize] at (5.575,1.58) {Interface};
  \node[font=\scriptsize] at (7.2,1.58) {Remaining};
  \draw[<->,figureCoupling,line width=.8pt] (6.05,1.17)
    .. controls (6.25,1.43) and (6.5,1.43) .. (6.7,1.17);
  \node[text=black!65,font=\scriptsize] at (6.475,.64) {Remaining directions + coupling};
  \node[text=black!65,font=\scriptsize] at (6.475,.27) {Controlled within each collection};
  \draw[flow] (8.5,1.4)--(8.95,1.4)
    .. controls (9.85,1.4) and (9.85,1.6) .. (9.85,2.02);
  \node[font=\small\bfseries] at (9.85,4.35)
    {\textcircled{\scriptsize 3}\quad Assembly};
  \filldraw[fill=gray!8,draw=gray!65] (9.85,2.42) circle (.34);
  \node[font=\Large] at (9.85,2.42) {$+$};
  \node[text=black!65,align=center] at (9.85,.65) {Cover + overlap\\bounds};
  \draw[gray!60,dashed] (9.85,1.13)--(9.85,1.9);
  \draw[flow] (10.35,2.42)--(11.1,2.42);
  \filldraw[fill=figureEstimate!6,draw=figureEstimate,rounded corners=2pt]
    (11.35,1.65) rectangle (15.3,3.19);
  \node[font=\small\bfseries] at (13.325,2.83) {Full quantum gap};
  \node[text=black!65] at (13.325,2.36) {All mean-zero observables};
  \node[text=black!65] at (13.325,1.95) {Uniform in system size};
\end{tikzpicture}%
}
\endgroup
  \caption{Assembling the two estimates.  The block $A=\Pi K\Pi$ is the
  global interface problem, estimated here by classical comparison.
  The remaining quantum directions and their coupling are controlled on
  bounded collections of updates.  Fixed-space coverage and quantitative overlap and
  assembly bounds join these estimates to give a gap on the full mean-zero
  observable space.
  The resulting bound
  is volume-uniform when the required estimates are uniform.}
  \label{fig:intro-assembly}
\end{figure}

\subsubsection{An unconditional uniform gap theorem for finite-group Davies dynamics}
\label{sec:intro-contribution-model}

Let the underlying graph be finite, simple, and three-regular, with girth at least six, and assign to each edge the Hilbert space $\mathbb C[\Gamma]$ for a finite group $\Gamma$. We consider the Hamiltonian $H=-J\sum_v A_v$, where $J>0$ and $A_v$ is the projector obtained by averaging the group action at vertex $v$.
This Hamiltonian consists of the vertex terms of Kitaev's quantum double construction, with no plaquette terms. The system is coupled to the specified unit-rate Davies bath through
left and right regular translations and group-basis projectors on every edge. Our second result is as follows.

\begin{informaltheorem}[Finite-group gap theorem, informal]
For the Hamiltonian and Davies bath specified above, the full quantum
dynamics has a spectral gap bounded below by a positive constant independent of graph size
for every fixed nontrivial finite group $\Gamma$ and fixed inverse
temperature $\beta$ with $0\le\beta J<\log(5/3)$.
For $\Gamma=S_3$, the lower bound is uniform over all graphs in the
stated family and the entire interval $0\le\beta J\le\log 3$.
\end{informaltheorem}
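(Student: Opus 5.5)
The plan is to derive the informal finite-group theorem from the abstract gap theorem (\cref{thm:abstract-two-axis-gap-assembly}), with $K=-\mathcal L$ the Davies generator in the KMS (GNS) inner product for the Gibbs state of $H=-J\sum_v A_v$. Two observations drive the choices. First, because the $A_v$ are commuting projectors, the Davies jump operators built from the edge translations $L_g,R_g$ and the group-basis projectors $|h\rangle\langle h|$ split into Bohr-frequency components. Each component changes at most the two vertex terms at the endpoints of its edge. Second, every local term $K_i$ is therefore supported on an edge together with the edges adjacent to its two endpoints.

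For the interface $\mathcal M$ I would take the observables that are diagonal in the group basis, namely functions of the edge configuration $\Gamma^{E}$. The block $A=\Pi K\Pi|_{\mathcal M}$ is then the Dirichlet form of a classical Markov chain on $\Gamma^{E}$, reversible with respect to the diagonal of the Gibbs state. In this chain single edges are updated by left and right multiplications, with Metropolis/Glauber-type rates depending on $\beta J$ and on the gauge-invariance indicators of the two endpoint vertices. The interface gap would be bounded by a path-coupling (Dobrushin) argument. At $\beta=0$ the chain is a product of independent uniform resamplings, with gap at least $1$. A single disagreement on edge $e$ influences only the $\le 4$ edges sharing a vertex with $e$, and each such influence is bounded by the total-variation change in the local rates, which is a function of $e^{\beta J}$. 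Requiring the total influence to be strictly below the self-contraction should produce exactly the condition $4(e^{\beta J}-1)/\ldots<1$, which I expect to reduce to $e^{\beta J}<5/3$. This holds for every nontrivial $\Gamma$, since the estimate depends only on the projector structure and not on $|\Gamma|$.

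For geometric localization I would take $\mathcal P$ to be the collections of updates on the edges of a ball of radius one or two around each vertex. Girth $\ge 6$ makes each such ball a tree, so all balls are isomorphic and every local quantity is a finite computation independent of the graph. Three things then need to be checked on this tree. The first is the local gap of $K_p$: on the coherent part this is positive by direct computation, because the off-diagonal directions $|g\rangle\langle h|$ are damped at rate at least the projector-jump rate. The second is the coupling constant between $\mathcal M$ and $\mathcal M^\perp$. The third is the cover and overlap constants, which are bounded because the graph is $3$-regular and so each edge lies in a fixed number of balls. Substituting these into the explicit bound of \cref{thm:abstract-two-axis-gap-assembly} gives a size-independent constant on $0\le\beta J<\log(5/3)$.

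For $S_3$ on the full interval $[0,\log 3]$, the path-coupling bound fails beyond $\log(5/3)$. I would instead bound the interface gap by a finite block computation: a block-dynamics or spectral-independence comparison on the tree ball, where the classical chain on $S_3^{E(p)}$ is small enough to diagonalize explicitly, followed by local-to-global assembly. The main obstacle is this step. The bound must be uniform in $\beta$ up to the endpoint $\log 3$, where the relevant contraction presumably becomes marginal, and all local constants must stay controlled there. My first approach would be a monotonicity argument in $\beta J$ together with exact verification at finitely many grid points, using Lipschitz continuity of the local spectral quantities in $\beta$ to cover the gaps between grid points.
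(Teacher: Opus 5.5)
Your overall architecture (abstract theorem, classical bound on an interface, local quantum estimates, geometric counting) matches the paper's. However, the interface you choose is the wrong one, and the classical-comparison step built on it does not exist.

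\textbf{The interface.} The star term $A_v=|\Gamma|^{-1}\sum_g U_v(g)$ averages translations, and the regular action on each incident edge is free. So no configuration $|x\rangle$, $x\in\Gamma^E$, is an eigenvector of any $A_v$. There is no ``gauge-invariance indicator'' of a configuration, and hence no Metropolis rate depending on it. For $t>0$ this has several consequences:
\begin{itemize}
\item $\rho_\beta\propto\prod_v(I+tA_v)$ has strictly positive off-diagonal entries in the group basis.
\item A diagonal $f$ commutes with $A_v$ only if it is invariant under the gauge action at $v$.
\item The KMS norm of a diagonal $f$ is $\sum_{x,y}|(\rho_\beta^{1/2})_{xy}|^2\overline{f(x)}f(y)$. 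This is not the $L^2$ norm with respect to $\operatorname{diag}\rho_\beta$.
\item The Bohr components of $L_g,R_g,Q_h$ are built from spectral projectors that are not diagonal in this basis.
\end{itemize}
So $\Pi K\Pi$ is not the Dirichlet form of the local Glauber-type chain you describe, and the threshold $e^{\beta J}<5/3$ is guessed rather than derived. Already for $\Gamma=\mathbb Z_2$, $H$ is diagonal in the $X$ basis while your interface is the $Z$ basis.

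The choice that works is the algebra spanned by the joint Peter--Weyl isotypic projectors $P_{\boldsymbol\lambda}$. These commute with every $U_v(g)$, hence with $H$ and $\rho_\beta$. Therefore $\mathcal M_0\cong L^2_0(\mu_{\mathrm{lab}})$ isometrically, with $\mu_{\mathrm{lab}}\propto\prod_v z_t$ and $z_t=d_{\lambda_1}d_{\lambda_2}d_{\lambda_3}+t\dim\operatorname{Inv}(V_{\lambda_1}\otimes V_{\lambda_2}\otimes V_{\lambda_3})$. The bound $1\le z_t/(d_{\lambda_1}d_{\lambda_2}d_{\lambda_3})\le 1+t$ gives single-neighbor influence at most $t/(t+2)$. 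The condition $4t/(t+2)<1$ is exactly $t<2/3$.

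You then still need a Davies-to-heat-bath comparison on $\mathcal M$. The group-basis projectors $Q_h$ enter only here, as couplings, via $\sum_h\lVert P_\lambda Q_hP_\mu\rVert_{\mathrm{HS}}^2=d_\lambda^2d_\mu^2/q>0$. This is combined with a block-density floor and a rate floor.

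\textbf{The local conditions.} Your local step omits the mechanism behind Conditions (ii), (iii) and (G1). Each $K_p$ has a large $t$-independent kernel: the identity on the active edges, tensored with intertwiners of the outer legs at each touched vertex, and arbitrary outside. Damping of coherences $|g\rangle\langle h|$ on active edges neither identifies that kernel nor shows that the local kernels jointly miss $\operatorname{ran}\Pi^\perp$ by a uniform amount. The paper gets (G1) from the exact kernel plus Schur's lemma at vertices carrying a single outer leg.

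More seriously, $\mathcal S_\Pi(K_p)$ vanishes on $\Pi(\ker K_p\cap L^2_0(\rho_\beta))$: take $y=\Pi^\perp z$. So (iii) can hold only for a target $C_p$ that annihilates that space. The natural choice $C_p=\Pi K_p\Pi$ fails, because an intertwiner on the two outer legs at a boundary vertex gives a label projection that depends on the label of the adjacent active edge.

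The paper resolves this by middle-edge shielding on centered three-edge paths. Both endpoints of $e_0$ carry a single outer leg, so kernel elements are scalar there. The factor $z_t(\lambda_{e_-}^*,\nu,\mu_1)z_t(\nu^*,\lambda_{e_+},\mu_2)$ then cancels in $\mathcal E_{\mathrm{lab}}X$, and $\operatorname{ran}(\Pi E_p)\subseteq\ker A_{e_0}$. This permits $C_p=A_{e_0}$ and gives (G3) with $r_A=4$. With balls you would need the same statement for edges whose endpoints each carry at most one outer leg, and a radius-one star contains no such edge.

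\textbf{The $S_3$ step.} Your idea of a finite block computation with grid points and Lipschitz control could replace the paper's exact Bernstein-coefficient certificate for the vertex-star chain. It only works once it is run on the label chain rather than on $\Gamma^E$.
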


The precise statement is given in \cref{thm:finite-group-davies-gaps}.

A key feature of our approach is that the spectral-gap bound follows from estimates proved for the model, with no local or global gap assumptions left to verify. We choose an interface consisting of observables that depend only on the irreducible-representation labels of the edge variables and bound its gap by comparison with classical dynamics on those labels.
{For this comparison, we prove a theorem
(\cref{thm:finite-block-davies-label-comparison}) that turns classical
heat-bath gap estimates into lower bounds on the interface gap.}
We also establish the local quantum gap and coupling bounds, along with the required geometric coverage and
overlap conditions.
Our exact vertex-star analysis reaches $\beta J=\log3$ for $S_3$, where the one-site Dobrushin criterion \cite{dobrushin1968description} does not certify a gap. \Cref{fig:intro-certificate-reuse} illustrates the resulting extension of the quantum gap guarantee.

\begin{figure}[!t]
  \centering
  \resizebox{0.9\linewidth}{!}{
  \begingroup
\colorlet{certblue}{figureEstimate}
\colorlet{certteal}{figureInterface}
\colorlet{certpurple}{figureMechanism}
\begin{tikzpicture}[font=\small, >=Stealth,
  boundary/.style={draw=gray!55, line width=.6pt},
  point/.style={circle, inner sep=1.5pt, fill=white, draw=gray!65},
  flow/.style={->, draw=black!55, line width=.8pt}]
  \path[use as bounding box] (0,-.2) rectangle (15.4,5.2);
  \node[font=\small\bfseries] at (2.6,5) {Classical estimate};
  \node[font=\small\bfseries] at (8.1,5) {Common mechanism};
  \node[font=\small\bfseries] at (13.2,5) {Full quantum gap};

  \begin{scope}[yshift=-3mm]
  \begin{scope}[shift={(.95,3.65)}]
    \foreach \x/\y in {-.55/.5,.55/.5,-.55/-.5,.55/-.5} {
      \draw[->,draw=certblue!65,shorten >=3pt] (\x,\y)--({.4*\x},0);
      \node[point] at (\x,\y) {};
    }
    \draw[certblue,line width=2.4pt,line cap=round] (-.25,0)--(.25,0);
  \end{scope}
  \node[anchor=west,text=certblue,font=\small\bfseries] at (1.95,4.03) {Fixed finite group};
  \node[anchor=west] at (1.95,3.63) {One-site influence};

  \begin{scope}[shift={(.95,1.25)}]
    \coordinate (v) at (0,0);
    \coordinate (u1) at (0,.62);
    \coordinate (u2) at (-.54,-.31);
    \coordinate (u3) at (.54,-.31);
    \foreach \u/\x/\y in {u1/-.32/1.02,u1/.32/1.02,
      u2/-.92/-.13,u2/-.77/-.76,u3/.92/-.13,u3/.77/-.76} {
      \draw[boundary] (\u)--(\x,\y);
      \node[point,inner sep=1pt] at (\x,\y) {};
    }
    \foreach \u in {u1,u2,u3} {
      \draw[certteal,line width=1.7pt] (v)--(\u);
      \node[circle,inner sep=1.8pt,fill=white,draw=certteal] at (\u) {};
    }
    \node[circle,inner sep=2pt,fill=certteal] at (v) {};
  \end{scope}
  \node[anchor=west,text=certteal,font=\small\bfseries] at (1.95,1.98) {$S_3$};
  \node[anchor=west,font=\footnotesize] at (1.75,1.58) {Exact star certificate};
  \node[anchor=west,font=\footnotesize,text=black!65] at (1.95,1.13) {$27$ states};
  \node[anchor=west,font=\footnotesize,text=black!65] at (1.95,.78) {$6$ boundary labels};

  \filldraw[fill=certpurple!6,draw=certpurple!45,rounded corners=3pt]
    (6.15,.5) rectangle (10.05,4.33);
  \node[text=certpurple,font=\footnotesize\bfseries] at (8.1,3.77) {Local quantum control};
  \node[text=certpurple,font=\scriptsize] at (8.1,3.31) {Complement and coupling};
  \node[text=certpurple!65,font=\normalsize] at (8.1,2.43) {$+$};
  \node[text=certpurple,font=\footnotesize\bfseries] at (8.1,1.65) {Geometric assembly};
  \node[text=certpurple,font=\scriptsize] at (8.1,1.19) {Cover and overlap bounds};
  \foreach \y in {3.65,1.55} {
    \draw[flow] (5.25,\y)--(5.95,\y);
    \draw[flow] (10.25,\y)--(11.15,\y);
  }
  \end{scope}
  \begin{scope}[yshift=-3mm]
  \draw[->,black!55] (11.45,4.35)--(15.15,4.35)
    node[right,text=black] {$t$};
  \foreach \x/\lab in {11.55/0,12.55/{2/3},14.55/2} {
    \node[above,font=\footnotesize,text=black!65] at (\x,4.4) {$\lab$};
    \draw[gray!35,densely dashed] (\x,4.28)--(\x,1.3);
  }
  \foreach \y in {3.65,1.55}
    \draw[gray!20,line width=2.5pt] (11.55,\y)--(15,\y);
  \draw[certblue,line width=2.5pt] (11.55,3.65)--(12.55,3.65);
  \fill[certblue] (11.55,3.65) circle (2pt);
  \filldraw[fill=white,draw=certblue,line width=.9pt]
    (12.55,3.65) circle (2.5pt);
  \node[text=certblue,font=\footnotesize,align=center] at (13.2,2.95)
    {Uniform on compact intervals\\below $2/3$};
  \draw[certteal,line width=2.5pt] (11.55,1.55)--(14.55,1.55);
  \fill[certteal] (11.55,1.55) circle (2pt);
  \fill[certteal] (14.55,1.55) circle (2.5pt);
  \draw[certteal] (12.55,1.05)--(12.55,.9)--(14.55,.9)--(14.55,1.05);
  \node[text=certteal,font=\footnotesize] at (13.2,.5)
    {Extended through $t=2$};
  \end{scope}
\end{tikzpicture}
\endgroup
  }
  \caption{Stronger classical estimates, a wider quantum gap guarantee.
  Here $t=e^{\beta J}-1$.
  The same framework turns a one-site estimate into a full quantum gap
  for $t<2/3$, and an exact $S_3$ star certificate into a guarantee through
  $t=2$.  The marked ranges describe what the proofs certify, not where
  the gap disappears.}
  \label{fig:intro-certificate-reuse}
\end{figure}

\subsection{Significance}
\label{sec:new-intro-significance}

The contribution of this work is both conceptual and technical, with a concrete
application demonstrating the strength of the framework.  We highlight
three aspects below.

\paragraph{Flexibility of the framework.}
Double localization allows both the local update collections and the
interface subspace to be chosen to suit the model, subject to the conditions
of the abstract theorem. The interface retains observables whose
dissipation admits a useful global estimate, while local estimates control
the remaining directions and their coupling. Irreducible-representation
labels provide one choice of interface, but the framework is not limited
to this choice or to classical comparison. In our finite-group application,
it combines classical comparison on the label interface with local quantum
gap estimates and geometric assembly. This illustrates how different tools
can address different parts of the global gap problem within a single
framework.
This flexibility also suggests several directions for further applications and extensions, discussed in the outlook below.

\paragraph{Qualitative comparison with other methods.}
\label{sec:new-intro-proof-architectures}
\Cref{tab:proof-interface-comparison} compares the structure of several approaches to proving spectral-gap and mixing bounds.
Clustering and tensorization methods assemble regional estimates using decay or factorization properties \cite{kastoryano2016quantum,bardet2022approximate}. Our approach instead retains a global bound on the interface while localizing the additional quantum estimates. For standard quantum double models, factorization of PEPS boundary states can establish a parent-Hamiltonian gap that transfers to the Davies dynamics \cite{lucia2023thermalization}. Those models include plaquette terms, whereas our vertex-term model does not; we estimate the specified Davies operator directly through its local components.

Quantum-to-classical Davies comparison \cite{basso2025quantum} uses
a gap on the full space $V_0(H)=\{X:[X,H]=0\}$ together with spectral assumptions.
Our approach uses a smaller interface defined by representation
labels, supplemented by local quantum estimates and geometric bounds
verified for the model. General high-temperature mixing results
\cite{bakshi2025dobrushin,bergamaschi2026fast} concern different sampler constructions and do not directly establish a gap for the Davies dynamics studied here. We discuss a quantitative comparison with \cite{basso2025quantum} and \cite{bergamaschi2026fast} below.

\begin{table}[!htbp]
  \centering
  \small
  \renewcommand{\arraystretch}{1.16}
  \def\qipComparisonScale{0.95}
  \resizebox{\qipComparisonScale\textwidth}{!}{%
  \begin{tabularx}{\textwidth}{@{}
    >{\hsize=.82\hsize}Y
    >{\hsize=1.05\hsize}Y
    >{\hsize=1.13\hsize}Y@{}}
    \toprule
    Approach & Main input & Route to a global gap \\
    \midrule
    Clustering-based methods \cite{kastoryano2016quantum}
      & Regional conditional expectations and correlation bounds
      & Combine regional estimates. \\
    Quantum-to-classical comparison \cite{basso2025quantum}
      & A gap on the full space $V_0(H)$ of observables commuting with $H$
      & Transfer the bound under spectral assumptions. \\
    \addlinespace[.45em]
    \textbf{Double localization (this work)}
      & \textbf{A global interface bound and local gap bounds}
      & \textbf{Combine them under geometric assembly conditions.} \\
    \bottomrule
  \end{tabularx}%
  }
  \caption{Selected approaches to global gap bounds. Here $V_0(H)=\{X:[X,H]=0\}$.}
  \label{tab:proof-interface-comparison}
\end{table}

\paragraph{Quantitative comparison for $S_3$.}
\label{sec:new-intro-quantitative-calibration}
For the specified $S_3$ Davies dynamics, we prove a positive spectral-gap
lower bound uniform in graph size throughout $0\le\beta J\le\log 3$. This guarantee does not follow directly from the quantum-to-classical
comparison of \cite{basso2025quantum} or the all-to-all
high-temperature result of \cite{bergamaschi2026fast}. Applied to our model, the arithmetic-progression bound of
\cite{basso2025quantum} loses size independence, while its alternative
bound at fixed $\beta>0$ requires control of all observables commuting
with $H$, beyond what our label-interface estimate provides. Likewise,
when each edge is treated as a $|\Gamma|$-dimensional qudit and our
interaction parameters are inserted into the sufficient condition of
\cite{bergamaschi2026fast}, that condition does not extend to
$\beta J=\log 3$.  The detailed comparisons are given in
\cref{app:baseline-calibrations}.

\subsection{Outlook}
\label{sec:new-intro-outlook}

The framework suggests extensions through new choices of interface and
local regions, as well as combinations with other local-to-global
methods.\footnote{These directions concern spectral gaps.  The present
quadratic argument does not establish a modified logarithmic Sobolev
inequality (MLSI).  Uniform MLSIs are known
for the standard two-dimensional toric code and Abelian quantum doubles
at every positive temperature
\cite{stengele2026css,stengele2026modified}.}

\paragraph{Interfaces for other models.}
One future direction is to consider whether there are other problems where this interface subspace is natural and helpful. In this paper, we consider a model defined by a graph and a group, but neither of these ingredients is required by \Cref{thm:abstract-two-axis-gap-assembly}. One promising candidate in qubit and qudit models is the space defined by observables that are diagonal in the computational basis. Another class of models that combine qubit and group dynamics could arise from Hamiltonians related to stabilizer codes, which form a bridge between many-body physics and quantum error correction. In this case, a candidate interface could be the observables generated by the stabilizer group, which in an experiment would record the syndrome. 

A concrete extension would be to extend to the standard quantum-double Hamiltonian, including the vertex terms we restrict to and the plaquette terms that we neglected in this paper. The challenging part here would be to identify a subspace which is compatible with both of these different terms, and to determine how the Hamiltonian couples this interface subspace with the remaining space.

\paragraph{More flexible geometric assembly.}
Another direction is to explore more flexible choices of local regions and their geometric assembly for $K$. In our case, \Cref{thm:abstract-two-axis-gap-assembly} is designed around the intuition of picking a set of local update sites. However, there are many other ways of breaking a system up into smaller, overlapping pieces. 
For systems with higher-order interactions, local regions adapted to the interaction structure may provide more useful coverage and overlap bounds than regions chosen from an underlying graph alone. 
For example, neighbourhoods based on links in a simplicial complex (a higher-dimensional generalisation of a graph) are natural candidates.
If qubits or qudits are defined on the $d$-simplices of a complex and the Hamiltonian acts in accordance with the up- and down-neighbourhood relationships defined by the complex, such neighbourhoods may help organise the relevant local updates, depending on how the interactions are defined. The abstract theorem already permits overlapping sets of updates, so such choices may fit within its existing assembly conditions. They would require new coverage and overlap estimates, together with local spectral bounds. This could allow these methods to be applied to a wider range of Hamiltonians.

\paragraph{Application for noncommuting Hamiltonians.}
A further question is whether the framework can yield useful bounds for noncommuting Hamiltonians. The abstract theorem does not assume commutativity, but our model-specific verification uses it to obtain exact local reductions. Extending these arguments would require control of the errors arising when such reductions are only approximate.

\vspace{1em}
We believe that this framework will help establish spectral-gap bounds
for a broader range of quantum many-body systems, contributing to our
understanding of thermal relaxation and quantum Gibbs sampling.

\subsection{Organization of the Paper}

\Cref{sec:main-results} formally states the main results.
\Cref{sec:preliminaries} collects the operator tools for the abstract
gap theorem.
\Cref{sec:three-local-inputs} proves the abstract gap theorem by
assembling the local bounds and applying a two-block estimate.
\Cref{sec:model-preliminaries} provides the KMS, GNS, Davies, finite-group,
and classical heat-bath background for the application.
\Cref{sec:finite-group-specialization} identifies the abstract conditions
and assembly constants for the finite-group model.
\Cref{sec:finite-block-davies-label-comparison} establishes the general
Davies-to-label comparison, and
\cref{sec:common-analytic-certificates} verifies the three conditions for
the finite-group star model.  \Cref{sec:s3-strengthening} gives the
$S_3$ bound using fusion rules.  The appendices contain the full comparison
proof, the exact centered-path kernel calculation, the reproducible finite
$S_3$ certificates, and the quantitative baseline calibrations.

\section{Main Results}
\label{sec:main-results}

In this section, we formally state the results of the abstract gap criterion and its
finite-group application. Only the
notation needed for these statements is introduced here; the operator
preliminaries and detailed model construction follow in
\cref{sec:prelim-shorted-operators,sec:prelim-davies,sec:prelim-quantum-double}.

\subsection{Abstract Gap Criterion for Double Localization}
\label{sec:main-abstract-criterion}

All operators in the abstract statement act on a finite-dimensional
Hilbert space.  The criterion concerns positive operators $K$ and does
not require a Markov or Davies construction.  In applications to
KMS-self-adjoint quantum Markov dynamics, $K=-\mathcal L\succeq0$,
where $\mathcal L$ generates the observable evolution
$e^{t\mathcal L}=e^{-tK}$.  We call $K$ the positive operator, reserving
the dynamical generator for $\mathcal L$.
In the Davies application the Hilbert space is the mean-zero observable
space
\[
 L^2_0(\rho_\beta)=\{X:\operatorname{Tr}(\rho_\beta X)=0\},
 \qquad
 \langle X,Y\rangle_{\rho_\beta}
 =\operatorname{Tr}(\rho_\beta^{1/2}X^\dagger\rho_\beta^{1/2}Y).
\]
Choose an orthogonal projection $\Pi$ from the model structure, rather
than from an unknown low eigenspace, and put $\Pi^\perp=I-\Pi$.
Its range is the \emph{interface subspace}. In the Davies application,
this is the mean-zero part of the interface $\mathcal M$ introduced in
\cref{sec:intro-contribution-abstract}:
\[
  \mathcal M_0:=\mathcal M\cap L^2_0(\rho_\beta),
  \qquad \operatorname{ran}\Pi=\mathcal M_0.
\]
Thus we retain the notation $\Pi$ for the interface projection, now acting
on the mean-zero observable space. Neither the abstract criterion nor its
application requires this subspace to be invariant under $K$.
The corresponding block decomposition is
\[
 K=\begin{pmatrix}A&-B^\dagger\\-B&T\end{pmatrix}_{\Pi\oplus\Pi^\perp},
 \qquad
 A=\Pi K\Pi,\quad B=-\Pi^\perp K\Pi,\quad T=\Pi^\perp K\Pi^\perp.
\]
We use the same notation for a compressed operator on its subspace and
its extension by zero to the orthogonal complement.\footnote{In
\cref{sec:intro-contribution-abstract}, the restriction
$A=\left.\Pi K\Pi\right|_{\mathcal M}$ explicitly indicates the interface
as its domain. In the Davies application here, we work on the mean-zero
observable space, so the corresponding compression acts on $\mathcal M_0$:
$\left.\Pi K\Pi\right|_{\mathcal M_0}$. We also denote its extension by
zero on the orthogonal complement within $L^2_0(\rho_\beta)$ by $A$.}
In particular, $A$
acts on $\operatorname{ran}\Pi$ as the interface compression and vanishes
on $\operatorname{ran}\Pi^\perp$.
For $G\succeq0$, the \emph{shorted operator}, denoted by $\mathcal S_\Pi(G)$,
is the positive operator on
$\operatorname{ran}\Pi$, extended by zero on $\operatorname{ran}\Pi^\perp$,
whose quadratic form is
\[
 \langle x,\mathcal S_{\Pi}(G)x\rangle
 =\min_{y\in\operatorname{ran}\Pi^\perp}
   \langle x+y,G(x+y)\rangle,
 \qquad x\in\operatorname{ran}\Pi.
\]
It measures the dissipation that survives when the interface component is fixed and the complementary component is varied to minimize the quadratic form.
The definition and its Schur-complement formula are discussed in
\cref{sec:prelim-shorted-operators}.

Geometrically, $K_i$ denotes an elementary update, whereas $K_p$ combines
a bounded collection $I(p)$ of updates, with overlaps allowed.
Here $p$ indexes the collection $I(p)$, which is denoted simply by $p$ in
\cref{sec:intro-contribution-abstract}.
For $K_p=\sum_{i\in I(p)}K_i$, write
\[
 K_p=\begin{pmatrix}A_p&-B_p^\dagger\\-B_p&T_p\end{pmatrix}_{\Pi\oplus\Pi^\perp},
 \qquad S_p=\mathcal S_{\Pi}(K_p).
\]

For each fixed $p$, update selection and interface compression commute,
as summarized in \Cref{fig:abstract-localization-square}.

\begin{figure}[!htbp]
  \centering
  \def\abstractLocalizationScale{1}
  \scalebox{\abstractLocalizationScale}{%
  \begin{tikzpicture}[>=Stealth, every node/.style={font=\small}]
    \node (K) at (0,0) {$K=\sum_{i\in\mathcal I}K_i$};
    \node (A) at (7.6,0) {$A=\Pi K\Pi$};
    \node (Kp) at (0,-1.8) {$K_p=\sum_{i\in I(p)}K_i$};
    \node (Ap) at (7.6,-1.8) {$A_p=\Pi K_p\Pi$};
    \draw[->] (K) -- node[above,font=\footnotesize]
      {Interface compression} (A);
    \draw[->] (Kp) -- node[below,font=\footnotesize]
      {Same interface compression} (Ap);
    \draw[->] (K) -- node[right,align=left,font=\footnotesize]
      {Select updates\\$i\in I(p)$} (Kp);
    \draw[->] (A) -- node[left,align=right,font=\footnotesize]
      {Select the same updates\\$i\in I(p)$} (Ap);
  \end{tikzpicture}%
  }
  \caption{The two localizations for a fixed $p$. Writing $A_i=\Pi K_i\Pi$,
  the right vertical arrow selects $A_p=\sum_{i\in I(p)}A_i$ from
  $A=\sum_{i\in\mathcal I}A_i$. The same projection $\Pi$ is used throughout.
  This diagram records the operator identities, not the assembly of gap bounds.}
  \label{fig:abstract-localization-square}
\end{figure}

\begin{theorem}[Abstract gap assembly under double localization]
\label{thm:abstract-two-axis-gap-assembly}
Let $\mathcal I$ and $\mathcal P$ be finite, let $K_i\succeq0$ for
$i\in\mathcal I$, and let $\Pi$ be an orthogonal projection with
$\Pi^\perp=I-\Pi$.  For each $p\in\mathcal P$, choose
$I(p)\subseteq\mathcal I$; overlaps are allowed.  Set
\[
  K:=\sum_{i\in\mathcal I}K_i,
  \qquad
  K_p:=\sum_{i\in I(p)}K_i,
  \qquad
  A:=\Pi K\Pi,
  \qquad
  T:=\Pi^\perp K\Pi^\perp.
\]
For each $p$, let $E_p$ be the orthogonal projection onto $\ker K_p$, put
$E_p^\perp=I-E_p$, and choose a positive target form
$C_p=\Pi C_p\Pi$.  Let
$a,g_{\mathrm{loc}},\kappa_{\mathrm{cov}},c_{\mathrm{Sch}},r_K,r_A>0$.
Suppose the following spectral conditions hold:
{\small
\[
  \begin{array}{@{}c@{\qquad}c@{\qquad}c@{}}
    \textnormal{(i)}\; A\succeq a\Pi,
    & \textnormal{(ii)}\; K_p\succeq g_{\mathrm{loc}}E_p^\perp
      \quad(p\in\mathcal P),
    & \textnormal{(iii)}\;
      \mathcal S_{\Pi}(K_p)\succeq c_{\mathrm{Sch}}C_p
      \quad(p\in\mathcal P).
  \end{array}
\]
}
Suppose also that the following assembly conditions hold:
{\small
\[
  \begin{array}{@{}c@{\qquad}c@{\qquad}c@{}}
    \textnormal{(G1)}\;
      \Pi^\perp\!\Big(\textstyle\sum_{p\in\mathcal P}E_p^\perp\Big)\!\Pi^\perp
      \succeq\kappa_{\mathrm{cov}}\Pi^\perp,
    & \textnormal{(G2)}\;
      \textstyle\sum_{p\in\mathcal P}K_p\preceq r_KK,
    & \textnormal{(G3)}\;
      \textstyle\sum_{p\in\mathcal P}C_p\succeq r_AA.
  \end{array}
\]
}
With
$
  \tau:=\frac{g_{\mathrm{loc}}\kappa_{\mathrm{cov}}}{r_K}$ and
 $  \eta:=\min\left\{\frac{c_{\mathrm{Sch}}r_A}{r_K},1\right\},
$
one has
\[
  T\succeq\tau\Pi^\perp,
  \qquad
  \mathcal S_{\Pi}(K)\succeq\eta A.
\]
Consequently,
\begin{equation}
  \label{eq:abstract-two-axis-assembled-gap}
  K\succeq
  \left(1-\sqrt{1-\eta}\right)\min\{a,\tau\}\,I.
\end{equation}
\end{theorem}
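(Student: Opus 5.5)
The proof has three steps, matching the three conclusions of \cref{thm:abstract-two-axis-gap-assembly}. Only the variational definition of $\mathcal S_\Pi$ and its Schur-complement formula from \cref{sec:prelim-shorted-operators} are needed. First I prove $T\succeq\tau\Pi^\perp$ by compressing the local gap bounds onto $\operatorname{ran}\Pi^\perp$. Second I prove $\mathcal S_\Pi(K)\succeq\eta A$ from order properties of the shorted operator. Third I combine the two block bounds in a two-by-two quadratic-form estimate.

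\emph{Step 1 (complement gap).} Summing condition (ii) over $p$ gives $\sum_pK_p\succeq g_{\mathrm{loc}}\sum_pE_p^\perp$. Conjugation by $\Pi^\perp$ preserves the Loewner order. Combining this with (G1) and then (G2) yields
\[
r_K\,T=\Pi^\perp(r_KK)\Pi^\perp\succeq\Pi^\perp\Big(\sum_pK_p\Big)\Pi^\perp\succeq g_{\mathrm{loc}}\kappa_{\mathrm{cov}}\Pi^\perp .
\]
Hence $T\succeq\tau\Pi^\perp$. In particular $T$ is invertible on $\operatorname{ran}\Pi^\perp$, so the Schur-complement formula $\mathcal S_\Pi(K)=A-B^\dagger T^{-1}B$ on $\operatorname{ran}\Pi$ is available.

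\emph{Step 2 (shorted-operator bound).} From the variational formula I record three elementary properties of $\mathcal S_\Pi$.
\begin{itemize}[leftmargin=2em]
\item It is monotone: $G\preceq G'$ implies $\mathcal S_\Pi(G)\preceq\mathcal S_\Pi(G')$.
\item It is positively homogeneous: $\mathcal S_\Pi(cG)=c\,\mathcal S_\Pi(G)$ for $c>0$.
\item It is superadditive: $\mathcal S_\Pi(\sum_pG_p)\succeq\sum_p\mathcal S_\Pi(G_p)$, because the minimum of a sum over a common $y$ is at least the sum of the separate minima.
\end{itemize}
Applying these with (G2), then (iii), then (G3) gives
\[
\mathcal S_\Pi(K)\succeq r_K^{-1}\mathcal S_\Pi\Big(\sum_pK_p\Big)\succeq r_K^{-1}\sum_p\mathcal S_\Pi(K_p)\succeq \frac{c_{\mathrm{Sch}}}{r_K}\sum_pC_p\succeq\frac{c_{\mathrm{Sch}}r_A}{r_K}A .
\]
Since $A\succeq0$, this implies $\mathcal S_\Pi(K)\succeq\eta A$ whether or not the ratio exceeds $1$. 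Taking $y=0$ in the variational formula also shows $\mathcal S_\Pi(K)\preceq A$, so the truncation at $1$ is consistent. Via the Schur formula, the bound becomes $B^\dagger T^{-1}B\preceq(1-\eta)A$.

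\emph{Step 3 (two-block estimate).} Write a vector as $x+y$ with $x\in\operatorname{ran}\Pi$ and $y\in\operatorname{ran}\Pi^\perp$, and set $\alpha=\|A^{1/2}x\|$ and $\beta=\|T^{1/2}y\|$. Cauchy--Schwarz and Step 2 give
\[
|\langle y,Bx\rangle|\le\|T^{1/2}y\|\,\|T^{-1/2}Bx\|\le\sqrt{1-\eta}\,\alpha\beta .
\]
Therefore
\[
\langle x+y,K(x+y)\rangle\ge\alpha^2-2\sqrt{1-\eta}\,\alpha\beta+\beta^2\ge\big(1-\sqrt{1-\eta}\big)(\alpha^2+\beta^2),
\]
where the last step uses $2\alpha\beta\le\alpha^2+\beta^2$. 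Finally, (i) gives $\alpha^2\ge a\|x\|^2$ and Step 1 gives $\beta^2\ge\tau\|y\|^2$. Together with $\|x+y\|^2=\|x\|^2+\|y\|^2$, this yields \eqref{eq:abstract-two-axis-assembled-gap}.

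Most of this is routine. The step needing the most care is Step 2: superadditivity must be justified directly from the variational formula rather than from the Schur-complement expression, since the $T_p$ need not be invertible. The same concern applies to handling supports, i.e.\ each $C_p$ and $\mathcal S_\Pi(K_p)$ living on $\operatorname{ran}\Pi$ and extended by zero. I would first verify that the variational definition makes all three order properties immediate without any invertibility assumption on the $T_p$.
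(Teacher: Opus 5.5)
Your proposal is correct and follows the paper's proof essentially step for step. Your first two steps are exactly \cref{prop:global-bounds-from-local-inputs}, and your Cauchy--Schwarz estimate $|\langle y,Bx\rangle|\le\sqrt{1-\eta}\,\|A^{1/2}x\|\,\|T^{1/2}y\|$ is the quadratic-form version of the paper's bound $\|T^{-1/2}BA^{-1/2}\|\le\sqrt{1-\eta}$ in \cref{lem:two-block-gap-assembly}; the only cosmetic difference is that you never form $A^{-1/2}$ and use condition (i) only at the final step.
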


The proof is given in \cref{sec:abstract-two-axis-assembly}, where
\cref{prop:global-bounds-from-local-inputs,lem:two-block-gap-assembly}
together establish the conclusion.

\Cref{fig:two-axis-proof-architecture} illustrates the spectral and assembly conditions associated with double localization.
In the figure,
$A_p=\Pi K_p\Pi$, $T_p=\Pi^\perp K_p\Pi^\perp$,
$B_p=-\Pi^\perp K_p\Pi$, and $S_p=\mathcal S_\Pi(K_p)$;
the global cross block is $B=-\Pi^\perp K\Pi$.

\begin{figure}[htb]
  \centering
\begin{tikzpicture}[
    block/.style={draw=black!55, rounded corners=1.5pt, align=center,
      minimum width=2.15cm, minimum height=0.62cm, inner sep=3pt},
    transverse sector/.style={block, draw=figureComplement!70, fill=figureComplement!9,
      text=black},
    interface sector/.style={block, draw=figureInterface!70, fill=figureInterface!9,
      text=black},
    certificate/.style={block, fill=figureInterface!7, draw=figureInterface,
      text=black,
      minimum width=2.55cm, minimum height=0.72cm},
    window/.style={draw=black!38, dashed, rounded corners=2pt,
      inner xsep=7pt, inner ysep=6pt},
    decomposition/.style={align=center, inner sep=2pt},
    condition text/.style={text=black!65},
    flow/.style={-{Latex[length=2mm]}, line width=0.55pt},
    relation/.style={-{Latex[length=1.7mm]},
      dashed, draw=black!55, line width=0.5pt},
    conceptual axis/.style={-{Latex[length=1.7mm]},
      draw=black!55, line width=0.55pt},
    every node/.style={font=\footnotesize}
  ]
    \node[decomposition, anchor=east] (globaldec) at (-1.40,3.40)
      {$K=\displaystyle\sum_{i\in\mathcal I}K_i
       =\begin{pmatrix}
         A&-B^\dagger\\
         -B&T
         \end{pmatrix}_{\Pi\oplus\Pi^\perp}$};
    \node[decomposition, anchor=west] (localdec) at (1.40,3.40)
      {$K_p=\displaystyle\sum_{i\in I(p)}K_i
       =\begin{pmatrix}
         A_p&-B_p^\dagger\\
         -B_p&T_p
         \end{pmatrix}_{\Pi\oplus\Pi^\perp}$};
    \draw[flow] (globaldec.east) --
      node[above, font=\footnotesize] {$p\in\mathcal P$}
      node[below, font=\footnotesize, yshift=-3pt, condition text]
        {$\textnormal{(G2)}\
         \displaystyle\sum_{p\in\mathcal P}K_p\preceq r_KK$}
      (localdec.west);

    \foreach \x/\tag in {-3.90/1,0.20/m} {
      \node[transverse sector] (t\tag) at (\x,0.95)
        {$\Pi^\perp$-sector\\$T_{p_\tag}$};
      \node[interface sector] (a\tag) at (\x,-0.70)
        {$\Pi$-sector\\$A_{p_\tag}$};
      \node[certificate] (s\tag) at (\x,-2.45)
        {$S_{p_\tag}$\\[-1pt]
          $\succeq c_{\mathrm{Sch}}C_{p_\tag}$};

      \draw[flow, draw=figureCoupling, text=black] ([xshift=-7pt]a\tag.north) --
        node[left, font=\footnotesize] {$B_{p_\tag}$}
        ([xshift=-7pt]t\tag.south);
      \draw[flow, draw=figureCoupling, text=black] ([xshift=7pt]t\tag.south) --
        node[right, font=\footnotesize] {$B_{p_\tag}^{\dagger}$}
        ([xshift=7pt]a\tag.north);
      \draw[flow] (a\tag.south) -- (s\tag.north);

      \node[window, fit=(t\tag)(a\tag),
        label={[font=\footnotesize]above:$K_{p_\tag}$}] (k\tag) {};
    }

    \node at (-1.85,0.95) {$\cdots$};
    \node at (-1.85,-0.70) {$\cdots$};
    \node at (-1.85,-2.45) {$\cdots$};

    \node[align=center, font=\footnotesize, condition text] at (-0.40,-3.35)
      {{\footnotesize\itshape(iii) Pathwise Schur condition:}\;
       $S_p:=\mathcal S_{\Pi}(K_p)\succeq c_{\mathrm{Sch}}C_p$
       $(p\in\mathcal P)$};

    \node[align=center, inner sep=1pt, condition text] at (6.10,2.32)
      {{\footnotesize\itshape(ii) Local quotient condition}};
    \node[align=left, anchor=west, inner sep=2pt, font=\footnotesize, condition text]
      (inputtwolocal) at (4.20,1.88)
      {$K_p\succeq g_{\mathrm{loc}}E_p^\perp\quad(p\in\mathcal P)$};
    \node[align=left, anchor=west, inner sep=2pt, font=\footnotesize, condition text]
      (inputtwocover) at (4.20,0.75)
      {{\itshape(G1) Fixed-space cover:}\\[2pt]
       $\Pi^\perp\!\Big(\displaystyle\sum_{p\in\mathcal P}E_p^\perp\Big)\!\Pi^\perp
       \succeq\kappa_{\mathrm{cov}}\Pi^\perp$};
    \node[align=center, anchor=west, inner sep=2pt, condition text]
      (inputone) at (4.20,-0.70)
      {{\footnotesize\itshape(i) Global $\Pi$-sector condition}\\[1pt]
       $A=\Pi K\Pi\succeq a\Pi$};

    \draw[relation] (1.65,1.88) -- (4.05,1.88);
    \draw[relation] (1.65,0.95) -- (4.05,0.95);
    \draw[relation] (1.95,-0.70) --
      node[below, font=\footnotesize, yshift=-1pt, condition text]
        {$\textnormal{(G3)}\
         \displaystyle\sum_{p\in\mathcal P}C_p\succeq r_AA$}
      (4.05,-0.70);

    \coordinate (axisorigin) at (-6.15,-4.10);
    \draw[draw=black!55, line width=0.55pt]
      (-6.60,-4.10) -- (axisorigin);
    \draw[conceptual axis] (axisorigin) -- (-0.95,-4.10)
      node[midway, below=3pt] {Geometric direction};
    \draw[draw=black!55, line width=0.55pt]
      (-6.15,-4.55) -- (axisorigin);
    \draw[conceptual axis] (axisorigin) -- (-6.15,0.60);
    \node[rotate=90] at (-6.43,-1.55) {Interface direction};
  \end{tikzpicture}
  \caption{
    Conditions of the abstract gap theorem under double localization: spectral and Schur bounds (i)–(iii), and geometric assembly conditions (G1)–(G3). The same interface projection \(\Pi\) is used throughout.
  }
  \label{fig:two-axis-proof-architecture}
\end{figure}

Intuitively, the conditions split into three categories. The first category is ``interface-local'': Condition~(i) says that, when compressed to a particular subspace, the positive operator $K$ has a spectral lower bound of at least $a$. The second category is ``geometrically local'': Condition~(ii) says that each piece $K_p$ of $K$ has a decent spectral gap outside of its respective kernel, while Condition (G2) says that these pieces can overlap, but their combined contribution is at most $r_K$ times that of $K$. Finally, the third category of conditions enforces that the two notions of interface and geometric locality have to play nicely with each other. Condition (G1) implies that the images of the geometrically-local operators $K_p$ collectively cover $\operatorname{ran}\Pi^\perp$ sufficiently well, where $\kappa_{\mathrm{cov}}$ measures how well the space is covered. Condition (iii), at an intuitive level, imposes that each geometrically-local operator $K_p$ retains enough control of the interface component even when an arbitrary component in the ``uncontrolled'' space $\operatorname{ran}\Pi^\perp$ is allowed to reduce the energy (the shorted operator $\mathcal S_\Pi(K_p)$ records the minimum over that component). The operators $C_p$, which act only on $\operatorname{ran}\Pi$, specify the control that must remain, up to the common factor $c_{\mathrm{Sch}}$. Finally, Condition (G3) says that these $C_p$ operators, when added together, must provide at least $r_A$ times the control supplied by $A$.
The assembly of these conditions is illustrated in
\Cref{fig:abstract-gap-dependency-map} in \cref{sec:abstract-two-axis-assembly}.

\subsection{Finite-Group Davies Gaps}
\label{sec:main-finite-group-gaps}

Fix a nontrivial finite group $\Gamma$ and $J>0$.  Let
$\mathfrak G^{(3)}_{\geq6}$ denote the finite simple cubic graphs of girth
at least six.  For $\cG\in\mathfrak G^{(3)}_{\geq6}$, orient the edges and set
\[
 \cH_{\cG}=\bigotimes_{e\in E(\cG)}\C[\Gamma],
 \qquad
 A_v=\frac1{|\Gamma|}\sum_{g\in\Gamma}U_v(g),
 \qquad
 H_{\cG}=-J\sum_{v\in V(\cG)}A_v.
\]
Here $\C[\Gamma]$ has orthonormal basis $\{\lvert g\rangle:g\in\Gamma\}$,
and $U_v(g)$ applies the left or right regular action on the three
incident edges according to their orientations.  The $A_v$ are commuting
projections.  We use the specified unit-rate complete star Davies bath,
built from left and right regular translations and group-basis projectors
on each edge, with the construction and normalization of
\cref{sec:prelim-davies,sec:prelim-quantum-double}.  Write
\[
 \rho_\beta=\frac{e^{-\beta H_{\cG}}}{\operatorname{Tr}(e^{-\beta H_{\cG}})},
 \qquad K_{\cG,\beta}=\sum_{e\in E(\cG)}K_e,
 \qquad t=e^{\beta J}-1.
\]
Figure~\ref{fig:finite-group-star-model} shows an admissible graph, a
vertex star, and a centered three-edge path
$p=(e_-,e_0,e_+)$.  These paths implement the geometric localization,
while the irrep-label observables serve as the interface sector.

\begin{figure}[t]
  \centering
  \begin{tikzpicture}[
    >=Stealth,
    graph edge/.style={draw=black!42, line width=0.55pt, ->},
    marked edge/.style={draw=figureGeometry, line width=1.35pt, ->},
    ambient edge/.style={draw=black!22, line width=0.5pt},
    path edge/.style={draw=figureGeometry, line width=1.2pt},
    middle edge/.style={draw=figureGeometry, line width=1.7pt},
    graph vertex/.style={circle, draw=black!65, fill=white,
      inner sep=0pt, minimum size=3.6pt},
    ambient vertex/.style={circle, draw=black!38, fill=white,
      inner sep=0pt, minimum size=3.5pt},
    marked vertex/.style={circle, draw=figureGeometry,
      fill=figureGeometry!16, line width=0.8pt, inner sep=0pt, minimum size=5.5pt},
    path vertex/.style={circle, draw=figureGeometry, fill=figureGeometry!10,
      line width=0.8pt, inner sep=0pt, minimum size=5.8pt},
    local edge/.style={draw=black!75, line width=0.9pt, ->},
    every node/.style={font=\small}
  ]
    \foreach \i in {0,...,13} {
      \coordinate (h\i) at ({90-360*\i/14}:1.9);
    }
    \foreach \i [evaluate=\i as \j using {int(mod(\i+1,14))}]
      in {0,...,13} {
      \draw[graph edge] (h\i) -- (h\j);
    }
    \foreach \i/\j in {0/5,2/7,4/9,6/11,8/13,10/1,12/3} {
      \draw[graph edge] (h\i) -- (h\j);
    }

    \draw[marked edge] (h13) -- (h0);
    \draw[marked edge] (h0) -- (h1);
    \draw[marked edge] (h0) -- (h5);
    \foreach \i in {0,...,13} {
      \node[graph vertex] at (h\i) {};
    }
    \node[marked vertex] (hv) at (h0) {};
    \node[above=2pt of hv, text=figureGeometry] {$v$};
    \node[below=2.16cm, text width=4cm, align=center] at (0,0)
      {Heawood graph (cubic, girth $6$)};

    \coordinate (sv) at (5.20,0.25);
    \coordinate (s1) at (4.00,1.27);
    \coordinate (s2) at (6.53,0.56);
    \coordinate (s3) at (4.65,-1.18);
    \draw[local edge] (sv) -- node[pos=.62, above=5pt] {$L_g$} (s1);
    \draw[local edge] (s2) -- node[above=1pt] {$R_g$} (sv);
    \draw[local edge] (sv) -- node[right=1pt] {$L_g$} (s3);
    \node[marked vertex, minimum size=7pt] at (sv) {};
    \node[left=4pt of sv, text=figureGeometry] {$v$};
    \foreach \s in {s1,s2,s3} {
      \node[graph vertex, minimum size=4.5pt] at (\s) {};
    }
    \node[below=2.16cm, text width=3.5cm, align=center] at (5.20,0)
      {local star term $A_v$};

    \begin{scope}[xshift=10.45cm]
      \foreach \i in {0,...,13} {
        \coordinate (p\i) at ({90-360*\i/14}:1.9);
      }
      \foreach \i [evaluate=\i as \j using {int(mod(\i+1,14))}]
        in {0,...,13} {
        \draw[ambient edge] (p\i) -- (p\j);
      }
      \foreach \i/\j in {0/5,2/7,4/9,6/11,8/13,10/1,12/3} {
        \draw[ambient edge] (p\i) -- (p\j);
      }
      \draw[path edge] (p13) -- node[midway, above left=1pt] {$e_-$} (p0);
      \draw[middle edge] (p0) -- node[midway, above=3pt] {$e_0$} (p1);
      \draw[path edge] (p1) -- node[midway, right=3pt] {$e_+$} (p2);
      \foreach \i in {0,...,13} {
        \node[ambient vertex] at (p\i) {};
      }
      \foreach \i in {13,0,1,2} {
        \node[path vertex] at (p\i) {};
      }
      \node[below=5pt of p0, text=figureGeometry, fill=white, inner sep=1pt,
        font=\footnotesize] {$u$};
      \node[below left=6pt of p1, text=figureGeometry, fill=white,
        inner sep=1pt, font=\footnotesize] {$v$};
      \node[below=2.16cm, text width=4cm, align=center] at (0,0)
        {centered path $p=(e_-,e_0,e_+)$};
    \end{scope}
  \end{tikzpicture}
  \caption{An admissible graph and the two local structures used in the
  proof.  Left: the Heawood graph, a cubic graph of girth six, with one
  star highlighted.  Center: the enlarged star defines
  $A_v=|\Gamma|^{-1}\sum_{g\in\Gamma}U_v(g)$.  Outgoing and incoming half-edges
  carry the left and right regular actions.  Right: a centered three-edge
  path with target middle edge $e_0$ and flank edges $e_-$ and $e_+$.  The
  rest of the graph is shown lightly.}
  \label{fig:finite-group-star-model}
\end{figure}

The girth assumption prevents short-cycle identifications among the
centered paths and neighboring stars used in the proof.
The parameter $t$ increases with inverse temperature, not physical
temperature.  The group, coupling, and bath normalization are fixed as the
graph grows, while $\dim\cH_{\cG}=|\Gamma|^{|E(\cG)|}$ increases.
The full-GNS gap is the gap on all mean-zero observables in the GNS inner
product $\langle X,Y\rangle_{\mathrm{GNS},\rho_\beta}
=\operatorname{Tr}(\rho_\beta X^\dagger Y)$.
For these Davies generators it agrees with the KMS gap by
\cref{sec:prelim-davies}; the results below also establish
$\ker K_{\cG,\beta}=\C I$.

\begin{theorem}[Finite-group Davies gaps]
\label{thm:finite-group-davies-gaps}
Fix a nontrivial finite group $\Gamma$.
\begin{enumerate}[label=\textnormal{(\roman*)}]
\item For every $0\le t_*<2/3$, set
$\beta_*:=J^{-1}\log(1+t_*)$, so that
$0\le\beta_*J<\log(5/3)$.  There is a constant
$c_{\Gamma,t_*}>0$ independent of $|V(\cG)|$ such that, for every $\cG\in \mathfrak G^{(3)}_{\geq 6}$ and every
$0\le t\le t_*$, or equivalently $0\le\beta\le\beta_*$,
the positive Davies operator $K_{\cG,\beta}$ satisfies
\[
  \gap_{\mathrm{GNS}}(K_{\cG,\beta})\ge c_{\Gamma,t_*}.
\]

\item For $\Gamma=S_3$, the same volume-uniform full-GNS conclusion holds throughout
$0\le t\le2$, equivalently $0\le\beta J\le\log 3$.
\end{enumerate}
\end{theorem}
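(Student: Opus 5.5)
We will deduce \cref{thm:finite-group-davies-gaps} from \cref{thm:abstract-two-axis-gap-assembly}, applied on $L^2_0(\rho_\beta)$ with $K=K_{\cG,\beta}=\sum_e K_e$. The elementary updates are $K_i=K_e$, one per edge. The geometric family $\mathcal P$ is the set of centered three-edge paths $p=(e_-,e_0,e_+)$, with $I(p)=\{e_-,e_0,e_+\}$. The interface $\Pi$ is the KMS-orthogonal projection onto mean-zero observables that depend only on the irreducible-representation labels of the edge variables. Because the graph is cubic, each edge is the middle of $4$ centered paths and a flank of $8$ more. Hence $\sum_p K_p\preceq r_K K$ holds with a graph-independent $r_K$ (namely $r_K=12$), which gives (G2). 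The girth-six assumption guarantees that each path and its neighbouring stars look locally like a tree. Consequently all local quantities (the kernels $E_p$, the gaps of $K_p$, and the shorted operators $\mathcal S_\Pi(K_p)$) reduce to finitely many isomorphism types of a fixed finite-dimensional problem, depending only on $\Gamma$ and $t$. Continuity in $t$ on the compact interval $[0,t_*]$ then makes every constant uniform.

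The steps, in order, are as follows.

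\textbf{Condition (i).} We invoke the Davies-to-label comparison \cref{thm:finite-block-davies-label-comparison}. It reduces $A=\Pi K\Pi$ to a classical heat-bath chain on irrep labels, with the Gibbs measure induced by the vertex constraints. The classical gap would be certified by a one-site Dobrushin influence bound. The single-site influence is controlled by $t$, and we expect the Dobrushin sum to be below $1$ exactly when $t<2/3$. This yields $a=a(\Gamma,t_*)>0$.

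\textbf{Conditions (ii) and (G1).} Since $H$ is commuting, $K_p$ is a finite-dimensional operator on the observables supported near $p$, tensored with the identity elsewhere. Its gap $g_{\mathrm{loc}}$ on $(\ker K_p)^\perp$ is therefore a positive continuous function of $t$, with finitely many cases. For (G1), we must show that any $X\in\operatorname{ran}\Pi^\perp$ lies substantially outside $\ker K_p$ for many $p$. To do this we identify $\ker K_p$, which consists of observables invariant under the updates on the three edges. We then argue that an observable invariant for every path must be label-only, i.e.\ in $\operatorname{ran}\Pi$. Quantifying this via a local decomposition of $\Pi^\perp$ into edge-wise coherence components gives $\kappa_{\mathrm{cov}}>0$.

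\textbf{Conditions (iii) and (G3).} We take $C_p$ to be the label-chain Dirichlet form of the middle edge $e_0$ compressed to $\operatorname{ran}\Pi$, or a constant multiple of $\Pi K_{e_0}\Pi$. With this choice $\sum_p C_p$ dominates $A$ with $r_A=1/4$ or a similar constant, which gives (G3). Condition (iii) requires the shorted operator of the three-edge block to retain a fixed fraction of the middle-edge label dissipation. This is the ``exact centered-path kernel calculation'': we compute the Schur complement $A_p-B_p^\dagger T_p^{-1}B_p$ explicitly on the path, using the fusion-rule structure of label transitions.

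With these inputs in hand, \eqref{eq:abstract-two-axis-assembled-gap} gives a gap $(1-\sqrt{1-\eta})\min\{a,\tau\}$, which is independent of $|V(\cG)|$. The KMS--GNS equivalence from \cref{sec:prelim-davies} transfers this bound to $\gap_{\mathrm{GNS}}$.

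For part (ii) with $\Gamma=S_3$, only the interface bound (i) needs strengthening. The remaining local constants are continuous and positive on $[0,2]$, which can be checked by finite computation. We replace the one-site Dobrushin argument by a block (vertex-star) Dobrushin criterion for the label chain. The $S_3$ irreps have dimensions $1,1,2$, and their fusion rules make the star with $27$ label states and $6$ boundary labels finite and explicitly computable. We verify that the block influence stays below $1$ up to and including $t=2$.

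\textbf{Main obstacle.} We expect the hardest part to be the uniform-in-$t$ verification of (iii) together with the Dobrushin estimates. These require an honest computation of the path Schur complement and of the label-chain influences. The threshold $2/3$ and the $S_3$ reach to $t=2$ depend on getting these computations sharp.
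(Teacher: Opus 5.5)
Your architecture is the paper's: the same label projection $\Pi$, centered paths with $K_p=K_{e_-}+K_{e_0}+K_{e_+}$, target $C_p=A_{e_0}$ (for which $\sum_pC_p=4A$ exactly, so $r_A=4$), one-site Dobrushin plus the Davies-to-label comparison for (i), and a vertex-star block criterion for $S_3$.

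The genuine gap is Condition~(iii). Since $\langle f,\mathcal S_\Pi(K_p)f\rangle=0$ exactly when $f\in\Pi(\ker K_p\cap L^2_0(\rho_\beta))$, any bound $\mathcal S_\Pi(K_p)\succeq c\,A_{e_0}$ with $c>0$ forces $\operatorname{ran}(\Pi E_p)\subseteq\ker A_{e_0}$. This middle-edge shielding inclusion (\cref{prop:middle-edge-shielding}) is the missing idea. It is also why only the middle edge can serve as target.

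For a flank edge the inclusion is false. Take $S_3$ with outer labels $\tau,\tau$ at $v_L$. The projection onto the trivial summand of $V_\tau\otimes V_\tau$, acting on the $v_L$-owned outer factors, is $K_p$-fixed. Yet on configurations with those outer labels its label projection is $(d_\lambda+t\mathbf 1_{\lambda=\mathbf 1})/(4d_\lambda+t)$ with $\lambda=\lambda_{e_-}$, which is nonconstant for $t>0$. At an internal vertex, by contrast, the single outer leg forces every fixed observable to be scalar by Schur's lemma. So the $\nu$-dependent factor $w_{\mathrm{mid}}(\nu)=z_t(\lambda_{e_-}^*,\nu,\mu_1)z_t(\nu^*,\lambda_{e_+},\mu_2)$ appears identically in the numerator and denominator of $\mathcal E_{\mathrm{lab}}X$ and cancels. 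A label observable independent of $\lambda_{e_0}$ then commutes with every middle-edge coupling and lies in $\ker A_{e_0}$.

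What the path computation must deliver is the exact kernel of $K_p$ (\cref{lem:exact-ordered-path-kernel}), not an explicit Schur complement, which is unavailable for a general $\Gamma$. Continuity in $t$ does not give uniformity here either. The kernel $\ker\mathcal S_\Pi(K_p)=\Pi(\ker K_p)$ moves with the $t$-dependent KMS projection $\Pi$: in the example, the label projections of the three isotypic projections of $V_\tau\otimes V_\tau$ span one dimension at $t=0$ and three for $t>0$. Moreover, the best constant in $S_t\succeq c\,C_t$ can tend to $0$ even when $\ker S_t\subseteq\ker C_t$ for every $t$.

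The paper instead combines $\inf_{y\in\operatorname{ran}\Pi^\perp}\operatorname{dist}(f+y,\operatorname{ran}E_p)^2=\operatorname{dist}(f,\operatorname{ran}\Pi E_p)^2$ with shielding, the local gap $g_{\mathrm{loc}}$, and the uniform bound $\lVert A_{e_0}\rVert\le C_{\mathrm{mid}}$, obtaining $c_{\mathrm{Sch}}=g_{\mathrm{loc}}/C_{\mathrm{mid}}$. Compactness is legitimate for (ii), but only because $\ker K_p$ is $t$-independent: it is the commutant of the $\beta$-independent Bohr components, and the lowest nonzero eigenvalue is not continuous if the kernel can jump. The reduction to finitely many path types must also be made on ordered blocks with the six outer labels fixed. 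Only there do the Gibbs state and the KMS inner product factor into path and exterior parts.

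Three further steps need repair.

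\textbf{The $S_3$ part.} A vertex-star criterion gives a gap for the star chain $\mathsf K^\star$. In the paper this is path coupling: the two endpoint stars of a discrepant edge remove it, and the four neighbouring stars each create at most $B_T$ expected discrepancies, so $B_T<1/2$ is needed. But \cref{prop:finite-group-davies-label-comparison} bounds $A$ below only by the single-edge chain $\mathsf K_{\mathrm{lab}}$. You must add $2\mathsf K_{\mathrm{lab}}\succeq\delta_{\mathrm{int}}\mathsf K^\star$. This needs a uniform gap $\delta_{\mathrm{int}}$ for the single-edge heat bath inside every conditioned star, certified in the paper as $11/35$. Both certificates must hold on all of $[0,2]$, not at sampled $t$; the paper uses Bernstein-coefficient monotonicity for this.

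\textbf{Condition (G1).} Edge-wise coherences only account for the mismatched sector. $\operatorname{ran}\Pi^\perp$ also contains non-scalar matrix and fusion-multiplicity directions inside matching fixed-label blocks, and these live on vertex factors. The paper detects them with commuting local projections $R_w$: $E_p^\perp\succeq\frac12(R_{v_1(p)}+R_{v_2(p)})$, and each vertex is internal to $12$ paths, giving $6$. It detects mismatches through the $8$ paths in which the mismatched edge is the single outer leg at an internal vertex. It then uses that $E_p$ commutes with the Peter--Weyl pinchings to combine the two sectors. The qualitative fact that an observable fixed by every path is label-only gives no graph-uniform $\kappa_{\mathrm{cov}}$.

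\textbf{Condition (i).} The comparison theorem must be applied edge by edge to the endpoint system conditioned on the exterior labels, then averaged. Applied globally, (C3) fails for configurations differing on two edges. The threshold must also actually be derived. The normalized vertex weights $z_t/(d_{\lambda_1}d_{\lambda_2}d_{\lambda_3})$ lie in $[1,1+t]$, giving $c_{ef}\le t/(t+2)$ for each of the four neighbouring edges. Hence $\sum_fc_{ef}\le4t/(t+2)$, which is below $1$ exactly when $t<2/3$.
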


\begin{proof}
For part~(i), \cref{sec:common-population-certificate,sec:common-fibre-certificate,sec:common-schur-certificate}
verify Conditions~(i)--(iii) of \cref{thm:abstract-two-axis-gap-assembly}
with positive constants uniform in graph size and $0\le t\le t_*$.  The
fixed-space cover in \cref{lem:cubic-defect-frame} and the incidence identities
in \cref{lem:cubic-path-ownership} give
$\kappa_{\mathrm{cov}}=6$, $r_K=12$, and $r_A=4$.
The correspondence between the abstract and model constants is summarized
in \cref{sec:model-constant-summary}.
The abstract theorem therefore yields a uniform positive lower bound on
$K_{\cG,\beta}$ on $L^2_0(\rho_\beta)$ in the KMS inner product.
Any fixed observable $X$ has a fixed mean-zero part
$X-\operatorname{Tr}(\rho_\beta X)I$, which this bound forces to vanish;
hence $\ker K_{\cG,\beta}=\C I$.
By \cref{prop:davies-kms-gns-gap-bridge}, the same bound gives the stated
GNS gap.  Part~(ii) is proved in \cref{sec:s3-strengthening}.
Uniformity is over the graph size and the stated temperature interval,
not over $|\Gamma|$.
\end{proof}

The temperature restriction in part~(i) comes only from the classical
estimate for the label interface.  The local quantum estimates remain
valid on any bounded interval of inverse temperatures, with constants
that may depend on the upper endpoint, while the geometric conditions
are independent of temperature.  Thus, improving the classical interface
estimate can extend the temperature range without changing the remaining
parts of the argument.

\FloatBarrier

The $S_3$ improvement follows by verifying a finite classical certificate
on conditioned vertex stars and combining it with the local quantum and
geometric estimates. The certificate-to-gap principle and its verification
are given in \cref{sec:s3-star-certificate,sec:s3-exact-certificate-application}.

The model-specific projection, local update collections, and assembly conditions are
specified in \cref{sec:formal-model-split-path-cover,sec:three-certificate-interfaces,sec:finite-group-incidence-assembly}
and verified in \cref{sec:common-analytic-certificates}.

\section{Preliminaries for the Abstract Gap Theorem}
\label{sec:preliminaries}

This section collects the finite-dimensional operator tools used in the
abstract gap argument. \Cref{sec:prelim-general-conventions} fixes the
operator conventions and records the kernel identity for positive sums.
\Cref{sec:prelim-shorted-operators} introduces shorted operators and their
Schur-complement formulas, while \cref{sec:prelim-above-kernel-poincare}
records gap bounds above a nontrivial kernel and a compactness criterion
for their uniformity. The background for the finite-group application
is collected separately in \cref{sec:model-preliminaries}.

\subsection{Operator Conventions and Kernels}
\label{sec:prelim-general-conventions}

All Hilbert spaces are finite dimensional. Adjoints and orthogonal
projections are defined using the chosen inner product.
For self-adjoint operators $A$ and $B$, we write $A\succeq B$ when
$\langle x,(A-B)x\rangle\ge0$ for every $x$.
The notation $\lVert\cdot\rVert$ denotes the vector norm or the corresponding
operator norm, depending on its argument.
An operator defined on a subspace is also viewed as an operator on the
whole space by setting it to zero on the orthogonal complement.

A trace subscript specifies the space over which the full trace is taken.
Partial traces are explicitly identified, with the subscript indicating
the factors being traced out.
The Hilbert--Schmidt norm is
\[
  \lVert X\rVert_{\mathrm{HS}}
  :=\bigl(\operatorname{Tr}(X^\dagger X)\bigr)^{1/2}.
\]

\begin{lemma}[Kernel of a sum of positive operators]
\label{lem:positive-sum-kernel}
Let $\{K_i\}_{i\in\mathcal I}$ be a finite family of positive operators
on a common Hilbert space. For every nonempty $\mathcal J\subseteq\mathcal I$,
\[
  \ker\Bigl(\sum_{i\in\mathcal J}K_i\Bigr)
  =\bigcap_{i\in\mathcal J}\ker K_i.
\]
\end{lemma}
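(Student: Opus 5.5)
The plan is to prove the two inclusions separately, using only the quadratic forms of the operators. Write $K_{\mathcal J}:=\sum_{i\in\mathcal J}K_i$.

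For the inclusion $\bigcap_{i\in\mathcal J}\ker K_i\subseteq\ker K_{\mathcal J}$, nothing is needed beyond linearity. If $K_ix=0$ for every $i\in\mathcal J$, then summing gives $K_{\mathcal J}x=0$.

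For the reverse inclusion, take $x\in\ker K_{\mathcal J}$. Then
\[
 0=\langle x,K_{\mathcal J}x\rangle=\sum_{i\in\mathcal J}\langle x,K_ix\rangle .
\]
Each summand is nonnegative because $K_i\succeq0$, so every summand vanishes: $\langle x,K_ix\rangle=0$ for all $i\in\mathcal J$.

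It remains to pass from a vanishing quadratic form to $x\in\ker K_i$. Since $K_i$ is positive on a finite-dimensional space, it has a positive square root $K_i^{1/2}$ by the spectral theorem. Then
\[
 \langle x,K_ix\rangle=\lVert K_i^{1/2}x\rVert^2,
\]
so this quantity vanishing forces $K_i^{1/2}x=0$. Applying $K_i^{1/2}$ once more gives $K_ix=0$. Alternatively, the Cauchy--Schwarz inequality for the positive semidefinite form $(y,z)\mapsto\langle y,K_iz\rangle$ gives $|\langle y,K_ix\rangle|^2\le\langle y,K_iy\rangle\langle x,K_ix\rangle=0$ for all $y$, which also yields $K_ix=0$.

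No step is a real obstacle. The only point needing care is the last one, the implication from $\langle x,K_ix\rangle=0$ to $K_ix=0$. It relies on positivity of $K_i$ and not merely on self-adjointness, which is why the hypothesis $K_i\succeq0$ is essential. Nonemptiness of $\mathcal J$ ensures the intersection is over a nonempty index set, so both sides are well defined.
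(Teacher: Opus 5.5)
Your proof is correct and follows essentially the same route as the paper's. The paper likewise writes $0=\langle x,\sum_{i\in\mathcal J}K_ix\rangle=\sum_{i\in\mathcal J}\lVert K_i^{1/2}x\rVert^2$ to get $K_ix=0$ for each $i$, and notes that the reverse inclusion is immediate. Your Cauchy--Schwarz alternative is a valid extra, but it is not needed.
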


\begin{proof}
If $x$ belongs to the kernel of the sum, positivity gives
\[
  0=\Bigl\langle x,\Bigl(\sum_{i\in\mathcal J}K_i\Bigr)x\Bigr\rangle
   =\sum_{i\in\mathcal J}\lVert K_i^{1/2}x\rVert^2.
\]
Thus $K_ix=0$ for every $i\in\mathcal J$. The reverse inclusion is immediate.
\end{proof}

\subsection{Shorted Operators and Schur Complements}
\label{sec:prelim-shorted-operators}

Let $\mathcal K$ be a finite-dimensional Hilbert space, let $\Pi$ be an
orthogonal projection on $\mathcal K$, and set $\Pi^\perp=I-\Pi$.
Relative to
$\mathcal K=\operatorname{ran}\Pi\oplus\operatorname{ran}\Pi^\perp$, write
\[
  G=
  \begin{pmatrix}
    A&-B^\dagger\\
    -B&T
  \end{pmatrix}
  \succeq0.
\]
In this finite-dimensional setting, we use the following block formula as
the definition.

\begin{definition}[Shorted operator]
\label{def:shorted-operator}
The shorted operator of $G$ to $\operatorname{ran}\Pi$ is
\begin{equation}
  \label{eq:short-pseudoinverse-formula}
  \mathcal S_{\Pi}(G)
  :=
  \begin{pmatrix}
    A-B^\dagger T^+B&0\\
    0&0
  \end{pmatrix}_{\operatorname{ran}\Pi\oplus\operatorname{ran}\Pi^\perp},
\end{equation}
where $T^+$ is the Moore--Penrose pseudoinverse.
\end{definition}

The retained block $A-B^\dagger T^+B$ is the generalized Schur complement of
$T$ in $G$; when $T$ is invertible, it is the ordinary Schur complement
$A-B^\dagger T^{-1}B$.  This is the finite-dimensional block form of the
classical shorted-operator construction
\cite{anderson1975shorted,friedrich2018generalized}.  It also makes the
support convention explicit: $\mathcal S_{\Pi}(G)$ maps
$\operatorname{ran}\Pi$ into itself and vanishes on $\operatorname{ran}\Pi^\perp$.

The first lemma gives two intrinsic characterizations of the block
definition.

\begin{lemma}[Variational and maximal characterizations]
\label{lem:shorted-operator-characterizations}
The operator $\mathcal S_{\Pi}(G)$ is positive and is the greatest
element, in the Loewner order, of
\begin{equation}
  \label{eq:short-maximal-definition}
  \bigl\{X\succeq0:X\preceq G,\ X=\Pi X\Pi\bigr\}.
\end{equation}
Equivalently, for every $x\in\operatorname{ran}\Pi$,
\begin{equation}
  \label{eq:short-variational-definition}
  \langle x,\mathcal S_{\Pi}(G)x\rangle
  =\inf_{y\in\operatorname{ran}\Pi^\perp}
    \langle x+y,G(x+y)\rangle.
\end{equation}
\end{lemma}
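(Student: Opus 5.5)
The plan is to prove the variational formula \eqref{eq:short-variational-definition} first, and then deduce positivity and maximality from it.

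\textbf{Variational formula.} Fix $x\in\operatorname{ran}\Pi$ and expand the quadratic form as a function of $y\in\operatorname{ran}\Pi^\perp$:
\[
  q(y):=\langle x+y,G(x+y)\rangle
  =\langle x,Ax\rangle-2\,\mathrm{Re}\langle y,Bx\rangle+\langle y,Ty\rangle .
\]
The key input is the range inclusion $Bx\in\operatorname{ran}T$. This is the standard consequence of $G\succeq0$. Suppose $z\in\ker T$. Then for every $s>0$,
\[
  \langle x+sz,G(x+sz)\rangle=\langle x,Ax\rangle-2s\,\mathrm{Re}\langle z,Bx\rangle\ge0 .
\]
Letting $s\to\infty$, and also replacing $z$ by a phase multiple of $z$, forces $\langle z,Bx\rangle=0$. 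Hence $Bx\perp\ker T$. Since $T$ is self-adjoint, this gives $Bx\in\operatorname{ran}T$.

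Because $q$ is a convex quadratic on $\operatorname{ran}\Pi^\perp$ and $Bx\in\operatorname{ran}T$, the choice $y_*=T^+Bx$ solves $Ty_*=Bx$ and is a minimizer. To see this, complete the square using $TT^+Bx=Bx$:
\[
  q(y)=\langle x,(A-B^\dagger T^+B)x\rangle+\langle y-y_*,T(y-y_*)\rangle .
\]
The second term is nonnegative and vanishes at $y=y_*$. So the infimum is attained and equals $\langle x,(A-B^\dagger T^+B)x\rangle$, which is \eqref{eq:short-variational-definition}. Since $q\ge0$, positivity of $\mathcal S_\Pi(G)$ on $\operatorname{ran}\Pi$ follows, and it is zero on $\operatorname{ran}\Pi^\perp$ by definition.

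\textbf{Membership in the set.} By construction, $\mathcal S_\Pi(G)=\Pi\mathcal S_\Pi(G)\Pi$. To show $\mathcal S_\Pi(G)\preceq G$, write a general vector as $x+y$ with $x\in\operatorname{ran}\Pi$ and $y\in\operatorname{ran}\Pi^\perp$. Then
\[
  \langle x+y,\mathcal S_\Pi(G)(x+y)\rangle=\langle x,\mathcal S_\Pi(G)x\rangle\le\langle x+y,G(x+y)\rangle ,
\]
where the equality uses that $\mathcal S_\Pi(G)$ vanishes on $\operatorname{ran}\Pi^\perp$, and the inequality is the variational formula. So $\mathcal S_\Pi(G)$ lies in the set \eqref{eq:short-maximal-definition}.

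\textbf{Maximality.} Let $X\succeq0$ with $X\preceq G$ and $X=\Pi X\Pi$. For $x\in\operatorname{ran}\Pi$ and any $y\in\operatorname{ran}\Pi^\perp$,
\[
  \langle x,Xx\rangle=\langle x+y,X(x+y)\rangle\le\langle x+y,G(x+y)\rangle .
\]
Taking the infimum over $y$ gives $\langle x,Xx\rangle\le\langle x,\mathcal S_\Pi(G)x\rangle$. Both $X$ and $\mathcal S_\Pi(G)$ vanish off $\operatorname{ran}\Pi$, so $X\preceq\mathcal S_\Pi(G)$. Hence $\mathcal S_\Pi(G)$ is the greatest element of the set.

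The only delicate step is the range inclusion $\operatorname{ran}B\subseteq\operatorname{ran}T$. It is needed because $T$ may be singular, and it is exactly what makes the pseudoinverse formula attain the infimum rather than merely bound it.
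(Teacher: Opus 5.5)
Your proof is correct and follows the paper's argument essentially step for step. Both use the range inclusion $\operatorname{ran}B\subseteq\operatorname{ran}T$ coming from $G\succeq0$, then complete the square with $y_*=T^+Bx$ to obtain the variational formula, which gives positivity and $\mathcal S_\Pi(G)\preceq G$. Maximality then follows by taking the infimum over $y$ in $\langle x,Xx\rangle=\langle x+y,X(x+y)\rangle\le\langle x+y,G(x+y)\rangle$. You simply write out the kernel-inclusion argument and the membership step that the paper compresses into single lines.
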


\begin{proof}
Positivity of $G$ implies
$\ker T\subseteq\ker B^\dagger$ and hence, in finite dimension,
$\operatorname{ran}B\subseteq\operatorname{ran}T$.  Completing the square
therefore gives
\[
  \langle x+y,G(x+y)\rangle
  =
  \langle x,(A-B^\dagger T^+B)x\rangle
  +\langle y-T^+Bx,T(y-T^+Bx)\rangle.
\]
This proves \eqref{eq:short-variational-definition} and shows that the
operator in \eqref{eq:short-pseudoinverse-formula} is positive and bounded
above by $G$.  If $0\preceq X\preceq G$ and $X=\Pi X\Pi$, then for every
$x\in\operatorname{ran}\Pi$ and $y\in\operatorname{ran}\Pi^\perp$,
\[
  \langle x,Xx\rangle
  =\langle x+y,X(x+y)\rangle
  \leq\langle x+y,G(x+y)\rangle.
\]
Taking the infimum over $y$ proves that the operator in
\eqref{eq:short-pseudoinverse-formula} is the greatest element in
\eqref{eq:short-maximal-definition}.
\end{proof}

The variational formula has a simple interpretation: we fix the
interface component $x$ and choose the complementary component $y$
to minimize the quadratic form.
The compression $A$ corresponds to setting $y=0$, whereas
$\mathcal S_\Pi(G)$ measures what remains after this minimization.
Thus a lower bound on $\mathcal S_\Pi(G)$ controls the loss caused
by coupling to the complementary subspace.

The second lemma isolates the order properties used in the assembly
argument.

\begin{lemma}[Order properties of shorting]
\label{lem:shorted-operator-order-properties}
On a fixed split, shorting is positively homogeneous and monotone:
\[
  \mathcal S_{\Pi}(cG)
  =c\mathcal S_{\Pi}(G)
  \quad(c\geq0),
  \qquad
  G\preceq H
  \ \Longrightarrow\
  \mathcal S_{\Pi}(G)
  \preceq\mathcal S_{\Pi}(H).
\]
It is also superadditive: for any finite family $G_j\succeq0$,
\begin{equation}
  \label{eq:short-superadditivity}
  \mathcal S_{\Pi}\!\left(\sum_jG_j\right)
  \succeq\sum_j\mathcal S_{\Pi}(G_j).
\end{equation}
In particular,
\[
  \mathcal S_{\Pi}(G)\preceq\Pi G\Pi.
\]
\end{lemma}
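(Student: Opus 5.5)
The whole argument rests on the variational formula \eqref{eq:short-variational-definition} from \cref{lem:shorted-operator-characterizations}. For $x\in\operatorname{ran}\Pi$ it expresses $\langle x,\mathcal S_\Pi(G)x\rangle$ as an infimum over $y\in\operatorname{ran}\Pi^\perp$ of the quadratic form $y\mapsto\langle x+y,G(x+y)\rangle$. Each property then reduces to an elementary statement about infima of quadratic forms. Since $\mathcal S_\Pi(G)$ vanishes on $\operatorname{ran}\Pi^\perp$ and maps $\operatorname{ran}\Pi$ into itself, it suffices to compare quadratic forms on $\operatorname{ran}\Pi$. Indeed, for a general vector $z=x+w$ with $x=\Pi z$, the form of $\mathcal S_\Pi(\cdot)$ at $z$ equals its form at $x$, so Loewner inequalities between shorted operators only need to be checked on $\operatorname{ran}\Pi$.

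\emph{Homogeneity.} For $c>0$ the map $y\mapsto y$ leaves $\operatorname{ran}\Pi^\perp$ unchanged, and $\inf_y c\langle x+y,G(x+y)\rangle=c\inf_y\langle x+y,G(x+y)\rangle$. The case $c=0$ is trivial, since both sides are $0$; alternatively, it follows from the block formula because $T^+$ of the zero operator is zero.

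\emph{Monotonicity.} If $G\preceq H$, then pointwise in $y$ we have $\langle x+y,G(x+y)\rangle\le\langle x+y,H(x+y)\rangle$, and taking infima preserves this inequality. \emph{Superadditivity.} For each fixed $y$,
\[
\Big\langle x+y,\Big(\sum_jG_j\Big)(x+y)\Big\rangle=\sum_j\langle x+y,G_j(x+y)\rangle\ge\sum_j\langle x,\mathcal S_\Pi(G_j)x\rangle ,
\]
because each summand is bounded below by its own infimum. Taking the infimum over the common $y$ on the left gives \eqref{eq:short-superadditivity}. Alternatively, the maximal characterization gives superadditivity directly: $\sum_j\mathcal S_\Pi(G_j)$ is positive, supported on $\operatorname{ran}\Pi$ (so equal to its $\Pi$-compression), and dominated by $\sum_jG_j$. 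Hence it lies in the set \eqref{eq:short-maximal-definition} for $\sum_j G_j$ and is below the greatest element.

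Finally, $\mathcal S_\Pi(G)\preceq\Pi G\Pi$ follows by choosing $y=0$ in the infimum, which gives $\langle x,\mathcal S_\Pi(G)x\rangle\le\langle x,Gx\rangle=\langle x,\Pi G\Pi x\rangle$ on $\operatorname{ran}\Pi$. On $\operatorname{ran}\Pi^\perp$ both operators vanish, and for mixed vectors the reduction in the first paragraph applies, because both $\mathcal S_\Pi(G)$ and $\Pi G\Pi$ are compressed by $\Pi$. The only point needing care is this support bookkeeping, namely that both sides are $\Pi$-compressed so the inequality reduces to $\operatorname{ran}\Pi$. I expect no real obstacle here; the infimum is attained in finite dimensions by the completing-the-square identity already established.
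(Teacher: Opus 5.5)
Your proposal is correct and follows essentially the same route as the paper: every property is read off from the variational formula \eqref{eq:short-variational-definition}, with superadditivity obtained by bounding each summand below by its own infimum and the final inequality by taking $y=0$. Your support bookkeeping (reducing Loewner comparisons to $\operatorname{ran}\Pi$) and the alternative superadditivity argument via the maximal characterization are both valid, and they fill in details the paper's short proof leaves implicit.
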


\begin{proof}
Positive homogeneity and monotonicity follow immediately from
\eqref{eq:short-variational-definition}.  For superadditivity, the same
formula gives
\[
  \inf_y\sum_j\langle x+y,G_j(x+y)\rangle
  \geq
  \sum_j\inf_{y_j}\langle x+y_j,G_j(x+y_j)\rangle.
\]
Comparison of quadratic forms proves \eqref{eq:short-superadditivity}.
Finally, taking $y=0$ in \eqref{eq:short-variational-definition} gives the
last inequality.
\end{proof}

\subsection{Above-Kernel Poincar\'e Inequalities}
\label{sec:prelim-above-kernel-poincare}

Let $D\succeq0$ be self-adjoint on a finite-dimensional Hilbert space, let
$F:=\ker D$, and let $E_F$ be the orthogonal projection onto $F$.  A lower
bound above the kernel may be written either as the operator inequality
\[
  D\succeq g(I-E_F)
\]
or, equivalently, as the Poincar\'e inequality
\begin{equation}
  \label{eq:above-kernel-poincare}
  \langle X,DX\rangle
  \ge g\,\operatorname{dist}(X,F)^2,
  \qquad
  \operatorname{dist}(X,F):=\inf_{Y\in F}\lVert X-Y\rVert.
\end{equation}
Indeed, $\operatorname{dist}(X,F)=\lVert(I-E_F)X\rVert$, and both statements
say that the restriction of $D$ to $F^\perp$ has spectral bottom at least
$g$.  We use the two forms interchangeably below.

The following standard finite-dimensional perturbation argument makes this
bound uniform in a compact parameter once the kernel is fixed; see, for
example, \cite[Chapter~II]{kato1995perturbation} and
\cite[Section~1.3]{tao2012topics}.

\begin{lemma}[Finite-dimensional compactness above a fixed kernel]
\label{lem:finite-compactness-fixed-kernel}
Let $V$ be finite dimensional, let $I\subset\mathbb R$ be compact, and let
$\langle\cdot,\cdot\rangle_t$ be a continuous family of positive-definite
inner products on $V$.  Suppose $D_t$ is a continuous family of positive
operators, self-adjoint in $\langle\cdot,\cdot\rangle_t$, and that
$\ker D_t=F$ is independent of $t\in I$.  Write
\[
  \|X\|_t^2:=\langle X,X\rangle_t,
  \qquad
  \operatorname{dist}_t(X,F):=\inf_{Y\in F}\|X-Y\|_t.
\]
Then there is $g>0$ such that
\begin{equation}
  \label{eq:finite-compactness-fixed-kernel}
  \langle X,D_tX\rangle_t
  \ge g\,\operatorname{dist}_t(X,F)^2
  \qquad(X\in V,\ t\in I).
\end{equation}
\end{lemma}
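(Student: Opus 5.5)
The plan is a standard compactness argument: reduce to the unit sphere of the complement of $F$, where the quadratic form is positive, and take a minimum over a compact set.

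The one subtlety is that the orthogonal complement of $F$ depends on $t$, because the inner product does. Both sides of \eqref{eq:finite-compactness-fixed-kernel} are invariant under $X\mapsto X+Y$ with $Y\in F$. Indeed, $D_tY=0$, and self-adjointness gives $\langle Y,D_tX\rangle_t=\langle D_tY,X\rangle_t=0$. So it suffices to prove the inequality on a fixed algebraic complement $W$ of $F$, for example the complement taken in one reference inner product. The distance term is then $\operatorname{dist}_t(X,F)=\|(I-E_{F,t})X\|_t\le\|X\|_t$ for $X\in W$. It is therefore enough to find $g>0$ with $\langle X,D_tX\rangle_t\ge g\|X\|_t^2$ for all $X\in W$ and all $t\in I$.

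Now consider the function
\[
  \phi(X,t):=\frac{\langle X,D_tX\rangle_t}{\|X\|_t^2}
\]
on $(W\setminus\{0\})\times I$. It is homogeneous of degree zero in $X$, so we may restrict to the unit sphere of $W$ in a fixed reference norm. That sphere is compact because $V$ is finite dimensional, and so its product with $I$ is compact. The function $\phi$ is continuous there, since the inner products and $D_t$ depend continuously on $t$ and the denominator does not vanish. It is strictly positive: $D_t$ is positive and $\langle\cdot,\cdot\rangle_t$-self-adjoint, so $\langle X,D_tX\rangle_t=\|D_t^{1/2}X\|_t^2$. This vanishes only if $X\in\ker D_t=F$, and $F\cap W=\{0\}$. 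The minimum $g$ of $\phi$ over the compact set is therefore attained and positive.

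The only step that needs care is the reduction in the first paragraph. Working on the $t$-dependent complement $F^{\perp_t}$ directly would give a moving domain and make the compactness argument awkward. The kernel-invariance trick, which uses both the hypothesis $\ker D_t=F$ independent of $t$ and the self-adjointness of $D_t$, sidesteps this by letting us work on the fixed complement $W$.
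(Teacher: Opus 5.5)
Your argument is correct and is essentially the paper's: both fix an algebraic complement $W$ of $F$ (your invariance check under $X\mapsto X+Y$ with $Y\in F$ spells out a reduction the paper leaves implicit), obtain strict positivity on $W$ from $\ker D_t=F$, and get uniformity from compactness of $I$. The routes differ only in the last step. The paper minimizes over $t$ the smallest generalized eigenvalue of the Dirichlet form relative to the squared-distance form, whereas you use the cruder bound $\operatorname{dist}_t(X,F)\le\|X\|_t$ and take a joint minimum over the unit sphere of $W$ times $I$, which is slightly more elementary. The one thing to add is the trivial case $F=V$: there the sphere of $W$ is empty and the inequality holds for any $g>0$, as the paper notes.
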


\begin{proof}
If $F=V$, then the right-hand side of
\eqref{eq:finite-compactness-fixed-kernel} vanishes and the conclusion holds
for any $g>0$.  Assume henceforth that $F\ne V$.
Choose a fixed linear complement $V=F\oplus W$.  On $W$, the numerator in
\eqref{eq:finite-compactness-fixed-kernel} and the squared distance to $F$
are represented by continuous families of positive-definite matrices.  The
best constant is their smallest generalized eigenvalue.  It is positive for
each $t$ because the kernel is exactly $F$, and it depends continuously on
$t$.  Its minimum on the compact set $I$ is therefore strictly positive.
\end{proof}

\section{Proof of the Abstract Gap Theorem}
\label{sec:three-local-inputs}

We prove \cref{thm:abstract-two-axis-gap-assembly} by assembling the local
bounds and then applying a two-block estimate. The argument uses no group,
graph, or representation-theoretic structure.

\subsection{Abstract Local-to-Global Assembly}
\label{sec:abstract-two-axis-assembly}

\Cref{fig:abstract-gap-dependency-map} summarizes how the local and geometric
conditions give the two global bounds used in the final block estimate.

\begin{figure}[htbp]
  \centering
  \begin{tikzpicture}[
    assumption/.style={draw=black!45, rounded corners=1.5pt, fill=black!3,
      align=center, text width=4.90cm, minimum height=1.00cm, inner sep=4pt},
    assembly/.style={draw=figureGeometry, rounded corners=1.5pt, fill=figureGeometry!6,
      align=center, text width=2.55cm, minimum height=0.78cm, inner sep=3pt,
      font=\scriptsize},
    consequence/.style={draw=figureInterface, rounded corners=1.5pt,
      fill=figureInterface!6, align=center, text width=3.15cm, minimum height=1.05cm,
      inner sep=4pt},
    conclusion/.style={draw=black!65, rounded corners=1.5pt, fill=black!5,
      align=center, text width=3.95cm, minimum height=1.30cm, inner sep=5pt},
    dependency/.style={-{Latex[length=2mm]}, draw=black!60,
      line width=0.6pt},
    every node/.style={font=\footnotesize,text=black}
  ]
    \node[assumption, draw=figureInterface!70, fill=figureInterface!8, text width=4.15cm] (selectedinput) at (-0.2,2.60)
      {\mbox{\textnormal{(i)} Global $\Pi$-sector condition}\par
       \mbox{$A\succeq a\Pi$}};
    \node[assumption, draw=figureComplement!70, fill=figureComplement!8, text width=4.55cm] (transverseinput) at (-4.40,1.35)
      {\mbox{\textnormal{(ii)} Local quotient condition}\par
       {\scriptsize\mbox{$K_p\succeq g_{\mathrm{loc}}E_p^\perp\ (p\in\mathcal P)$}}};
    \node[assumption, draw=figureInterface!70, fill=figureInterface!8] (schurinput) at (-4.65,-2.05)
      {\mbox{\textnormal{(iii)} Pathwise Schur condition}\par
       \mbox{$\mathcal S_{\Pi}(K_p)\succeq c_{\mathrm{Sch}}C_p$}};

    \node[assembly, text width=4.10cm] (coverassembly) at (-4.40,0.20)
      {\mbox{\textnormal{(G1)} Fixed-space cover}\par
       \mbox{$\Pi^\perp(\sum_pE_p^\perp)\Pi^\perp
       \succeq\kappa_{\mathrm{cov}}\Pi^\perp$}};
    \node[assembly] (windowassembly) at (-4.40,-0.85)
      {\mbox{\textnormal{(G2)} $K$-assembly}\par
       \mbox{$\textstyle\sum_pK_p\preceq r_KK$}};
    \node[assembly] (targetassembly) at (-4.35,-3.25)
      {\mbox{\textnormal{(G3)} $C$-assembly}\par
       \mbox{$\textstyle\sum_pC_p\succeq r_AA$}};

    \node[consequence, draw=figureComplement!70, fill=figureComplement!9,
      text=black] (transversefloor) at (1.20,0.95)
      {\mbox{Lower bound on $T$}\par
       \mbox{$T\succeq\tau\Pi^\perp$}};
    \node[consequence] (schurmargin) at (1.20,-2.05)
      {\mbox{Global Schur margin}\par
       \mbox{$\mathcal S_{\Pi}(K)\succeq\eta A$}};

    \node[conclusion] (fullgap) at (5.65,0)
      {\mbox{\textit{Two-block assembly}}\par
       {\scriptsize\mbox{$K\succeq
       \bigl(1-\sqrt{1-\eta}\bigr)\min\{a,\tau\}\,I$}}};

    \draw[dependency] (transverseinput) -- (transversefloor);
    \draw[dependency] (coverassembly.east)
      to[bend right=8] (transversefloor.south west);
    \draw[dependency] (schurinput) -- (schurmargin);
    \draw[dependency] (windowassembly.east)
      to[bend right=10] (transversefloor.south west);
    \draw[dependency] (windowassembly.east)
      to[bend left=10] (schurmargin.north west);
    \draw[dependency] (targetassembly.east)
      to[bend right=10] (schurmargin.south west);
    \draw[dependency] (selectedinput.east) to[out=0,in=90] (fullgap.north);
    \draw[dependency] (transversefloor.south)
      to[bend right=15] (fullgap.west);
    \draw[dependency] (schurmargin.east) to[out=0,in=-90] (fullgap.south);
  \end{tikzpicture}
  \caption{Dependency map for the abstract gap estimate.  The blue
  fixed-space cover and $K$-assembly blocks combine with Condition~(ii) to
  produce the lower bound on $T$.  The $K$- and $C$-assembly blocks combine
  with Condition~(iii) to produce the global Schur margin.  Condition~(i)
  enters the final two-block estimate directly.}
  \label{fig:abstract-gap-dependency-map}
\end{figure}

We use the notation of \cref{thm:abstract-two-axis-gap-assembly}.  Thus
$K\succeq0$ is split by an orthogonal projection $\Pi$ as
\[
  K=
  \begin{pmatrix}
    A&-B^\dagger\\
    -B&T
  \end{pmatrix}_{\Pi\oplus\Pi^\perp},
\]
and $\{K_p:p\in\mathcal P\}$ is a family of positive local operators.
For each $p$, $E_p$ projects onto $\ker K_p$, and
$C_p=\Pi C_p\Pi\succeq0$ is its target form.  Conditions~(ii)--(iii) are the
local quotient and shorted-form bounds, while Geometric
Conditions~(G1)--(G3) are the fixed-space cover and the two assembly
comparisons.  No structure of
the index set $\mathcal P$ is used in the next two steps.

\begin{proposition}[Global bounds from local conditions]
\label{prop:global-bounds-from-local-inputs}
Assume Conditions~\textnormal{(ii)}--\textnormal{(iii)} and Geometric
Conditions~\textnormal{(G1)}--\textnormal{(G3)} of
\cref{thm:abstract-two-axis-gap-assembly}, and define
$\tau = \frac{g_\mathrm{loc}\kappa_{\mathrm{cov}}}{r_K}$ and
$\eta = \min \left\{\frac{c_{\mathrm{Sch}} r_A}{r_K}, 1 \right\}$.  Then
$T\succeq\tau\Pi^\perp$ and
$\mathcal S_{\Pi}(K)\succeq\eta A$.
\end{proposition}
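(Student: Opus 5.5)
The plan is to prove the two bounds separately; each is a short chain of Loewner inequalities using only the order properties already recorded.

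\emph{Bound on $T$.} Since $K_p\succeq0$ and $E_p$ is the projection onto $\ker K_p$, we have $K_p=E_p^\perp K_pE_p^\perp$. Condition~(ii) then says $K_p\succeq g_{\mathrm{loc}}E_p^\perp$ on the whole space. Summing over $p$ and using (G2) gives
\[
  r_KK\succeq\sum_{p\in\mathcal P}K_p\succeq g_{\mathrm{loc}}\sum_{p\in\mathcal P}E_p^\perp .
\]
Compression by $\Pi^\perp$ preserves the Loewner order, since $\langle x,\Pi^\perp X\Pi^\perp x\rangle=\langle\Pi^\perp x,X\Pi^\perp x\rangle$. Compressing the displayed inequality and applying (G1) yields
\[
  r_KT\succeq g_{\mathrm{loc}}\,\Pi^\perp\Big(\sum_pE_p^\perp\Big)\Pi^\perp\succeq g_{\mathrm{loc}}\kappa_{\mathrm{cov}}\Pi^\perp .
\]
Dividing by $r_K$ gives $T\succeq\tau\Pi^\perp$.

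\emph{Schur margin.} The argument uses \cref{lem:shorted-operator-order-properties}. Monotonicity and homogeneity applied to (G2) give $r_K\mathcal S_\Pi(K)\succeq\mathcal S_\Pi\big(\sum_pK_p\big)$. Superadditivity bounds the right-hand side below by $\sum_p\mathcal S_\Pi(K_p)$. Condition~(iii) bounds this below by $c_{\mathrm{Sch}}\sum_pC_p$, and (G3) bounds that below by $c_{\mathrm{Sch}}r_AA$. Hence
\[
  \mathcal S_\Pi(K)\succeq\frac{c_{\mathrm{Sch}}r_A}{r_K}A .
\]
To replace this constant by $\eta=\min\{c_{\mathrm{Sch}}r_A/r_K,1\}$, I distinguish two cases. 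If the ratio is at most $1$, the bound above is exactly the claim. If the ratio exceeds $1$, then $\frac{c_{\mathrm{Sch}}r_A}{r_K}A\succeq A$ because $A\succeq0$, so $\mathcal S_\Pi(K)\succeq A=1\cdot A$. (Together with $\mathcal S_\Pi(K)\preceq\Pi K\Pi=A$, this case in fact forces equality.)

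\emph{Main obstacle.} Nothing here is hard. The one point that needs care is the claim that Condition~(ii), stated as $K_p\succeq g_{\mathrm{loc}}E_p^\perp$, holds as an operator inequality on the full space, so that it may be summed and then compressed by $\Pi^\perp$. This holds as written: vectors in $\ker K_p$ contribute $0$ to both sides, and $K_p$ has no cross terms between $\ker K_p$ and its orthogonal complement.
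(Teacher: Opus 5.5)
Your proof is correct and follows the paper's argument essentially step for step. The bound on $T$ comes from compressing (G2) by $\Pi^\perp$ and applying (ii) and (G1), and the Schur margin comes from homogeneity, monotonicity and superadditivity of shorting combined with (iii) and (G3). The only differences are cosmetic: you sum (ii) before compressing rather than after, and you spell out the passage to $\eta=\min\{c_{\mathrm{Sch}}r_A/r_K,1\}$ via $A\succeq0$, which the paper leaves implicit.
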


\begin{proof}
By Geometric Condition~\textnormal{(G2)},
$\sum_{p}K_p\preceq r_KK$, and since
$\Pi^\perp K\Pi^\perp=T$, it follows that
\[ 
    r_KT \succeq\Pi^\perp\!\Big(\sum_{p\in\mathcal P}K_p\Big)\!\Pi^\perp.
\]
Recall that $\Pi^{\perp}$ is an orthogonal projector and thus $\left(\Pi^{\perp}\right)^2 = \left(\Pi^{\perp}\right)^\dagger = \Pi^{\perp}$. Applying Condition~\textnormal{(ii)} and then Geometric Condition~\textnormal{(G1)} of \cref{thm:abstract-two-axis-gap-assembly}, it follows that
\[
    \Pi^\perp\!\Big(\sum_{p\in\mathcal P}K_p\Big)\!\Pi^\perp
    \succeq
  g_{\mathrm{loc}}
  \Pi^\perp\!\Big(\sum_{p\in\mathcal P}E_p^\perp\Big)\!\Pi^\perp
  \succeq
  g_{\mathrm{loc}}\kappa_{\mathrm{cov}}\Pi^\perp.
\]
Thus, it immediately follows that
\begin{equation}
  \label{eq:assembled-transverse-floor}
  T\succeq \tau\Pi^\perp.
\end{equation}
The claim about $\mathcal S_{\Pi}(K)$ follows from Geometric
Conditions~\textnormal{(G2)}--\textnormal{(G3)}, the superadditivity of the
shorting operator in \cref{lem:shorted-operator-order-properties}, and
Condition~\textnormal{(iii)} of \cref{thm:abstract-two-axis-gap-assembly}:
\begin{align*}
  r_K\mathcal S_{\Pi}(K)
  &=\mathcal S_{\Pi}(r_KK)
  \succeq
    \mathcal S_{\Pi}\!\left(\sum_{p\in\mathcal P}K_p\right)
  \succeq
    \sum_{p\in\mathcal P}\mathcal S_{\Pi}(K_p)\\
  &\succeq
    c_{\mathrm{Sch}}\sum_{p\in\mathcal P}C_p
  \succeq c_{\mathrm{Sch}}r_AA.
\end{align*}
Therefore
$
  \mathcal S_{\Pi}(K)
  \succeq \frac{c_{\mathrm{Sch}}r_A}{r_K}A
  \succeq \eta A.
$
\end{proof}

It remains to combine these two global bounds with Condition~\textnormal{(i)}.
The following two-block estimate completes the assembly.

\begin{lemma}[Two-block gap assembly]
\label{lem:two-block-gap-assembly}
Let
\[
  K=
  \begin{pmatrix}
    A&-B^\dagger\\
    -B&T
  \end{pmatrix}_{\Pi\oplus\Pi^\perp}
  \succeq0.
\]
If, for some $a,\tau>0$ and $0<\eta\le1$,
\[
  A\succeq a\Pi,
  \qquad
  T\succeq \tau\Pi^\perp,
  \qquad
  \mathcal S_{\Pi}(K)\succeq\eta A,
\]
then
\begin{equation}
  \label{eq:two-block-assembled-gap}
  K\succeq
  \left(1-\sqrt{1-\eta}\right)\min\{a,\tau\}\,I.
\end{equation}
\end{lemma}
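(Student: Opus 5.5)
The plan is to bound the quadratic form of $K$ directly, splitting a vector $z=x+y$ with $x\in\operatorname{ran}\Pi$ and $y\in\operatorname{ran}\Pi^\perp$. In this splitting,
\[
 \langle z,Kz\rangle=\langle x,Ax\rangle+\langle y,Ty\rangle-2\operatorname{Re}\langle y,Bx\rangle .
\]
The Schur hypothesis will give a quantitative Cauchy--Schwarz bound on the cross term. The resulting scalar $2\times2$ form is then easy to bound below.

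\textbf{Step 1: rewrite the Schur hypothesis.} Since $T\succeq\tau\Pi^\perp$ with $\tau>0$, $T$ is invertible on $\operatorname{ran}\Pi^\perp$. Hence $T^+$ in \cref{def:shorted-operator} is the genuine inverse there, and $T^{-1/2}$ makes sense on $\operatorname{ran}\Pi^\perp$. The block formula \eqref{eq:short-pseudoinverse-formula} then reads $\mathcal S_\Pi(K)=A-B^\dagger T^{-1}B$. So the hypothesis $\mathcal S_\Pi(K)\succeq\eta A$ is equivalent to
\[
 B^\dagger T^{-1}B\preceq(1-\eta)A ,
\]
that is, $\|T^{-1/2}Bx\|^2\le(1-\eta)\|A^{1/2}x\|^2$ for all $x\in\operatorname{ran}\Pi$. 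This is the one place that needs care: the identification of $T^+$ with $T^{-1}$ on $\operatorname{ran}\Pi^\perp$ must use the zero-extension convention consistently.

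\textbf{Step 2: control the cross term.} Write
\[
 \langle y,Bx\rangle=\langle T^{1/2}y,\,T^{-1/2}Bx\rangle .
\]
Set $\alpha=\|A^{1/2}x\|$, $\beta=\|T^{1/2}y\|$ and $c=\sqrt{1-\eta}\in[0,1)$. Cauchy--Schwarz together with Step 1 gives $|\langle y,Bx\rangle|\le c\,\alpha\beta$. Therefore
\[
 \langle z,Kz\rangle\ge\alpha^2+\beta^2-2c\alpha\beta .
\]

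\textbf{Step 3: bound the scalar form.} The matrix $\begin{pmatrix}1&-c\\-c&1\end{pmatrix}$ has smallest eigenvalue $1-c$. Equivalently, $2c\alpha\beta\le c(\alpha^2+\beta^2)$ by AM--GM. This yields
\[
 \langle z,Kz\rangle\ge(1-c)(\alpha^2+\beta^2).
\]
Finally, $A\succeq a\Pi$ gives $\alpha^2\ge a\|x\|^2$, and $T\succeq\tau\Pi^\perp$ gives $\beta^2\ge\tau\|y\|^2$. Orthogonality of the split gives $\|z\|^2=\|x\|^2+\|y\|^2$. Combining these,
\[
 \langle z,Kz\rangle\ge(1-\sqrt{1-\eta})\min\{a,\tau\}\|z\|^2 ,
\]
which is \eqref{eq:two-block-assembled-gap}.

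No step is a genuine obstacle; the only subtle point is the invertibility and pseudoinverse bookkeeping in Step 1.
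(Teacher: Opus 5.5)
Your proof is correct and is essentially the paper's argument. The paper sets $C=T^{-1/2}BA^{-1/2}$, deduces $\lVert C\rVert\le\sqrt{1-\eta}$ from the Schur hypothesis, and writes $K=\operatorname{diag}(A^{1/2},T^{1/2})\,M\,\operatorname{diag}(A^{1/2},T^{1/2})$ with a middle factor $M\succeq(1-\lVert C\rVert)I$; this is your Cauchy--Schwarz estimate and scalar $2\times2$ bound stated as an operator congruence rather than on quadratic forms. One minor advantage of your version is that it never inverts $A$, so $A\succeq a\Pi$ is used only in the last step.
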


\begin{proof}
The first two bounds make $A$ and $T$ invertible on $\Pi$ and $\Pi^\perp$,
respectively. Hence \cref{def:shorted-operator} becomes
\[
  \mathcal S_{\Pi}(K)=A-B^\dagger T^{-1}B.
\]
We define
\[
  C:=T^{-1/2}BA^{-1/2}:\operatorname{ran}\Pi
  \longrightarrow\operatorname{ran}\Pi^\perp.
\]
Since $A, T$ are PSD matrices, so are $A^{-1/2}$ and $T^{-1/2}$. Thus,
\[
    A^{-1/2} \mathcal S_{\Pi}(K) A^{-1/2} = I - C^\dagger C.
\]
Since $\mathcal S_{\Pi}(K) \succeq \eta A$ and $A$ is invertible on $\Pi$, it follows that $A^{-1/2} \mathcal S_{\Pi}(K) A^{-1/2} \succeq \eta \Pi$. Thus, 
\[
    C^\dagger C \preceq (1 - \eta)\Pi,
\]
and therefore $\lVert C\rVert\le\sqrt{1-\eta}$. We factorize $K$ into block matrices
\[
  K=
  \begin{pmatrix}A^{1/2}&0\\0&T^{1/2}\end{pmatrix}
  \begin{pmatrix}\Pi&-C^\dagger\\-C&\Pi^\perp\end{pmatrix}
  \begin{pmatrix}A^{1/2}&0\\0&T^{1/2}\end{pmatrix}.
\]
To bound the effect of the middle term, let $z = x + y$ be an arbitrary vector with $x \in \operatorname{ran}\Pi$ and $y \in \operatorname{ran}\Pi^\perp$. Then
\[
  \begin{aligned}
    z^{\dagger} \begin{pmatrix} 0 &-C^\dagger\\-C& 0 \end{pmatrix} z
    &= -2\operatorname{Re}(y^\dagger Cx)\\
    &\geq -2\|C\|\|x\|\|y\|\\
    &\geq -\|C\|(\|x\|^2+\|y\|^2)
     = -\|C\|\|z\|^2.
  \end{aligned}
\]
Thus, 
\[
  \begin{pmatrix}\Pi&-C^\dagger\\-C&\Pi^\perp\end{pmatrix}
  \succeq (1-\lVert C\rVert)I.
\]
Combining this with the assumptions that $A \succeq a \Pi$ and $T \succeq \tau \Pi^\perp$, we obtain
\[
  K
  \succeq
  \left(1-\sqrt{1-\eta}\right)
  \begin{pmatrix}A&0\\0&T\end{pmatrix}
  \succeq
  \left(1-\sqrt{1-\eta}\right)\min\{a,\tau\}\,I.
\]
This is \eqref{eq:two-block-assembled-gap}.
\end{proof}

\begin{proof}[Proof of \cref{thm:abstract-two-axis-gap-assembly}]
Conditions~(ii)--(iii), together with Geometric
Conditions~(G1)--(G3), give the stated bounds on $T$ and
$\mathcal S_{\Pi}(K)$ by
\cref{prop:global-bounds-from-local-inputs}.  Combining these bounds with
Condition~(i), \cref{lem:two-block-gap-assembly} gives
\eqref{eq:abstract-two-axis-assembled-gap}.
\end{proof}

\section{Preliminaries for the Finite-Group Model}
\label{sec:model-preliminaries}

This section collects the background for the finite-group theorem.
\Cref{sec:prelim-gns} fixes the KMS and GNS inner products and the mean-zero
observable spaces. \Cref{sec:prelim-davies} recalls the Davies construction,
its fixed points, and the equality of its KMS and GNS gaps.
\Cref{sec:prelim-quantum-double} describes the quantum-double vertex terms
retained by our model, and \cref{sec:prelim-finite-groups} develops the
representation theory used to describe observable blocks and irrep labels.
Finally, \cref{sec:prelim-dobrushin} recalls classical heat-bath chains and
Dobrushin comparison for the label-gap estimates.

\subsection{KMS and GNS Geometry}
\label{sec:prelim-gns}

Let $\cH$ be a finite-dimensional Hilbert space, let $H$ be a
Hamiltonian on $\cH$, and fix $\beta\ge0$.  Its Gibbs state is
\[
  \rho_\beta:=Z_\beta^{-1}e^{-\beta H},
  \qquad
  Z_\beta:=\operatorname{Tr}(e^{-\beta H}).
\]
For the fixed Gibbs state $\rho_\beta$, we use the inner products
\[
  \langle X,Y\rangle_{\mathrm{KMS}}
  :=\operatorname{Tr}\!\left(
    \rho_\beta^{1/2}X^\dagger
    \rho_\beta^{1/2}Y
  \right),
  \qquad
  \langle X,Y\rangle_{\mathrm{GNS}}
  :=\operatorname{Tr}(\rho_\beta X^\dagger Y).
\]
The notation $L^2(\rho_\beta)$ and the alternative notation
$\langle\cdot,\cdot\rangle_{\rho_\beta}$ refer to the KMS geometry unless
``GNS'' is written explicitly.  The Davies generators used below admit both
realizations, and their gaps are identified in
\cref{prop:davies-kms-gns-gap-bridge}.
Since
\[
  \langle I,X\rangle_{\mathrm{KMS}}
  =\langle I,X\rangle_{\mathrm{GNS}}
  =\operatorname{Tr}(\rho_\beta X),
\]
the mean-zero observable space is 
\[
  L^2_0(\rho_\beta)
  :=
  \{X\in\cB(\cH):\operatorname{Tr}(\rho_\beta X)=0\}
  =\{I\}^{\perp_{\rho_\beta}}.
\]
Indeed, every observable has the decomposition
\[
  X=\operatorname{Tr}(\rho_\beta X)I+X_0,
  \qquad
  X_0:=X-\operatorname{Tr}(\rho_\beta X)I
  \in L^2_0(\rho_\beta).
\]
Thus restricting to $L^2_0(\rho_\beta)$ removes the stationary identity
direction and retains the fluctuations about equilibrium.

Let $\mathcal L$ be a quantum Markov generator (Lindbladian)
acting on observables in the Heisenberg picture.
Assume that $\mathcal L$ is self-adjoint in the KMS inner product,
and write
\[
  K:=-\mathcal L\succeq0
\]
for its positive realization.  Its Dirichlet form is
\[
  \mathcal E_K(X,Y):=\langle X,KY\rangle_{\rho_\beta}.
\]
The Heisenberg dynamics preserves the identity observable, so
$\mathcal L I=0$ and hence $KI=0$.
KMS self-adjointness then gives
$\langle I,KX\rangle_{\rho_\beta}=\langle KI,X\rangle_{\rho_\beta}=0$.
Hence $L^2_0(\rho_\beta)$ is invariant under $K$, and
$KX=KX_0$ for the centered observable above.
When $\ker K=\C I$, its KMS spectral gap is defined through the Rayleigh quotient as
\[
  \gap_{\mathrm{KMS}}(K)
  :=
  \inf_{0\ne X\in L^2_0(\rho_\beta)}
  \frac{\langle X,KX\rangle_{\rho_\beta}}
       {\langle X,X\rangle_{\rho_\beta}}.
\]
In finite dimension, this is the smallest eigenvalue of the restriction of
$K$ to $L^2_0(\rho_\beta)$.  Equivalently,
$K\succeq\gap_{\mathrm{KMS}}(K)I$ on
$L^2_0(\rho_\beta)$.
If $K$ is also GNS self-adjoint, $\gap_{\mathrm{GNS}}(K)$ is defined by the
same Rayleigh quotient with
$\langle\cdot,\cdot\rangle_{\mathrm{GNS}}$ in the numerator and
denominator.
 The physical bath
used below will be shown to satisfy $\ker K=\C I$.

The KMS norm associated with the above inner product is
\[
  \lVert X\rVert_{\mathrm{KMS}}
  :=\langle X,X\rangle_{\rho_\beta}^{1/2}.
\]
Unlike this norm, the Hilbert--Schmidt norm defined in
\cref{sec:prelim-general-conventions} is independent of the Gibbs weight.

\subsection{Davies Generators}
\label{sec:prelim-davies}

We recall the finite-dimensional Davies construction in the convention used
below; see, for example, \cite{davies1974markovian,kastoryano2016quantum}.
Davies generators describe the weak-coupling limit of a quantum
system interacting with a thermal bath.
In the Heisenberg picture, the full generator has the form
\[
  \mathcal L_{\mathrm{full}}(X)
  =i[H_{\mathrm{eff}},X]+\mathcal L(X),
\]
where $H_{\mathrm{eff}}$ generates the coherent evolution, including any
bath-induced Hamiltonian correction, and $\mathcal L$ describes dissipation.
Here we study the dissipative part $\mathcal L$ and write $K=-\mathcal L$
for the corresponding positive Davies operator.

Let
\[
  H=\sum_EE\,P_E
\]
be the spectral resolution of $H$.  Let
$\mathscr S=\{S_\alpha\}_{\alpha\in\mathsf A}\subseteq\cB(\cH)$ be a finite
family of system--bath coupling operators.  At this abstract stage,
$\mathsf A$ is only an index set and $\alpha$ merely labels the operator
$S_\alpha$.  In
lattice models the couplings are typically chosen with local support, which
is specified as part of the model.
The individual $S_\alpha$ need not be self-adjoint.  For
$S_\alpha\in\mathscr S$, define its Bohr component at frequency $\omega$ by
the convention
\[
  S_{\alpha,\omega}
  :=
  \sum_{E-E'=\omega}P_ES_\alpha P_{E'},
  \qquad
  [H,S_{\alpha,\omega}]=\omega S_{\alpha,\omega}.
\]
Thus $\omega$ is the energy gained by the system under this component.
The observables with zero Bohr frequency form the space
\begin{equation}
  \label{eq:zero-bohr-frequency-space}
  V_0(H):=\{X:[X,H]=0\}
  =\Bigl\{X:X=\sum_E P_E X P_E\Bigr\}.
\end{equation}
Here ``zero'' refers to the energy difference $E-E'=0$.  Thus $V_0(H)$ consists of the observables that
preserve every energy eigenspace.  
Let the nonnegative rates satisfy the KMS relation
\[
  \gamma_\alpha(\omega)
  =e^{-\beta\omega}\gamma_\alpha(-\omega).
\]
As in the coupling families used below, we assume that the list is closed
under adjoints.  Whenever $S_{\bar\alpha}=S_\alpha^\dagger$, the two partners
carry the same rate function and weight,
\[
  \gamma_{\bar\alpha}=\gamma_\alpha,
  \qquad
  w_{\bar\alpha}=w_\alpha>0.
\]
With this convention, the corresponding Heisenberg-picture Davies generator is
\[
  \mathcal L_{\beta,H}(X)
  =
  \sum_{\alpha,\omega}w_\alpha\gamma_\alpha(\omega)
  \left(
    S_{\alpha,\omega}^\dagger XS_{\alpha,\omega}
    -\frac12
     \{S_{\alpha,\omega}^\dagger S_{\alpha,\omega},X\}
  \right).
\]
The Gibbs state $\rho_\beta$ is stationary, and the Davies generator is
self-adjoint in the KMS inner product.  Hence
$K_{\beta,H}:=-\mathcal L_{\beta,H}\succeq0$ on $L^2(\rho_\beta)$; see, for
example, \cite[Sec.~II.A]{kastoryano2013quantum}.

The following standard consequence of detailed balance connects the two
gap conventions used here; see \cite[Lemma~5 and Sec.~2.1]{ding2025efficient}.
We include a short proof for completeness.

\begin{proposition}[Equality of the KMS and GNS gaps for Davies generators]
\label{prop:davies-kms-gns-gap-bridge}
The Davies operator $K_{\beta,H}$ is self-adjoint in both the KMS and GNS
inner products. If
$\ker K_{\beta,H}=\C I$, then
\[
  \gap_{\mathrm{KMS}}(K_{\beta,H})
  =\gap_{\mathrm{GNS}}(K_{\beta,H}).
\]
\end{proposition}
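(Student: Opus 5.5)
The plan is to show that $K_{\beta,H}$ commutes with the modular operator $\Delta(X):=\rho_\beta X\rho_\beta^{-1}$. Once that is in hand, both self-adjointness claims and the equality of the gaps follow by standard arguments. Write $\Gamma_s(X):=\rho_\beta^{s}X\rho_\beta^{1-s}$. The two inner products are then $\langle X,Y\rangle_{\mathrm{KMS}}=\operatorname{Tr}(X^\dagger\Gamma_{1/2}(Y))$ and $\langle X,Y\rangle_{\mathrm{GNS}}=\operatorname{Tr}(X^\dagger\Gamma_1(Y))$, with $\Gamma_1=\Gamma_{1/2}\circ\Delta^{1/2}$ in the appropriate ordering. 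Here $\rho_\beta$ is a full-rank Gibbs state, so every power of $\Delta$ is defined.

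\textbf{Step 1: commutation with $\Delta^s$.} Since $\rho_\beta$ is a function of $H$, the Bohr relation $[H,S_{\alpha,\omega}]=\omega S_{\alpha,\omega}$ gives
\[
\rho_\beta^{s}S_{\alpha,\omega}\rho_\beta^{-s}=e^{-s\beta\omega}S_{\alpha,\omega},
\qquad
\rho_\beta^{s}S_{\alpha,\omega}^\dagger\rho_\beta^{-s}=e^{s\beta\omega}S_{\alpha,\omega}^\dagger.
\]
In each jump term $S^\dagger_{\alpha,\omega}XS_{\alpha,\omega}$ the two phases cancel, so $\Delta^s$ commutes with that term. The products $S^\dagger_{\alpha,\omega}S_{\alpha,\omega}$ commute with $H$, hence with $\rho_\beta$, so $\Delta^s$ commutes with the anticommutator terms as well. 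Therefore $\Delta^s\mathcal L=\mathcal L\Delta^s$ for every real $s$.

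\textbf{Step 2: self-adjointness in both inner products.} KMS self-adjointness is recalled just above the statement, citing \cite{kastoryano2013quantum}. I will take it as given, but it could be rechecked directly by pairing the $\alpha$ and $\bar\alpha$ terms using the KMS relation for the rates. For GNS, write $\langle X,Y\rangle_{\mathrm{GNS}}=\langle X,\Delta^{1/2}Y\rangle_{\mathrm{KMS}}$. Then
\[
\langle X,KY\rangle_{\mathrm{GNS}}=\langle X,\Delta^{1/2}KY\rangle_{\mathrm{KMS}}=\langle X,K\Delta^{1/2}Y\rangle_{\mathrm{KMS}}=\langle KX,\Delta^{1/2}Y\rangle_{\mathrm{KMS}},
\]
which is GNS self-adjointness. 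Along the way I need that $\Delta^{1/2}$ is KMS self-adjoint and positive. This holds because $\langle X,\Delta^{1/2}X\rangle_{\mathrm{KMS}}=\operatorname{Tr}(\rho_\beta X^\dagger X)\ge0$.

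\textbf{Step 3: equality of the gaps.} $K$ and $\Delta^{1/2}$ are commuting KMS self-adjoint operators, so they are simultaneously diagonalizable with a KMS-orthonormal eigenbasis $\{X_j\}$. Each $X_j$ satisfies $KX_j=\lambda_jX_j$ and $\Delta^{1/2}X_j=\mu_jX_j$ with $\mu_j>0$. This basis is also GNS-orthogonal, since $\langle X_j,X_k\rangle_{\mathrm{GNS}}=\mu_k\langle X_j,X_k\rangle_{\mathrm{KMS}}$. The identity is a joint eigenvector, with $\lambda=0$ and $\mu=1$. Because $\ker K=\C I$, we can take the remaining basis vectors to span $\{I\}^{\perp}$, and this complement is the same in both geometries. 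Hence $L^2_0(\rho_\beta)$ is spanned by those $X_j$ with $\lambda_j>0$, the restricted spectrum of $K$ is the same set $\{\lambda_j\}$ in both inner products, and both Rayleigh-quotient infima equal $\min_j\lambda_j$.

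The main obstacle is purely bookkeeping: checking that the phase factors cancel under the paper's Bohr-frequency sign convention, and that the dissipative part contains no residual term failing to commute with $\rho_\beta$.
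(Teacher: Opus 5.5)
Your argument follows the paper's proof: show that $K_{\beta,H}$ commutes with the modular operator $\Delta$, then transfer KMS self-adjointness and the spectrum on $L^2_0(\rho_\beta)$ to the GNS geometry. The paper works eigenspace by eigenspace of $\Delta$ using $\langle X,Y\rangle_{\mathrm{KMS}}=\sqrt{\lambda}\,\langle X,Y\rangle_{\mathrm{GNS}}$, which is your simultaneous diagonalization in different words.

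One exponent is flipped. With the paper's convention $\langle X,Y\rangle_{\mathrm{GNS}}=\operatorname{Tr}(\rho_\beta X^\dagger Y)=\operatorname{Tr}(X^\dagger\Gamma_0(Y))$, the correct identity is $\langle X,Y\rangle_{\mathrm{GNS}}=\langle X,\Delta^{-1/2}Y\rangle_{\mathrm{KMS}}$. Your $\Gamma_1=\Gamma_{1/2}\circ\Delta^{1/2}$ instead produces $\operatorname{Tr}(X^\dagger\rho_\beta Y)$, and $\langle X,\Delta^{1/2}X\rangle_{\mathrm{KMS}}=\operatorname{Tr}(\rho_\beta XX^\dagger)$ rather than $\operatorname{Tr}(\rho_\beta X^\dagger X)$. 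Since your Step~1 covers every real power of $\Delta$, replacing $\Delta^{1/2}$ by $\Delta^{-1/2}$ (and $\mu_k$ by $\mu_k^{-1}$ in Step~3) repairs Steps~2 and~3 verbatim, and then your positivity identity $\operatorname{Tr}(\rho_\beta X^\dagger X)\ge0$ is exactly right.
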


\begin{proof}
Define the modular operator by
\[
  \Delta_{\rho_\beta}(X):=\rho_\beta X\rho_\beta^{-1}.
\]
The Bohr relation gives
$\Delta_{\rho_\beta}(S_{\alpha,\omega})
=e^{-\beta\omega}S_{\alpha,\omega}$.  The two modular factors cancel in
$S_{\alpha,\omega}^\dagger X S_{\alpha,\omega}$, while
$S_{\alpha,\omega}^\dagger S_{\alpha,\omega}$ has zero Bohr frequency.
Thus every Davies summand, and hence $K_{\beta,H}$, commutes with
$\Delta_{\rho_\beta}$.

Decompose $\cB(\cH)$ into the eigenspaces of the
positive operator $\Delta_{\rho_\beta}$.  These spaces are mutually
orthogonal in both geometries and are preserved by $K_{\beta,H}$.
For $X,Y$ in the same eigenspace with eigenvalue $\lambda>0$,
\[
  \langle X,Y\rangle_{\mathrm{KMS},\rho_\beta}
  =\sqrt\lambda\,
    \langle X,Y\rangle_{\mathrm{GNS},\rho_\beta}.
\]
Hence KMS self-adjointness implies GNS self-adjointness on each
eigenspace, and therefore on their orthogonal direct sum.
When $\ker K_{\beta,H}=\C I$, both gaps are the smallest eigenvalue of
the same operator restricted to the same mean-zero space
$L^2_0(\rho_\beta)$, so they coincide.
\end{proof}

For a single coupling $S$, write $K_{\beta,H}[S]$ for its
contribution to $-\mathcal L_{\beta,H}$.  Explicitly, after decomposing
$S=\sum_\omega S_\omega$ relative to $H$, we set
\[
  K_{\beta,H}[S](X)
  :=-\sum_\omega\gamma(\omega)
  \left(
    S_\omega^\dagger XS_\omega
    -\frac12\{S_\omega^\dagger S_\omega,X\}
  \right).
\]
Thus this notation fixes one coupling in the preceding Davies formula, sums
over all of its Bohr components, and reverses the generator sign.  When $S$
is not self-adjoint, this individual contribution need not itself be
KMS-self-adjoint or positive.  An adjoint-closed family with equal weights on
adjoint partners does give a KMS-self-adjoint positive operator.  This is the
convention used in the complete star edge update operator
\eqref{eq:complete-star-edge-generator}.

Suppose that the coupling index set is partitioned as
\[
  \mathsf A=\bigsqcup_{i\in\mathcal I}\mathsf A_i,
  \qquad
  \mathscr S_i:=\{S_\alpha:\alpha\in\mathsf A_i\},
\]
where every subfamily $\mathscr S_i$ is closed under adjoints, and the weights
remain equal on adjoint partners.  Thus $i$
labels one grouped update, and we denote its positive Davies operator by
\[
  K_i:=\sum_{\alpha\in\mathsf A_i}
    w_\alpha K_{\beta,H}[S_\alpha],
  \qquad w_\alpha>0,
\]
where the weights specify the normalization within the grouped update.  Once
each $K_i$ has the internal normalization specified by the model, the
\emph{unit-rate convention} means that these grouped updates are summed with
coefficient one:
\[
  K:=K_{\beta,H}=\sum_{i\in\mathcal I}K_i.
\]
Here ``unit rate'' refers to this outer sum over the already normalized
$K_i$; the couplings within a single $K_i$ may carry nontrivial weights.

\begin{lemma}[Positive local Davies operators and frustration freeness]
\label{lem:prelim-davies-generator-frustration-free}
Let $\{K_i\}_{i\in\mathcal I}$ be the positive local Davies operators associated with
adjoint-closed coupling families whose adjoint partners have equal weights and
which satisfy the rate assumptions above.
Then $K_i\succeq0$ and $K_iI=0$.  For every
$\mathcal J\subseteq\mathcal I$,
\begin{equation}
  \label{eq:prelim-davies-generator-kernel-intersection}
  K_{\mathcal J}:=\sum_{i\in\mathcal J}K_i\succeq0,
  \qquad
  K_{\mathcal J}I=0,
  \qquad
  \ker K_{\mathcal J}=\bigcap_{i\in\mathcal J}\ker K_i.
\end{equation}
\end{lemma}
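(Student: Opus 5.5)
The plan is to reduce everything to the single-coupling building block and then use the kernel lemma. Fix $i$ and write $K_i=\sum_{\alpha\in\mathsf A_i}w_\alpha K_{\beta,H}[S_\alpha]$. Since $\mathscr S_i$ is closed under adjoints and adjoint partners carry equal weights and rates, $K_i$ is itself a Davies operator in the sense of the displayed formula for $\mathcal L_{\beta,H}$, with coupling family $\mathscr S_i$. The cited KMS-symmetry statement, applied to this family, gives that $-K_i$ is KMS self-adjoint. The remaining steps are positivity, $K_iI=0$, and the kernel identity.

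\emph{Annihilating the identity.} For each Bohr component the bracket $S_{\alpha,\omega}^\dagger I S_{\alpha,\omega}-\tfrac12\{S_{\alpha,\omega}^\dagger S_{\alpha,\omega},I\}$ vanishes identically. Hence $K_{\beta,H}[S_\alpha]I=0$ for every $\alpha$, and therefore $K_iI=0$.

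\emph{Positivity.} I would compute the Dirichlet form explicitly. Use the KMS relation to pair $\omega$ with $-\omega$, noting that $S_{\alpha,\omega}^\dagger=(S_\alpha^\dagger)_{-\omega}$, so the adjoint partner $\bar\alpha$ supplies the reversed component. Together with the modular identity $\rho_\beta^{1/2}S_{\alpha,\omega}=e^{-\beta\omega/2}S_{\alpha,\omega}\rho_\beta^{1/2}$, the standard computation should yield
\[
  \langle X,K_iX\rangle_{\rho_\beta}
  =\frac12\sum_{\alpha\in\mathsf A_i}\sum_\omega w_\alpha\gamma_\alpha(\omega)\,
  \bigl\|[S_{\alpha,\omega},X]\bigr\|_{\omega}^2\ \ge 0,
\]
where $\|\cdot\|_\omega$ denotes a suitably weighted KMS-type norm. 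This manifestly nonnegative sum-of-squares form gives $K_i\succeq0$. If one prefers to avoid the computation, it also follows from KMS self-adjointness together with the fact that $e^{t\mathcal L_i}$ is a unital completely positive semigroup leaving $\rho_\beta$ invariant, which makes it a KMS contraction. I expect this bookkeeping, in particular the precise pairing of $\alpha$ with $\bar\alpha$ and of $\omega$ with $-\omega$ needed to obtain a sum of squares, to be the only real obstacle. Since the paper cites it as standard, I would invoke that reference rather than redo it in full.

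\emph{The sum $K_{\mathcal J}$.} Once $K_i\succeq0$ and $K_iI=0$ hold for all $i$, the sum $K_{\mathcal J}$ is positive and annihilates $I$ by linearity. The identity $\ker K_{\mathcal J}=\bigcap_{i\in\mathcal J}\ker K_i$ is then exactly \cref{lem:positive-sum-kernel}, applied in the KMS Hilbert space for nonempty $\mathcal J$. For $\mathcal J=\emptyset$ both sides are the whole space by convention.
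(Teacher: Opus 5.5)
Your proposal is correct and follows the paper's proof. Each $K_i$ is the Davies operator of its own adjoint-closed, equally weighted subfamily, so KMS self-adjointness and positivity come from the cited standard result; $K_iI=0$ follows by direct substitution term by term; and the kernel identity is \cref{lem:positive-sum-kernel}, with the empty-intersection convention for $\mathcal J=\emptyset$. Your extra positivity sketches (the sum-of-squares Dirichlet form, or KMS self-adjointness plus contractivity of the unital completely positive semigroup) are valid supplements but not needed, since the paper also just invokes the reference.
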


\begin{proof}
Each coupling subfamily is closed under adjoints and inherits the rate
and weight assumptions of the Davies construction above.  The resulting
generator $\mathcal L_i$ therefore has the same KMS self-adjointness and
nonpositivity properties, so $K_i=-\mathcal L_i\succeq0$.

The identity is fixed by direct substitution.  Indeed, each Lindblad term
cancels separately, and hence
\[
  K_i(I)
  =-\sum_{\alpha\in\mathsf A_i,\omega}
    w_\alpha\gamma_\alpha(\omega)
    \left(
      S_{\alpha,\omega}^\dagger S_{\alpha,\omega}
      -\frac12
       \{S_{\alpha,\omega}^\dagger S_{\alpha,\omega},I\}
    \right)
  =0.
\]
Summing over $i\in\mathcal J$ gives
$K_{\mathcal J}\succeq0$ and $K_{\mathcal J}I=0$.

The kernel identity follows from \cref{lem:positive-sum-kernel} for
nonempty $\mathcal J$; for the empty family, both sides are the full
observable space, with the usual empty-intersection convention.
\end{proof}

Here \emph{frustration free} refers only to the kernel identity in
\eqref{eq:prelim-davies-generator-kernel-intersection}.

\begin{lemma}[Fixed points of an adjoint-closed Davies generator]
\label{lem:davies-fixed-point-commutant}
Let $K_S=-\mathcal L_S$ be a sum of positive local Davies operators for a faithful
(full-rank) Gibbs state.
Assume that every included Bohr rate is strictly positive and that the
included jump family is closed under adjoints.  Then
\begin{equation}
  \label{eq:davies-fixed-point-commutant}
  X\in\ker K_S
  \quad\Longleftrightarrow\quad
  [X,S_{\alpha,\omega}]=0
  \quad\text{for every included }(\alpha,\omega).
\end{equation}
\end{lemma}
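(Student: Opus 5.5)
The plan is to compute the Dirichlet form of $K_S$ in the KMS inner product and show that it is a positively weighted sum of squared norms of commutators $[S_{\alpha,\omega},X]$. Its kernel, being the zero set of a positive quadratic form, is then exactly the common commutant. The reverse implication is immediate: if $[X,S_{\alpha,\omega}]=0$ for every included $(\alpha,\omega)$, then each Lindblad term gives
\[
  S_{\alpha,\omega}^\dagger XS_{\alpha,\omega}-\tfrac12\{S_{\alpha,\omega}^\dagger S_{\alpha,\omega},X\}
  =S_{\alpha,\omega}^\dagger S_{\alpha,\omega}X-S_{\alpha,\omega}^\dagger S_{\alpha,\omega}X=0,
\]
so $X\in\ker K_S$.

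For the forward direction I would group each jump $S_{\alpha,\omega}$ with its adjoint partner $S_{\bar\alpha,-\omega}=S_{\alpha,\omega}^\dagger$, using the assumed adjoint-closure with equal weights and rates, together with the KMS relation $\gamma(\omega)=e^{-\beta\omega}\gamma(-\omega)$. The aim is the standard detailed-balance identity
\[
  \langle X,K_SX\rangle_{\mathrm{KMS}}
  =\frac12\sum_{\alpha,\omega}w_\alpha\gamma_\alpha(\omega)\,e^{\beta\omega/2}\,
  \bigl\|\rho_\beta^{1/4}[S_{\alpha,\omega},X]\rho_\beta^{1/4}\bigr\|_{\mathrm{HS}}^2 ,
\]
up to the exact placement of modular weights. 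To derive it, I would use $\rho_\beta^{s}S_{\alpha,\omega}=e^{-s\beta\omega}S_{\alpha,\omega}\rho_\beta^{s}$, which follows from the Bohr relation. This relation lets me move $\rho_\beta^{1/2}$ factors through the jump operators in $\operatorname{Tr}(\rho_\beta^{1/2}X^\dagger\rho_\beta^{1/2}K_S X)$. Pairing the $(\alpha,\omega)$ term with the $(\bar\alpha,-\omega)$ term, the KMS ratio of rates then makes the cross terms combine into $-\|\cdot[S,X]\cdot\|^2$. An alternative route, which avoids the bookkeeping, is to work in the GNS inner product. By \cref{prop:davies-kms-gns-gap-bridge} the same operator is GNS-self-adjoint, and the kernel does not depend on the inner product. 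In GNS form one gets
\[
  \langle X,K_SX\rangle_{\mathrm{GNS}}=\tfrac12\sum w_\alpha\gamma_\alpha(\omega)\operatorname{Tr}\bigl(\rho_\beta[S_{\alpha,\omega},X]^\dagger[S_{\alpha,\omega},X]\bigr),
\]
which I would verify by expanding both sides and using $\operatorname{Tr}(\rho_\beta S_{\alpha,\omega}^\dagger Y S_{\alpha,\omega})=e^{\beta\omega}\operatorname{Tr}(\rho_\beta S_{\alpha,\omega}Y S_{\alpha,\omega}^\dagger)$ together with detailed balance to symmetrize.

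Once this identity holds, $K_SX=0$ gives $\langle X,K_SX\rangle=0$. Every weight $w_\alpha\gamma_\alpha(\omega)$ is strictly positive by hypothesis, and $\rho_\beta$ is full rank, so every trace $\operatorname{Tr}(\rho_\beta Y^\dagger Y)$ with $Y=[S_{\alpha,\omega},X]$ must vanish. Faithfulness then gives $Y=0$. Since the family is adjoint-closed, this also yields commutation with $S_{\alpha,\omega}^\dagger$, although that is not needed for the stated equivalence.

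I expect the main obstacle to be the exact derivation of the sum-of-squares identity. In particular, the cross terms $\operatorname{Tr}(\rho X^\dagger S^\dagger X S)$ must pair correctly between adjoint partners, which requires using the rate relation and the modular factor $e^{\pm\beta\omega}$ consistently with the sign convention $[H,S_\omega]=\omega S_\omega$. I would first check the single-pair case $\{S_\omega,S_\omega^\dagger\}$ explicitly, and then sum over pairs.
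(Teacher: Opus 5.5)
Your forward direction is essentially the paper's argument. The GNS sum-of-squares identity you display is what one obtains by the following steps: sum the carr\'e-du-champ identity $\mathcal L_V(X^\dagger X)-\mathcal L_V(X^\dagger)X-X^\dagger\mathcal L_V(X)=[V,X]^\dagger[V,X]$ over the included jumps $V=S_{\alpha,\omega}$ with weights $w_\alpha\gamma_\alpha(\omega)$, then take the Gibbs expectation. That is precisely the paper's proof. The paper, however, applies it only to $X\in\ker K_S$, so it needs just two facts: stationarity of $\rho_\beta$, and $\mathcal L_S(X)=\mathcal L_S(X^\dagger)=0$ (by $*$-preservation). It never pairs adjoint partners, does no modular bookkeeping, and does not appeal to GNS symmetry, so the obstacle you anticipate does not arise. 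Your KMS formula with weight $e^{\beta\omega/2}$ is in fact correct, but it costs more work for the same conclusion.

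Two points need fixing.

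First, the trace relation you plan to use is false as written. From $\rho_\beta S_{\alpha,\omega}=e^{-\beta\omega}S_{\alpha,\omega}\rho_\beta$ the correct relation is $\operatorname{Tr}(\rho_\beta S_{\alpha,\omega}^\dagger YS_{\alpha,\omega})=e^{\beta\omega}\operatorname{Tr}(\rho_\beta S_{\alpha,\omega}S_{\alpha,\omega}^\dagger Y)$. To see your version fails, take $S_{\alpha,\omega}=|a\rangle\langle b|$ between eigenvectors and let $p_b$ be the Gibbs weight of $|b\rangle$. The two sides of your relation are then $p_bY_{aa}$ and $p_bY_{bb}$.

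Second, your converse replaces $\tfrac12\{S_{\alpha,\omega}^\dagger S_{\alpha,\omega},X\}$ by $S_{\alpha,\omega}^\dagger S_{\alpha,\omega}X$. That step requires $[X,S_{\alpha,\omega}^\dagger]=0$. It holds only because $S_{\alpha,\omega}^\dagger=S_{\bar\alpha,-\omega}$ is itself an included jump, which is exactly how the paper argues this direction. So adjoint closure is needed for the stated equivalence, precisely in the converse, contrary to your closing remark. For instance, with $S=|0\rangle\langle1|$ and $X=S$ one has $[X,S]=0$, yet $S^\dagger XS-\tfrac12\{S^\dagger S,X\}=-\tfrac12S\neq0$.
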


\begin{proof}
For one Lindblad jump $V$, the completely positive
carr\'e-du-champ identity \cite{lindblad1976generators,junge2022stability}
reads
\begin{equation}
  \label{eq:lindblad-carre-du-champ}
  \mathcal L_V(X^\dagger X)
  -\mathcal L_V(X^\dagger)X
  -X^\dagger\mathcal L_V(X)
  =[V,X]^\dagger[V,X].
\end{equation}
Suppose $X\in\ker K_S$.  The Davies generator is $*$-preserving, so
$X^\dagger$ is fixed as well.  Sum the identity over the included Bohr
jumps with their positive rates and take expectation in the stationary Gibbs
state.  The expectation of the left side vanishes, whereas the right side is
a sum of nonnegative expectations.  Faithfulness of the Gibbs state forces
every commutator $[S_{\alpha,\omega},X]$ to vanish.  Conversely, if $X$
commutes with every included jump and, by adjoint closure, with its adjoint,
then each Lindblad summand annihilates $X$.  Hence $X\in\ker K_S$.
\end{proof}

\subsection{Quantum-Double Star Structure}
\label{sec:prelim-quantum-double}

We introduce the part of Kitaev's finite-group quantum-double construction
that is needed to locate our model within the standard framework
\cite{kitaev2003faulttolerant}.  Detailed treatments of Davies thermalization
for the standard two-dimensional models may be found in
\cite{alicki2009thermalization,lucia2023thermalization}.  An explicit
Peter--Weyl formulation that assigns the two matrix indices of an edge
representation block to its endpoints appears in
\cite{buerschaper2009mapping}.  For a recent representation-theoretic
treatment of protected sectors in the full finite-group model, see
\cite{hu2026protected}. 

Consider a two-dimensional lattice with an
orientation assigned to each edge and one copy of the complex group algebra
$\C[\Gamma]$, viewed as a Hilbert space, on each edge.  The full Hilbert
space is therefore
\[
  \cH_{\cG}=\bigotimes_{e\in E(\cG)}\C[\Gamma].
\]
An operator $S_e$ is \emph{supported on the edge $e$} if it acts as the
identity on every other edge: after placing the $e$ factor first, it has
the form
\[
  S_e=s_e\otimes\bigotimes_{f\in E(\cG)\setminus\{e\}}I_f,
  \qquad s_e\in\mathcal B(\C[\Gamma]),
\]
where $I_f$ is the identity on the edge $f$.
The orthonormal basis vectors $\lvert h\rangle$ of each edge space are
indexed by $h\in\Gamma$, and a general edge vector has the form
$\sum_{h\in\Gamma}c_h\lvert h\rangle$ with $c_h\in\C$.  The commuting left
and right regular actions may be written
\[
  L_g\lvert h\rangle=\lvert gh\rangle,
  \qquad
  R_g\lvert h\rangle=\lvert hg^{-1}\rangle.
\]
At a vertex $v$, define $U_v(g)$ by applying $L_g$ to each outgoing edge
and $R_g$ to each incoming edge.  This is the standard vertex gauge
transformation in the quantum-double construction; below we call it the
\emph{vertex action}.  More explicitly, if
$\operatorname{Out}(v)$ and $\operatorname{In}(v)$ denote the incident edges
oriented away from and toward $v$, respectively, then on the incident-edge
Hilbert space
\begin{equation}
  \label{eq:explicit-local-gauge-action}
  U_v(g)
  =
  \left(\bigotimes_{e\in\operatorname{Out}(v)}L_g^{(e)}\right)
  \otimes
  \left(\bigotimes_{e\in\operatorname{In}(v)}R_g^{(e)}\right),
\end{equation}
extended by the identity on all edges not incident to $v$.  The two index
sets partition the three incident edges of the cubic graph.  Changing an
edge orientation exchanges its $L_g$ and $R_g$ assignments at the two
endpoints.  Averaging this action gives
the star, or electric, projector
\[
  A_v
  =\frac1{|\Gamma|}\sum_{g\in\Gamma}U_v(g).
\]
Here the vertex constraint $A_v\psi=\psi$ requires the state to be fixed
by every $U_v(g)$.  A violation of this constraint is called an electric
charge excitation in the quantum-double construction.
The full quantum-double Hamiltonian also contains plaquette terms associated with faces, in contrast to the vertex terms \(A_v\). We consider only the vertex terms in this paper.

On an oriented graph $\cG$, the vertex-term Hamiltonian is
\[
  H_{\cG}=-J\sum_{v\in V(\cG)}A_v.
\]
The following basic properties will be used below.
\begin{lemma}[Commutativity and locality of the vertex terms]
\label{lem:vertex-commutativity-locality}
The operators $A_v$ are mutually commuting orthogonal projections onto
the subspaces fixed by the respective vertex actions.  In particular,
\begin{equation}
  \label{eq:vertex-projectors-commute}
  [A_u,A_v]=0 \qquad\text{for all }u,v.
\end{equation}
If $e:u\to v$ and $S_e$ is supported on $e$, then
\begin{equation}
  \label{eq:edge-coupling-commutator-locality}
  [H_{\cG},S_e]=-J[A_u+A_v,S_e].
\end{equation}
\end{lemma}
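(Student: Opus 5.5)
We prove the three assertions in turn, working directly from the explicit form of $U_v(g)$ in \eqref{eq:explicit-local-gauge-action}.

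\emph{$A_v$ is the projection onto the $U_v$-fixed space.} First we check that $g\mapsto U_v(g)$ is a unitary representation of $\Gamma$. Each $L_g$ and $R_g$ is a permutation of the basis $\{\lvert h\rangle\}$, hence unitary. Moreover $L_gL_{g'}=L_{gg'}$ and $R_gR_{g'}=R_{gg'}$, since $R_gR_{g'}\lvert h\rangle=\lvert hg'^{-1}g^{-1}\rangle$. Because $U_v(g)$ is a tensor product of such factors on distinct edges, $U_v(g)U_v(g')=U_v(gg')$ and $U_v(g)^\dagger=U_v(g^{-1})$. The standard averaging argument now applies. Reindexing the group sum gives $U_v(g)A_v=A_v$, so $A_v^2=A_v$. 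The relation $U_v(g)^\dagger=U_v(g^{-1})$ gives $A_v^\dagger=A_v$. Finally, $A_v\psi=\psi$ holds exactly when $\psi$ is fixed by every $U_v(g)$: one direction is $U_v(g)A_v=A_v$, and the other follows by averaging.

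\emph{Commutativity \eqref{eq:vertex-projectors-commute}.} It suffices to show $[U_u(g),U_v(g')]=0$ for all $g,g'$ and then average. If $u=v$ or $u$ and $v$ share no edge, this is immediate: in the first case $A_v$ is a polynomial in commuting operators, and in the second the operators have disjoint supports. In a simple graph two distinct vertices share at most one edge $e$. If $e:u\to v$, then on $e$ the vertex $u$ acts by $L_g$ and $v$ acts by $R_{g'}$. These commute because $L_gR_{g'}\lvert h\rangle=\lvert ghg'^{-1}\rangle=R_{g'}L_g\lvert h\rangle$. On every other edge the two operators act on different tensor factors. This step is the only genuine content of the lemma; everything else is bookkeeping.

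\emph{Locality \eqref{eq:edge-coupling-commutator-locality}.} If $w\notin\{u,v\}$, then $A_w$ is supported on the edges incident to $w$, and $e$ is not one of them. Hence $A_w$ acts as the identity on $e$ and commutes with $S_e$. It follows that $[H_{\cG},S_e]=-J\sum_w[A_w,S_e]=-J[A_u+A_v,S_e]$. I expect no real obstacle here. The only care needed is orientation-independence, which holds because the $L$/$R$ swap under orientation change is symmetric in the commutation argument above.
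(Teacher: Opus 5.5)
Your proof is correct and follows the same route as the paper: averaging the unitary representation $U_v$ gives the orthogonal projection onto its fixed space; on the single shared edge, $L_gR_{g'}=R_{g'}L_g$ holds, and the two vertex actions live on separate tensor factors elsewhere, so averaging gives $[A_u,A_v]=0$; and every $A_w$ with $w\notin\{u,v\}$ has support disjoint from $e$, which gives the locality formula. One small slip: for $u=v$ the operators $U_v(g)$ and $U_v(g')$ need not commute when $\Gamma$ is non-Abelian, and $A_v$ is not a polynomial in commuting operators, so that case should simply be dismissed as the trivial identity $[A_v,A_v]=0$ rather than routed through $[U_u(g),U_v(g')]=0$.
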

\begin{proof}
Since $U_v$ is a unitary representation, its group average satisfies
\begin{equation}
  \label{eq:vertex-group-average-projection}
  A_v^\dagger=A_v,
  \qquad
  A_v^2
  =\frac1{|\Gamma|^2}\sum_{g,h\in\Gamma}U_v(gh)
  =\frac1{|\Gamma|}\sum_{k\in\Gamma}U_v(k)
  =A_v.
\end{equation}
For the middle equality, each $k\in\Gamma$ has exactly $|\Gamma|$ factorizations
$k=gh$, one for each choice of $g$.  The range of this average is precisely
the subspace fixed by the vertex action at $v$.

On an edge directed from $u$ to $v$, transformations by
$g\in\Gamma$ at $u$ and by $h\in\Gamma$ at $v$ act as $x\mapsto gx$ and
$x\mapsto xh^{-1}$, respectively.  These actions commute because
\[
  g(xh^{-1})=(gx)h^{-1}.
\]
On edges that are not shared, the transformations act on separate tensor
factors.  Thus, actions at different vertices commute, even
when $\Gamma$ is non-Abelian.  Averaging gives
\eqref{eq:vertex-projectors-commute}.
Finally, $S_e$ commutes with every star projector except possibly $A_u$
and $A_v$, since the other projectors act on edges disjoint from $e$.
This gives \eqref{eq:edge-coupling-commutator-locality}.
\end{proof}

Together, these properties imply that the exact Bohr components of an edge coupling are determined by the
fixed-size system on that edge and its two endpoints, rather than by
diagonalizing the full graph Hamiltonian.

We now specify the Davies dynamics used in this paper by choosing the system–bath coupling operators. 
The operators specified at this stage
are the \emph{bare couplings}, meaning the couplings before their decomposition
into Bohr-frequency components.  Writing $q:=|\Gamma|$ and
$Q_h^{(e)}:=|h\rangle\!\langle h|^{(e)}$, we choose on each edge the
three indexed coupling families
\begin{equation}
  \label{eq:active-edge-bare-coupling-family}
  \mathscr S_e^{\mathsf L}:=\{L_g^{(e)}:g\in\Gamma\},\qquad
  \mathscr S_e^{\mathsf R}:=\{R_g^{(e)}:g\in\Gamma\},\qquad
  \mathscr S_e^{\mathsf Q}:=\{Q_h^{(e)}:h\in\Gamma\}.
\end{equation}
Set
\[
  \mathscr S_e:=\mathscr S_e^{\mathsf L}\sqcup
  \mathscr S_e^{\mathsf R}\sqcup\mathscr S_e^{\mathsf Q}.
\]
In the Davies sum, the three families are counted separately even when two
of their entries define the same operator.
For $S\in\mathscr S_e$, the Hamiltonian determines its Bohr
components, and hence its single-coupling Davies contribution
$K_{\beta,H_{\cG}}[S]$.  Their normalized sum defines the complete
star edge update.
Every
coupling uses the common logistic rate
\begin{equation}
  \label{eq:complete-star-logistic-rate}
  \gamma_\beta(\omega):=\frac1{1+e^{\beta\omega}},
\end{equation}
which obeys the KMS relation and is shared by adjoint partners.
Thus
\begin{align}
  \label{eq:complete-star-edge-generator}
  K_e
  :&=\frac1{3q}\sum_{S\in\mathscr S_e}
       K_{\beta,H_{\cG}}[S]\nonumber\\ 
   &=\frac1{3q}\sum_{g\in\Gamma}K_{\beta,H_{\cG}}[L_g^{(e)}]
    +\frac1{3q}\sum_{g\in\Gamma}K_{\beta,H_{\cG}}[R_g^{(e)}]
    +\frac1{3q}\sum_{h\in\Gamma}
       K_{\beta,H_{\cG}}[Q_h^{(e)}].
\end{align}
The factor $1/(3q)$ gives the $3q$ coupling labels within $K_e$ total edge
clock one.  The global unit-rate positive operator then sums these normalized edge
updates with coefficient one as  $$K=\sum_{e\in E}K_e.$$  

\subsection{Finite-Group Harmonic Analysis}
\label{sec:prelim-finite-groups}

We introduce the representation-theoretic decomposition used to define the label interface in \cref{sec:finite-group-specialization}. It also organizes the edge degrees of freedom by their endpoints, which simplifies the local analysis in the model proofs.

\subsubsection{Peter--Weyl Blocks and Endpoint Factors}

Let $\Gamma$ be a finite group.  Irreducible representations are the basic building blocks of
finite-dimensional unitary representations: they cannot be decomposed into
smaller invariant subspaces.  Two representations are considered equivalent
if they differ only by a change of basis.  We use $\widehat\Gamma$ to label
these equivalence classes of irreducible unitary representations.  Its
elements therefore specify representation types, rather than elements of
the group $\Gamma$.

For each $\lambda\in\widehat\Gamma$, choose a representative
$\rho_\lambda:\Gamma\to\mathcal U(V_\lambda)$ acting on the irreducible
representation space $V_\lambda$.  Denote its dual space by $V_\lambda^*$
and write $d_\lambda:=\dim V_\lambda$.

Every edge in the present model carries the same fixed
regular representation $\C[\Gamma]$.  Its orthonormal group-element basis is
$\{\lvert g\rangle \mid g\in\Gamma\}$.  The Peter--Weyl decomposition is a
unitary regrouping of this fixed edge space into its irreducible blocks
\cite{serre1977linear,etingof2011introduction}:
\[
  \C[\Gamma]
  \cong
  \bigoplus_{\lambda\in\widehat\Gamma}
  V_\lambda\otimes V_\lambda^*,
  \qquad
  |\Gamma|=\sum_{\lambda\in\widehat\Gamma}d_\lambda^2.
\]
Choose an orthonormal basis $\{e_{\lambda,i}\}_{i=1}^{d_\lambda}$ of each
$V_\lambda$ and its dual basis $\{e_\lambda^j\}_{j=1}^{d_\lambda}$.  Under
this identification, write
\[
  \lvert\lambda,i,j\rangle
  :=e_{\lambda,i}\otimes e_\lambda^j,
  \qquad 1\le i,j\le d_\lambda,
\]
for the corresponding Peter--Weyl basis of the $\lambda$ summand.
With the standard commuting left and right regular actions, the two tensor
factors carry the two representation actions.  In a compatible convention,
\[
  L_g
  \cong
  \bigoplus_\lambda\rho_\lambda(g)\otimes I,
  \qquad
  R_g
  \cong
  \bigoplus_\lambda I\otimes\rho_\lambda^*(g).
\]
Set $\cH_\lambda:=V_\lambda\otimes V_\lambda^*$.  If an oriented edge
$e:u\to v$ carries the label $\lambda_e$, we assign its two Peter--Weyl
factors to its endpoints by
\[
  V_{\lambda_e}^{(u)}:=V_{\lambda_e},
  \qquad
  V_{\lambda_e}^{(v)}:=V_{\lambda_e}^*,
  \qquad
  \cH_{\lambda_e}
  =V_{\lambda_e}^{(u)}\otimes V_{\lambda_e}^{(v)}.
\]
Thus the factor owned by $u$ carries the left action and the factor owned by
$v$ carries the right action.  Reversing the edge orientation exchanges the
two assignments.

\begin{figure}[H]
  \centering
  \begin{tikzpicture}[
    endpoint/.style={circle, draw=black!55, minimum size=6mm, inner sep=0pt},
    factor/.style={draw=black!45, rounded corners=1pt, minimum width=2.5cm,
      minimum height=0.62cm, fill=black!3, font=\small},
    ownership/.style={dashed, draw=black!45, line width=0.45pt}
  ]
    \node[endpoint] (u) at (-3.25,0.25) {$u$};
    \node[endpoint] (v) at (3.25,0.25) {$v$};
    \draw[-{Latex[length=1.8mm]}, line width=0.55pt]
      (u) -- node[above=2pt] {$e:u\to v,\ \lambda_e$} (v);
    \node[factor] (fu) at (-1.45,-0.85)
      {$V_{\lambda_e}^{(u)}=V_{\lambda_e}$};
    \node[factor] (fv) at (1.45,-0.85)
      {$V_{\lambda_e}^{(v)}=V_{\lambda_e}^*$};
    \draw[ownership] (u.south) -- (fu.north west);
    \draw[ownership] (v.south) -- (fv.north east);
    \node[draw=black!40, dashed, rounded corners=1.5pt,
      fit=(fu)(fv), inner xsep=7pt, inner ysep=5pt,
      label={[font=\small]below:$\cH_{\lambda_e}$}] {};
  \end{tikzpicture}
  \caption{Endpoint assignment of the two factors in one Peter--Weyl edge
  block.  The dashed lines indicate which endpoint vertex action acts on each
  factor.  They do not introduce additional tensor factors.}
  \label{fig:peter-weyl-endpoint-assignment}
\end{figure}

After a global label configuration $\boldsymbol\lambda$ is fixed, these
endpoint assignments give the vertex-wise regrouping
\begin{equation}
  \label{eq:edge-to-vertex-regrouping}
  \cH_{\boldsymbol\lambda}
  :=\bigotimes_{e\in E(\cG)}\cH_{\lambda_e}
  \cong
  \bigotimes_{v\in V(\cG)}\mathcal V_v(\boldsymbol\lambda),
  \qquad
  \mathcal V_v(\boldsymbol\lambda)
  :=\bigotimes_{e\ni v}V_{\lambda_e}^{(v)}.
\end{equation}
The isomorphism in \eqref{eq:edge-to-vertex-regrouping} only permutes and
rebrackets the existing tensor factors: no degree of freedom is removed or
added.

The orthogonal projection onto the $V_\lambda\otimes V_\lambda^*$ summand is
\[
  P_\lambda
  =\sum_{i,j=1}^{d_\lambda}
    \lvert\lambda,i,j\rangle\!\langle\lambda,i,j\rvert.
\]
This projection commutes with both the left and right regular actions:
\[
  [P_\lambda,L_g]=[P_\lambda,R_g]=0
  \qquad\text{for all }g\in\Gamma.
\]
Observables generated by these projections will form the irrep-label
subspace in the model below.
Although a concrete representative and basis may be chosen for each class,
the isotypic projector $P_\lambda$ depends only on the equivalence class
$\lambda$.

\begin{examplebox}{one \texorpdfstring{$S_3$}{S3} edge: from group elements to irrep blocks}
We use one-line notation for permutations; for example,
\[
  \begin{pmatrix}
    1&2&3\\
    2&3&1
  \end{pmatrix}
  \quad\text{is written as }231.
\]
Thus, the entries list the images of $1,2,3$, in that order.
The group-element basis of one
physical edge is
\[
  \C[S_3]
  =\operatorname{span}\{
    \lvert123\rangle,\lvert132\rangle,\lvert213\rangle,
    \lvert231\rangle,\lvert312\rangle,\lvert321\rangle
  \}.
\]
These six kets are indexed by group elements, not by irreducible
representations.  The three irrep labels are
$\widehat{S_3}=\{\mathbf1,\mathrm{sgn},\tau\}$.
Here $\mathbf1$ is the trivial representation, $\mathrm{sgn}$ is the sign
representation, and $\tau$ is the two-dimensional standard irreducible
representation; see \cite[Sec.~4.3, item~2]{etingof2011introduction}
for explicit descriptions.  The Peter--Weyl decomposition regroups the same
six-dimensional space as
\[
  \C[S_3]
  \cong \cH_{\mathbf1}\oplus\cH_{\mathrm{sgn}}\oplus\cH_\tau,
  \qquad
  \dim\cH_{\mathbf1}=1,
  \quad \dim\cH_{\mathrm{sgn}}=1,
  \quad \dim\cH_\tau=4.
\]
Normalized vectors spanning the trivial and sign blocks are
\[
  \lvert u_+\rangle
  :=\frac1{\sqrt6}\sum_{g\in S_3}\lvert g\rangle,
  \qquad
  \lvert u_-\rangle
  :=\frac1{\sqrt6}\sum_{g\in S_3}\mathrm{sgn}(g)\lvert g\rangle.
\]
Under the left group action $L_h\lvert g\rangle=\lvert hg\rangle$, they satisfy
\[
  L_h\lvert u_+\rangle=\lvert u_+\rangle,
  \qquad
  L_h\lvert u_-\rangle=\mathrm{sgn}(h)\lvert u_-\rangle.
\]
Thus $u_+$ is unchanged by every group element, while $u_-$ is unchanged
by even permutations and changes sign under odd permutations.
These requirements determine each normalized vector up to an overall phase;
they are not arbitrary orthogonal vectors.
In Peter--Weyl notation we may take
$\lvert\mathbf1,1,1\rangle:=\lvert u_+\rangle$ and
$\lvert\mathrm{sgn},1,1\rangle:=\lvert u_-\rangle$, and choose an
orthonormal basis of the remaining block so that
\[
  \begin{aligned}
  \cH_{\mathbf1}
    &=\operatorname{span}\{\lvert\mathbf1,1,1\rangle\},
  &
  \cH_{\mathrm{sgn}}
    &=\operatorname{span}\{\lvert\mathrm{sgn},1,1\rangle\},
  \\
  \cH_\tau
    &=\operatorname{span}\{
       \lvert\tau,1,1\rangle,\lvert\tau,1,2\rangle,
       \lvert\tau,2,1\rangle,\lvert\tau,2,2\rangle\}.
  \end{aligned}
\]
Here the indices $i,j\in\{1,2\}$ in
$\lvert\tau,i,j\rangle$ label basis vectors in the two factors
$V_\tau$ and $V_\tau^*$, respectively.  Thus the same edge Hilbert space has
the Peter--Weyl basis
\[
  \{\lvert\mathbf1,1,1\rangle,
    \lvert\mathrm{sgn},1,1\rangle,
    \lvert\tau,1,1\rangle,\lvert\tau,1,2\rangle,
    \lvert\tau,2,1\rangle,\lvert\tau,2,2\rangle\}.
\]
The three label projectors are then
\[
  P_{\mathbf1}
    =\lvert\mathbf1,1,1\rangle\!\langle\mathbf1,1,1\rvert
    =\lvert u_+\rangle\!\langle u_+\rvert,
  \qquad
  P_{\mathrm{sgn}}
    =\lvert\mathrm{sgn},1,1\rangle\!\langle\mathrm{sgn},1,1\rvert
    =\lvert u_-\rangle\!\langle u_-\rvert,
\]
and
\[
  P_\tau
  =\sum_{i,j=1}^{2}
    \lvert\tau,i,j\rangle\!\langle\tau,i,j\rvert
  =I-P_{\mathbf1}-P_{\mathrm{sgn}}.
\]
Thus $P_\tau$ has rank four: observing the label $\tau$ selects an entire
four-dimensional block, not a single vector.  At $\beta=0$, the maximally
mixed edge state consequently gives label probabilities
$1/6$, $1/6$, and $4/6$, respectively.
\end{examplebox}

\subsubsection{Observable Blocks and the Classical Label Interface}

We will also use the Peter--Weyl decomposition on the \emph{observable}
space.
Each Peter--Weyl summand of $\C[\Gamma]$ is indexed by one irrep label
$\lambda$; an arbitrary vector may have components in several summands.
An operator $X\in\cB(\C[\Gamma])$ has two such labels through its blocks
\begin{equation}
  \label{eq:peter-weyl-ket-bra-block}
  X_{\lambda^{\mathrm{ket}},\lambda^{\mathrm{bra}}}
  :=P_{\lambda^{\mathrm{ket}}}XP_{\lambda^{\mathrm{bra}}}.
\end{equation}
The superscript ``ket'' records the target sector of the operator, and
``bra'' records its source sector, as in a matrix unit
$|\mathrm{ket}\rangle\langle\mathrm{bra}|$.  These are labels on the two
sides of an observable, not two physical copies of the edge.  On a graph the
same decomposition is made edge by edge, giving ket and bra label configurations
$\boldsymbol{\lambda}^{\mathrm{ket}}
=(\lambda_e^{\mathrm{ket}})_{e\in E(\cG)}$ and
$\boldsymbol{\lambda}^{\mathrm{bra}}
=(\lambda_e^{\mathrm{bra}})_{e\in E(\cG)}$.  For either label configuration
$\boldsymbol\lambda=(\lambda_e)_{e\in E(\cG)}$, set
\begin{equation}
  \label{eq:global-peter-weyl-projector}
  \begin{aligned}
  \lvert\boldsymbol\lambda,\mathbf i,\mathbf j\rangle
  &:=\bigotimes_{e\in E(\cG)}
       \lvert\lambda_e,i_e,j_e\rangle,\\
  P_{\boldsymbol\lambda}
  &:=\bigotimes_{e\in E(\cG)}P_{\lambda_e}
    =\sum_{\mathbf i,\mathbf j}
       \lvert\boldsymbol\lambda,\mathbf i,\mathbf j\rangle
       \!\langle\boldsymbol\lambda,\mathbf i,\mathbf j\rvert.
  \end{aligned}
\end{equation}
Here $\mathbf i=(i_e)_{e\in E(\cG)}$ and
$\mathbf j=(j_e)_{e\in E(\cG)}$, with
$1\le i_e,j_e\le d_{\lambda_e}$ for each edge $e$.
Thus $P_{\boldsymbol\lambda}$ selects the fixed-label Hilbert block:
\[
  P_{\boldsymbol\lambda}\mathcal H_{\cG}
  \cong \cH_{\boldsymbol\lambda}.
\]
The corresponding projectors on the two sides of an observable are
$P_{\boldsymbol{\lambda}^{\mathrm{ket}}}$ and
$P_{\boldsymbol{\lambda}^{\mathrm{bra}}}$.

\paragraph{Classical observables on irrep labels.}
A function $f$ of the irrep-label configuration defines an observable
\[
  X_f:=\sum_{\boldsymbol\lambda}
    f(\boldsymbol\lambda)P_{\boldsymbol\lambda}.
\]
This observable acts as the scalar $f(\boldsymbol\lambda)$ within each
fixed-label block, so it depends only on the labels and not on the
internal state of that block.  Such observables commute with one another
and can be identified with classical functions of the label configuration.
They form the label interface used in
\cref{sec:finite-group-specialization}, providing the setting for the
classical comparison developed later.

The full observable space contains more information.
Even within a fixed-label block, an observable may act nontrivially
on the internal matrix degrees of freedom.
This distinction is already visible on one edge.  If $d_\lambda>1$, then
$\dim\cH_\lambda=d_\lambda^2$, so the fixed-label block
$\cB(\cH_\lambda)$ contains many non-scalar operators, such as traceless
diagonal matrices and coherences between internal matrix indices.  Measuring
the irrep label reveals only that the state lies in $\cH_\lambda$ and cannot
distinguish these internal states.

\begin{examplebox}{the label projector and internal directions for
\texorpdfstring{$S_3$}{S3}}
On one $S_3$ edge, choose a function of the irrep label by
\[
  f(\mathbf1)=0,\qquad f(\mathrm{sgn})=1,\qquad f(\tau)=2.
\]
The corresponding observable is
\[
  X_f=P_{\mathrm{sgn}}+2P_\tau
  \cong 0\oplus1\oplus2I_4
  =\operatorname{diag}(0,1,2,2,2,2),
\]
where the matrix is written in a Peter--Weyl basis ordered by
$\mathbf1,\mathrm{sgn},\tau$.  Thus $X_f$ distinguishes the three labels,
but assigns the same value $2$ to every state in the four-dimensional
$\tau$ block.  It cannot distinguish states within that block.

To see the additional observables available within this block, recall that
\[
  P_\tau
  =\sum_{i,j=1}^{2}
    \lvert\tau,i,j\rangle\!\langle\tau,i,j\rvert,
  \qquad
  \cH_\tau=\ran P_\tau\cong\C^4.
\]
The label sees only the scalar projector direction inside the full
fixed-label block:
\[
  \underbrace{\C P_\tau}_{\text{label-visible}}
  \ \subsetneq\
  \underbrace{P_\tau\cB(\C[S_3])P_\tau
    =\cB(\cH_\tau)\cong\cB(\C^4)}_{\text{fixed label $\tau$, with internal directions}}.
\]
For example, two observables outside $\C P_\tau$ are
\[
  \begin{aligned}
  X_{\mathrm{diag}}&:=
  \lvert\tau,1,1\rangle\!\langle\tau,1,1\rvert
  -\lvert\tau,2,1\rangle\!\langle\tau,2,1\rvert,\\
  X_{\mathrm{coh}}&:=
  \lvert\tau,1,1\rangle\!\langle\tau,2,1\rvert
  +\lvert\tau,2,1\rangle\!\langle\tau,1,1\rvert.
  \end{aligned}
\]
For $X=P_\tau XP_\tau$, removing the label component gives
\[
  X_\perp
  :=X-\mathcal E_{\mathrm{lab}}X
  =X-
  \frac{\operatorname{Tr}(\rho_\beta P_\tau X)}
       {\operatorname{Tr}(\rho_\beta P_\tau)}P_\tau,
  \qquad
  \mathcal E_{\mathrm{lab}}X_\perp=0.
\]
Thus the label projection retains the $P_\tau$ direction and misses the
centered internal directions.  The classical label gap controls the former;
the estimate on the orthogonal complement of the label interface controls the latter.
\end{examplebox}

\subsubsection{Invariant Spaces and Intertwiners}
\label{sec:prelim-invariant-spaces-intertwiners}

This section introduces further tools used in the three model-verification
subsections.  In \cref{sec:common-population-certificate}, dimensions of
invariant spaces determine the classical vertex weights.  In
\cref{sec:common-fibre-certificate}, intertwiners describe the exact path
fixed spaces and support the fixed-space cover argument
(\cref{sec:exact-centered-path-fixed-space,lem:cubic-defect-frame}).
In \cref{sec:common-schur-certificate}, these tools enter the proof of
middle-edge decoupling (\cref{prop:middle-edge-shielding}).
Readers interested primarily in the main results may skip this section
on a first reading.

\paragraph{Invariant vectors and vertex constraints.}
For a unitary representation $W:\Gamma\to\mathcal U(\mathcal H_W)$, write
\begin{equation}
  \label{eq:invariant-vector-space}
  \operatorname{Inv}(W)
  :=\{\xi\in\mathcal H_W:W(g)\xi=\xi
       \text{ for every }g\in\Gamma\}.
\end{equation}
For representations $W_1,\ldots,W_m$, the notation
$\operatorname{Inv}(W_1\otimes\cdots\otimes W_m)$ refers to the fixed space
of the diagonal action
$g\mapsto W_1(g)\otimes\cdots\otimes W_m(g)$.

At a vertex, this invariant space describes the states that satisfy
the vertex constraint for the chosen incident-edge labels:
\[
  A_v\xi=\xi
  \quad\Longleftrightarrow\quad
  U_v(g)\xi=\xi
  \quad\text{for every }g\in\Gamma.
\]
Here the operators act on the fixed-label vertex space
$\mathcal V_v(\boldsymbol\lambda)$.

\begin{lemma}[Group averaging onto invariant vectors]
\label{lem:group-averaging-invariants}
For any positive integer $m$, let
$W_i:\Gamma\to\mathcal U(\mathcal V_i)$, $i=1,\ldots,m$, be unitary
representations.  The orthogonal projection onto
$\operatorname{Inv}(W_1\otimes\cdots\otimes W_m)$ is
\[
  \Pi_{\mathrm{inv}}
  =
  \frac1{|\Gamma|}\sum_{g\in\Gamma}
  W_1(g)\otimes\cdots\otimes W_m(g).
\]
In particular,
\[
  \operatorname{rank}\Pi_{\mathrm{inv}}
  =
  \dim\operatorname{Inv}
  (W_1\otimes\cdots\otimes W_m).
\]
\end{lemma}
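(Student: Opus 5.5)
The plan is to reproduce the proof of \cref{lem:vertex-commutativity-locality}, now for a general diagonal representation. Set $W(g):=W_1(g)\otimes\cdots\otimes W_m(g)$. Since each $W_i$ is a unitary representation, so is $W$, because tensor products of unitaries are unitary and $W(gh)=W(g)W(h)$ holds factorwise. Define $P:=|\Gamma|^{-1}\sum_{g}W(g)$. We verify in turn that $P$ is self-adjoint, that it is idempotent, and that its range is exactly $\operatorname{Inv}(W)$. These three facts together identify $P$ as the orthogonal projection onto $\operatorname{Inv}(W)$.

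\emph{Self-adjointness.} We have $W(g)^\dagger=W(g)^{-1}=W(g^{-1})$. Reindexing the sum by $g\mapsto g^{-1}$ therefore gives $P^\dagger=P$.

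\emph{Invariance under left multiplication.} For every $h\in\Gamma$,
\[
  W(h)P=|\Gamma|^{-1}\sum_g W(hg)=P,
\]
because $g\mapsto hg$ is a bijection of $\Gamma$. Consequently $\ran P\subseteq\operatorname{Inv}(W)$.

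\emph{Action on invariant vectors.} Conversely, if $\xi\in\operatorname{Inv}(W)$, then $P\xi=|\Gamma|^{-1}\sum_g\xi=\xi$. Hence $\operatorname{Inv}(W)\subseteq\ran P$, and $P$ acts as the identity on its range.

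\emph{Idempotence.} Combining the two previous steps, $P^2=P$. Alternatively, one can use the same counting argument as in \eqref{eq:vertex-group-average-projection}.

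A self-adjoint idempotent is an orthogonal projection, and its range has just been identified as $\operatorname{Inv}(W)$. This proves the first claim.

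The rank statement then follows immediately, since the rank of any projection equals the dimension of its range. If a trace formula is wanted, it also follows: $\operatorname{rank}P=\operatorname{Tr}P=|\Gamma|^{-1}\sum_g\prod_i\chi_{W_i}(g)$.

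There is no genuine obstacle in this argument. The only point that requires care is that invariance is with respect to the diagonal action, so the reindexing must be carried out for the product operator $W(g)$ as a whole and not factor by factor.
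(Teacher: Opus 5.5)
Your proposal is correct and follows essentially the same route as the paper. The paper likewise notes that the diagonal action is unitary, shows that its group average is self-adjoint and idempotent (citing the vertex-projector computation), and shows that its image is fixed by every group element while it acts as the identity on invariant vectors, which identifies the range and gives the rank formula; your character formula $\operatorname{Tr}P=|\Gamma|^{-1}\sum_g\prod_i\chi_{W_i}(g)$ is a correct extra that the paper does not state.
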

\begin{proof}
The tensor product action is a unitary representation.  As in the proof of
\cref{lem:vertex-commutativity-locality}, its group average is self-adjoint
and idempotent.  Its image is fixed by every group element, and it acts as
the identity on every fixed vector.  This identifies its range and proves
the rank formula.  See also
\cite[proof of Theorem~4.5.1]{etingof2011introduction}.
\end{proof}
Here $m$ counts the representation factors on which the same group element
acts.  In the principal application below, these are the three legs
incident to a degree-three vertex, so $m=3$; the formula also applies at any fixed
degree.

\paragraph{Fusion channels.}
For fixed incident-edge labels, the dimension of the invariant space at a
vertex counts the independent ways these labels can be coupled
to form an invariant state.  These possibilities are called
\emph{fusion channels}, and their number is the \emph{fusion multiplicity}.
Equivalently, this dimension is the number of copies of the trivial
representation in the tensor product.

At a degree-three vertex, dimension zero means that the three labels cannot
satisfy the vertex constraint, dimension one means that their invariant
coupling is unique up to scale, and a larger dimension means that several
linearly independent couplings are available.

The same interpretation applies to arbitrary edge orientations: each
$\epsilon_i$ records the corresponding choice between $V_{\lambda_i}$ and
its dual.  Thus
\[
  \dim\operatorname{Inv}
  (V_{\lambda_1}^{\epsilon_1}\otimes
   V_{\lambda_2}^{\epsilon_2}\otimes
   V_{\lambda_3}^{\epsilon_3})>1,
\]
means that the same three edge labels admit several linearly independent
invariant tensors.  An operator can mix these possibilities without changing
any irrep label, so it need not be a label-only observable.  Together with the
internal matrix directions in the preceding example, such operators belong
to the complement of the irrep-label subspace below.

\begin{examplebox}{\texorpdfstring{$S_3$}{S3} fusion rules}
\phantomsection\label{ex:s3-fusion-vertex-weights}
The three irreducible representations are $\mathbf 1$, $\mathrm{sgn}$,
and $\tau$, of dimensions $1,1,2$, respectively.  All are self-dual.
Tensoring with $\mathbf 1$ leaves a representation unchanged, and the
remaining fusion rules are
\[
  \mathrm{sgn}\otimes\mathrm{sgn}\cong\mathbf 1,
  \qquad
  \mathrm{sgn}\otimes\tau\cong\tau,
  \qquad
  \tau\otimes\tau
  \cong\mathbf 1\oplus\mathrm{sgn}\oplus\tau.
\]
See \cite[Example~4.9.1]{etingof2011introduction}.
Each summand occurs at most once.  Since the irreps are self-dual,
reversing an edge replaces its representation by an equivalent dual, so
the following dimension count is independent of the edge orientations.
The three-edge invariant space has dimension one precisely when
$V_{\lambda_3}$ occurs in $V_{\lambda_1}\otimes V_{\lambda_2}$, and
dimension zero otherwise.  For instance,
\[
  \dim\operatorname{Inv}(V_\tau\otimes V_\tau\otimes V_\tau)=1,
  \qquad
  \dim\operatorname{Inv}(V_{\mathbf 1}\otimes V_{\mathbf 1}\otimes V_\tau)=0.
\]
Thus a three-edge vertex has either one invariant channel or none and 
there are no multiple invariant channels for a fixed triple of labels.

These fusion rules are used in the finite calculations in
\cref{app:s3-finite-certificates}.
\end{examplebox}

Invariant spaces of dimension greater than one can occur for larger groups
and are allowed in the general finite-group theorem.

\paragraph{Intertwiners and multiplicity spaces.}
For finite-dimensional Hilbert spaces $\mathcal H_U$ and $\mathcal H_V$,
$\operatorname{Hom}(\mathcal H_U,\mathcal H_V)$ denotes the vector space of
all linear maps from $\mathcal H_U$ to $\mathcal H_V$.  In this notation, the
Peter--Weyl block in \eqref{eq:peter-weyl-ket-bra-block} satisfies
\[
  P_{\lambda^{\mathrm{ket}}}X P_{\lambda^{\mathrm{bra}}}
  \in
  \operatorname{Hom}
  (\cH_{\lambda^{\mathrm{bra}}},
   \cH_{\lambda^{\mathrm{ket}}}).
\]
For representations $U$ on $\mathcal H_U$ and $V$ on $\mathcal H_V$, write
\begin{equation}
  \label{eq:intertwiner-space}
  \operatorname{Hom}_\Gamma(U,V)
  :=
  \left\{
    F:\mathcal H_U\to\mathcal H_V:
    V(g)F=FU(g)\ \text{for every }g\in\Gamma
  \right\}.
\end{equation}
This definition also fixes the order of the intertwiner spaces used
later.  A local condition
$U^{\mathrm{ket}}(g)X=XU^{\mathrm{bra}}(g)$ says precisely that $X$ maps the
bra representation equivariantly into the ket representation, so its
solution space is
$\operatorname{Hom}_\Gamma(U^{\mathrm{bra}},U^{\mathrm{ket}})$.
Its elements are \emph{intertwiners}: linear maps compatible with the two
group actions.  Thus the first argument is the source representation and the
second is the target representation.

At a degree-three vertex with one incoming edge labelled $\lambda_1$ and two
outgoing edges labelled $\lambda_2,\lambda_3$, the vertex space is
$V_{\lambda_1}^*\otimes V_{\lambda_2}\otimes V_{\lambda_3}$.
A tensor in this space can also be viewed as a linear map from $V_{\lambda_1}$
to $V_{\lambda_2}\otimes V_{\lambda_3}$, by treating the dual factor as an input:
\begin{center}
  \begin{tikzpicture}[baseline=(v.base), >=Stealth]
    \node[circle, fill=black, inner sep=2pt] (v) at (0,0) {};
    \draw[->, thick] (-1.6,0) -- node[above] {$\lambda_1$} (v);
    \draw[->, thick] (v) -- node[above left] {$\lambda_2$} (1.4,0.7);
    \draw[->, thick] (v) -- node[below left] {$\lambda_3$} (1.4,-0.7);
    \node[anchor=west] at (2.0,0)
      {$F:V_{\lambda_1}\longrightarrow V_{\lambda_2}\otimes V_{\lambda_3}$};
  \end{tikzpicture}
\end{center}
Under this identification, invariance of the tensor is exactly
compatibility of the map with the group actions.  Thus,
\[
  \operatorname{Inv}
  (V_{\lambda_1}^*\otimes V_{\lambda_2}\otimes V_{\lambda_3})
  \cong
  \operatorname{Hom}_\Gamma
  (V_{\lambda_1},V_{\lambda_2}\otimes V_{\lambda_3})
\]
as in
\cite[proof of Theorem~4.5.1]{etingof2011introduction}.
Its dimension is the multiplicity of $V_{\lambda_1}$ in
$V_{\lambda_2}\otimes V_{\lambda_3}$.

For irreducibles, Schur's lemma
\cite[Proposition~2.3.9 and Corollary~2.3.10]{etingof2011introduction} gives
\begin{equation}
  \label{eq:schur-intertwiner-dimension}
  \dim\operatorname{Hom}_\Gamma(V_\lambda,V_\mu)
  =\mathbf1_{\lambda=\mu}.
\end{equation}
For a general representation, the following decomposition separates each
irreducible type from the space recording its multiplicity.

\begin{lemma}[Finite-group isotypic decomposition]
\label{lem:finite-group-isotypic-decomposition}
Every finite-dimensional unitary representation $U$ of $\Gamma$ on
$\mathcal W$ admits an orthogonal equivariant decomposition
\begin{equation}
  \label{eq:finite-group-isotypic-decomposition}
  \mathcal W
  \cong
  \bigoplus_{\alpha\in\widehat\Gamma}V_\alpha\otimes M_\alpha,
  \qquad
  M_\alpha\cong\operatorname{Hom}_\Gamma(V_\alpha,\mathcal W),
\end{equation}
where zero multiplicity spaces may be omitted.  Under this identification,
\[
  U(g)\cong\bigoplus_{\alpha\in\widehat\Gamma}
    \rho_\alpha(g)\otimes I_{M_\alpha}.
\]
Thus $\dim M_\alpha$ is the number of copies of the irreducible type
$V_\alpha$ in $\mathcal W$.
\end{lemma}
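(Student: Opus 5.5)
The plan is the standard isotypic decomposition, built from complete reducibility and Schur's lemma. The key object is the evaluation map
\[
  \Phi:\bigoplus_{\alpha\in\widehat\Gamma}V_\alpha\otimes\operatorname{Hom}_\Gamma(V_\alpha,\mathcal W)\longrightarrow\mathcal W,
  \qquad
  \Phi(v\otimes F):=F(v).
\]
Equip each $V_\alpha$ with its fixed unitary structure. Equip $M_\alpha:=\operatorname{Hom}_\Gamma(V_\alpha,\mathcal W)$ with the inner product $\langle F,F'\rangle:=d_\alpha^{-1}\operatorname{Tr}(F^\dagger F')$; for intertwiners between irreducibles and $\mathcal W$, Schur's lemma \eqref{eq:schur-intertwiner-dimension} makes $F^\dagger F'$ a scalar multiple of $I_{V_\alpha}$. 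Equivariance of $\Phi$ is immediate: $\Phi(\rho_\alpha(g)v\otimes F)=F\rho_\alpha(g)v=U(g)F(v)$. This gives the claimed form $U(g)\cong\bigoplus_\alpha\rho_\alpha(g)\otimes I_{M_\alpha}$ once $\Phi$ is shown to be unitary.

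For complete reducibility, I would use unitarity of $U$: the orthogonal complement of an invariant subspace is invariant. By induction on $\dim\mathcal W$, this splits $\mathcal W$ orthogonally into irreducible invariant subspaces $W_1\oplus\cdots\oplus W_n$. Each $W_k$ is unitarily equivalent to some $V_{\alpha_k}$ via an isometric intertwiner $J_k:V_{\alpha_k}\to W_k\subseteq\mathcal W$. The existence of a unitary (not merely linear) equivalence follows by polar decomposition of a linear intertwiner $S$: $S^\dagger S$ commutes with the irreducible action, so by Schur it is a scalar, and $S$ can be rescaled to an isometry.

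The main step is to show that, for each $\alpha$, the maps $\{J_k:\alpha_k=\alpha\}$ form an orthonormal basis of $M_\alpha$ in the inner product above.
\begin{itemize}
  \item \emph{Orthonormality.} $J_k^\dagger J_l$ is an intertwiner from $V_\alpha$ to itself. It vanishes for $k\neq l$ because $W_k\perp W_l$, and it equals $I$ for $k=l$. Hence $\langle J_k,J_l\rangle=\delta_{kl}$.
  \item \emph{Spanning.} For any $F\in M_\alpha$, write $F=\sum_k P_{W_k}F$. Each $J_k^\dagger F$ is an intertwiner $V_\alpha\to V_{\alpha_k}$, so it vanishes unless $\alpha_k=\alpha$, in which case it is a scalar $c_k$. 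Thus $F=\sum_{\alpha_k=\alpha}c_kJ_k$.
\end{itemize}

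Consequently $\Phi$ sends the orthonormal basis $e_{\alpha,i}\otimes J_k$ to orthonormal vectors $J_ke_{\alpha,i}$ spanning $\bigoplus_kW_k=\mathcal W$. So $\Phi$ is unitary, and $\dim M_\alpha$ equals the number of copies of $V_\alpha$ among the $W_k$. Dropping the $\alpha$ with $M_\alpha=0$ gives the statement.

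The only delicate point is the normalization of the inner product on $M_\alpha$, which is needed so that $\Phi$ is unitary rather than merely a linear equivariant isomorphism; the scalar-valued $F^\dagger F'$ from Schur handles this.
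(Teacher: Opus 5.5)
Your proof is correct and follows the same route as the paper: orthogonal complements of invariant subspaces plus induction give complete reducibility, and grouping equivalent summands with the evaluation map identifies each multiplicity space with $\operatorname{Hom}_\Gamma(V_\alpha,\mathcal W)$. Your normalized inner product $d_\alpha^{-1}\operatorname{Tr}(F^\dagger F')$ on $M_\alpha$, together with the Schur-based orthonormality and spanning check for the isometric intertwiners $J_k$, makes explicit the unitarity of the evaluation map, which the paper leaves to its cited references.
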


\begin{proof}
If a subspace of $\mathcal W$ is invariant, then so is its orthogonal
complement, because $U(g)$ is unitary.  Induction on $\dim\mathcal W$
therefore decomposes $\mathcal W$ into irreducible summands.  Grouping
equivalent summands gives \eqref{eq:finite-group-isotypic-decomposition}, and
the evaluation map identifies their multiplicity space with
$\operatorname{Hom}_\Gamma(V_\alpha,\mathcal W)$.  This is Maschke's theorem
in the present unitary setting; see
\cite[Remark~3.1.3 and Theorem~4.6.3]{etingof2011introduction}
and \cite{serre1977linear}.
\end{proof}

The group need not be abelian,
and tensor products of irreducible representations need not be multiplicity-free.

\subsection{Classical Heat-Bath Chains and Dobrushin Comparison}
\label{sec:prelim-dobrushin}

This subsection defines the classical heat-bath dynamics used in the
comparison in \cref{sec:common-population-certificate} and introduces
the Dobrushin criterion used there to bound its spectral gap.

Let $\mu$ be a strictly positive probability measure on a finite product
space $\Xi^E$.  For $e\in E$, let $\mathbb E_e$ denote conditional
expectation onto the coordinates in $E\setminus\{e\}$: it keeps those
coordinates fixed and averages only over the conditional distribution at $e$.
Explicitly, for $x\in\Xi^E$,
\[
  (\mathbb E_e f)(x)
  =\sum_{a\in\Xi}
    \mu\!\left(
      \xi_e=a\,\middle|\,
      \xi_{E\setminus\{e\}}=x_{E\setminus\{e\}}
    \right)
    f(x^{e\leftarrow a}),
\]
where $x^{e\leftarrow a}$ is obtained from $x$ by replacing its $e$th
coordinate by $a$.  Thus $\mathbb E_e$ is the orthogonal projection in
$L^2(\mu)$ onto the functions independent of the $e$th coordinate.  The unit-rate
continuous-time heat-bath operator is
\[
  \mathsf K_{\mathrm{hb}}
  :=\sum_{e\in E}(I-\mathbb E_e)
\]
on $L^2(\mu)$.  We also call this a positive heat-bath generator;
the actual Markov generator is $-\mathsf K_{\mathrm{hb}}$, with
semigroup $e^{-t\mathsf K_{\mathrm{hb}}}$.  All classical heat-bath
operators below use this positive-sign convention.

Analogously to the mean-zero observable space in the quantum setting, define
\[
  L^2_0(\mu):=\{f\in L^2(\mu):\mathbb E_\mu f=0\}.
\]
The operator $\mathsf K_{\mathrm{hb}}$ is self-adjoint in $L^2(\mu)$ and
annihilates constant functions, so it preserves $L^2_0(\mu)$.
Its spectral gap is defined by the Rayleigh quotient
\begin{equation}
  \label{eq:classical-heat-bath-gap}
  \gap(\mathsf K_{\mathrm{hb}})
  :=\inf_{0\ne f\in L^2_0(\mu)}
  \frac{\langle f,\mathsf K_{\mathrm{hb}}f\rangle_{L^2(\mu)}}
       {\langle f,f\rangle_{L^2(\mu)}}.
\end{equation}
Thus it is the smallest eigenvalue on the mean-zero subspace, just as in
the quantum setting.  Strict positivity of $\mu$ on the product space
ensures that the kernel consists only of constant functions.

For later use, consider also a single heat-bath update on a finite state
space $\mathsf I$ with probability vector $\pi$.  Writing
$\mathbb E_\pi f:=\sum_{i\in\mathsf I}\pi_i f_i$, the update replaces the
current state by an independent sample from $\pi$.  Its positive generator
and quadratic form are
\begin{equation}
  \label{eq:abstract-block-heat-bath-form}
  \mathsf K_{\mathrm{hb},\pi}:=I-\mathbb E_\pi,
  \qquad
  \langle f,\mathsf K_{\mathrm{hb},\pi}f\rangle_\pi
  =\operatorname{Var}_\pi(f)
  =\frac12\sum_{i,j\in\mathsf I}
    \pi_i\pi_j|f_i-f_j|^2.
\end{equation}
After the exterior coordinates of a product-space chain are fixed, the
$e$th summand $I-\mathbb E_e$ has exactly this form with $\pi$ equal to the
conditional distribution at $e$.

For probability vectors $p,q$ on a finite set, their total-variation distance
is
\begin{equation}
  \label{eq:total-variation-distance}
  \lVert p-q\rVert_{\mathrm{TV}}
  :=\frac12\sum_i|p_i-q_i|.
\end{equation}
Write $\mu_e(\,\cdot\mid x_{E\setminus\{e\}})$ for the conditional
distribution at $e$.  For distinct $e,f\in E$, define the corresponding
total-variation influence by
\begin{align*}
  c_{ef}
  &:=
  \sup_{x\sim_fy}
  \left\|
    \mu_e(\,\cdot\mid x_{E\setminus\{e\}})
    -
    \mu_e(\,\cdot\mid y_{E\setminus\{e\}})
  \right\|_{\mathrm{TV}}
  \\
  &=\frac12\sup_{x\sim_fy}\sum_{a\in\Xi}
  \left|
    \mu_e(a\mid x_{E\setminus\{e\}})
    -\mu_e(a\mid y_{E\setminus\{e\}})
  \right|,
\end{align*}
where $x\sim_fy$ means that the two boundary configurations differ only at
coordinate $f$.  Set
\[
  \alpha:=\max_{e\in E}\sum_{f\ne e}c_{ef}.
\]
Under the Dobrushin condition $\alpha<1$, the classical heat-bath operator
satisfies \cite{dobrushin1968description,wu2006poincare}
\[
  \gap(\mathsf K_{\mathrm{hb}})\ge1-\alpha.
\]
We will apply the unit-rate statement after deriving the
exact conditional distributions of the finite-group irrep-label model.  The stronger
block certificate for $S_3$, which uses its fusion rules, is model-specific
and is therefore developed separately rather than included in these
preliminaries.

\section{Gap Conditions for the Finite-Group Model}
\label{sec:finite-group-specialization}

We now explain how to apply the abstract gap theorem to the finite-group model. We specify the ingredients of double localization and state the conditions to be verified in the following sections.

\subsection{The Finite-Group Split and Centered-Path Cover}
\label{sec:formal-model-split-path-cover}

Fix a graph $\cG$ in the theorem family, write $E=E(\cG)$, and let
$K=\sum_{e\in E}K_e$ be its positive Davies operator on the mean-zero
observable space.  A label configuration
$\boldsymbol\lambda=(\lambda_e)_{e\in E}\in\widehat\Gamma^E$ determines the
joint Peter--Weyl projector $P_{\boldsymbol\lambda}$ in
\eqref{eq:global-peter-weyl-projector}.  These projectors span the commutative
label-observable space
\[
  \mathcal M_{\mathrm{lab}}
  :=\operatorname{span}
    \{P_{\boldsymbol\lambda}:\boldsymbol\lambda\in\widehat\Gamma^E\}.
\]
Thus a label configuration chooses one irreducible \emph{block} on each
edge, and $P_{\boldsymbol\lambda}$ projects onto their tensor product.  The family
$\{P_{\boldsymbol\lambda}\}_{\boldsymbol\lambda\in\widehat\Gamma^E}$ is the
joint projective measurement of the edge irrep labels.
The Gibbs state induces the classical distribution
\[
  \mu_{\mathrm{lab}}(\boldsymbol\lambda)
  :=\operatorname{Tr}(\rho_\beta P_{\boldsymbol\lambda}).
\]
Accordingly, $\mu_{\mathrm{lab}}(\boldsymbol\lambda)$ is precisely the
probability of obtaining the joint label outcome $\boldsymbol\lambda$ when
this measurement is performed in the Gibbs state.
We identify classical functions with label observables by
\[
  L^2(\mu_{\mathrm{lab}})\ni f
  \longmapsto\sum_{\boldsymbol\lambda}
    f(\boldsymbol\lambda)P_{\boldsymbol\lambda}
  \in\mathcal M_{\mathrm{lab}}.
\]
This identification preserves the norm:
\[
  \left\lVert\sum_{\boldsymbol\lambda}
    f(\boldsymbol\lambda)P_{\boldsymbol\lambda}\right\rVert_{\mathrm{KMS}}^2
  =\sum_{\boldsymbol\lambda}
    \mu_{\mathrm{lab}}(\boldsymbol\lambda)
    \lvert f(\boldsymbol\lambda)\rvert^2
  =\lVert f\rVert_{L^2(\mu_{\mathrm{lab}})}^2.
\]
It therefore gives the natural space in which
the compressed label dynamics is compared with a classical heat-bath chain.
On the mean-zero spaces this becomes
\begin{equation}
  \label{eq:mean-zero-label-identification}
  \mathcal M_{\mathrm{lab}}\cap L^2_0(\rho_\beta)
  \cong L^2_0(\mu_{\mathrm{lab}}).
\end{equation}
Since every $P_{\boldsymbol\lambda}$ commutes with the vertex Hamiltonian
and hence with $\rho_\beta$, the KMS-orthogonal conditional expectation
onto the label algebra is
\begin{equation}
  \label{eq:label-coarse-graining-formula}
  \mathcal E_{\mathrm{lab}}X
  :=\sum_{\boldsymbol{\lambda}\in\widehat\Gamma^E}
    \frac{
      \operatorname{Tr}
      (\rho_\beta P_{\boldsymbol{\lambda}}X)
    }{
      \operatorname{Tr}
      (\rho_\beta P_{\boldsymbol{\lambda}})
    }
    P_{\boldsymbol{\lambda}},
  \qquad X\in\cal B(\mathcal H_{\cG}).
\end{equation}
Under the identification of label observables with functions on label
configurations, we write
$(\mathcal E_{\mathrm{lab}}X)(\boldsymbol\lambda)$ for the scalar coefficient
of $P_{\boldsymbol\lambda}$ in this expansion.

The conditional expectation preserves the Gibbs expectation.
Explicitly,
$\operatorname{Tr}(\rho_\beta\mathcal E_{\mathrm{lab}}X)
=\operatorname{Tr}(\rho_\beta X)$.  It deletes coherences between
distinct label blocks and replaces the remaining internal matrix data in
each fixed-label block by its Gibbs-weighted scalar average.
We therefore define the mean-zero label projection by
\[
  \Pi_{\mathrm{lab}}
  :=\mathcal E_{\mathrm{lab}}\big|_{L^2_0(\rho_\beta)}.
\]
Throughout the model analysis below, we use the abstract notation
\[
  \mathcal M:=\mathcal M_{\mathrm{lab}},
  \qquad
  \mathcal M_0:=\mathcal M\cap L^2_0(\rho_\beta),
  \qquad
  \Pi:=\Pi_{\mathrm{lab}},
  \qquad
  \Pi^\perp:=I-\Pi.
\]
In particular, $\operatorname{ran}\Pi=\mathcal M_0$.
Thus $\Pi$ is the projection onto the left-hand side of
\eqref{eq:mean-zero-label-identification}.  Here
$P_{\boldsymbol\lambda}$ acts on the physical many-edge Hilbert space,
whereas $\mathcal E_{\mathrm{lab}}$, $\Pi$, and $\Pi^\perp$ act on observables.
Each
vertex projector $A_v$ preserves the joint irrep blocks. Using the
zero-Bohr-frequency space defined in \eqref{eq:zero-bohr-frequency-space},
we therefore have
\begin{equation}
  \label{eq:label-algebra-zero-bohr-inclusion}
  \mathcal M
  \subseteq
  V_0(H_{\cG})=\{X:[X,H_{\cG}]=0\}.
\end{equation}
Degeneracies can make this inclusion strict.  Thus $\Pi^\perp$ contains both
$V_0(H_{\cG})\ominus\mathcal M$ and the nonzero-frequency subspaces.
In terms of label blocks, $\Pi^\perp$ contains both
blocks between different label configurations and non-scalar matrix or
fusion-channel directions within fixed-label blocks.

The label block decomposes over the edge update operators as
\[
  A=\Pi K\Pi=\sum_{e\in E}A_e,
  \qquad
  A_e:=\Pi K_e\Pi.
\]
We also use the family $\PathFam$ of unoriented induced centered
three-edge paths.  Each path is counted once and, after choosing either
endpoint order for its middle edge, is written as $p=(e_-,e_0,e_+)$.  Put
\[
  K_p:=K_{e_-}+K_{e_0}+K_{e_+},
  \qquad
  C_p:=A_{\operatorname{mid}(p)}:=A_{e_0}.
\]
Write the path vertices in order as $v_L-v_1-v_2-v_R$.
We call $v_1,v_2$ the \emph{internal vertices}, $v_L,v_R$ the
\emph{path endpoints}, and the three selected edges the \emph{active edges}.
Edges incident to a path vertex but not belonging to the path are its
\emph{outer edges}; each endpoint has two and each internal vertex has one.
The middle edge is the update to be certified, while the two flank edges
retain its interaction environment at the endpoints; see
\Cref{fig:finite-group-star-model}. We use the endpoint-factor assignment
in \cref{eq:edge-to-vertex-regrouping}.

The two flanks make the fixed space of $K_p$ tractable and give the
middle-edge shielding property proved in \cref{prop:middle-edge-shielding}:
\[
  \Pi\bigl(\ker K_p\cap L^2_0(\rho_\beta)\bigr)
  \subseteq\ker A_{\operatorname{mid}(p)}.
\]
Together with the fixed incidence multiplicities established below, this
allows the same paths to support local coupling control and global
assembly.

By \cref{lem:prelim-davies-generator-frustration-free}, this path sum is positive
and frustration free in the precise sense that
\[
  \ker K_p
  =\ker K_{e_-}\cap\ker K_{e_0}\cap\ker K_{e_+}.
\]
Identifying this intersection and bounding $K_p$ above its kernel are the
model-specific tasks carried out later.
Let $E_p$ be the KMS-orthogonal projection onto
$\ker K_p\cap L^2_0(\rho_\beta)$ within the mean-zero space and denote its
orthogonal-complement projection by $E_p^\perp:=I-E_p$.  The local form
$\mathcal S_{\Pi}(K_p)$ is the specialization of
\eqref{eq:short-variational-definition} to this split.

\subsection{The Three Model Conditions}
\label{sec:three-certificate-interfaces}

We verify Conditions~(i)--(iii) of
\cref{thm:abstract-two-axis-gap-assembly} as follows. The constants
$a,g_{\mathrm{loc}},c_{\mathrm{Sch}}>0$ are uniform in graph size, but may
depend on the fixed group and the chosen closed, bounded temperature interval.
\begin{enumerate}
\item[(i)] The interface bound $A\succeq a\Pi$ is proved in
\cref{sec:common-population-certificate} by classical comparison.
\item[(ii)] The local bound $K_p\succeq g_{\mathrm{loc}}E_p^\perp$
for every $p\in\PathFam$ is proved in \cref{sec:common-fibre-certificate}
after identifying the exact path kernel.
\item[(iii)] The Schur bound
$\mathcal S_{\Pi}(K_p)\succeq c_{\mathrm{Sch}}C_p$ for every $p\in\PathFam$
is proved in \cref{sec:common-schur-certificate} using middle-edge
decoupling and a finite path quotient bound.
\end{enumerate}

\subsection{Geometric Conditions and the Fixed-Group Gap}
\label{sec:finite-group-incidence-assembly}

The remaining hypotheses of \cref{thm:abstract-two-axis-gap-assembly} are
geometric. Condition~(G1) holds with $\kappa_{\mathrm{cov}}=6$:
\[
  \Pi^\perp\Big(\sum_{p\in\PathFam}E_p^\perp\Big)\Pi^\perp
  \succeq 6\Pi^\perp,
\]
as proved in \cref{lem:cubic-defect-frame}.
Conditions~(G2)--(G3) hold with $r_K=12$ and $r_A=4$,
respectively, through the exact incidence identities in the following lemma.

\begin{lemma}[Centered-path incidence counting]
\label{lem:cubic-path-ownership}
For every finite simple cubic graph of girth at least six,
\begin{equation}
  \label{eq:cubic-path-ownership}
  \sum_{p\in\PathFam}K_p=12K,
  \qquad
  \sum_{p\in\PathFam}C_p=4A.
\end{equation}
\end{lemma}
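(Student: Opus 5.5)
Both identities are pure incidence counts, because $K_p$ and $C_p$ are sums of edge operators. Exchanging the order of summation gives
\[
  \sum_{p\in\PathFam}K_p=\sum_{e\in E}N_{\mathrm{act}}(e)\,K_e,
  \qquad
  \sum_{p\in\PathFam}C_p=\sum_{e\in E}N_{\mathrm{mid}}(e)\,A_e .
\]
Here $N_{\mathrm{act}}(e)$ is the number of paths $p\in\PathFam$ having $e$ as an active edge, and $N_{\mathrm{mid}}(e)$ is the number having $e$ as middle edge. It therefore suffices to show $N_{\mathrm{act}}(e)=12$ and $N_{\mathrm{mid}}(e)=4$ for every edge, and then use $K=\sum_eK_e$ and $A=\sum_eA_e$. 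No operator property of $K_e$ or $A_e$ is needed beyond linearity.

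To count middle-edge incidences, fix $e_0=\{v_1,v_2\}$. Since the graph is cubic, $v_1$ has exactly two other incident edges, and so does $v_2$. A centered path with middle edge $e_0$ is therefore given by a choice of flank $e_-$ at $v_1$ (two options) and flank $e_+$ at $v_2$ (two options), for four candidates. Each candidate must be a genuine induced path $v_L-v_1-v_2-v_R$ with four distinct vertices.
\begin{itemize}
  \item If $v_L=v_R$, we would have a triangle.
  \item If $v_L\in\{v_1,v_2\}$, the graph would not be simple.
  \item A chord $v_Lv_2$, $v_1v_R$ or $v_Lv_R$ would create a cycle of length $3$ or $4$.
\end{itemize}
Girth at least six (in fact $\ge5$ suffices) excludes all of these, so every candidate is a distinct induced path. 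Because each unoriented path is counted once, choosing the endpoint order of $e_0$ only relabels $e_\pm$ and introduces no double counting. Hence $N_{\mathrm{mid}}(e)=4$.

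To count flank incidences, fix $e=\{x,y\}$ and count paths in which $e$ is a flank, say $e=e_-$ with $v_L=x$ and $v_1=y$ (or the reverse orientation). Then $e_0$ is one of the two other edges at $v_1$, and $e_+$ is one of the two other edges at $v_2$. This gives $2\cdot2=4$ choices for each of the two endpoint roles of $e$, so $8$ in total. Girth again ensures that all are distinct induced paths and that $e$ cannot simultaneously serve as both flanks of one path. Since $e$ can appear in a path only as the middle edge or as a flank, $N_{\mathrm{act}}(e)=4+8=12$.

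The main point needing care is the bookkeeping of unoriented paths. The count must match the convention that each path in $\PathFam$ is counted once regardless of the endpoint order chosen for $e_0$. The girth hypothesis must also be used to confirm that every local choice yields a distinct induced path with no coincident vertices. Once that is settled, the two identities in \eqref{eq:cubic-path-ownership} follow directly from the displayed rearrangements.
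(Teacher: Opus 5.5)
Your proposal is correct and follows the same route as the paper: you exchange the order of summation and count, for each edge, the $4$ paths in which it is the middle edge and the $8$ in which it is a flank, using cubicity for the factors of $2$ and the girth hypothesis to exclude degenerate or repeated paths. Your explicit check of inducedness and of the unoriented-path convention spells out what the paper compresses into the phrase ``no choices collapse,'' and your remark that girth at least five already suffices for this particular count is accurate.
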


\begin{proof}
Fix an edge $e$.  To form a centered path having $e$ as its middle edge,
choose one of the other two incident edges at each endpoint of $e$.  Hence
$e$ is the middle edge of exactly $2\cdot2=4$ paths.  To make $e$ a flank,
choose one of its two endpoints, one of the other two edges there as the
middle edge, and one of the two continuations at the opposite endpoint of
that middle edge.  Thus $e$ is a flank of exactly $2\cdot2\cdot2=8$ paths.
The girth assumption ensures that these are induced three-edge paths and
that no choices collapse.  Consequently every $K_e$ occurs $4+8=12$ times
in the first sum, while every $A_e$ occurs four times as a middle-edge form
in the second.
\end{proof}

\subsection{Summary of the Constants}
\label{sec:model-constant-summary}

For a fixed group $\Gamma$ and $0\le t\le t_*<2/3$, the constants in
\cref{thm:abstract-two-axis-gap-assembly} are identified as follows:
\[
\begin{aligned}
  a&=a_{\Gamma,*}
    &&\text{(\cref{sec:common-population-certificate})},\\
  g_{\mathrm{loc}}&=g_{\Gamma,\mathrm{loc},*}
    &&\text{(\cref{sec:common-fibre-certificate})},\\
  c_{\mathrm{Sch}}&=c_{\Gamma,3,*}
    &&\text{(\cref{sec:common-schur-certificate})},\\
  \kappa_{\mathrm{cov}}&=6
    &&\text{(\cref{lem:cubic-defect-frame})},\\
  r_K&=12,\qquad r_A=4
    &&\text{(\cref{lem:cubic-path-ownership})}.
\end{aligned}
\]
The first three constants are positive and uniform in graph size and
$0\le t\le t_*$; the geometric constants are independent of $\Gamma$ and
temperature.

\section{Classical Comparison for the Label Interface Gap}
\label{sec:finite-block-davies-label-comparison}

In this section, we prove a model-independent comparison principle for
establishing the interface condition.  It relates the Davies form on
observables that are constant on prescribed operator blocks to the
classical heat-bath form on the corresponding labels.
We apply this principle to the finite-group model in
\cref{sec:common-population-certificate}, where its hypotheses are verified
using the Peter--Weyl decomposition.

Let $\sigma=Z^{-1}e^{-\beta H_0}$ be a strictly positive state on a finite-dimensional
Hilbert space, and let $\{P_i\}_{i\in\mathsf I}$ be nonzero mutually
orthogonal projectors summing to the identity and commuting with $H_0$.  Set
\[
  \pi_i:=\operatorname{Tr}(\sigma P_i),
  \qquad
  r_i:=\operatorname{rank}P_i,
\]
and identify the block-scalar algebra with functions
$f=\sum_i f_iP_i$ on $\mathsf I$.  We use the one-step classical
heat-bath generator $\mathsf K_{\mathrm{hb},\pi}$ from
\eqref{eq:abstract-block-heat-bath-form}, with the block index $i$ as its
classical state.

Let $K$ be a positive KMS-symmetric Davies operator for
$(H_0,\sigma)$, specified by a finite weighted family $\mathscr S$ of bare
couplings, with clock weights $w_S>0$.

\begin{theorem}[Finite-block Davies-to-label comparison]
\label{thm:finite-block-davies-label-comparison}
Fix an adjoint-closed subfamily
$\mathscr S^{\mathrm{sel}}\subseteq\mathscr S$.
Suppose that there are $b,\gamma_{\min},\kappa>0$ such that
\begin{align}
  \label{eq:abstract-block-density-floor}
  &\sigma P_i
  \succeq b\frac{\pi_i}{r_i}P_i,
  \tag{C1}
  \\
  \label{eq:abstract-block-rate-floor}
  &\gamma_\beta(\omega)
  \ge \gamma_{\min}
  \quad\text{for every Bohr component of every }
     S\in\mathscr S^{\mathrm{sel}},
  \tag{C2}
  \\
  \label{eq:abstract-block-overlap-floor}
  &\sum_{S\in\mathscr S^{\mathrm{sel}}}w_S
    \lVert P_jSP_i\rVert_{\mathrm{HS}}^2
  \ge \kappa r_i\pi_j
  \quad (i,j\in\mathsf I).
  \tag{C3}
\end{align}
Then every block-scalar observable satisfies
\begin{equation}
  \label{eq:abstract-davies-label-comparison}
  \langle f,Kf\rangle_\sigma
  \ge b\gamma_{\min}\kappa\,
    \langle f,\mathsf K_{\mathrm{hb},\pi}f\rangle_\pi.
\end{equation}
Equivalently, the compression of $K$ to the block-scalar
subspace dominates $b\gamma_{\min}\kappa$ times the label heat-bath operator.
\end{theorem}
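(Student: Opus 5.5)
The plan is to write the Davies Dirichlet form on a block-scalar observable explicitly as a sum of squared commutators. Each commutator then splits into off-diagonal blocks $P_jSP_i$ weighted by $|f_i-f_j|^2$, and \eqref{eq:abstract-block-density-floor}--\eqref{eq:abstract-block-overlap-floor} turn this sum into the variance form \eqref{eq:abstract-block-heat-bath-form}.

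\emph{Step 1 (Dirichlet form identity).} Let $X=f=\sum_if_iP_i$. Since each $P_i$ commutes with $H_0$, $X$ commutes with $\sigma$. Hence
\[
\langle X,KX\rangle_\sigma=\operatorname{Tr}(\sigma^{1/2}X^\dagger\sigma^{1/2}KX)=\operatorname{Tr}(\sigma X^\dagger KX),
\]
which is the GNS form. By \cref{prop:davies-kms-gns-gap-bridge}, $K$ is GNS self-adjoint. We apply the carr\'e-du-champ identity \eqref{eq:lindblad-carre-du-champ} to every Bohr jump $S_\omega$ with weight $w_S\gamma_\beta(\omega)$. Taking the $\sigma$-expectation kills $\operatorname{Tr}(\sigma\mathcal L(X^\dagger X))$ by stationarity, and GNS symmetry identifies the two remaining terms. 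This gives
\[
2\langle f,Kf\rangle_\sigma=\sum_{S\in\mathscr S}w_S\sum_\omega\gamma_\beta(\omega)\operatorname{Tr}\bigl(\sigma[S_\omega,f]^\dagger[S_\omega,f]\bigr).
\]
Every summand is nonnegative. We may therefore discard all $S\notin\mathscr S^{\mathrm{sel}}$ and then use \eqref{eq:abstract-block-rate-floor} to replace each remaining $\gamma_\beta(\omega)$ by $\gamma_{\min}$. This step needs the most care: the factor $2$, and the fact that the KMS and GNS forms coincide here because $f$ commutes with $\sigma$.

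\emph{Step 2 (block expansion).} Because $P_i$ commutes with $H_0$, we have $P_jS_\omega P_i=(P_jSP_i)_\omega$ and
\[
[S_\omega,f]=\sum_{i,j}(f_i-f_j)\,P_jS_\omega P_i .
\]
In $\operatorname{Tr}(\sigma[S_\omega,f]^\dagger[S_\omega,f])$, $\sigma$ commutes with every $P_i$, so the cross terms with distinct source blocks $i\ne i'$ vanish by cyclicity. What remains is
\[
\sum_{i,j}|f_i-f_j|^2\operatorname{Tr}\bigl(\sigma P_i(P_jS_\omega P_i)^\dagger(P_jS_\omega P_i)P_i\bigr).
\]
Now apply \eqref{eq:abstract-block-density-floor}, $\sigma P_i\succeq b\pi_i r_i^{-1}P_i$. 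This bounds each trace from below by $b\pi_ir_i^{-1}\lVert P_jS_\omega P_i\rVert_{\mathrm{HS}}^2$.

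\emph{Step 3 (sum over frequencies and assembly).} The Bohr components of a fixed operator lie in distinct blocks $P_E(\cdot)P_{E'}$ and are therefore Hilbert--Schmidt orthogonal. Hence
\[
\sum_\omega\lVert (P_jSP_i)_\omega\rVert_{\mathrm{HS}}^2=\lVert P_jSP_i\rVert_{\mathrm{HS}}^2 .
\]
Summing over $S\in\mathscr S^{\mathrm{sel}}$ with weights $w_S$ and invoking \eqref{eq:abstract-block-overlap-floor} yields
\[
2\langle f,Kf\rangle_\sigma\ \ge\ b\gamma_{\min}\sum_{i,j}|f_i-f_j|^2\frac{\pi_i}{r_i}\,\kappa r_i\pi_j\ =\ b\gamma_{\min}\kappa\sum_{i,j}\pi_i\pi_j|f_i-f_j|^2 .
\]
By \eqref{eq:abstract-block-heat-bath-form} the right-hand side equals $2b\gamma_{\min}\kappa\,\langle f,\mathsf K_{\mathrm{hb},\pi}f\rangle_\pi$. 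The norm identification $\lVert\sum_if_iP_i\rVert_\sigma^2=\sum_i\pi_i|f_i|^2$ then gives the stated operator form of \eqref{eq:abstract-davies-label-comparison}.
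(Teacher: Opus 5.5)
Your proposal is correct and follows essentially the same proof as the paper. Both arguments use the carr\'e-du-champ Dirichlet form with stationarity, drop the nonselected couplings, expand $[S_\omega,f]$ into blocks $(f_i-f_j)P_jS_\omega P_i$, apply (C1) to a positive operator supported on $\operatorname{ran}P_i$, use Hilbert--Schmidt orthogonality of the Bohr pieces, and finish with (C2)--(C3). The differences are cosmetic: you equate the two trace terms via GNS symmetry (which tacitly also uses that $\mathcal L$ is $*$-preserving) rather than via complex conjugation plus KMS reality, and you resolve labels and energies in two steps rather than through the joint projectors $P_iP_E$.
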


The three conditions ensure that the selected couplings provide enough
mixing between labels.  Condition (C1) gives a lower bound on the Gibbs
weight of every state within each block.  Condition (C2) ensures that the
relevant transitions occur at a rate bounded away from zero, while (C3)
ensures that the selected couplings connect every pair of blocks strongly
enough.  Together, they allow us to bound the Davies form on label
observables from below by the classical heat-bath form in
\eqref{eq:abstract-block-heat-bath-form}.  It suffices to use the selected
couplings because all other contributions to the Davies form are
nonnegative.  The full proof is given in
\cref{app:finite-block-davies-label-comparison-proof}.

\begin{remark}[Relation to zero-Bohr-frequency comparisons]
\label{rem:comparison-with-zero-bohr-results}
The comparison of \cite{basso2025quantum} lifts a gap on the whole
zero-Bohr-frequency space $V_0(H_{\cG})$ to the full Davies generator
under spectral arithmetic hypotheses.  Our comparison instead bounds
the compression to the label interface $\mathcal M$ through its classical
label chain.  Although $\mathcal M\subseteq V_0(H_{\cG})$
by \eqref{eq:label-algebra-zero-bohr-inclusion}, a bound on $\mathcal M$
alone does not control $V_0(H_{\cG})\ominus\mathcal M$.
We use double localization to control the remaining observables and their
coupling to the interface.  Further differences in spectral assumptions
and temperature ranges are discussed in \cref{app:baseline-calibrations}.
\end{remark}

\section{Verification for Finite-Group Davies Dynamics}
\label{sec:common-analytic-certificates}

This section verifies the model conditions listed in
\cref{sec:three-certificate-interfaces,sec:finite-group-incidence-assembly}.
\Cref{sec:common-population-certificate} proves the interface bound (i)
by classical comparison.
\Cref{sec:common-fibre-certificate} proves the local gap bound (ii) and
the fixed-space cover (G1).
\Cref{sec:common-schur-certificate} proves the Schur bound (iii).
The geometric conditions (G2)--(G3) were established in
\cref{lem:cubic-path-ownership}.

We work under the hypotheses and unit-rate bath convention of
\cref{thm:finite-group-davies-gaps}(i).  Constants constructed below may
depend on $\Gamma$ and $t_*$, but not on the graph.  The restriction
$t_*<2/3$ enters only the label-subspace condition of \cref{sec:common-population-certificate}.  The local quotient and Schur conditions remain valid on any bounded temperature-parameter interval, with constants that may depend on its upper endpoint. The geometric conditions are independent of temperature.

\subsection{The Label-Subspace Condition: Classical Label Dynamics}
\label{sec:common-population-certificate}

In this section, we first establish a uniform gap for the classical irrep-label
chain in \cref{sec:auxiliary-irrep-label-chain}.  Then we apply
\cref{thm:finite-block-davies-label-comparison} in
\cref{sec:physical-davies-label-comparison} to lift it to the label-subspace
compression of the Davies form in the full observable KMS space.

\subsubsection{The Auxiliary Irrep-Label Chain}
\label{sec:auxiliary-irrep-label-chain}

We compare the Davies operator compressed to the label interface with
the positive generator of a classical chain on the irrep labels:
\[
  K
  \quad\xrightarrow{\;\Pi(\,\cdot\,)\Pi\;}\quad
  A:=\Pi K\Pi,
  \qquad
  A\quad\xleftrightarrow{\;\text{form comparison}\;}\quad
  \mathsf K_{\mathrm{lab}}.
\]
Here $K$ is the positive Davies operator on the mean-zero observable space,
$A=\Pi K\Pi$ is its compression to the label interface, and
$\mathsf K_{\mathrm{lab}}$ is the positive generator of a classical chain
that updates the irrep labels.

Recall from \cref{sec:formal-model-split-path-cover} that
$\Pi L^2_0(\rho_\beta)$ is identified with
$L^2_0(\mu_{\mathrm{lab}})$.  Here $\mu_{\mathrm{lab}}$ is the classical
distribution of the diagonal irrep-label blocks,
$$\mu_{\mathrm{lab}}(\boldsymbol\lambda)
=\operatorname{Tr}(\rho_\beta P_{\boldsymbol\lambda})$$.  Its explicit product
formula is derived below.  Whenever $\mathsf K_{\mathrm{lab}}$ is compared
with $A$, its quadratic form is transported through this identification to
$\operatorname{ran}\Pi$.
For each edge $e$, let
$\mathbb E_e^{\mathrm{lab}}$ be conditional expectation that resamples
$\lambda_e$ from this distribution while holding all other labels fixed.  The
unit-rate single-edge heat-bath generator used in this subsection is
\begin{equation}
  \label{eq:label-heat-bath-generator}
  \mathsf K_{\mathrm{lab}}
  :=\sum_{e\in E}\bigl(I-\mathbb E_e^{\mathrm{lab}}\bigr)
  \quad\text{on }L^2(\mu_{\mathrm{lab}}).
\end{equation}

\begin{proposition}[Uniform gap for the irrep-label chain]
\label{lem:uniform-label-influence}
The unit-rate irrep-label heat-bath generator $\mathsf K_{\mathrm{lab}}$ satisfies
\begin{equation}
  \label{eq:common-label-gap}
  \gap(\mathsf K_{\mathrm{lab}})
  \ge\delta_*:=1-\frac{4t_*}{t_*+2}
  =\frac{2-3t_*}{t_*+2}>0.
\end{equation}
Equivalently, $\mathsf K_{\mathrm{lab}}\succeq\delta_* I$
on $L^2_0(\mu_{\mathrm{lab}})$.
\end{proposition}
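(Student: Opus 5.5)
The plan is to apply the Dobrushin criterion recalled in \cref{sec:prelim-dobrushin} to $\mu_{\mathrm{lab}}$. Concretely, I will show that every total-variation influence satisfies $c_{ef}\le t/(t+2)$, that each edge has exactly four influencing edges, and hence $\alpha\le 4t/(t+2)$. This gives $\gap(\mathsf K_{\mathrm{lab}})\ge 1-4t/(t+2)$. Since $t\mapsto 4t/(t+2)$ is increasing, this is at least $\delta_*$ for $0\le t\le t_*$, and $\delta_*>0$ exactly when $t_*<2/3$.

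\emph{Step 1: product formula for $\mu_{\mathrm{lab}}$.} Because the $A_v$ are commuting projections, $e^{\beta J\sum_vA_v}=\prod_v(I+tA_v)$. Each $A_v$ preserves the label blocks. Under the regrouping \eqref{eq:edge-to-vertex-regrouping}, on $\cH_{\boldsymbol\lambda}$ the projector $A_v$ acts only on the factor $\mathcal V_v(\boldsymbol\lambda)$, where by \cref{lem:group-averaging-invariants} it is the projector onto $\operatorname{Inv}$ of the three incident factors. Hence
\[
 \operatorname{Tr}\Bigl(P_{\boldsymbol\lambda}\prod_v(I+tA_v)\Bigr)=\prod_v\bigl(D_v+tN_v\bigr),
 \qquad D_v=\prod_{e\ni v}d_{\lambda_e},\quad N_v=\dim\operatorname{Inv}_v(\boldsymbol\lambda).
\]
Each edge has two endpoints, so $\prod_vD_v=\prod_ed_{\lambda_e}^2$. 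Therefore
\[
 \mu_{\mathrm{lab}}(\boldsymbol\lambda)\propto\prod_e d_{\lambda_e}^2\prod_v\bigl(1+t\,\theta_v(\boldsymbol\lambda)\bigr),
 \qquad \theta_v:=N_v/D_v\in[0,1].
\]
The bound $\theta_v\le1$ holds because the invariant space is a subspace of $\mathcal V_v(\boldsymbol\lambda)$. Consequently the conditional law of $\lambda_e$, for $e:u\to v$, is proportional to $d_{\lambda_e}^2\,w_u(\lambda_e)\,w_v(\lambda_e)$, where each weight $w_x=1+t\theta_x$ lies in $[1,1+t]$. The weight $w_u$ depends on the boundary only through the two other edges at $u$, and $w_v$ only through the two other edges at $v$. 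Thus $c_{ef}=0$ unless $f$ is one of these four edges; simplicity of the cubic graph makes them four distinct edges.

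\emph{Step 2: the influence bound.} Changing one such $f$, say at $u$, alters only $w_u$, which is replaced by another function $w'_u$ with values in $[1,1+t]$. The factor $\nu(\lambda)\propto d_\lambda^2w_v(\lambda)$ is common to both conditional laws. So it suffices to prove the following elementary lemma. For any probability $\nu$ and any functions $w,w'$ with values in $[1,1+t]$, the normalized measures $p\propto\nu w$ and $q\propto\nu w'$ satisfy $\|p-q\|_{\mathrm{TV}}\le t/(t+2)$.

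I expect this lemma to be the main obstacle. I would prove it by writing $\|p-q\|_{\mathrm{TV}}=\sup_S\,(p(S)-q(S))$. For fixed $s=\nu(S)$, the quantity $p(S)$ is maximized by taking $w=1+t$ on $S$ and $w=1$ off $S$, and $q(S)$ is minimized by the reverse choice for $w'$. This is a monotonicity check on the ratio $\int_S\nu w/\int\nu w$. With $x=s/(1-s)$, these choices give the bound
\[
 h(x)=\frac{(1+t)x}{(1+t)x+1}-\frac{x}{x+1+t}.
\]
One checks that $h(x)=h(1/x)$, and a derivative computation shows that $h$ has its maximum at $x=1$, where $h(1)=t/(t+2)$.

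\emph{Step 3: conclusion.} Summing over the four influencing edges gives $\alpha\le4t/(t+2)<1$ for $t\le t_*<2/3$. The Dobrushin bound $\gap(\mathsf K_{\mathrm{hb}})\ge1-\alpha$ then yields \eqref{eq:common-label-gap}. Strict positivity of $\mu_{\mathrm{lab}}$ is immediate from the product formula, so the kernel of $\mathsf K_{\mathrm{lab}}$ consists of constants and the operator form $\mathsf K_{\mathrm{lab}}\succeq\delta_*I$ on $L^2_0(\mu_{\mathrm{lab}})$ follows.
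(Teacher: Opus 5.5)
Your proposal is correct and follows the paper's proof: the same vertex-weight product formula for $\mu_{\mathrm{lab}}$ (your $1+t\theta_v$ is the paper's normalized weight $r_t$), the same four-neighbor locality of the single-edge conditionals, the same influence bound $c_{ef}\le t/(t+2)$, and the one-site Dobrushin criterion. The one difference is in the elementary total-variation estimate. You bound $p(S)-q(S)$ by extremal two-valued weights and the symmetric function $h(x)=h(1/x)$. The paper instead writes the new conditional as the tilt $h_{\lambda_e}q^{(\lambda_a)}_{\lambda_e}/\overline h$ and uses a secant bound maximized at $\overline h=\sqrt{mM}$; both arguments give $t/(t+2)$.
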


\begin{proof}
We first compute $\mu_{\mathrm{lab}}$ and then verify the one-site
Dobrushin criterion.  

\paragraph{Vertex weights and the label distribution.}
Fix a global label configuration
$\boldsymbol\lambda=(\lambda_e)_{e\in E(\cG)}$.
At a degree-three vertex $v$, let $e_1,e_2,e_3$ be its three incident
edges and write $\lambda_i:=\lambda_{e_i}\in\widehat\Gamma$.
Thus $\lambda_1,\lambda_2,\lambda_3$ are the three components of
$\boldsymbol\lambda$ on the edges meeting at $v$.
We replace $V_{\lambda_i}$ by its dual when required by the edge
orientation.  By \eqref{eq:vertex-group-average-projection}, $A_v$ is a
projection, and hence
\[
  e^{\beta J A_v}
  =(I-A_v)+e^{\beta J}A_v
  =I+tA_v,
  \qquad t=e^{\beta J}-1.
\]
By \eqref{eq:vertex-projectors-commute}, the vertex projectors commute, so
\[
  e^{-\beta H_{\cG}}
  =e^{\beta J\sum_{v\in V(\cG)}A_v}
  =\prod_{v\in V(\cG)}e^{\beta J A_v}
  =\prod_{v\in V(\cG)}(I+tA_v).
\]

We first trace one vertex factor.  By the vertex-wise regrouping
\eqref{eq:edge-to-vertex-regrouping}, each edge incident to $v$ contributes
exactly one of its two Peter--Weyl factors to $v$.  Since $v$ is degree-three, its
factors form
\[
  \mathcal V_v(\lambda_1,\lambda_2,\lambda_3)
  :=V_{\lambda_1}^{(v)}\otimes
    V_{\lambda_2}^{(v)}\otimes
    V_{\lambda_3}^{(v)},
\]
where each factor is $V_{\lambda_i}$ or its dual according to the edge
orientation.  Dualization does not change dimension, so
$\dim\mathcal V_v=\prod_{i=1}^3d_{\lambda_i}$.
Taking the trace gives
\begin{equation}
  \label{eq:vertex-label-weight}
  \operatorname{Tr}_{\mathcal V_v}(I+tA_v)
  =\prod_{i=1}^3d_{\lambda_i}
   +t\,\operatorname{Tr}_{\mathcal V_v}A_v
  =:z_t(\lambda_1,\lambda_2,\lambda_3).
\end{equation}
For a fixed global label configuration $\boldsymbol\lambda$,
$z_t(\lambda_1,\lambda_2,\lambda_3)$ is the unnormalized vertex weight
at $v$, depending only on its three incident labels.

We now evaluate the full label weight
$\mu_{\mathrm{lab}}(\boldsymbol\lambda)$.
The projector $P_{\boldsymbol\lambda}$ selects the subspace with this
fixed label configuration.  By \eqref{eq:edge-to-vertex-regrouping},
this subspace is a tensor product of the vertex spaces $\mathcal V_v$:
the two Peter--Weyl factors on each edge belong to its two endpoints.
On this subspace, each $I+tA_v$ acts only on the factor $\mathcal V_v$,
so the trace of their product factorizes over vertices.  Thus
\begin{equation}
  \label{eq:global-label-distribution}
  \begin{aligned}
  \mu_{\mathrm{lab}}(\boldsymbol{\lambda})
  &=\operatorname{Tr}
    (\rho_\beta P_{\boldsymbol\lambda})
  =\frac1{Z_\beta}\operatorname{Tr}\!\left[
    P_{\boldsymbol\lambda}\prod_{v\in V(\cG)}(I+tA_v)
    \right]\\
  &=\frac1{Z_\beta}\prod_{v\in V(\cG)}
    \operatorname{Tr}_{\mathcal V_v}(I+tA_v)
  =\frac1{Z_\beta}
   \prod_{v\in V(\cG)}
   z_t(\lambda_{v,1},\lambda_{v,2},\lambda_{v,3}).
  \end{aligned}
\end{equation}
Here the partition function is
\[
  Z_\beta:=\operatorname{Tr}(e^{-\beta H_{\cG}})
  =\sum_{\boldsymbol\lambda\in\widehat\Gamma^{E(\cG)}}
    \prod_{v\in V(\cG)}
    z_t(\lambda_{v,1},\lambda_{v,2},\lambda_{v,3}),
\]
so that $\mu_{\mathrm{lab}}$ sums to one over all label configurations.

\paragraph{Single-edge conditional distributions.}
Fix an oriented edge $e:u\to v$.  Besides $e$, the degree-three vertex $u$ is
incident to edges $a,b$, and $v$ is incident to edges $c,d$.  Their irrep
labels are $\lambda_a,\lambda_b,\lambda_c,\lambda_d$, respectively.  See
\Cref{fig:edge-label-dobrushin-neighborhood}.

\begin{figure}[H]
  \centering
  \begin{tikzpicture}[
    endpoint/.style={circle, draw=figureGeometry, fill=figureGeometry!7,
      minimum size=5.5mm, inner sep=0pt},
    outer vertex/.style={circle, draw=black!45, fill=white,
      minimum size=3.8mm, inner sep=0pt},
    updated edge/.style={-{Latex[length=2mm]}, figureGeometry,
      line width=1.2pt},
    boundary edge/.style={black!55, line width=0.65pt},
    every node/.style={font=\small}
  ]
    \node[endpoint, label={[font=\small]below:$u$}] (u) at (-1.4,0) {};
    \node[endpoint, label={[font=\small]below:$v$}] (v) at (1.4,0) {};
    \draw[updated edge] (u) --
      node[above, text=figureGeometry] {$e:\ \lambda_e$} (v);

    \node[outer vertex] (a) at (-2.75,0.9) {};
    \node[outer vertex] (b) at (-2.75,-0.9) {};
    \draw[boundary edge] (u) --
      node[pos=0.55, above=3pt, text=black] {$a$} (a);
    \draw[boundary edge] (u) --
      node[pos=0.55, below=3pt, text=black] {$b$} (b);

    \node[outer vertex] (c) at (2.75,0.9) {};
    \node[outer vertex] (d) at (2.75,-0.9) {};
    \draw[boundary edge] (v) --
      node[pos=0.55, above=3pt, text=black] {$c$} (c);
    \draw[boundary edge] (v) --
      node[pos=0.55, below=3pt, text=black] {$d$} (d);
  \end{tikzpicture}
  \caption{The neighborhood of the updated edge $e$.  The conditional
  distribution of $\lambda_e$ depends only on the four adjacent labels
  $\lambda_a,\lambda_b,\lambda_c,\lambda_d$.}
  \label{fig:edge-label-dobrushin-neighborhood}
\end{figure}

The conditioning configuration $\boldsymbol{\lambda}_{\setminus e}$ fixes
every edge label except $\lambda_e$.  In particular,
$(\boldsymbol{\lambda}_{\setminus e})_a=\lambda_a$, and similarly for
$b,c,d$; only these four components occur in the two vertex factors at $u$
and $v$ that depend on the updated label.  The Peter--Weyl factors owned by
the two endpoints are $V_{\lambda_e}^{(u)}=V_{\lambda_e}$ and
$V_{\lambda_e}^{(v)}=V_{\lambda_e}^*$ by the convention of
\Cref{fig:peter-weyl-endpoint-assignment}.
In the conditional distribution and the estimates below, we place the
label on the updated edge in the first argument of each vertex weight:
\[
  z_t(\bullet,\lambda_a,\lambda_b),
  \qquad
  z_t((\bullet)^*,\lambda_c,\lambda_d).
\]
Here $\bullet$ denotes the variable label on $e$, with its dual used at
the opposite endpoint; the other two arguments are fixed by the
conditioning configuration.  In the denominator below, $\nu$ runs over
all possible labels on $e$.
Consequently the normalized
one-edge conditional distribution used by the heat-bath update is
\begin{equation}
  \label{eq:edge-label-conditional-distribution}
  \mu_{\mathrm{lab}}
  \bigl(\lambda_e\mid\boldsymbol{\lambda}_{\setminus e}\bigr)
  =
  \frac{z_t(\lambda_e,\lambda_a,\lambda_b)
        z_t(\lambda_e^*,\lambda_c,\lambda_d)}
       {\displaystyle\sum_{\nu\in\widehat\Gamma}
        z_t(\nu,\lambda_a,\lambda_b)
        z_t(\nu^*,\lambda_c,\lambda_d)}.
\end{equation}
All vertex factors away from $u$ and $v$ are independent of $\lambda_e$ and
cancel between the numerator and denominator.  Thus the product in the
numerator of \eqref{eq:edge-label-conditional-distribution} follows directly
from the vertex factorization in \eqref{eq:global-label-distribution}.  It is
not an additional assumption on the label chain.

For clarity, in the present label model the Dobrushin influence coefficient
used below is
\[
\begin{aligned}
  c_{ef}
  &:=
  \sup_{\boldsymbol x\sim_f\boldsymbol y}
  \left\|
    \mu_{\mathrm{lab}}
      \bigl(\,\cdot\mid
             \boldsymbol{\lambda}_{\setminus e}=\boldsymbol x\bigr)
    -
    \mu_{\mathrm{lab}}
      \bigl(\,\cdot\mid
             \boldsymbol{\lambda}_{\setminus e}=\boldsymbol y\bigr)
  \right\|_{\mathrm{TV}}\\
  &=\frac12\sup_{\boldsymbol x\sim_f\boldsymbol y}
    \sum_{\lambda_e\in\widehat\Gamma}
    \left|
      \mu_{\mathrm{lab}}
        (\lambda_e\mid\boldsymbol\lambda_{\setminus e}=\boldsymbol x)
      -\mu_{\mathrm{lab}}
        (\lambda_e\mid\boldsymbol\lambda_{\setminus e}=\boldsymbol y)
    \right|,
\end{aligned}
\]
where $\boldsymbol x$ and $\boldsymbol y$ are assignments of all edge labels other than $\lambda_e$
and agree except at edge $f$.  Although the conditioning formally includes
all of $E\setminus\{e\}$,
\eqref{eq:edge-label-conditional-distribution} shows that the conditional
distribution depends only on the four neighboring labels
$\lambda_a,\lambda_b,\lambda_c,\lambda_d$.  Thus $c_{ef}=0$ if $f$ does not
share an endpoint with $e$.  We treat the representative neighboring edge
$f=a$; the other three cases are identical.

Fix $\lambda_b,\lambda_c,\lambda_d$.  Since the conditional distribution
depends only on the four neighboring labels, we abbreviate it by
\begin{equation}
  \label{eq:neighbor-conditional-q}
  q_{\lambda_e}^{(\lambda_a)}
  :=\mu_{\mathrm{lab}}
    (\lambda_e\mid\lambda_a,\lambda_b,\lambda_c,\lambda_d)
  =\frac{z_t(\lambda_e,\lambda_a,\lambda_b)
          z_t(\lambda_e^*,\lambda_c,\lambda_d)}
        {\displaystyle\sum_{\nu\in\widehat\Gamma}
          z_t(\nu,\lambda_a,\lambda_b)
          z_t(\nu^*,\lambda_c,\lambda_d)},
  \qquad \lambda_e\in\widehat\Gamma.
\end{equation}
The superscript records the neighboring label that we vary, while the
subscript is the label on the updated edge $e$.
For $f=a$, the two conditioning configurations differ only in
$\lambda_a$, with $\lambda_b,\lambda_c,\lambda_d$ unchanged.
Since no other labels affect the conditional distribution, the supremum
above reduces to a maximum over these neighboring labels:
\begin{equation}
  \label{eq:neighbor-label-influence-tv}
  \begin{aligned}
  c_{ea}
  &=\frac12
   \max_{\lambda_a,\lambda_a',\lambda_b,\lambda_c,\lambda_d}
   \sum_{\lambda_e\in\widehat\Gamma}
   \Bigl|
     \mu_{\mathrm{lab}}(\lambda_e\mid\lambda_a',\lambda_b,\lambda_c,\lambda_d)
     -\mu_{\mathrm{lab}}(\lambda_e\mid\lambda_a,\lambda_b,\lambda_c,\lambda_d)
   \Bigr|\\
  &=\frac12
   \max_{\lambda_a,\lambda_a',\lambda_b,\lambda_c,\lambda_d}
   \sum_{\lambda_e\in\widehat\Gamma}
   \left|q_{\lambda_e}^{(\lambda_a')}-q_{\lambda_e}^{(\lambda_a)}\right|.
  \end{aligned}
\end{equation}

\paragraph{Comparing two neighboring configurations.}
To bound the right-hand side, for three irrep labels define
\begin{equation}
  \label{eq:normalized-vertex-weight}
  r_t(\lambda_1,\lambda_2,\lambda_3)
  :=\frac{z_t(\lambda_1,\lambda_2,\lambda_3)}
          {d_{\lambda_1}d_{\lambda_2}d_{\lambda_3}}.
\end{equation}
Orientation-dependent duals do not affect the denominator because
$d_{\lambda^*}=d_\lambda$.
Since $A_v$ is an orthogonal projection on $\mathcal V_v$,
$0\le\operatorname{Tr}_{\mathcal V_v}A_v\le\dim\mathcal V_v$.
Substituting this into \eqref{eq:vertex-label-weight} gives
\begin{equation}
  \label{eq:vertex-weight-ratio}
  \qquad 1\le r_t\le1+t.
\end{equation}
For two values $\lambda_a,\lambda_a'$, let $Z_{\lambda_a}$ denote the
normalizing denominator in \eqref{eq:edge-label-conditional-distribution}
with this value of $\lambda_a$, written explicitly as
\begin{equation}
  \label{eq:neighbor-label-normalizer}
  Z_{\lambda_a}
  :=\sum_{\nu\in\widehat\Gamma}
    z_t(\nu,\lambda_a,\lambda_b)
    z_t(\nu^*,\lambda_c,\lambda_d).
\end{equation}
Set
\begin{equation}
  \label{eq:neighbor-weight-change}
  h_{\lambda_e}
  :=\frac{r_t(\lambda_e,\lambda_a',\lambda_b)}
          {r_t(\lambda_e,\lambda_a,\lambda_b)},
  \qquad
  \overline h
  :=\sum_{\lambda_e} q_{\lambda_e}^{(\lambda_a)}h_{\lambda_e}.
\end{equation}
We use $h_{\lambda_e}$ to express the change in the vertex weight and
its average $\overline h$ to express the change in normalization.
Changing $\lambda_a$ to $\lambda_a'$ affects only the vertex weight
at $u$; the factor $z_t(\lambda_e^*,\lambda_c,\lambda_d)$ at $v$
is unchanged.  Using the definitions of $r_t$ and $h_{\lambda_e}$ in
\eqref{eq:normalized-vertex-weight} and \eqref{eq:neighbor-weight-change},
we obtain
\begin{align*}
  z_t(\lambda_e,\lambda_a',\lambda_b)
  &=d_{\lambda_e}d_{\lambda_a'}d_{\lambda_b}
    r_t(\lambda_e,\lambda_a',\lambda_b)\\
  &=d_{\lambda_e}d_{\lambda_a'}d_{\lambda_b}
    h_{\lambda_e}r_t(\lambda_e,\lambda_a,\lambda_b)\\
  &=\frac{d_{\lambda_a'}}{d_{\lambda_a}}\,
    h_{\lambda_e} z_t(\lambda_e,\lambda_a,\lambda_b).
\end{align*}
Substituting this into \eqref{eq:neighbor-label-normalizer} gives
\begin{align*}
  Z_{\lambda_a'}
  &=\frac{d_{\lambda_a'}}{d_{\lambda_a}}\,
    \sum_{\nu\in\widehat\Gamma}h_\nu
    z_t(\nu,\lambda_a,\lambda_b)z_t(\nu^*,\lambda_c,\lambda_d)\\
  &=\frac{d_{\lambda_a'}}{d_{\lambda_a}}\,
    Z_{\lambda_a}\sum_\nu q_\nu^{(\lambda_a)}h_\nu
  =\frac{d_{\lambda_a'}}{d_{\lambda_a}}\,
     Z_{\lambda_a}\overline h.
\end{align*}
The factor $d_{\lambda_a'}/d_{\lambda_a}$ appears in both the
unnormalized weight and the normalizing constant, so it cancels
when we form the conditional probability.
Thus the new distribution is obtained by multiplying the old one by
$h_{\lambda_e}$ and dividing by its average $\overline h$:
\[
  q_{\lambda_e}^{(\lambda_a')}
  =\frac{(d_{\lambda_a'}/d_{\lambda_a})h_{\lambda_e}}
         {(d_{\lambda_a'}/d_{\lambda_a})\overline h}\,
    q_{\lambda_e}^{(\lambda_a)}
  =\frac{h_{\lambda_e} q_{\lambda_e}^{(\lambda_a)}}{\overline h}.
\]
\paragraph{Bounding the influence and the gap.}
For each fixed choice in the maximum in
\eqref{eq:neighbor-label-influence-tv},
\begin{align*}
  \frac12\sum_{\lambda_e}
    \left|q_{\lambda_e}^{(\lambda_a')}-q_{\lambda_e}^{(\lambda_a)}\right|
  &=\frac{1}{2\overline h}
    \sum_{\lambda_e} q_{\lambda_e}^{(\lambda_a)}|h_{\lambda_e}-\overline h|.
\end{align*}
Set $m:=\min_{\lambda_e} h_{\lambda_e}$ and $M:=\max_{\lambda_e} h_{\lambda_e}$.  If $m=M$, the last
display vanishes.  Otherwise, since $h_{\lambda_e}\in[m,M]$ and has mean
$\overline h$ under $q^{(\lambda_a)}$, convexity gives the pointwise secant
bound
\[
  |s-\overline h|
  \le
  \frac{M-s}{M-m}(\overline h-m)
  +\frac{s-m}{M-m}(M-\overline h),
  \qquad s\in[m,M].
\]
Applying this with $s=h_{\lambda_e}$, averaging under $q^{(\lambda_a)}$, and using
$\sum_{\lambda_e} q_{\lambda_e}^{(\lambda_a)}h_{\lambda_e}=\overline h$ gives
\begin{align*}
  \frac{1}{2\overline h}
    \sum_{\lambda_e} q_{\lambda_e}^{(\lambda_a)}|h_{\lambda_e}-\overline h|
  \le
    \frac{(M-\overline h)(\overline h-m)}
         {\overline h(M-m)}
         \le
    \frac{\sqrt M-\sqrt m}{\sqrt M+\sqrt m}.
\end{align*}
The second inequality follows by maximizing the preceding expression over
$\overline h\in[m,M]$, with the maximum at
$\overline h=\sqrt{mM}$.
By \eqref{eq:vertex-weight-ratio}, both vertex-weight ratios defining
$h_{\lambda_e}$ lie in $[1,1+t]$, so
\[
  \frac1{1+t}
  \le h_{\lambda_e}
  =\frac{r_t(\lambda_e,\lambda_a',\lambda_b)}
         {r_t(\lambda_e,\lambda_a,\lambda_b)}
  \le1+t.
\]
Consequently,
\[
  m\ge\frac1{1+t},\qquad M\le1+t,
  \qquad\frac{M}{m}\le(1+t)^2.
\]
Since $(s-1)/(s+1)$ is increasing for $s\ge1$, the preceding bound becomes
\[
  \frac{\sqrt M-\sqrt m}{\sqrt M+\sqrt m}
  =\frac{\sqrt{M/m}-1}{\sqrt{M/m}+1}
  \le\frac{(1+t)-1}{(1+t)+1}
  =\frac{t}{t+2}.
\]
This estimate is independent of the neighboring labels, so taking the
maximum in \eqref{eq:neighbor-label-influence-tv} gives
$c_{ea}\le t/(t+2)$.  The same argument applies to the other three
neighboring edges, yielding
\begin{equation}
  \label{eq:uniform-label-influence}
  c_{ef}
  \le
  \begin{cases}
    \dfrac{t}{t+2}, & f\in\{a,b,c,d\},\\[5pt]
    0, & f\notin\{a,b,c,d\}.
  \end{cases}
\end{equation}
Since a cubic edge has exactly four neighboring edges,
\eqref{eq:uniform-label-influence} now gives, uniformly over $e$,
\begin{equation}
  \label{eq:label-dobrushin-total-influence}
  \sum_{f\ne e}c_{ef}
  \le \frac{4t}{t+2}
  \le \frac{4t_*}{t_*+2}
  <1,
\end{equation}
where the last inequality is precisely $t_*<2/3$.  Applying the unit-rate
heat-bath comparison from \cref{sec:prelim-dobrushin} therefore gives
\[
  \gap(\mathsf K_{\mathrm{lab}})
  \ge 1-\max_e\sum_{f\ne e}c_{ef}
  \ge 1-\frac{4t_*}{t_*+2}
  =\delta_*.
\]
On $L^2_0(\mu_{\mathrm{lab}})$ this scalar gap statement is the operator
inequality $\mathsf K_{\mathrm{lab}}\succeq\delta_*I$ in
\eqref{eq:common-label-gap}.
\end{proof}

\subsubsection{Comparison with the Physical Davies Block}
\label{sec:physical-davies-label-comparison}

We have established a uniform gap for the classical label chain in
\cref{lem:uniform-label-influence}.  In this subsection, we compare its
positive generator $\mathsf K_{\mathrm{lab}}$ with the Davies operator
compressed to the label interface, $A=\Pi K\Pi$, to obtain the interface
gap bound.

The next proposition verifies the conditions of the comparison principle
in \cref{thm:finite-block-davies-label-comparison}, without assuming
$[K,\Pi]=0$.

\begin{proposition}[Finite-group Davies-to-label comparison]
\label{prop:finite-group-davies-label-comparison}
For every finite $t_*\ge0$ and $0\le t=e^{\beta J}-1\le t_*$,
\begin{equation}
  \label{eq:physical-color-comparison}
  A=\Pi K\Pi\succeq c_{\Gamma,\mathrm{lab},*}\mathsf K_{\mathrm{lab}},
  \qquad
  c_{\Gamma,\mathrm{lab},*}
  :=\frac1{3q^2(1+t_*)^2(1+(1+t_*)^2)}.
\end{equation}
\end{proposition}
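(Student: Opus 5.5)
The plan is to prove \eqref{eq:physical-color-comparison} edge by edge. Since $A=\sum_eA_e$ with $A_e=\Pi K_e\Pi$, and $\mathsf K_{\mathrm{lab}}=\sum_e(I-\mathbb E_e^{\mathrm{lab}})$, it suffices to show for every edge $e$ and every label observable $f=\sum_{\boldsymbol\lambda}f(\boldsymbol\lambda)P_{\boldsymbol\lambda}$ that
\[
  \langle f,K_ef\rangle_{\rho_\beta}
  \ge c_{\Gamma,\mathrm{lab},*}\,
  \langle f,(I-\mathbb E_e^{\mathrm{lab}})f\rangle_{\mu_{\mathrm{lab}}}.
\]
Positivity of the remaining edge terms then gives the claim. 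Mean-zero restrictions are harmless, because both forms annihilate constants.

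\textbf{Fibre decomposition.} For a configuration $\boldsymbol\lambda_{\setminus e}$ of the labels off $e$, put $P_{\boldsymbol\lambda_{\setminus e}}:=\bigotimes_{f\ne e}P_{\lambda_f}$. These projectors commute with $H_{\cG}$ (every $A_v$ preserves joint irrep blocks), with $\rho_\beta$, and with every coupling in $\mathscr S_e$, since those act only on edge $e$. Hence every Bohr component $S_\omega$ of $S\in\mathscr S_e$ commutes with $P_{\boldsymbol\lambda_{\setminus e}}$. Both the KMS Dirichlet form of $K_e$ on block-scalar observables and the classical form $\langle f,(I-\mathbb E_e^{\mathrm{lab}})f\rangle$ then split as sums over fibres. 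In each fibre:
\begin{itemize}
\item the Hilbert space is $P_{\boldsymbol\lambda_{\setminus e}}\cH_{\cG}$;
\item the state is $\sigma=\rho_\beta P_{\boldsymbol\lambda_{\setminus e}}$, normalized;
\item the blocks $P_i$ are indexed by $i=\lambda_e\in\widehat\Gamma$;
\item $\pi$ is the conditional distribution \eqref{eq:edge-label-conditional-distribution}, times the fibre weight.
\end{itemize}
Because the fibre weights factor out identically on both sides, it remains to apply \cref{thm:finite-block-davies-label-comparison} in each fibre, with constants uniform in the fibre. I would select $\mathscr S^{\mathrm{sel}}=\mathscr S_e^{\mathsf Q}$, with weights $w=1/(3q)$. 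The translation families are useless here, because $L_g$ and $R_g$ commute with every $P_\lambda$ and therefore never move labels.

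\textbf{Verifying (C1)--(C3).}
\begin{itemize}
\item (C2): by \eqref{eq:edge-coupling-commutator-locality}, Bohr frequencies of $Q_h^{(e)}$ are differences of eigenvalues of $-J(A_u+A_v)$, so they lie in $[-2J,2J]$. Hence $\gamma_\beta(\omega)\ge(1+e^{2\beta J})^{-1}\ge(1+(1+t_*)^2)^{-1}$.
\item (C1): within the fibre, $\rho_\beta\propto\prod_v(I+tA_v)$. After tracing out the vertex factors not touching $u,v$, which are common to all blocks, the relevant density on block $i$ has spectrum in $[1,(1+t)^2]$ times a common constant. Comparing with the average $\pi_i/r_i$ should give $b=(1+t_*)^{-2}$. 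One subtlety: the vertices other than $u,v$ contribute a fixed positive operator independent of $\lambda_e$. I would handle this by noting that (C1) and (C3) only need to hold after conditioning, or by tracing those factors out, since $Q_h^{(e)}$ acts trivially there.
\item (C3): since $\langle h|P_\mu|h\rangle=d_\mu^2/q$ on one edge,
\[
  \sum_h\lVert P_\mu Q_hP_\lambda\rVert_{\mathrm{HS}}^2=\frac{d_\mu^2d_\lambda^2}{q}
\]
on $\C[\Gamma]$. Tensored with the fibre multiplicity, weighted by $1/(3q)$, and compared with $r_\lambda\pi_\mu$ using $\pi_\mu\le(1+t)^2d_\mu^2\cdot(\text{common factor})/\sum_\nu d_\nu^2$, this should give $\kappa\ge 1/(3q^2)$ up to the remaining $(1+t)$ factors.
\end{itemize}
Collecting $b\gamma_{\min}\kappa$ should reproduce $c_{\Gamma,\mathrm{lab},*}$.

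\textbf{Main obstacle.} I expect the main obstacle to be the bookkeeping in (C1) and (C3) inside a fibre. The fibre is not a single edge but carries the other edges' internal degrees of freedom. One must check that rank and weight factors from the rest of the graph cancel between $r_i\pi_j$ and the Hilbert--Schmidt overlaps, so that only the two vertex factors at $u$ and $v$ contribute $(1+t)^{\pm1}$ each. I would first try to reduce to the $u,v$ star by the vertex regrouping \eqref{eq:edge-to-vertex-regrouping}. In that regrouping, $\sigma$ restricted to the fibre is a tensor product of a $\lambda_e$-independent part and $(I+tA_u)(I+tA_v)$, and $Q_h^{(e)}$ acts only on the factors at $u$ and $v$.
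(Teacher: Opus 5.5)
Your plan is the paper's proof. You fix the exterior labels and reduce, as you anticipate, to the endpoint system with state $\propto(I+tA_u^{\mathrm{end}})(I+tA_v^{\mathrm{end}})$; this reduction is necessary, since on the full fibre the exterior Gibbs factor breaks the uniform eigenvalue-ratio bound in (C1). You then apply \cref{thm:finite-block-davies-label-comparison} there with the $Q_h^{(e)}$ couplings and weights $1/(3q)$, and average back using $\operatorname{Tr}\rho^{\mathrm{ext}}=1$.

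The one adjustment is in (C3). The paper uses the crude bound $\pi_\mu\le1\le d_\mu^2$, which gives exactly $\kappa=1/(3q^2)$ and hence the stated constant. Your bound $\pi_\mu\le(1+t)^2d_\mu^2/q$ instead gives $\kappa=1/(3q(1+t_*)^2)$, which falls short of $c_{\Gamma,\mathrm{lab},*}$ whenever $(1+t_*)^2>q$.
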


Combining
\cref{lem:uniform-label-influence,prop:finite-group-davies-label-comparison}
proves Condition~(i) of \cref{thm:abstract-two-axis-gap-assembly} with
\begin{equation}
  \label{eq:common-population-floor}
  a=a_{\Gamma,*}:=\delta_*c_{\Gamma,\mathrm{lab},*}>0.
\end{equation}
This is the only common certificate that uses the restriction $t_*<2/3$.

\begin{proof}[Proof of \cref{prop:finite-group-davies-label-comparison}]
Recall that \cref{thm:finite-block-davies-label-comparison} compares
Davies dynamics on block-scalar observables with a classical heat-bath
chain on the block labels, with comparison factor $b\gamma_{\min}\kappa$.
We restate each of its conditions (C1)--(C3) below in the local notation
before verifying it.

We compare the two dynamics one edge at a time: the Davies update on $e$,
evaluated on label observables, and the classical heat-bath update of
$\lambda_e$.  We first fix the labels on all other edges and apply the
comparison theorem to this conditional system, with blocks indexed by
the updated-edge irrep.  Averaging over the fixed labels and summing over
$e$ will give the claimed comparison.

Fix an updated edge $e$ and denote its candidate irrep label by $\lambda_e$.
The full bare coupling family is $\mathscr S_e$, while the selected subfamily in
the abstract theorem is
\begin{equation}
  \label{eq:selected-edge-comparison-couplings}
  \mathscr S_e^{\mathrm{sel}}
  :=\mathscr S_e^{\mathsf Q}=\{Q_h^{(e)}:h\in\Gamma\}
  ,
  \qquad
  w_{Q_h^{(e)}}=\frac1{3q}.
\end{equation}
Every selected coupling is self-adjoint, so this subfamily is adjoint closed.

\paragraph{The conditioned local system.}
We first isolate the degrees of freedom at the two endpoints of the
updated edge, keeping the labels on all other edges fixed.  We will
denote their local quantum state by $\rho^{\mathrm{end}}$ and the corresponding
classical distribution of the updated-edge label by $\pi_{\lambda_e}$.
The aim is to compare the Davies update on this quantum system with
the classical heat-bath update that resamples the label according to
$\pi$.  We therefore identify the state and its label blocks before
verifying conditions (C1)--(C3).

For the fixed updated edge $e=(u,v)$, choose
a realization $\boldsymbol{\xi}$ of the exterior label variable
$\boldsymbol{\lambda}_{\setminus e}$:
\[
  \boldsymbol{\xi}
  :=(\lambda_f)_{f\in E(\cG)\setminus\{e\}}
  \in\widehat\Gamma^{E(\cG)\setminus\{e\}}.
\]
Denote its four-label boundary configuration by
\[
  \boldsymbol{\partial}_e(\boldsymbol{\xi})
  :=(\lambda_a,\lambda_b,\lambda_c,\lambda_d),
\]
where $a,b$ are the two non-updated edges at $u$ and $c,d$ those at $v$.
We abbreviate this tuple to $\boldsymbol{\partial}_e$ while
$\boldsymbol{\xi}$ is fixed.  The local state depends on
$\boldsymbol{\xi}$ only through $\boldsymbol{\partial}_e$, whereas a general
label observable may retain its full $\boldsymbol{\xi}$-dependence.

Let $P_{\boldsymbol{\xi}}$ be the projector that fixes these exterior
labels and acts as the identity on the updated edge.  Its range is
\[
  \cH_{\boldsymbol{\xi}}
  :=\operatorname{Ran}P_{\boldsymbol{\xi}}
  \cong\C[\Gamma]^{(e)}
    \otimes\bigotimes_{f\ne e}
      \bigl(V_{\lambda_f}\otimes V_{\lambda_f}^*\bigr).
\]
Only the irrep labels on $f\ne e$ are fixed; their internal degrees of
freedom remain, and the updated edge retains its full regular-representation
factor.  Regrouping the tensor factors by vertex as in
\cref{eq:edge-to-vertex-regrouping} gives
\[
  \cH_{\boldsymbol{\xi}}
  \cong\cH^{\mathrm{end}}_{\boldsymbol{\partial}_e}
    \otimes\cH^{\mathrm{ext}}_{\boldsymbol{\xi}}.
\]
Here $\mathrm{end}$ denotes the factors belonging to the two endpoints
$u,v$, and $\mathrm{ext}$ denotes those belonging to all other vertices.
With duals determined by the edge orientations, the endpoint space is
\[
  \cH^{\mathrm{end}}_{\boldsymbol{\partial}_e}
  :=\C[\Gamma]^{(e)}
    \otimes V_{\lambda_a}^{(u)}
    \otimes V_{\lambda_b}^{(u)}
    \otimes V_{\lambda_c}^{(v)}
    \otimes V_{\lambda_d}^{(v)}.
\]
In particular, each of the four outer edges contributes one factor to
the endpoint space; its opposite-end factor belongs to the exterior
space.  Thus $\mathrm{ext}$ does not mean all edges other than $e$.
\Cref{fig:conditioned-endpoint-factors} illustrates this distinction.

\begin{figure}[ht]
  \centering
  \scalebox{1.1}{
\begin{tikzpicture}[x=1cm,y=1cm,font=\small,
  factor/.style={rectangle,fill=figureGeometry,minimum size=5pt,inner sep=0pt},
  exterior/.style={rectangle,fill=figureComplement,minimum size=5pt,inner sep=0pt}]
  \fill[figureComplement!8] (-4.8,-1.6) rectangle (-2.5,1.6);
  \fill[figureComplement!8] (2.5,-1.6) rectangle (4.8,1.6);
  \filldraw[fill=figureGeometry!7,draw=figureGeometry,dashed]
    (-2.5,-1.6) rectangle (2.5,1.6);
  \node[text=figureGeometry] at (0,1.9) {end: factors at $u,v$};
  \node[text=figureComplement] at (-3.65,1.9) {ext};
  \node[text=figureComplement] at (3.65,1.9) {ext};

  \coordinate (u) at (-1,0);
  \coordinate (v) at (1,0);
  \draw[gray,thin] (-4.25,1.3)--(u)--(-4.25,-1.3);
  \draw[gray,thin] (4.25,1.3)--(v)--(4.25,-1.3);
  \draw[gray,thin] (u)--(v);
  \draw[figureGeometry,line width=1.2pt] (u)--(-.65,0);
  \draw[figureGeometry,line width=1.2pt] (v)--(.65,0);
  \node[above=3pt] at (u) {$u$};
  \node[above=3pt] at (v) {$v$};
  \node[below=2pt] at (0,0) {$e$};
  \node at (0,.85) {$\C[\Gamma]^{(e)}$};
  \node[font=\footnotesize] at (0,-.75) {label on $e$ not fixed};

  \foreach \sx/\sy/\lab/\vert in {-1/1/a/u,-1/-1/b/u,1/1/c/v,1/-1/d/v}{
    \draw[figureGeometry,line width=1.2pt] (\vert)--(\sx*1.8,\sy*.32);
    \node[factor] at (\sx*1.8,\sy*.32) {};
    \node[exterior] at (\sx*3.6,\sy*1.04) {};
    \node at (\sx*1.85,\sy*.85) {$V_{\lambda_\lab}^{(\vert)}$};
    \node[font=\footnotesize,fill=figureComplement!8,inner sep=1pt]
      at (\sx*3.65,\sy*.5) {$\lambda_\lab$ fixed};
    \draw[gray,dotted,thin] (\sx*4.25,\sy*1.3)--(\sx*4.7,\sy*1.48);
  }
  \node[circle,draw=figureGeometry,fill=white,minimum size=5pt,inner sep=0pt] at (u) {};
  \node[circle,draw=figureGeometry,fill=white,minimum size=5pt,inner sep=0pt] at (v) {};
  \node[factor] at (-.65,0) {};
  \node[factor] at (.65,0) {};
  \node at (0,-2.05) {$\cH_{\boldsymbol{\xi}}
    \cong {\color{figureGeometry}\cH^{\mathrm{end}}_{\boldsymbol{\partial}_e}}
    \otimes {\color{figureComplement}\cH^{\mathrm{ext}}_{\boldsymbol{\xi}}}$};
  \node at (0,-2.8) {$\rho_{\beta\mid\boldsymbol{\xi}}
    \xrightarrow{\ \operatorname{Tr}_{\mathrm{ext}}\ }
    {\color{figureGeometry}\rho^{\mathrm{end}}_{\boldsymbol{\partial}_e}}$};
\end{tikzpicture}}
  \caption{Endpoint and exterior factors after fixing the labels outside
  $e$.  Open circles mark $u,v$; thick segments connect each vertex to
  its three representation factors (colored squares).  Each outer edge
  crosses the end/ext boundary.  The central
  pair is kept together as $\C[\Gamma]^{(e)}\cong
  \bigoplus_{\lambda_e}V_{\lambda_e}\otimes V_{\lambda_e}^*$;
  tracing out ext leaves the endpoint state.}
  \label{fig:conditioned-endpoint-factors}
\end{figure}

For the fixed exterior labels, define the endpoint operators
$A_u^{\mathrm{end}},A_v^{\mathrm{end}}$ on
$\cH^{\mathrm{end}}_{\boldsymbol{\partial}_e}$ by
\[
  \left.A_w\right|_{\cH_{\boldsymbol{\xi}}}
  =A_w^{\mathrm{end}}\otimes I_{\mathrm{ext}},
  \qquad w\in\{u,v\}.
\]
The local conditional state is the normalized density
\begin{equation}
  \label{eq:conditional-endpoint-gibbs-density}
  \rho^{\mathrm{end}}_{\boldsymbol{\partial}_e}
  :=
  \frac{
    (I+tA_u^{\mathrm{end}})(I+tA_v^{\mathrm{end}})
  }{
    \operatorname{Tr}_{\cH^{\mathrm{end}}_{\boldsymbol{\partial}_e}}\!\left[
      (I+tA_u^{\mathrm{end}})(I+tA_v^{\mathrm{end}})
    \right]
  }.
\end{equation}

We remark that this local state is obtained from the global Gibbs state
by two distinct operations.  First, conditioning on the exterior labels
and normalizing gives the state on $\cH_{\boldsymbol{\xi}}$
\[
  \rho_{\beta\mid\boldsymbol{\xi}}
  :=
  \frac{
    P_{\boldsymbol{\xi}}
    \rho_\beta
    P_{\boldsymbol{\xi}}
  }{
    \operatorname{Tr}\!\left(
      \rho_\beta
      P_{\boldsymbol{\xi}}
    \right)
  }.
\]
Second, $\operatorname{Tr}_{\mathrm{ext}}$ takes the partial trace over
$\cH^{\mathrm{ext}}_{\boldsymbol{\xi}}$, leaving a state on the endpoint
space.  On the conditioned space, the factors in
$e^{-\beta H_{\cG}}=\prod_w(I+tA_w)$ with $w\ne u,v$ act only on the
exterior space.  Their traces cancel upon normalization, giving
\begin{equation}
  \label{eq:conditioned-endpoint-state-reduction}
  \operatorname{Tr}_{\mathrm{ext}}\!\left(
    \rho_{\beta\mid\boldsymbol{\xi}}
  \right)
  =\rho^{\mathrm{end}}_{\boldsymbol{\partial}_e}.
\end{equation}

Define the updated-edge label projectors on the endpoint space by
\[
  P_{\lambda_e\mid\boldsymbol{\partial}_e}
  :=P_{\lambda_e}^{(e)}\otimes I_{\boldsymbol{\partial}_e},
\]
where $I_{\boldsymbol{\partial}_e}$ is the identity on the four fixed outer factors.
The Peter--Weyl projectors on the updated edge therefore give
\[
  P_{\lambda_e\mid\boldsymbol{\partial}_e}P_{\mu_e\mid\boldsymbol{\partial}_e}
  =\mathbf1_{\lambda_e=\mu_e}P_{\lambda_e\mid\boldsymbol{\partial}_e},
  \qquad
  \sum_{\lambda_e\in\widehat\Gamma}P_{\lambda_e\mid\boldsymbol{\partial}_e}
  =I_{\cH^{\mathrm{end}}_{\boldsymbol{\partial}_e}}.
\]
Thus $\{P_{\lambda_e\mid\boldsymbol{\partial}_e}\}$ decomposes this local space according to the irrep
label on the updated edge while retaining all matrix and fusion-multiplicity
directions inside each label block.  Put
\[
  \rho^{\mathrm{end}}:=\rho^{\mathrm{end}}_{\boldsymbol{\partial}_e},
  \qquad
  \pi_{\lambda_e}
  :=\operatorname{Tr}(\rho^{\mathrm{end}} P_{\lambda_e\mid\boldsymbol{\partial}_e}),
  \qquad
  r_{\lambda_e}
  :=\operatorname{rank}P_{\lambda_e\mid\boldsymbol{\partial}_e}.
\]
Throughout the verification of (C1)--(C3), the boundary labels are fixed
and their dependence is suppressed in $\rho^{\mathrm{end}}$, $\pi_{\lambda_e}$, and
$r_{\lambda_e}$.
Thus $\pi_{\lambda_e}$ is the scalar
conditional label probability in \eqref{eq:edge-label-conditional-distribution},
obtained from $\rho^{\mathrm{end}}$ by measuring only the updated-edge irrep block.

\paragraph{Condition (C1): the blockwise Gibbs-density lower bound.}
The required bound is, for every updated-edge label,
\[
  \rho^{\mathrm{end}}P_{\lambda_e\mid\boldsymbol{\partial}_e}
  \succeq b_*\frac{\pi_{\lambda_e}}{r_{\lambda_e}}
    P_{\lambda_e\mid\boldsymbol{\partial}_e},
\]
with a uniform constant $b_*>0$.
By \cref{lem:vertex-commutativity-locality}, the two endpoint operators
$A_u^{\mathrm{end}}$ and $A_v^{\mathrm{end}}$ are commuting orthogonal projectors.
Hence $(I+tA_u^{\mathrm{end}})(I+tA_v^{\mathrm{end}})$ has eigenvalues
among $1$, $1+t$, and $(1+t)^2$.
The common normalization in
\eqref{eq:conditional-endpoint-gibbs-density} does not change eigenvalue
ratios, so, on the interval under consideration, the ratio of the largest
to the smallest eigenvalue of $\rho^{\mathrm{end}}$ is at most $(1+t_*)^2$.  Moreover,
$P_{\lambda_e\mid\boldsymbol{\partial}_e}$ commutes with $\rho^{\mathrm{end}}$, because the central updated-edge
isotypic projector commutes with the endpoint regular actions.  Let
$s_{\lambda_e,\min}$ and $s_{\lambda_e,\max}$ be the extreme eigenvalues of
$\rho^{\mathrm{end}}$ on $\operatorname{ran}P_{\lambda_e\mid\boldsymbol{\partial}_e}$.  Since
\[
  \frac{\pi_{\lambda_e}}{r_{\lambda_e}}
  =\frac{1}{r_{\lambda_e}}\operatorname{Tr}(\rho^{\mathrm{end}} P_{\lambda_e\mid\boldsymbol{\partial}_e})
\]
is the average eigenvalue in that block,
\[
  \frac{\pi_{\lambda_e}}{r_{\lambda_e}}
  \le s_{\lambda_e,\max}
  \le (1+t_*)^2s_{\lambda_e,\min}.
\]
This proves the required bound with
\begin{equation}
  \label{eq:conditional-label-density-floor}
  \rho^{\mathrm{end}} P_{\lambda_e\mid\boldsymbol{\partial}_e}
  \succeq
  b_*\frac{\pi_{\lambda_e}}{r_{\lambda_e}}P_{\lambda_e\mid\boldsymbol{\partial}_e},
  \qquad
  b_*:=(1+t_*)^{-2}.
\end{equation}

\paragraph{Condition (C2): Davies-rate lower bound.}
For the rates, we need a uniform constant $\gamma_{\min,*}>0$ satisfying
\[
  \gamma_\beta(\omega)\ge\gamma_{\min,*}
\]
for every Bohr component of each selected coupling $Q_h^{(e)}$.
The Bohr decomposition is local, as anticipated by
\cref{lem:vertex-commutativity-locality}.  With respect to the same
endpoint/exterior decomposition of $\cH_{\boldsymbol{\xi}}$,
\[
  \left.H_{\cG}\right|_{\cH_{\boldsymbol{\xi}}}
  =H_{\boldsymbol{\partial}_e}\otimes I_{\mathrm{ext}}
    +I_{\mathrm{end}}\otimes H_{\mathrm{ext}},
\]
where $H_{\boldsymbol{\partial}_e}=-J(A_u^{\mathrm{end}}+A_v^{\mathrm{end}})$.
Likewise, the selected coupling on the updated edge satisfies
\[
  \left.Q_h^{(e)}\right|_{\cH_{\boldsymbol{\xi}}}
  =Q_h^{\mathrm{end}}\otimes I_{\mathrm{ext}},
  \qquad
  Q_h^{\mathrm{end}}
  :=|h\rangle\langle h|_e\otimes I_{\boldsymbol{\partial}_e}.
\]
Since this coupling leaves the exterior energy unchanged,
\begin{equation}
  \label{eq:conditioned-edge-bohr-localization}
  \left.(Q_h^{(e)})_\omega\right|_{\cH_{\boldsymbol{\xi}}}
  =(Q_h^{\mathrm{end}})_\omega\otimes I_{\mathrm{ext}}.
\end{equation}
Thus the restriction of the positive edge update operator to this conditioned block is
exactly the finite positive Davies operator on
$\cH^{\mathrm{end}}_{\boldsymbol{\partial}_e}$.  No exterior energy
enters its Bohr frequencies or logistic rates.

Here $P_E^{\mathrm{end}}$ denotes the spectral projector of the local Hamiltonian
$H_{\boldsymbol{\partial}_e}=-J(A_u^{\mathrm{end}}+A_v^{\mathrm{end}})$
onto its eigenspace with energy $E$.
Write $P_{\lambda_e\mid\boldsymbol{\partial}_e,E}:=P_{\lambda_e\mid\boldsymbol{\partial}_e}P_E^{\mathrm{end}}$ for the joint label/energy projector in
this local system.  Its energy lies in $\{-2J,-J,0\}$, so every local
Bohr frequency has magnitude at most $2J$.
Since $e^{\beta J}=1+t\le1+t_*$, the logistic rates satisfy
\[
  \gamma_\beta(\omega)
  \ge\frac1{1+e^{\beta|\omega|}}
  \ge\frac1{1+e^{2\beta J}}
  =\frac1{1+(1+t)^2}
  \ge\frac1{1+(1+t_*)^2}.
\]
Thus we obtain the uniform lower bound
\begin{equation}
  \label{eq:local-logistic-rate-floor}
  \gamma_\beta(\omega)\ge
  \gamma_{\min,*}:=\frac1{1+(1+t_*)^2}.
\end{equation}

\paragraph{Condition (C3): overlap of the selected couplings.}
It remains to show that, for some uniform $\kappa>0$ and every pair of updated-edge labels,
\[
  \sum_{h\in\Gamma}\frac1{3q}
  \bigl\lVert P_{\mu_e\mid\boldsymbol{\partial}_e}
    Q_h^{\mathrm{end}}P_{\lambda_e\mid\boldsymbol{\partial}_e}
  \bigr\rVert_{\mathrm{HS}}^2
  \ge\kappa r_{\lambda_e}\pi_{\mu_e}.
\]
We first compute the overlap on the updated edge alone.  Set
$Q_h=|h\rangle\langle h|$, $h\in\Gamma$.  On the updated regular edge, the
isotypic projector $P_{\lambda_e}$ has constant group-basis diagonal
\[
  \langle h,P_{\lambda_e}h\rangle
  =\lVert P_{\lambda_e}|h\rangle\rVert^2
  =\frac{d_{\lambda_e}^2}{q}.
\]
Moreover,
$P_{\lambda_e}Q_hP_{\mu_e}
=|P_{\lambda_e}h\rangle\langle P_{\mu_e}h|$, so its squared
Hilbert--Schmidt norm is
$d_{\lambda_e}^2d_{\mu_e}^2/q^2$, independently of $h$.  Summing over the $q$ group
elements gives
\begin{equation}
  \label{eq:color-fourier-overlap}
  \sum_{h\in\Gamma}
  \lVert
    P_{\lambda_e}Q_hP_{\mu_e}
  \rVert_{\mathrm{HS}}^2
  =\frac{d_{\lambda_e}^2d_{\mu_e}^2}{q}>0.
\end{equation}
Thus the group-basis bath couplings connect every pair of irrep-label blocks.

We now include the weights $1/(3q)$ from
\eqref{eq:selected-edge-comparison-couplings} and the fixed outer factors.
The tensor product of the four fixed
outer representation factors has dimension
\[
  D_{\boldsymbol{\partial}_e}
  :=\prod_{f\in\{a,b,c,d\}}d_{\lambda_f}
  =d_{\lambda_a}d_{\lambda_b}d_{\lambda_c}d_{\lambda_d},
\]
which is independent of the updated label.
The fixed boundary factor contributes this same multiplicity to
every updated-edge label block, so
$r_{\lambda_e}=D_{\boldsymbol{\partial}_e}d_{\lambda_e}^2$.  Hence
\eqref{eq:color-fourier-overlap} gives
\begin{align}
  \label{eq:finite-group-abstract-overlap-verification}
  \sum_{h\in\Gamma}\frac1{3q}
    \lVert P_{\mu_e\mid\boldsymbol{\partial}_e} Q_h^{\mathrm{end}}P_{\lambda_e\mid\boldsymbol{\partial}_e}\rVert_{\mathrm{HS}}^2
  =\frac{D_{\boldsymbol{\partial}_e}d_{\lambda_e}^2d_{\mu_e}^2}{3q^2}
    \notag
  =\frac{r_{\lambda_e} d_{\mu_e}^2}{3q^2}
  \ge \frac1{3q^2}r_{\lambda_e}\pi_{\mu_e}.
\end{align}
Thus \eqref{eq:abstract-block-overlap-floor} holds with
$\kappa=1/(3q^2)$.

\paragraph{Applying the abstract comparison.}
For a global label observable $f$, fixing the exterior labels
$\boldsymbol{\xi}$ gives
\[
  f^{\boldsymbol{\xi}}=\sum_{\lambda_e\in\widehat\Gamma}
    f(\lambda_e,\boldsymbol{\xi})P_{\lambda_e\mid\boldsymbol{\partial}_e},
\]
where $f(\lambda_e,\boldsymbol{\xi})$ is its value at the full label
configuration $(\lambda_e,\boldsymbol{\xi})$.
Restoring the boundary dependence in $\rho^{\mathrm{end}}$ and $\pi$,
we apply \cref{thm:finite-block-davies-label-comparison} with
\[
  \sigma=\rho^{\mathrm{end}}_{\boldsymbol{\partial}_e},
  \qquad b=b_*,
  \qquad \gamma_{\min}=\gamma_{\min,*},
  \qquad \kappa=\frac1{3q^2},
\]
to obtain, for every exterior configuration $\boldsymbol{\xi}$,
\begin{equation}
  \label{eq:conditional-physical-label-form-comparison}
  \underbrace{
    \langle f^{\boldsymbol{\xi}},K_ef^{\boldsymbol{\xi}}\rangle_{\rho^{\mathrm{end}}_{\boldsymbol{\partial}_e}}
  }_{\mathcal E^{\mathrm{phys}}_{\boldsymbol{\partial}_e}(f^{\boldsymbol{\xi}})}
  \ge
  \frac{\gamma_{\min,*}b_*}{3q^2}
  \underbrace{
    \frac12\sum_{\lambda_e,\mu_e}
    \pi_{\boldsymbol{\partial}_e}(\lambda_e)
    \pi_{\boldsymbol{\partial}_e}(\mu_e)
    |f(\lambda_e,\boldsymbol{\xi})-f(\mu_e,\boldsymbol{\xi})|^2
  }_{\mathcal E^{\mathrm{lab}}_{\boldsymbol{\partial}_e}(f^{\boldsymbol{\xi}})}.
\end{equation}
This is exactly the stated constant
$c_{\Gamma,\mathrm{lab},*}=\gamma_{\min,*}b_*/(3q^2)$.

\paragraph{Averaging back to the global form.}
It remains to pass from the conditioned local comparison back to the global
quadratic forms.  Let
\[
  \nu_e(\boldsymbol{\xi})
  :=\mu_{\mathrm{lab}}
    (\boldsymbol{\lambda}_{\setminus e}=\boldsymbol{\xi})
  =\operatorname{Tr}(\rho_\beta P_{\boldsymbol{\xi}})
\]
be the marginal distribution of the complete exterior label configuration.
On each conditioned block, recall the Hamiltonian decomposition
\[
  \left.H_{\cG}\right|_{\cH_{\boldsymbol{\xi}}}
  =H_{\boldsymbol{\partial}_e}\otimes I_{\mathrm{ext}}
    +I_{\mathrm{end}}\otimes H_{\mathrm{ext}}.
\]
The two terms act on separate tensor factors, so the Gibbs operator is
$e^{-\beta H_{\boldsymbol{\partial}_e}}\otimes e^{-\beta H_{\mathrm{ext}}}$.
Its trace is the product of the two traces; normalizing therefore gives
\[
  \rho_{\beta\mid\boldsymbol{\xi}}
  =\rho^{\mathrm{end}}_{\boldsymbol{\partial}_e(\boldsymbol{\xi})}
    \otimes\rho^{\mathrm{ext}}_{\boldsymbol{\xi}},
  \qquad \operatorname{Tr}\rho^{\mathrm{ext}}_{\boldsymbol{\xi}}=1,
\]
where the endpoint state is the marginal in
\eqref{eq:conditioned-endpoint-state-reduction}.
For any endpoint operators $X,Y$, the KMS inner product therefore reduces as
\begin{align*}
  \langle X\otimes I_{\mathrm{ext}},Y\otimes I_{\mathrm{ext}}
    \rangle_{\rho_{\beta\mid\boldsymbol{\xi}}}
  &=\langle X,Y\rangle_
    {\rho^{\mathrm{end}}_{\boldsymbol{\partial}_e(\boldsymbol{\xi})}}
    \operatorname{Tr}\!\left[
      (\rho^{\mathrm{ext}}_{\boldsymbol{\xi}})^{1/2}
      I_{\mathrm{ext}}
      (\rho^{\mathrm{ext}}_{\boldsymbol{\xi}})^{1/2}
      I_{\mathrm{ext}}\right]\\
  &=\langle X,Y\rangle_
    {\rho^{\mathrm{end}}_{\boldsymbol{\partial}_e(\boldsymbol{\xi})}}.
\end{align*}
The exterior trace is one, so only the endpoint inner product remains.
Here $f^{\boldsymbol{\xi}}$ acts on the endpoint space, and
\eqref{eq:conditioned-edge-bohr-localization} shows that its Davies update
does too, so the identity applies with $X=f^{\boldsymbol{\xi}}$ and
$Y=K_ef^{\boldsymbol{\xi}}$.

Since the update on $e$ leaves all exterior labels unchanged, the global
quadratic form is the weighted sum of its conditional forms.  Reducing
each conditional inner product to the endpoint space as above gives
\begin{align*}
  \langle f,\Pi K_e\Pi f\rangle_{\rho_\beta}
  &=\sum_{\boldsymbol{\xi}}
    \operatorname{Tr}(\rho_\beta P_{\boldsymbol{\xi}})
    \langle f^{\boldsymbol{\xi}}\otimes I_{\mathrm{ext}},
      (K_ef^{\boldsymbol{\xi}})\otimes I_{\mathrm{ext}}\rangle_
      {\rho_{\beta\mid\boldsymbol{\xi}}}
  \\
  &=\sum_{\boldsymbol{\xi}}\nu_e(\boldsymbol{\xi})\,
    \langle f^{\boldsymbol{\xi}},K_ef^{\boldsymbol{\xi}}\rangle_
      {\rho^{\mathrm{end}}_{\boldsymbol{\partial}_e(\boldsymbol{\xi})}}
  =\sum_{\boldsymbol{\xi}}\nu_e(\boldsymbol{\xi})\,
    \mathcal E^{\mathrm{phys}}_{\boldsymbol{\partial}_e(\boldsymbol{\xi})}
      (f^{\boldsymbol{\xi}}).
\end{align*}
For the auxiliary heat-bath chain, conditional variance gives the parallel
identity
\begin{align*}
  \langle f,(I-\mathbb E_e^{\mathrm{lab}})f\rangle_{\mu_{\mathrm{lab}}}
  =\sum_{\boldsymbol{\xi}}\nu_e(\boldsymbol{\xi})\,
    \operatorname{Var}_{\pi_{\boldsymbol{\partial}_e(\boldsymbol{\xi})}}
      \bigl(f(\,\cdot\,,\boldsymbol{\xi})\bigr)
  =\sum_{\boldsymbol{\xi}}\nu_e(\boldsymbol{\xi})\,
    \mathcal E^{\mathrm{lab}}_{\boldsymbol{\partial}_e(\boldsymbol{\xi})}
      (f^{\boldsymbol{\xi}}).
\end{align*}
Applying
\eqref{eq:conditional-physical-label-form-comparison} separately for every
$\boldsymbol{\xi}$ and averaging therefore introduces no loss.  Summing the resulting
inequality over $e$ yields, for every $f\in\operatorname{ran}\Pi$,
\[
  \langle f,\Pi K\Pi f\rangle_{\rho_\beta}
  \ge\frac{\gamma_{\min,*}b_*}{3q^2}
    \langle f,\mathsf K_{\mathrm{lab}}f\rangle_{\mu_{\mathrm{lab}}},
\]
which implies the weaker stated comparison
\eqref{eq:physical-color-comparison}.
\end{proof}

\subsection{Local Quotients and the Geometric Fixed-Space Cover}
\label{sec:common-fibre-certificate}

We first identify the exact fixed space of each centered path in
\cref{sec:exact-centered-path-fixed-space}.
Using this description, \cref{sec:uniform-centered-path-quotient-gap}
proves a uniform gap above the fixed space, establishing Condition~(ii)
of \cref{thm:abstract-two-axis-gap-assembly}.
Finally, \cref{sec:global-cover-local-fixed-complements} shows that the
local fixed-space complements together cover the subspace orthogonal
to the label interface, establishing Geometric Condition~(G1).

\subsubsection{The Exact Fixed Space of a Centered Path}
\label{sec:exact-centered-path-fixed-space}

We next identify the exact kernel of the positive Davies operator restricted to a
three-edge path before taking any spectral minimum.  Fix
$p=(e_-,e_0,e_+)$.  An
observable has both ket-side and bra-side Peter--Weyl labels.  The path
operator can change the labels on the three active edges, so those labels
must remain free.  It does, however, preserve the ket and bra labels on every
outer factor.

We call each representation factor contributed by an outer edge an
\emph{outer leg}.  The \emph{path neighborhood} consists of the three active
edges and all representation factors owned by the four path vertices.  The
remaining factors form the \emph{exterior factor}, on which $K_p$ acts
trivially.

\paragraph{Ordered outer-label blocks.}
An abstract centered three-edge path has six outer representation slots as in \Cref{fig:ordered-path-kernel-factors}.
Let
\begin{equation}
  \label{eq:ordered-block-type-set}
  \Theta_{\Gamma}
  :=(\widehat\Gamma^6)_{\mathrm{ket}}
    \times(\widehat\Gamma^6)_{\mathrm{bra}}.
\end{equation}
Thus
$\boldsymbol{\theta}=(\boldsymbol{\theta}^{\mathrm{ket}},
          \boldsymbol{\theta}^{\mathrm{bra}})
\in\Theta_{\Gamma}$ is an abstract ordered outer-label type.
Each component $\boldsymbol{\theta}^{\mathrm{ket}}$ or
$\boldsymbol{\theta}^{\mathrm{bra}}$ is a six-tuple in
$\widehat\Gamma^6$.  
Once $p=(e_-,e_0,e_+)$ and its endpoint order are fixed, its two endpoint pairs
and two internal outer factors identify those six slots with the actual
outer factors surrounding $p$.  Placing the labels of $\boldsymbol{\theta}$ in these
$p$-dependent positions gives Peter--Weyl projectors
$P_{p,\boldsymbol{\theta}^{\mathrm{ket}}}$ and
$P_{p,\boldsymbol{\theta}^{\mathrm{bra}}}$.  The corresponding realized
operator block is
\begin{equation}
  \label{eq:ordered-outer-operator-block}
  \mathcal X_{p,\boldsymbol{\theta}}
  :=
  P_{p,\boldsymbol{\theta}^{\mathrm{ket}}}
  \mathcal B(\mathcal H_{\cG})
  P_{p,\boldsymbol{\theta}^{\mathrm{bra}}}.
\end{equation}
For every fixed $p$ these realized blocks give the full decomposition
\[
  \mathcal B(\mathcal H_{\cG})
  =\bigoplus_{\boldsymbol{\theta}\in\Theta_{\Gamma}}
    \mathcal X_{p,\boldsymbol{\theta}}.
\]
An element of $\mathcal X_{p,\boldsymbol{\theta}}$ maps the fixed bra outer-label sector to
the fixed ket outer-label sector, while retaining all active-edge labels and
matrix directions.  Throughout the paper, an \emph{ordered block} means
precisely this realized block indexed by the pair $(p,\boldsymbol{\theta})$, where
$\boldsymbol{\theta}=(\boldsymbol{\theta}^{\mathrm{ket}},
\boldsymbol{\theta}^{\mathrm{bra}})$ orders the ket and bra outer-label
tuples and $p$ places their six slots on the graph.  
For later readability, write
\[
  K_{p,\boldsymbol{\theta}}
  :=K_p\big|_{\mathcal X_{p,\boldsymbol{\theta}}}.
\]
Since $K_p$ preserves the fixed outer labels, it is block diagonal in the
displayed decomposition:
\begin{equation}
  \label{eq:ordered-block-kp-direct-sum}
  K_p
  =\bigoplus_{\boldsymbol{\theta}\in\Theta_{\Gamma}}
    K_{p,\boldsymbol{\theta}}.
\end{equation}

Once $(p,\boldsymbol{\theta})$ is fixed, let $v_L,v_1,v_2,v_R$ be the path vertices
from left to right.  The six outer labels are fixed, while the three active
edges remain free.

To describe the fixed space, we group the outer factors at each vertex.
Let
$\bullet$ stand for either ket or bra, and let
\[
  W_v^\bullet
  :=\bigotimes_{\substack{e\ni v\\
              e\notin\{e_-,e_0,e_+\}}}
       V_{\lambda_e^\bullet}^{(v)}
\]
be the frozen outer carrier owned by $v$.  It has two factors at an endpoint
and one at an internal vertex.  Let
$U_v^\bullet:\Gamma\to\mathcal U(W_v^\bullet)$ be the induced group action.
More explicitly, if $a_1,a_2$ are the two outer edges at $v_L$ and $b_1,b_2$
those at $v_R$, define the endpoint actions by
\begin{equation}
  \label{eq:endpoint-outer-actions}
  \begin{aligned}
    U_L^\bullet&:=U_{v_L}^\bullet
      &&\text{on }W_{v_L}^\bullet
        =V_{\lambda_{a_1}^\bullet}^{(v_L)}
         \otimes V_{\lambda_{a_2}^\bullet}^{(v_L)},\\
    U_R^\bullet&:=U_{v_R}^\bullet
      &&\text{on }W_{v_R}^\bullet
        =V_{\lambda_{b_1}^\bullet}^{(v_R)}
         \otimes V_{\lambda_{b_2}^\bullet}^{(v_R)}.
  \end{aligned}
\end{equation}
For the two internal vertices, write
\[
  W_{v_1}^\bullet=V_{\mu_1^\bullet},
  \qquad
  W_{v_2}^\bullet=V_{\mu_2^\bullet}.
\]
Thus $V_\lambda$ is one irreducible carrier, whereas $U_v$ is the group action
on the full outer carrier $W_v$ and may be a tensor-product representation.

If $X\in\ker K_{p,\boldsymbol{\theta}}$, the fixed-point characterization in
\cref{lem:davies-fixed-point-commutant} makes $X$ commute with the complete
coupling algebra on each active edge, which is
$\mathcal B(\C[\Gamma])$.  Write
\begin{equation}
  \label{eq:active-path-identity}
  I_p
  :=\bigotimes_{e\in\{e_-,e_0,e_+\}}
      I_{\C[\Gamma]^{(e)}}
\end{equation}
for the identity on the three active-edge factors.  Then every such fixed
observable has the form
\begin{equation}
  \label{eq:fixed-observable-active-identity}
  X=I_p\otimes Y.
\end{equation}
The remaining fixed-point conditions
give one intertwining condition at each path vertex.  For such a vertex $v$,
put
\[
  \mathcal I_{v,\boldsymbol{\theta}}
  :=\operatorname{Hom}_\Gamma
    (U_v^{\mathrm{bra}},U_v^{\mathrm{ket}})
  \subseteq
\operatorname{Hom}(W_v^{\mathrm{bra}},W_v^{\mathrm{ket}}).
\]
The space on the right contains all linear maps from the bra outer carrier
to the ket outer carrier.  The subscript $\Gamma$ on the left selects those
maps that are compatible with the two group actions, also called
\emph{intertwiners}.  Explicitly, $Y_v$ belongs to this space
when
\[
  U_v^{\mathrm{ket}}(g)Y_v
  =Y_vU_v^{\mathrm{bra}}(g)
  \qquad(g\in\Gamma).
\]

These intertwining conditions involve only the outer factors at the four
path vertices.  To describe the full fixed space, we also keep track of
the remaining vertex factors, on which $K_p$ acts trivially.  Accordingly,
we separate the full ordered block into its path and exterior parts:
\begin{equation}
  \label{eq:ordered-path-exterior-factorization}
  \mathcal X_{p,\boldsymbol{\theta}}
  =\mathcal X_{p,\boldsymbol{\theta}}^{\mathrm{path}}
   \otimes\mathcal X_{p,\boldsymbol{\theta}}^{\mathrm{ext}}.
\end{equation}
The first factor contains the operators on the path neighborhood;
the second contains the remaining vertex factors, on which $K_p$ acts
trivially.  \Cref{fig:ordered-path-kernel-factors} summarizes the local
intertwiner spaces and the exterior factor appearing in the following
kernel formula.

\begin{figure}[H]
  \centering
  \begin{tikzpicture}[
    path vertex/.style={circle, draw=figureGeometry, fill=figureGeometry!7,
      minimum size=5mm, inner sep=0pt},
    outer vertex/.style={circle, draw=black!42, fill=white,
      minimum size=3.4mm, inner sep=0pt},
    active edge/.style={figureGeometry, line width=1.1pt},
    outer edge/.style={black!52, line width=0.6pt},
    constraint/.style={draw=figureInterface, fill=figureInterface!5,
      rounded corners=2pt, minimum width=2.55cm, minimum height=0.62cm,
      align=center, font=\scriptsize},
    local frame/.style={draw=black!35, dashed, rounded corners=2pt,
      inner sep=5pt},
    map arrow/.style={-{Latex[length=1.6mm]}, black!55, dashed,
      line width=0.5pt},
    every node/.style={font=\small}
  ]
    \node[path vertex] (v0) at (-4.05,1.25) {};
    \node[path vertex] (v1) at (-1.35,1.25) {};
    \node[path vertex] (v2) at (1.35,1.25) {};
    \node[path vertex] (v3) at (4.05,1.25) {};
    \draw[active edge] (v0) -- node[above] {$e_-$} (v1);
    \draw[active edge] (v1) -- node[above] {$e_0$} (v2);
    \draw[active edge] (v2) -- node[above] {$e_+$} (v3);

    \node[outer vertex] (l1) at (-4.85,2.05) {};
    \node[outer vertex] (l2) at (-4.85,0.45) {};
    \draw[outer edge] (v0) -- (l1);
    \draw[outer edge] (v0) -- (l2);

    \node[outer vertex] (m1) at (-1.35,2.25) {};
    \node[outer vertex] (m2) at (1.35,2.25) {};
    \draw[outer edge] (v1) -- node[right, text=black] {$\mu_1$} (m1);
    \draw[outer edge] (v2) -- node[right, text=black] {$\mu_2$} (m2);

    \node[outer vertex] (r1) at (4.85,2.05) {};
    \node[outer vertex] (r2) at (4.85,0.45) {};
    \draw[outer edge] (v3) -- (r1);
    \draw[outer edge] (v3) -- (r2);

    \node[constraint] (hL) at (-4.05,-0.35)
      {$\operatorname{Hom}_\Gamma$\\[-2pt]
       $(U_L^{\mathrm{bra}},U_L^{\mathrm{ket}})$};
    \node[constraint] (h1) at (-1.35,-0.35)
      {$\operatorname{Hom}_\Gamma$\\[-2pt]
       $(\rho_{\mu_1^{\mathrm{bra}}},\rho_{\mu_1^{\mathrm{ket}}})$};
    \node[constraint] (h2) at (1.35,-0.35)
      {$\operatorname{Hom}_\Gamma$\\[-2pt]
       $(\rho_{\mu_2^{\mathrm{bra}}},\rho_{\mu_2^{\mathrm{ket}}})$};
    \node[constraint] (hR) at (4.05,-0.35)
      {$\operatorname{Hom}_\Gamma$\\[-2pt]
       $(U_R^{\mathrm{bra}},U_R^{\mathrm{ket}})$};
    \draw[map arrow] (v0) -- (hL.north);
    \draw[map arrow] (v1) -- (h1.north);
    \draw[map arrow] (v2) -- (h2.north);
    \draw[map arrow] (v3) -- (hR.north);

    \node[font=\scriptsize, text=figureGeometry] (activeLabel) at (0,2.85)
      {active-edge factors: $I_p$};
    \draw[figureInterface, line width=0.6pt]
      (-5.3,-0.95) -- (-5.3,-1.08) -- (5.3,-1.08) -- (5.3,-0.95);
    \node[font=\scriptsize, text=figureInterface] (outLabel) at (0,-1.38)
      {outer-factor intertwiners: $Y_{\mathrm{out}}\in
       \bigotimes_{v\in\{v_L,v_1,v_2,v_R\}}\mathcal I_{v,\boldsymbol{\theta}}$};
    \node[local frame, fit=(l1)(l2)(m1)(m2)(r1)(r2)(hL)(hR)(activeLabel)(outLabel),
      label={[font=\scriptsize]above:$\mathcal X_{p,\boldsymbol{\theta}}^{\mathrm{path}}$}]
      (local) {};
    \node[draw=black!38, fill=black!2, rounded corners=2pt,
      right=0.8cm of local, align=center, font=\scriptsize]
      (ext) {arbitrary exterior factor\\$\mathcal X_{p,\boldsymbol{\theta}}^{\mathrm{ext}}$};
    \node at ($(local.east)!0.5!(ext.west)$) {$\otimes$};
  \end{tikzpicture}
  \caption{Local factors entering the exact centered-path kernel.  The complete
  bath acts on the three blue path edges and leaves four equivariance
  conditions, one at each touching vertex.  On a fixed ordered block these
  become the four displayed intertwiner spaces, whose tensor product
  contains $Y_{\mathrm{out}}$.  Here ``out'' refers to the outer-edge
  factors owned by the four path vertices, inside the dashed path frame;
  it is distinct from the exterior factor ``ext''.  All remaining
  factors associated with the vertices form the arbitrary exterior factor on which $K_p$
  acts trivially.}
  \label{fig:ordered-path-kernel-factors}
\end{figure}

\begin{lemma}[Exact centered-path kernel]
\label{lem:exact-ordered-path-kernel}
On the ordered block $\mathcal X_{p,\boldsymbol{\theta}}$, the fixed space of $K_p$ is
\begin{equation}
  \label{eq:exact-ordered-path-kernel-with-spectator}
  \ker K_{p,\boldsymbol{\theta}}
  =
  \underbrace{
    I_p\otimes
    \bigotimes_{v\in\{v_L,v_1,v_2,v_R\}}
      \mathcal I_{v,\boldsymbol{\theta}}
  }_{\displaystyle \mathcal F_{p,\boldsymbol{\theta}}}
  \otimes\mathcal X_{p,\boldsymbol{\theta}}^{\mathrm{ext}}.
\end{equation}
\end{lemma}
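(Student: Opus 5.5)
The plan is to use the fixed-point characterization of \cref{lem:davies-fixed-point-commutant}. On the ordered block $\mathcal X_{p,\boldsymbol{\theta}}$, the kernel of $K_{p,\boldsymbol{\theta}}$ consists of the $X$ that commute with every Bohr component $S_\omega$ of every bare coupling $S\in\mathscr S_{e_-}\cup\mathscr S_{e_0}\cup\mathscr S_{e_+}$. Its hypotheses hold: the family is adjoint-closed, the logistic rates are strictly positive, and $\rho_\beta$ is faithful. I then prove the two inclusions in \eqref{eq:exact-ordered-path-kernel-with-spectator} separately. Throughout I use that $K_p$ preserves ordered blocks, by \eqref{eq:ordered-block-kp-direct-sum}, and that the outer labels are frozen there.

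\emph{Step 1: active factors.} Since $\sum_\omega S_\omega=S$, every fixed $X$ commutes with each bare coupling. On one edge, the operators $L_g$ and $Q_h$ generate $\mathcal B(\C[\Gamma])$, because $Q_hL_gQ_{h'}$ ranges over all matrix units. Hence $X$ lies in the commutant of $\mathcal B(\C[\Gamma])^{\otimes 3}$ on the active edges, so $X=I_p\otimes Y$ as in \eqref{eq:fixed-observable-active-identity}. Here $Y$ acts on the outer legs of the four path vertices and on the exterior factor.

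\emph{Step 2: vertex intertwining (necessity).} Commuting with all $S_\omega$ is equivalent to commuting with $e^{isH}Se^{-isH}$ for all $s\in\mathbb R$. Applying this to $S$ ranging over all of $\mathcal B(\C[\Gamma]^{(e)})$ (by linearity), the observable $e^{-isH}Xe^{isH}$ commutes with every active-edge operator for all $s$. Differentiating at $s=0$ then shows that $[H_{\cG},X]$ acts as the identity on the active factors.

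By \cref{lem:vertex-commutativity-locality}, only the four path vertices contribute, and
\[
  [H_{\cG},I_p\otimes Y]
  =-\frac{J}{q}\sum_{v}\sum_{g\in\Gamma}
   U_v^{\mathrm{act}}(g)\otimes\bigl(U_v^{\mathrm{ket}}(g)Y-YU_v^{\mathrm{bra}}(g)\bigr),
\]
where $U_v^{\mathrm{act}}(g)$ is the product of the $L_g$ or $R_g$ factors on the active edges at $v$. The girth assumption makes the path induced and the vertices distinct. Consequently the active operators $U_v^{\mathrm{act}}(g)$ for $g\ne1$, ranging over all $v$, are linearly independent of one another and of $I_p$; this holds because the regular representation is faithful, so the $L_g$ (resp.\ $R_g$) are linearly independent. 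Requiring the sum to be proportional to $I_p$ therefore forces
\[
  U_v^{\mathrm{ket}}(g)Y=YU_v^{\mathrm{bra}}(g)
  \qquad\text{for every path vertex } v \text{ and } g\ne1,
\]
and the identity is trivial at $g=1$.

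These are commuting equivariance conditions on distinct tensor factors $W_v$, with the exterior factor untouched. The joint solution space is therefore the tensor product $\bigotimes_v\mathcal I_{v,\boldsymbol{\theta}}\otimes\mathcal X^{\mathrm{ext}}_{p,\boldsymbol{\theta}}$. To justify this, I expand $Y$ in a product basis of the other factors and impose the conditions one vertex at a time, or equivalently view it as a commutant of a tensor product of group actions on separate factors.

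\emph{Step 3: sufficiency.} Conversely, let $X=I_p\otimes Y$ with $Y$ in this tensor product. Each $U_v(g)$ splits as an active part, which commutes with $I_p$, times an outer part intertwined by $Y$. Hence $X$ commutes with every $A_v$, so with $H_{\cG}$ and all of its spectral projectors $P_E$. Since $X$ also commutes with each active-edge $S$, it commutes with $S_\omega=\sum_{E-E'=\omega}P_ESP_{E'}$. \cref{lem:davies-fixed-point-commutant} then places $X$ in the kernel.

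The main obstacle is Step 2: passing from commutation with Bohr components, which is strictly stronger than commutation with the bare couplings, to the four vertex-wise intertwining conditions. This requires care with the linear-independence bookkeeping. In particular, at $v_1$ and $v_2$ the active operator $U_v^{\mathrm{act}}(g)$ involves two active edges, and I must check that no cancellation occurs between vertices sharing an active edge, for instance $v_1$ and $v_2$ sharing $e_0$. For this I would use that left and right translations on $e_0$ by nontrivial elements are linearly independent from each other and from the identity except in the trivial case $g=1$.
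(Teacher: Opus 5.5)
\textbf{Verdict.} Your outline follows the paper's route: Bohr commutant, then generation of $\mathcal B(\C[\Gamma])$ on each active edge, then showing $[H_{\cG},X]$ lies in $I_p\otimes(\cdot)$, then separating the four vertex conditions, then a converse inclusion. Your conjugate-and-differentiate step is equivalent to the paper's Jacobi identity, and putting $I_p$ directly into the independent family spares you the paper's partial-trace step. However, two steps fail as written.

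\textbf{First failure: independence at the shared edge.} This is exactly the point you flag as the crux. The single-edge fact you propose to use on $e_0$ is false for every nontrivial $\Gamma$. The spans of left and right translations intersect in the image of the centre of the group algebra. For example, $\sum_{g\in C}L_g=\sum_{g\in C}R_{g^{-1}}$ for any conjugacy class $C\neq\{1_\Gamma\}$, and for abelian $\Gamma$ simply $L_g=R_{g^{-1}}$. So no argument localized to the shared edge can exclude cancellation between $v_1$ and $v_2$. The same applies to $v_L,v_1$ on $e_-$ and to $v_2,v_R$ on $e_+$.

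What actually separates two distinct path vertices is an active edge on which one coefficient is a nontrivial translation and the other is the identity. For instance, $U^{\mathrm{act}}_{v_1}(g)$ translates $e_-$ for $g\neq1_\Gamma$ while $U^{\mathrm{act}}_{v_2}(h)$ is the identity there, and $\operatorname{Tr}L_g=\operatorname{Tr}R_g=q\mathbf 1_{g=1_\Gamma}$. Computing Hilbert--Schmidt inner products edge by edge shows that $\{U^{\mathrm{act}}_v(g)\}_{v,\,g\neq1_\Gamma}\cup\{I_p\}$ is an orthogonal family of nonzero operators. That is the independence you need, and it is how the paper argues. Separately, linear independence of the $L_g$ comes from $\operatorname{Tr}(L_g^\dagger L_h)=q\mathbf 1_{g=h}$, not from faithfulness.

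\textbf{Second failure: sufficiency in Step~3.} You claim $X=I_p\otimes Y$ commutes with $H_{\cG}$. With an arbitrary exterior factor this is false in general. The space $\mathcal X^{\mathrm{ext}}_{p,\boldsymbol\theta}$ contains operators that do not commute with the exterior terms $A_w$, for example a group-basis projector on a distant edge. The inclusion is nevertheless true, because Bohr components of active couplings see only the path Hamiltonian. Since $H_{\mathrm{ext}}$ commutes with $H_{\mathrm{path}}$ and with every active $S$, you get $e^{isH_{\cG}}Se^{-isH_{\cG}}=e^{isH_{\mathrm{path}}}Se^{-isH_{\mathrm{path}}}$, equivalently $S_\omega=S^{\mathrm{path}}_\omega\otimes I_{\mathrm{ext}}$ as in the paper. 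Commutation of $X$ with $H_{\mathrm{path}}$ and with $S$ then suffices.

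\textbf{Minor points.}
\begin{itemize}
\item The same oversight appears harmlessly in Step~2. The claim that ``only the four path vertices contribute'' to $[H_{\cG},I_p\otimes Y]$ is false, but the exterior terms give $I_p\otimes[H_{\mathrm{ext}},Y]$ and are absorbed into the $I_p$ column.
\item ``By linearity'' in Step~2 should be ``by multiplicativity.'' The span of the $L_g,R_g,Q_h$ is not all of $\mathcal B(\C[\Gamma])$. What you need is that the set of $S$ for which $X$ commutes with every $e^{isH_{\cG}}Se^{-isH_{\cG}}$ is a unital algebra, which holds because conjugation is an automorphism.
\end{itemize}
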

Thus the active-edge factor is the identity, the four local vertex factors
belong to their intertwiner spaces, and all remaining degrees of freedom form
an arbitrary exterior factor.  In particular, this kernel is independent
of $t$ on every closed, bounded temperature interval.

The proof, including the coefficient-separation argument and the blockwise
identification of the fixed space, is given in
\cref{app:exact-ordered-path-kernel-proof}.

The order in $\operatorname{Hom}_\Gamma
(U_v^{\mathrm{bra}},U_v^{\mathrm{ket}})$ records that its operators map the
bra representation to the ket representation.  At the two internal vertices
$U_{v_i}^\bullet=\rho_{\mu_i^\bullet}$, whereas $U_L$ and $U_R$ combine two
outer-leg actions.  Thus the local factor
in \eqref{eq:exact-ordered-path-kernel-with-spectator} can also be written as
\begin{align}
  \label{eq:exact-ordered-path-kernel}
  \mathcal F_{p,\boldsymbol{\theta}}={}&
  I_p
  \otimes
  \operatorname{Hom}_\Gamma
    (U_L^{\mathrm{bra}},U_L^{\mathrm{ket}})
  \otimes
  \operatorname{Hom}_\Gamma
    (\rho_{\mu_1^{\mathrm{bra}}},\rho_{\mu_1^{\mathrm{ket}}})
  \otimes
  \operatorname{Hom}_\Gamma
    (\rho_{\mu_2^{\mathrm{bra}}},\rho_{\mu_2^{\mathrm{ket}}})
  \otimes
  \operatorname{Hom}_\Gamma
    (U_R^{\mathrm{bra}},U_R^{\mathrm{ket}}).
\end{align}

For the remainder of this section, only three consequences are used:
\begin{itemize}
  \item The kernel is known before any spectral minimum is taken.
  \item It is independent of $\beta$, or equivalently of $t$.
  \item By \eqref{eq:schur-intertwiner-dimension}, a ket/bra label
  mismatch at an internal outer leg makes the corresponding Hom factor vanish.
\end{itemize}

\subsubsection{A Uniform Local Gap Above the Fixed Space}
\label{sec:uniform-centered-path-quotient-gap}

We now combine the exact path kernel with the above-kernel Poincar\'e
formulation in \cref{sec:prelim-above-kernel-poincare}.  We use
\cref{lem:finite-compactness-fixed-kernel} to show that the positive gap
above this fixed kernel has a positive minimum over $t\in[0,t_*]$.

For each ordered block $\mathcal X_{p,\boldsymbol{\theta}}$ and each current
value of $t$, let $E_{p,\boldsymbol{\theta}}$ be the path-KMS orthogonal
projection onto $\mathcal F_{p,\boldsymbol{\theta}}$, and put
$E_{p,\boldsymbol{\theta}}^\perp:=I-E_{p,\boldsymbol{\theta}}$.  We suppress the
$t$-dependence of these KMS-orthogonal projections.  The ordered blocks are
KMS-orthogonal, and $K_p$ is block diagonal by
\eqref{eq:ordered-block-kp-direct-sum}.  Hence the blockwise kernel identity in
\cref{lem:exact-ordered-path-kernel} gives
\begin{equation}
  \label{eq:full-path-complement-from-ordered-blocks}
  E_p^\perp
  =\bigoplus_{\boldsymbol{\theta}}
    \left(
      E_{p,\boldsymbol{\theta}}^\perp
      \otimes I_{\mathrm{ext},\boldsymbol{\theta}}
    \right),
\end{equation}
where the direct sum is restricted to $L^2_0(\rho_\beta)$.  Equivalently,
$E_p$ is the restriction to the mean-zero space of the full-kernel
projection, because $I\in\ker K_p$ and the identity is KMS-orthogonal to
$L^2_0(\rho_\beta)$.

\begin{lemma}[Uniform local quotient gap]
\label{lem:uniform-ordered-path-quotient-gap}
Fix a nontrivial finite group $\Gamma$ and $0\le t_*<\infty$.  There is a constant
$g_{\Gamma,\mathrm{loc},*}>0$, depending only on $\Gamma$ and $t_*$, such
that every centered path satisfies
\begin{equation}
  \label{eq:common-ordered-packet-floor}
  K_p\succeq g_{\Gamma,\mathrm{loc},*}E_p^\perp.
\end{equation}
\end{lemma}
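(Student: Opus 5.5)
The plan is to reduce the graph-dependent statement to finitely many fixed finite-dimensional problems and then invoke \cref{lem:finite-compactness-fixed-kernel}. By \eqref{eq:ordered-block-kp-direct-sum} and \eqref{eq:full-path-complement-from-ordered-blocks}, $K_p$ and $E_p^\perp$ are block diagonal over the KMS-orthogonal ordered blocks $\mathcal X_{p,\boldsymbol\theta}$. It therefore suffices to prove $K_{p,\boldsymbol\theta}\succeq g\,E^\perp_{p,\boldsymbol\theta}$ on each block, with $g$ independent of $p$, $\boldsymbol\theta$, the graph and $t\in[0,t_*]$. Restriction to the mean-zero subspace is harmless, because the identity lies in $\ker K_p$ and the identity is KMS-orthogonal to $L^2_0(\rho_\beta)$.

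The first step is to strip the exterior factor. On a block, I will use the path/exterior factorization \eqref{eq:ordered-path-exterior-factorization} and the conditioning argument already used in the proof of \cref{prop:finite-group-davies-label-comparison}. Fixing outer labels makes the Hamiltonian split as $H_{\mathrm{path}}\otimes I+I\otimes H_{\mathrm{ext}}$, where $H_{\mathrm{path}}=-J(A_{v_L}+A_{v_1}+A_{v_2}+A_{v_R})$ acts on the path neighborhood. The Bohr components of the active-edge couplings then localize to the path space, as in \eqref{eq:conditioned-edge-bohr-localization}. Hence $K_{p,\boldsymbol\theta}=K^{\mathrm{path}}_{\boldsymbol\theta,t}\otimes I_{\mathrm{ext}}$.

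The KMS geometry is the subtle point, because on $\mathcal B$ the ket and bra sectors carry different exterior states. I intend to use the KMS form $\operatorname{Tr}(\rho^{1/2}X^\dagger\rho^{1/2}Y)$ with $\rho$ itself factorizing on the label-fixed subspaces. The KMS inner product on $\mathcal X^{\mathrm{path}}\otimes\mathcal X^{\mathrm{ext}}$ is then a tensor product of a path inner product, built from the unnormalized path weights $\prod(I+tA_w)$, and an exterior one, up to a common positive scalar. Consequently the spectral bound above the kernel for $K^{\mathrm{path}}_{\boldsymbol\theta,t}$, in the path inner product, transfers to the full block. By \cref{lem:exact-ordered-path-kernel}, the kernel of $K^{\mathrm{path}}_{\boldsymbol\theta,t}$ is exactly $\mathcal F_{p,\boldsymbol\theta}$, and this kernel is independent of $t$.

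The second step is finiteness. The path space is determined up to unitary equivalence by $\Gamma$, the abstract type $\boldsymbol\theta\in\Theta_\Gamma$, and the orientations of the nine edges involved. There are finitely many such data, and girth at least six ensures that every path neighborhood is an induced tree of this fixed shape. The operator and the inner product depend continuously on $t\in[0,t_*]$: the rates $\gamma_\beta$ are continuous, and the Bohr projectors of $H_{\mathrm{path}}$ are $t$-independent because $H_{\mathrm{path}}$ is $-J$ times a sum of commuting projections, whose eigenprojections are fixed polynomials in the $A_w$. The path Gibbs weights are also continuous and positive definite. For each of the finitely many abstract types, \cref{lem:finite-compactness-fixed-kernel} gives a constant $g_{\boldsymbol\theta}>0$, and I will set $g_{\Gamma,\mathrm{loc},*}:=\min_{\boldsymbol\theta}g_{\boldsymbol\theta}$.

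The main obstacle is making the exterior reduction of the KMS geometry rigorous. The off-diagonal blocks ($\boldsymbol\theta^{\mathrm{ket}}\neq\boldsymbol\theta^{\mathrm{bra}}$) involve two different conditioned states, and any normalization mismatch between them must be absorbed into a scalar common to both the numerator and the denominator of the Rayleigh quotient. If this fails, the fallback is to work with the unnormalized weight $\prod_w(I+tA_w)$ directly: it factorizes exactly over path and exterior vertices on each label sector, so the Rayleigh quotient splits without normalization issues.
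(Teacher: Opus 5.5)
Your proposal is correct and follows the paper's proof essentially step for step. You block-diagonalize over the ordered blocks and use the sectorwise unnormalized Gibbs factorization $P_{p,\boldsymbol\theta^\bullet}\rho_\beta P_{p,\boldsymbol\theta^\bullet}=\rho^{\bullet,\mathrm{path}}\otimes\rho^{\bullet,\mathrm{ext}}$ (your ``fallback'' is exactly how the paper handles the ket/bra normalization) to get $K_{p,\boldsymbol\theta}=K^{\mathrm{path}}_{\boldsymbol\theta,t}\otimes I_{\mathrm{ext}}$ with a tensor-product KMS inner product; you then apply the finite-dimensional compactness lemma to the $t$-independent exact kernel for each of finitely many block types and take the minimum, and the only cosmetic difference is that you enumerate the finitely many edge-orientation patterns where the paper conjugates by $W|h\rangle=|h^{-1}\rangle$ to a reference orientation, which works equally well.
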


\begin{proof}
\medskip\noindent\textbf{Passage to Finite Path Problems.}\par\smallskip
Recall that \cref{lem:finite-compactness-fixed-kernel} gives, on a fixed
finite-dimensional space $V$,
\[
  \langle X,D_tX\rangle_t
  \ge g\,\operatorname{dist}_t(X,F)^2,
  \qquad X\in V,
\]
for a continuous family of positive operators with parameter-independent
kernel $F$ and positive-definite inner products.  Below,
$V=\mathcal X_{p,\boldsymbol{\theta}}^{\mathrm{path}}$,
$F=\mathcal F_{p,\boldsymbol{\theta}}$, and $\lVert\cdot\rVert_t$ is the path
KMS norm; $D_t$ will be the path operator defined in
\eqref{eq:ordered-block-generator-factorization}.  We first remove the
exterior factor and then verify these hypotheses.

Fix $\boldsymbol{\theta}\in\Theta_{\Gamma}$.  By
\eqref{eq:ordered-block-kp-direct-sum}, $K_p$ acts separately on
$\mathcal X_{p,\boldsymbol{\theta}}$ because every active coupling preserves
its six fixed outer labels.  Moreover,
$\Theta_{\Gamma}$ is finite, and changing $p$ only relabels these
outer slots.  It therefore suffices to control one finite path problem for
each block type, uniformly in $t\in[0,t_*]$.

The orientation pattern of the local edges does not create additional
spectral types.  On one regular edge, the unitary
\[
  W|h\rangle:=|h^{-1}\rangle
\]
satisfies
\[
  WL_gW^\dagger=R_g,
  \qquad
  WR_gW^\dagger=L_g,
  \qquad
  WQ_hW^\dagger=Q_{h^{-1}}.
\]
Conjugating by $W$ on every edge whose orientation is reversed maps the
local star Hamiltonian, its Gibbs factors, and the complete equal-clock bath
to those of a fixed reference orientation.  It also maps the corresponding
kernel projection to the reference kernel projection.  Thus all local
orientation signatures are unitarily equivalent in the KMS geometry, and
the constants below are uniform in both $p$ and its orientation pattern.

The ordered-block projections commute with the Gibbs state and give the
KMS-orthogonal decomposition
\[
  L^2(\rho_\beta)
  =
  \bigoplus_{\boldsymbol{\theta}}
  \left(
    \mathcal X_{p,\boldsymbol{\theta}}^{\mathrm{path}}
    \otimes
    \mathcal X_{p,\boldsymbol{\theta}}^{\mathrm{ext}}
  \right).
\]
The path factor contains the active edges and the factors owned by the four
path vertices; the rest is an exterior factor.
We next factor the Gibbs operator to obtain a tensor-product KMS inner
product, which will allow us to transfer a gap bound on the finite path
factor to the full ordered block.
An ordered operator block maps the bra sector to the ket sector, so it
need not itself carry a density.  Here we instead restrict the Gibbs
state to each sector separately: both projectors below refer to the
same sector, with $\bullet=\mathrm{ket}$ or $\mathrm{bra}$.

As in the endpoint/exterior decomposition used for the single-edge
comparison, the Hamiltonian on each sector splits into terms acting on
separate tensor factors:
\[
  \left.H_{\cG}\right|_{\operatorname{Ran}P_{p,\boldsymbol{\theta}^{\bullet}}}
  =H^{\bullet,\mathrm{path}}\otimes I_{\mathrm{ext}}
   +I_{\mathrm{path}}\otimes H^{\bullet,\mathrm{ext}}.
\]
Here the path Hamiltonian contains the four path-vertex terms and the
exterior Hamiltonian contains the remaining vertex terms, with the
chosen sector labels fixed.  Their exponentials therefore factor as
$e^{-\beta H^{\bullet,\mathrm{path}}}\otimes
e^{-\beta H^{\bullet,\mathrm{ext}}}$.
Absorbing the global factor $Z_\beta^{-1}$ into one tensor factor, write
\begin{equation}
  \label{eq:ordered-block-gibbs-factorization}
  P_{p,\boldsymbol{\theta}^{\bullet}}\rho_\beta
  P_{p,\boldsymbol{\theta}^{\bullet}}
  =\rho_{\boldsymbol{\theta},\beta}^{\bullet,\mathrm{path}}
   \otimes\rho_{\boldsymbol{\theta},\beta}^{\bullet,\mathrm{ext}}.
\end{equation}
These are unnormalized Gibbs factors: $P\rho_\beta P$ has not been
divided by its trace, and the two factors are not assumed to have unit
trace.  The factorization holds separately for the ket and bra sectors.
Every active Bohr component has the form
$S_\omega^{\mathrm{path}}\otimes I_{\mathrm{ext}}$.  Hence the restriction
in \eqref{eq:ordered-block-kp-direct-sum} acts trivially on the exterior
factor.  
After identifying the tensor factors of each path neighborhood with those of a fixed reference path, denote its positive
path factor on $\mathcal X_{p,\boldsymbol{\theta}}^{\mathrm{path}}$ by
$K^{\mathrm{path}}_{\boldsymbol{\theta},t}$, so that
\begin{equation}
  \label{eq:ordered-block-generator-factorization}
  K_{p,\boldsymbol{\theta}}
  =K^{\mathrm{path}}_{\boldsymbol{\theta},t}\otimes I_{\mathrm{ext}}.
\end{equation}
Here the $t$-dependence inherited from the Davies rates is displayed because
we will take a minimum over $t\in[0,t_*]$; it remains suppressed in
$K_p$ and $K_{p,\boldsymbol{\theta}}$.  The path label $p$ is omitted from the
local factor because changing the centered path only relabels the same fixed
local slots.  This is the family $D_t$ in the uniform gap estimate above.

The Gibbs factorization \eqref{eq:ordered-block-gibbs-factorization} makes the
KMS inner product on this ordered block the tensor product of its path and
exterior inner products.  Together with
\eqref{eq:ordered-block-generator-factorization}, this shows that the exterior
factor only repeats eigenvalues: $K_{p,\boldsymbol{\theta}}$ and
$K^{\mathrm{path}}_{\boldsymbol{\theta},t}$ have the same nonzero spectrum, up to
multiplicity.

\medskip\noindent\textbf{A Uniform Gap for Each Block Type.}\par\smallskip
The exact kernel identity
\eqref{eq:exact-ordered-path-kernel-with-spectator} and the factorization
\eqref{eq:ordered-block-generator-factorization} imply
\[
  \ker K^{\mathrm{path}}_{\boldsymbol{\theta},t}
  =\mathcal F_{p,\boldsymbol{\theta}}
  \qquad(0\le t\le t_*).
\]
This kernel is independent of $t$.  For every
$X\in\mathcal X_{p,\boldsymbol{\theta}}^{\mathrm{path}}$, orthogonal projection
gives
\[
  \operatorname{dist}_t
    (X,\mathcal F_{p,\boldsymbol{\theta}})^2
  =\lVert E_{p,\boldsymbol{\theta}}^\perp X\rVert_t^2
  =\langle X,E_{p,\boldsymbol{\theta}}^\perp X\rangle_t.
\]
The path KMS inner product and the
quadratic form of $K^{\mathrm{path}}_{\boldsymbol{\theta},t}$ vary continuously
on $[0,t_*]$.  The Gibbs factors are strictly positive, so the path KMS
inner product is positive definite.  For each block type with a nonzero
kernel complement, the gap above the fixed kernel is positive and varies
continuously with $t$.  Its minimum on the closed, bounded interval
$[0,t_*]$ is therefore strictly positive.  By
\cref{lem:finite-compactness-fixed-kernel}, this gives
\[
  K^{\mathrm{path}}_{\boldsymbol{\theta},t}
  \succeq
  g_{\boldsymbol{\theta},*}E_{p,\boldsymbol{\theta}}^\perp
  \qquad(0\le t\le t_*)
\]
for some $g_{\boldsymbol{\theta},*}>0$.
Let $\Theta_{\Gamma}^+
\subseteq\Theta_{\Gamma}$ be the
block types whose kernel complement is nonzero.  Taking the minimum of the
blockwise lower bounds gives
\begin{equation}
  \label{eq:ordered-path-compactness-floor}
  g_{\Gamma,\mathrm{loc},*}
  :=\min_{\boldsymbol{\theta}\in\Theta_{\Gamma}^+}g_{\boldsymbol{\theta},*}>0.
\end{equation}

\Needspace{7\baselineskip}
\medskip\noindent\textbf{Assembly of the Blocks.}\par\smallskip
Since \(K_{p,\boldsymbol{\theta}}=K^{\mathrm{path}}_{\boldsymbol{\theta},t}\otimes I_{\mathrm{ext}}\) and the KMS inner product factorizes, the same lower bound holds on the full ordered block.  Using the KMS-orthogonal block decomposition and
\eqref{eq:full-path-complement-from-ordered-blocks}, we obtain
\[
  K_p
  \succeq
  g_{\Gamma,\mathrm{loc},*}
  \bigoplus_{\boldsymbol{\theta}}
  \left(
    E_{p,\boldsymbol{\theta}}^\perp
    \otimes I_{\mathrm{ext},\boldsymbol{\theta}}
  \right)
  =g_{\Gamma,\mathrm{loc},*}E_p^\perp.
\]
This is \eqref{eq:common-ordered-packet-floor}.
\end{proof}

\subsubsection{Global Cover by Local Fixed-Space Complements}
\label{sec:global-cover-local-fixed-complements}

\begin{lemma}[Overlap of local fixed-space complements]
\label{lem:cubic-defect-frame}
The orthogonal-complement projections $E_p^\perp=I-E_p$ of the local fixed
spaces satisfy
\begin{equation}
  \label{eq:common-cubic-defect-frame}
  \Pi^\perp\Big(\sum_{p\in\PathFam}E_p^\perp\Big)\Pi^\perp
  \succeq6\Pi^\perp.
\end{equation}
\end{lemma}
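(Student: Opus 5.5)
The plan is to prove \eqref{eq:common-cubic-defect-frame} as a counting statement in a common orthogonal basis. Concretely, I would construct a KMS-orthogonal decomposition
\[
  \mathcal B(\mathcal H_{\cG})=\mathcal M\oplus\bigoplus_{\alpha}\mathcal D_\alpha
\]
into ``elementary defect directions'' $\mathcal D_\alpha$ with two properties. First, every projection $E_p$ and the label projection $\Pi$ are block diagonal in this decomposition. Second, each $\mathcal D_\alpha\subseteq\operatorname{ran}\Pi^\perp$ is either contained in $\ker K_p$ or is KMS-orthogonal to it. Once this is in place, $\Pi^\perp(\sum_pE_p^\perp)\Pi^\perp$ acts on $\mathcal D_\alpha$ as the integer $N_\alpha:=\#\{p:\mathcal D_\alpha\perp\ker K_p\}$. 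The lemma then reduces to showing $N_\alpha\ge6$ for every $\alpha$.

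\textbf{Step 1: the decomposition.} I would refine the ordered blocks $\mathcal X_{p,\boldsymbol\theta}$ globally. Fix full ket and bra label configurations $\boldsymbol\lambda^{\mathrm{ket}},\boldsymbol\lambda^{\mathrm{bra}}$ and regroup each block vertex by vertex via \eqref{eq:edge-to-vertex-regrouping}. At each vertex $v$, split $\operatorname{Hom}(\mathcal V_v^{\mathrm{bra}},\mathcal V_v^{\mathrm{ket}})$ into the intertwiner space $\operatorname{Hom}_\Gamma$ and its orthogonal complement. Inside the intertwiners I would further isolate the part that is ``scalar'' along each leg.

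Because $\rho_\beta$ factorizes over vertices on each label sector (the factorization \eqref{eq:global-label-distribution} and the Gibbs factorization \eqref{eq:ordered-block-gibbs-factorization}), these splittings are KMS-orthogonal. By \cref{lem:exact-ordered-path-kernel}, $\ker K_p$ is then a tensor product of selected pieces: identity on the active edges (which forces matching labels and Gibbs-scalar data there), intertwiners at the four path vertices, and anything on the exterior. So each $\ker K_p$ is a sum of products of the chosen pieces, and all $E_p$ are simultaneously diagonal. The label interface $\mathcal M$ corresponds exactly to products of the ``identity/scalar'' piece everywhere, so $\Pi$ is diagonal as well.

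\textbf{Step 2: counting.} A non-label direction $\mathcal D_\alpha$ carries at least one local defect of one of two kinds.
\begin{itemize}
\item An \emph{edge defect} on $e$: a label mismatch on $e$, or a non-scalar internal direction on $e$. Such a direction is orthogonal to $\ker K_p$ whenever $e$ is active in $p$. By the counts in \cref{lem:cubic-path-ownership}, $e$ is active in $4+8=12$ paths, so $N_\alpha\ge12\ge6$.
\item A \emph{vertex defect} at $v$: a non-intertwining component at $v$ involving some leg $f\ni v$, with all edges scalar. This is killed by any path that has $v$ as a path vertex with $f$ as an outer leg, since then the Hom constraint at $v$ applies. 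I would count the centered paths through $v$ that avoid $f$. Paths with $v$ internal and $f$ outer: $2\cdot2=4$. Paths with $v$ an endpoint and $f$ outer: the flank must be one of the two other edges at $v$, giving $2\cdot2=4$ more. This gives at least $6$.
\end{itemize}
The girth-six assumption guarantees that these choices give distinct induced paths.

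\textbf{Main obstacle.} The hard step is Step 1: making the decomposition genuinely orthogonal in the KMS geometry, uniformly across all the overlapping paths. The scalar-versus-centered split on an edge must be taken relative to the Gibbs weights, as in $\mathcal E_{\mathrm{lab}}$. I must also check that the scalar piece at a path's endpoint vertex is compatible with the intertwiner constraint seen from a different path. My first attempt would be to define the pieces as ranges of products of commuting local conditional expectations, namely the edge-wise $\mathcal E_{\mathrm{lab}}$-type maps and the vertex group averages $X\mapsto|\Gamma|^{-1}\sum_g U_v(g)XU_v(g)^\dagger$. Both commute with $\rho_\beta$, so they are KMS-self-adjoint, and they mutually commute because the vertex actions commute (\cref{lem:vertex-commutativity-locality}). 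Expanding $I=\prod(\mathcal E+(I-\mathcal E))$ then yields the required orthogonal decomposition, and the remaining work is to check that $E_p$ coincides with the appropriate product of these maps.
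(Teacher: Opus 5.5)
\paragraph{Step 1 cannot be carried out.} You present Step~1 as the hard step, but the decomposition it asks for does not exist. Suppose $\mathcal M$ is a summand of a KMS-orthogonal decomposition in which every $E_p$ is block diagonal. Then $E_p$ leaves $\operatorname{ran}\Pi$ invariant and so commutes with $\Pi$. This already fails for $\Gamma=S_3$ and any $t>0$.

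Take a centered path $p$ with outer legs $a_1,a_2$ at the endpoint $v_L$. Let $Y^L$ be the projection onto the trivial summand of $V_\tau\otimes V_\tau\cong\mathbf 1\oplus\mathrm{sgn}\oplus\tau$ in the two factors owned by $v_L$. Then $X:=P^{(a_1)}_\tau P^{(a_2)}_\tau(I_p\otimes Y^L\otimes I)\in\ker K_p$ by \cref{lem:exact-ordered-path-kernel}. The vertex factorization of the Gibbs trace \eqref{eq:global-label-distribution} gives, on configurations with $\lambda_{a_1}=\lambda_{a_2}=\tau$,
\[
  (\mathcal E_{\mathrm{lab}}X)(\boldsymbol\lambda)
  =\frac{d_{\lambda_{e_-}}+t\,\delta_{\lambda_{e_-},\mathbf 1}}{4d_{\lambda_{e_-}}+t}.
\]
This equals $\frac{1+t}{4+t}$, $\frac1{4+t}$, $\frac2{8+t}$ for $\lambda_{e_-}=\mathbf 1,\mathrm{sgn},\tau$. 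So the label part of a centered fixed observable depends on a flank label. It is therefore not of the form $I_p\otimes Y$, and $\Pi(\operatorname{ran}E_p)\not\subseteq\operatorname{ran}E_p$. Consequently $\Pi^\perp E_p^\perp\Pi^\perp$ is not a projection, and there are no integers $N_\alpha$ to count.

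The root cause is that ``identity on the active edges'' is not a blockwise condition. Inside each active-label sector it is just scalar, but it also forces the \emph{same} coefficient for all values of $\lambda_{e_-},\lambda_{e_0},\lambda_{e_+}$. For instance, $P^{(e_0)}_\lambda$ is block-scalar and commutes with every vertex action, yet it is not in $\ker K_p$. Every product of KMS-orthogonal projections fixing $\mathcal M$ (and of their complements) commutes with $\Pi$. Your edgewise conditional expectations and vertex group averages are of this type, so $E_p$ cannot be built from them.

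\paragraph{What is needed instead.} You need a one-sided argument, which is what the paper uses. The only exact commutation available is $[E_p,\mathcal D_e]=0$ for the edgewise Peter--Weyl pinchings, which holds because $\ker K_p$ is pinching invariant. This splits $\operatorname{ran}\Pi^\perp$ into matching and mismatched parts.
\begin{itemize}
\item \emph{Mismatched part.} On a sector $Q$ with a mismatch on $e$, one has exactly $E_pQ=0$ whenever $e$ is active in $p$ or is the outer leg at an internal vertex of $p$ (Schur). So the mismatch half of your edge-defect count is sound.
\item \emph{Matching part: endpoint paths cannot be used.} For example, the Gibbs-centered $A_v$ on a $(\tau,\tau,\tau)$ block is an intertwiner at $v$. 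When $v$ is an endpoint of $p$, it is neither in $\ker K_p$ nor KMS-orthogonal to it. Your classification into edge defects and non-intertwining vertex defects does not cover such directions.
\item \emph{Matching part: the replacement.} Use vertex-local Gibbs-centering projections $R_w$. These commute with one another, with $\mathcal D$ and with $\Pi$, and satisfy $\sum_wR_w\succeq\Pi^\perp_{=}$. On matching components every fixed observable is scalar on the full vertex factor at both internal vertices, so $\operatorname{ran}E_p\subseteq\ker R_{v_1(p)}\cap\ker R_{v_2(p)}$. Hence $E_p\preceq(I-R_{v_1(p)})(I-R_{v_2(p)})$ and $E_p^\perp\succeq\frac12(R_{v_1(p)}+R_{v_2(p)})$, without any commutation of $E_p$ with $R_w$ or $\Pi$.
\end{itemize}
Each vertex is internal in $12$ paths, and the factor $\frac12$ then produces the constant $6$.
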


\begin{proof}
Set
\[
  \mathcal C_{\mathrm{cov}}
  :=\Pi^\perp\Big(\sum_{p\in\PathFam}E_p^\perp\Big)\Pi^\perp.
\]
For $X\in\operatorname{ran}\Pi^\perp$, the quadratic form of this operator is
$\sum_{p\in\PathFam}\lVert E_p^\perp X\rVert_{\rho_\beta}^2$.
We decompose the observable space into two types of
Peter--Weyl blocks; an arbitrary observable may have components of both
types.  On a global Peter--Weyl block
\[
  P_{\boldsymbol\lambda^{\mathrm{ket}}}
  \mathcal B(\mathcal H_{\cG})
  P_{\boldsymbol\lambda^{\mathrm{bra}}},
\]
we define the terms \emph{matching} and \emph{mismatched} by
\[
  \begin{aligned}
  \text{matching:}\quad
  &\boldsymbol\lambda^{\mathrm{ket}}
    =\boldsymbol\lambda^{\mathrm{bra}}
  &&\Longleftrightarrow\quad
    \lambda_e^{\mathrm{ket}}=\lambda_e^{\mathrm{bra}}
    \quad\text{for every }e,\\
  \text{mismatched:}\quad
  &\boldsymbol\lambda^{\mathrm{ket}}
    \ne\boldsymbol\lambda^{\mathrm{bra}}
  &&\Longleftrightarrow\quad
    \lambda_e^{\mathrm{ket}}\ne\lambda_e^{\mathrm{bra}}
    \quad\text{for some }e.
  \end{aligned}
\]
In the matching case, write the common tuple as $\boldsymbol\lambda$.
Fixing it does not force an observable
$X\in P_{\boldsymbol\lambda}\mathcal B(\mathcal H_{\cG})
P_{\boldsymbol\lambda}$ to lie in $\mathbb C P_{\boldsymbol\lambda}$: it may
still be non-scalar in a matrix or fusion factor associated with a vertex.  Thus
``matching'' does not mean label-only.  Once we restrict to the orthogonal
complement of the interface, $\operatorname{ran}\Pi^\perp$, as we do below, its matching part consists
precisely of the non-label directions inside the diagonal Peter--Weyl
blocks.  

To separate the matching and mismatched parts, introduce the global
Peter--Weyl pinching
\[
  \mathcal D(X)
  :=\sum_{\boldsymbol\lambda\in\widehat\Gamma^{E(\cG)}}
    P_{\boldsymbol\lambda}XP_{\boldsymbol\lambda}.
\]
The map $\mathcal D$ keeps the diagonal Peter--Weyl blocks and deletes all
mismatched ones.  Because every $P_{\boldsymbol\lambda}$ commutes with
$\rho_\beta$, it is a KMS-orthogonal projection.  Moreover, $\mathcal D$
commutes with $\Pi$, which extracts the block-scalar label part from the
matching space.  Hence
\begin{equation}
  \label{eq:matching-mismatched-transverse-projections}
  \Pi^\perp_{=}:=\mathcal D\Pi^\perp,
  \qquad
  \Pi^\perp_{\neq}:=(I-\mathcal D)\Pi^\perp
\end{equation}
are the KMS-orthogonal projections onto the matching and mismatched classes
inside $\Pi^\perp$.  In particular,
\[
  \operatorname{ran}\Pi^\perp_{=}
  =\operatorname{ran}\mathcal D\cap\operatorname{ran}\Pi^\perp
\]
is exactly the diagonal non-label sector, and
\begin{equation}
  \label{eq:diag-rect-reducing-split}
  \Pi^\perp=\Pi^\perp_{=}+\Pi^\perp_{\neq}.
\end{equation}

To check compatibility with the local fixed spaces, factor this global map
into the edgewise pinchings
\[
  \mathcal D_e(X)
  :=\sum_{\lambda_e\in\widehat\Gamma}
    P_{\lambda_e}XP_{\lambda_e},
  \qquad
  \mathcal D=\prod_{e\in E(\cG)}\mathcal D_e.
\]
The only consequence needed below is the following compatibility statement.

\begin{claim}[Pinching compatibility]
\label{clm:edgewise-pinching-compatibility}
For every $p\in\PathFam$ and $e\in E(\cG)$,
\[
  [E_p,\mathcal D_e]=0,
  \qquad
  [E_p,\mathcal D]=0.
\]
\end{claim}

\begin{proof}[Proof of the claim]
The ordered blocks are KMS-orthogonal.  By
\eqref{eq:ordered-block-kp-direct-sum},
\[
  \ker K_p
  =\bigoplus_{\boldsymbol\theta}\ker K_{p,\boldsymbol\theta}.
\]
Recall from \eqref{eq:exact-ordered-path-kernel-with-spectator} that, on each
ordered block,
\[
  \ker K_{p,\boldsymbol\theta}
  =I_p\otimes\left(
    \bigotimes_{v\in\{v_L,v_1,v_2,v_R\}}
      \mathcal I_{v,\boldsymbol\theta}
    \right)
    \otimes\mathcal X_{p,\boldsymbol\theta}^{\mathrm{ext}}.
\]
Pinching an
active edge fixes this identity.  On an outer edge it preserves the local
intertwiner condition, and on an exterior edge it acts within the arbitrary
spectator factor.  Taking the direct sum over the ordered blocks gives
\[
  \mathcal D_e(\ker K_p)\subseteq\ker K_p
  \qquad(e\in E(\cG),\ p\in\PathFam).
\]
Since $\mathcal D_e$ is an orthogonal projection, this invariance implies
$[E_p,\mathcal D_e]=0$.  Since the edgewise pinchings commute and their
product is $\mathcal D$, it also implies $[E_p,\mathcal D]=0$.
\end{proof}

Since $[\mathcal D,\Pi^\perp]=0$, \Cref{clm:edgewise-pinching-compatibility}
implies
\begin{equation}
  \label{eq:cover-commutes-with-matching-split}
  [\mathcal C_{\mathrm{cov}},\Pi^\perp_{=}]=0,
  \qquad
  [\mathcal C_{\mathrm{cov}},\Pi^\perp_{\neq}]=0.
\end{equation}
Thus the cover operator is block diagonal with respect to
\eqref{eq:diag-rect-reducing-split}.

The two sectors have different local witnesses.  A matching observable in
$\operatorname{ran}\Pi^\perp$ is non-scalar on some factor $w$ associated with a vertex, whereas a mismatched
block contains an edge $e$ with unequal ket and bra labels.  We fix such a
$w$ or $e$ and count the centered paths that detect it; the uniform
degree-three counts are shown in
\Cref{fig:transverse-cover-incidence-counts}.

\begin{figure}[H]
  \centering
  \begin{tikzpicture}[
    vertex/.style={circle, draw=figureGeometry, fill=figureGeometry!7,
      minimum size=5mm, inner sep=0pt},
    witness vertex/.style={vertex, draw=figureCoupling, fill=figureCoupling!8},
    outer vertex/.style={circle, draw=black!42, fill=white,
      minimum size=3.4mm, inner sep=0pt},
    active edge/.style={figureGeometry, line width=1.15pt},
    middle edge/.style={figureGeometry, line width=1.7pt},
    unused edge/.style={black!42, dashed, line width=0.6pt},
    mismatch edge/.style={figureCoupling, dashed, line width=1.15pt},
    panel/.style={draw=black!30, rounded corners=2pt, inner sep=7pt},
    every node/.style={font=\small}
  ]
    \begin{scope}[xshift=-4.15cm]
      \node[witness vertex,
        label={[font=\scriptsize, text=figureCoupling]below:$w$}]
        (vA) at (0,0) {};
      \node[vertex] (wA) at (1.45,0) {};
      \node[outer vertex] (lA) at (-1.05,-0.8) {};
      \node[outer vertex] (oA) at (-1.05,0.8) {};
      \node[outer vertex] (zA) at (2.55,0.75) {};
      \node[outer vertex] (qA) at (2.55,-0.75) {};
      \draw[active edge] (lA) -- (vA);
      \draw[middle edge] (vA) -- node[above, font=\scriptsize] {middle} (wA);
      \draw[active edge] (wA) -- (zA);
      \draw[unused edge] (vA) -- (oA);
      \draw[unused edge] (wA) -- (qA);
      \node[font=\scriptsize, align=center] (noteA) at (0.75,-1.25)
        {$w$ is an internal path vertex\\
         $3\times2\times2=12$ paths};
      \node[panel, fit=(lA)(oA)(zA)(qA)(noteA),
        label={[font=\small]above:(a) matching-label block}] {};
    \end{scope}

    \begin{scope}[xshift=3.35cm]
      \node[vertex, label={[font=\scriptsize]below:$v$}] (vB) at (0,0) {};
      \node[vertex] (wB) at (1.45,0) {};
      \node[outer vertex] (lB) at (-1.05,-0.8) {};
      \node[outer vertex] (eB) at (-1.05,0.8) {};
      \node[outer vertex] (zB) at (2.55,0.75) {};
      \node[outer vertex] (qB) at (2.55,-0.75) {};
      \draw[active edge] (lB) -- (vB);
      \draw[middle edge] (vB) -- node[above, font=\scriptsize] {middle} (wB);
      \draw[active edge] (wB) -- (zB);
      \draw[mismatch edge] (vB) --
        node[pos=.55, below left, yshift=-1pt,
          text=figureCoupling, font=\scriptsize] (mismatchLabel)
        {mismatch $e$} (eB);
      \draw[unused edge] (wB) -- (qB);
      \node[font=\scriptsize, align=center] (noteB) at (0.75,-1.25)
        {$e$ is the unique outer leg at $v$\\
         $2\times2\times2=8$ paths};
      \node[panel, fit=(lB)(eB)(zB)(qB)(noteB)(mismatchLabel),
        label={[font=\small]above:(b) mismatched-label block}] {};
    \end{scope}
  \end{tikzpicture}
  \caption{The two incidence counts used below.  Blue edges form the
  centered three-edge path and the thick blue edge is its middle edge; gray
  dashed edges are outside that path.  In the matching-label case, the fixed
  vertex is internal in $3\cdot2\cdot2=12$ paths.  In the mismatched-label
  case, the fixed mismatched edge is the unique outer edge at an internal
  vertex in $2\cdot2\cdot2=8$ paths.}
  \label{fig:transverse-cover-incidence-counts}
\end{figure}

\medskip\noindent\textbf{Matching Labels: Detecting a Non-scalar Vertex
Factor.}\par\smallskip
Fix a complete global label configuration $\boldsymbol\lambda$.  Its
fixed-label Hilbert block, identified with
$\cH_{\boldsymbol\lambda}=P_{\boldsymbol\lambda}\mathcal H_{\cG}$, regroups as
\[
  \cH_{\boldsymbol\lambda}
  \overset{\eqref{eq:edge-to-vertex-regrouping}}{\cong}
  \bigotimes_{v\in V(\cG)}\mathcal V_v(\boldsymbol\lambda).
\]
In finite dimensions, the operator space on a tensor product is the
tensor product of the individual operator spaces.  Applying this to the
vertex decomposition above gives the matching operator block
\[
  P_{\boldsymbol\lambda}\mathcal B(\mathcal H_{\cG})
    P_{\boldsymbol\lambda}
  \cong\mathcal B(P_{\boldsymbol\lambda}\mathcal H_{\cG})
  \cong
  \bigotimes_{v\in V(\cG)}
    \mathcal B\bigl(\mathcal V_v(\boldsymbol\lambda)\bigr).
\]
As in the product formula
\eqref{eq:global-label-distribution}, the conditional Gibbs density has
local factors proportional to $I+tA_v$.  To pass from the exact kernels of
$K_{p,\boldsymbol{\theta}}$ to the fixed-space cover on $\operatorname{ran}\Pi^\perp$, we use the local analogue
of the mean-zero projection $X\mapsto X-\operatorname{Tr}(\rho X)I$.  For each
vertex $w\in V(\cG)$, define
\[
  R^{\mathrm{loc}}_{w,\boldsymbol\lambda}(X_w)
  :=X_w-
  \frac{\operatorname{Tr}_{\mathcal V_w(\boldsymbol\lambda)}
    \bigl((I+tA_w)X_w\bigr)}
       {\operatorname{Tr}_{\mathcal V_w(\boldsymbol\lambda)}(I+tA_w)}
  I_{\mathcal V_w(\boldsymbol\lambda)},
  \qquad
  X_w\in\mathcal B\bigl(\mathcal V_w(\boldsymbol\lambda)\bigr),
\]
where $A_w$ is restricted to $\mathcal V_w(\boldsymbol\lambda)$.
Thus $R^{\mathrm{loc}}_{w,\boldsymbol\lambda}$ removes the local scalar
component and detects non-label matrix directions at $w$.  It is the local
KMS-orthogonal projection onto the complement of the scalar line.  Extend it
to the full matching block by leaving every other vertex factor unchanged:
\[
  R_{w,\boldsymbol\lambda}
  :=R^{\mathrm{loc}}_{w,\boldsymbol\lambda}
    \otimes
    \bigotimes_{v\ne w}
      \operatorname{id}_{\mathcal B(\mathcal V_v(\boldsymbol\lambda))}.
\]
Finally, define a path-independent projection on the full GNS space by
\[
  R_w:=\bigoplus_{\boldsymbol\lambda\in\widehat\Gamma^E}
       R_{w,\boldsymbol\lambda}
\]
on the matching space, and extend it by zero on the mismatched space.  Since
$R_wI=0$, it restricts to the global mean-zero space $L^2_0(\rho_\beta)$ used
below.

\begin{claim}[Matching-sector vertex detectors]
\label{clm:matching-sector-vertex-detectors}
The projections $R_w$ commute and satisfy
\[
  \sum_{w\in V(\cG)}R_w\succeq\Pi^\perp_{=},
  \qquad
  E_p^\perp\succeq
  \frac12\bigl(R_{v_1(p)}+R_{v_2(p)}\bigr)
  \quad(p\in\PathFam),
\]
where $v_1(p)$ and $v_2(p)$ are the two internal vertices of $p$.
\end{claim}

\begin{proof}
Recall from \eqref{eq:matching-mismatched-transverse-projections} that
$\Pi^\perp_{=}=\mathcal D\Pi^\perp$.  Since
$\Pi^\perp=I-\Pi$ on $L^2_0(\rho_\beta)$ and
$\mathcal D\Pi=\Pi$,
\[
  \Pi^\perp_{=}=\mathcal D-\Pi.
\]
$R_w$ commute because they act on distinct vertex factors.  A matching
observable $X$ is label-only precisely when each fixed-label block is scalar:
\[
  X\in\mathcal M
  \quad\Longleftrightarrow\quad
  P_{\boldsymbol\lambda}XP_{\boldsymbol\lambda}
    =f(\boldsymbol\lambda)P_{\boldsymbol\lambda}
  \ \text{for every }\boldsymbol\lambda
  \quad\Longleftrightarrow\quad
  R_wX=0
  \ \text{for every }w.
\]
Because $R_w$ commute, $\prod_w(I-R_w)$ is the projection onto their
common kernel.  Concretely, $\mathcal D$ first removes the mismatched blocks.
On each remaining fixed-label block, applying $\prod_w(I-R_w)$
replaces every vertex factor by its Gibbs-weighted scalar part,
leaving a scalar multiple of $P_{\boldsymbol\lambda}$.
Thus the resulting observable depends only on the labels.
Since $\mathcal D$ and the $I-R_w$ are commuting KMS-orthogonal
projections and preserve the mean-zero space, their product is precisely
the orthogonal projection $\Pi$ onto $\mathcal M\cap L^2_0(\rho_\beta)$:
\[
  \prod_{w\in V(\cG)}(I-R_w)\mathcal D=\Pi.
\]
Simultaneous diagonalization of $R_w$, together with
$1-\prod_w(1-r_w)\le\sum_w r_w$ for $r_w\in\{0,1\}$, now gives
\[
  \Pi^\perp_{=}
  =\left[I-\prod_{w\in V(\cG)}(I-R_w)\right]\mathcal D
  \preceq\sum_{w\in V(\cG)}R_w.
\]

For a centered path $p$, the exact kernel formula
\eqref{eq:exact-ordered-path-kernel} contains the identity $I_p$ on the
three active edges.  At each internal vertex $v_i(p)$, the single outer
factor has the same ket and bra label $\mu_i$ on a matching component.
Schur's lemma, in the form \eqref{eq:schur-intertwiner-dimension}, gives
\[
  \operatorname{Hom}_\Gamma(V_{\mu_i},V_{\mu_i})
  =\mathbb C I_{V_{\mu_i}},
  \qquad i=1,2.
\]
Thus both the active-edge factors and the outer factor at each internal
vertex are proportional to the identity.  On each matching component,
every fixed observable is therefore scalar on the full vertex factor
at $v_1(p)$ and $v_2(p)$.
Since $R_w$ removes the scalar component at $w$ (and is zero on
mismatched components), it annihilates the range of $E_p$ at these vertices:
\[
  R_{v_1(p)}E_p=R_{v_2(p)}E_p=0.
\]
Equivalently, $\operatorname{ran}E_p$ lies in
$\ker R_{v_1(p)}\cap\ker R_{v_2(p)}$.  Since $R_w$ commute, the projection
onto this intersection is $(I-R_{v_1(p)})(I-R_{v_2(p)})$, and hence
\[
  E_p\preceq(I-R_{v_1(p)})(I-R_{v_2(p)}).
\]
Taking complements gives
\[
  E_p^\perp
  \succeq R_{v_1(p)}+R_{v_2(p)}-R_{v_1(p)}R_{v_2(p)}
  \succeq\frac12\bigl(R_{v_1(p)}+R_{v_2(p)}\bigr),
\]
where the last step uses
$R_{v_1(p)}+R_{v_2(p)}-2R_{v_1(p)}R_{v_2(p)}
=(R_{v_1(p)}-R_{v_2(p)})^2\succeq0$.
\end{proof}

Every graph vertex is internal in twelve centered paths, as illustrated in
panel~(a) of \Cref{fig:transverse-cover-incidence-counts}.  Summing
the second inequality in \Cref{clm:matching-sector-vertex-detectors} over
$p$ and then using the first yields
\[
  \sum_{p\in\PathFam}E_p^\perp
  \succeq6\sum_{w\in V(\cG)}R_w
  \succeq6\Pi^\perp_{=}.
\]
Consequently,
\begin{equation}
  \label{eq:diagonal-defect-cover}
  \Pi^\perp_{=}\mathcal C_{\mathrm{cov}}
  \Pi^\perp_{=}
  \succeq6\Pi^\perp_{=}.
\end{equation}

\medskip\noindent\textbf{Mismatched Labels: Detecting an Edge-label
Mismatch.}\par\smallskip
The commuting projections $\mathcal D_e$ refine the mismatched space into
orthogonal sectors indexed by the nonempty set of edges on which the ket and
bra labels differ.  By
\Cref{clm:edgewise-pinching-compatibility}, $E_p$ commutes with the
projection onto each such sector.  Fix one mismatch set, choose $e$ in that
set, and let $Q$ be the projection onto the resulting mismatch sector.  Also
write
\[
  \PathFam(e)
  :=\left\{
    p\in\PathFam:
    \text{$e$ is the unique outer edge at an internal vertex of $p$}
  \right\}.
\]
For every $p\in\PathFam(e)$, the label of $e$ is among the fixed outer data
and is mismatched.  Schur's lemma in
\eqref{eq:schur-intertwiner-dimension} therefore makes the corresponding internal
factor $\mathcal I_{v,\boldsymbol{\theta}}$ in
\eqref{eq:exact-ordered-path-kernel-with-spectator} vanish.  On this mismatch
sector,
\[
  E_pQ=0,
  \qquad
  E_p^\perp Q=Q
  \qquad (p\in\PathFam(e)).
\]
The corresponding count, illustrated in panel~(b) of
\Cref{fig:transverse-cover-incidence-counts}, is
\[
  |\PathFam(e)|
  =\underbrace{2}_{\text{endpoint of $e$,}}
   \underbrace{2}_{\text{middle edge,}}
  \underbrace{2}_{\text{continuation}}
  =8.
\]
Because $Q\preceq\Pi^\perp_{\neq}\preceq\Pi^\perp$, the definition
$\mathcal C_{\mathrm{cov}}
=\Pi^\perp(\sum_{p\in\PathFam}E_p^\perp)\Pi^\perp$ gives
\[
  Q\mathcal C_{\mathrm{cov}}Q
  \succeq
  \sum_{p\in\PathFam(e)}QE_p^\perp Q
  =|\PathFam(e)|Q
  =8Q.
\]
The same bound holds on every nonempty mismatch sector.  Taking their
orthogonal direct sum therefore yields
\begin{equation}
  \label{eq:rectangular-defect-cover}
  \Pi^\perp_{\neq}\mathcal C_{\mathrm{cov}}
  \Pi^\perp_{\neq}
  \succeq8\Pi^\perp_{\neq}.
\end{equation}

\medskip\noindent\textbf{Orthogonal Recombination.}\par\smallskip
By \eqref{eq:cover-commutes-with-matching-split}, the cover operator is block
diagonal across the matching/mismatched split.  For $X\in\operatorname{ran}\Pi^\perp$,
set
$X^{=}:=\Pi^\perp_{=}X$ and
$X^{\neq}:=\Pi^\perp_{\neq}X$.  Then
$X=X^{=}+X^{\neq}$ is the corresponding orthogonal decomposition, and
\begin{align*}
  \langle X,\mathcal C_{\mathrm{cov}}X\rangle
  &={}
  \langle X^{=},
    \mathcal C_{\mathrm{cov}}X^{=}\rangle
  +\langle X^{\neq},
    \mathcal C_{\mathrm{cov}}X^{\neq}\rangle \\
  &\ge 6\|X^{=}\|^2+8\|X^{\neq}\|^2
   \ge 6\|X\|^2.
\end{align*}
In particular, the two blockwise bounds combine by taking their minimum;
no additional loss is incurred in passing from the two orthogonal classes to
the full complementary subspace $\operatorname{ran}\Pi^\perp$.  Equivalently,
\[
  \Pi^\perp\Big(\sum_{p\in\PathFam}E_p^\perp\Big)\Pi^\perp\succeq6\Pi^\perp.
\]
This is \eqref{eq:common-cubic-defect-frame}.
\end{proof}

Thus an individual path may have nonzero fixed observables orthogonal to the
interface, but no nonzero observable in $\operatorname{ran}\Pi^\perp$ can
remain fixed for all centered paths.  This
is why the proof needs the overlap frame rather than a gap for each $K_p$ on
the whole local subspace orthogonal to the interface.

Thus \cref{lem:uniform-ordered-path-quotient-gap} verifies Model
Condition~(ii) of \cref{thm:abstract-two-axis-gap-assembly}, while
\cref{lem:cubic-defect-frame} verifies Geometric Condition~\textnormal{(G1)},
with
\begin{equation}
  \label{eq:common-fibre-constants}
  g_{\mathrm{loc}}=g_{\Gamma,\mathrm{loc},*},
  \qquad
  \kappa_{\mathrm{cov}}=6.
\end{equation}

\subsection{The Schur Condition: Middle-Edge Decoupling}
\label{sec:common-schur-certificate}

This subsection verifies Condition~(iii) of \cref{thm:abstract-two-axis-gap-assembly}.  We
first show that the label projection of the centered path fixed space is
invisible to the middle-edge update, and then combine this shielding property
with the local quotient gap to obtain the required shorted-form bound.  The
only representation-theoretic output needed in the second step is the
following shielding inclusion.  Only
\eqref{eq:middle-edge-shielding} is used in the later arguments and the proof of the following proposition may be skipped on a first reading.

Recall that $E_p$ projects onto the mean-zero fixed space of the three-edge
path operator, while $\Pi$ projects onto mean-zero observables depending
only on the irrep labels.  Thus $\operatorname{ran}(\Pi E_p)$ is the
irrep-label component of the path fixed space.

\begin{proposition}[Middle-edge decoupling]
\label{prop:middle-edge-shielding}
For every fixed finite group and every centered path,
\begin{equation}
  \label{eq:middle-edge-shielding}
  \operatorname{ran}(\Pi E_p)
  \subseteq\ker A_{\operatorname{mid}(p)}.
\end{equation}
\end{proposition}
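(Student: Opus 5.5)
The plan is to compute $\Pi E_pX$ explicitly for $X$ in the mean-zero space, show that the resulting label function does not depend on the middle-edge label $\lambda_{e_0}$, and then show that every such label function is annihilated by $K_{e_0}$. Since $A_{\operatorname{mid}(p)}=\Pi K_{e_0}\Pi\succeq0$, the inclusion \eqref{eq:middle-edge-shielding} then follows: a label observable $f$ with $K_{e_0}f=0$ satisfies $\langle f,A_{e_0}f\rangle=0$, and therefore $A_{e_0}f=0$.

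\emph{Step 1 (structure of $\operatorname{ran}E_p$).} Let $Y:=E_pX\in\ker K_p\cap L^2_0(\rho_\beta)$. By \eqref{eq:ordered-block-kp-direct-sum} and \cref{lem:exact-ordered-path-kernel}, $Y$ is a finite sum of ordered-block components. Each component has the form $I_p\otimes Y_{v_L}\otimes Y_{v_1}\otimes Y_{v_2}\otimes Y_{v_R}\otimes Y_{\mathrm{ext}}$, with $Y_v\in\mathcal I_{v,\boldsymbol\theta}$. Since $\Pi=\mathcal E_{\mathrm{lab}}$ sees only the matching diagonal blocks $P_{\boldsymbol\lambda}(\cdot)P_{\boldsymbol\lambda}$, and $\mathcal D$ commutes with $E_p$ by \cref{clm:edgewise-pinching-compatibility}, it suffices to treat matching components. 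On these, Schur's lemma \eqref{eq:schur-intertwiner-dimension} makes $Y_{v_1}$ and $Y_{v_2}$ scalar multiples $c_1,c_2$ of the identity. These scalars depend only on the fixed outer labels $\mu_1,\mu_2$ and the chosen component. Together with the identity $I_p$ on the active edges, this means $Y$ acts as a scalar on the full vertex factors $\mathcal V_{v_1}(\boldsymbol\lambda)$ and $\mathcal V_{v_2}(\boldsymbol\lambda)$.

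\emph{Step 2 (the label coefficient is independent of $\lambda_{e_0}$).} By \eqref{eq:label-coarse-graining-formula}, $f(\boldsymbol\lambda)=\operatorname{Tr}(\rho_\beta P_{\boldsymbol\lambda}Y)/\operatorname{Tr}(\rho_\beta P_{\boldsymbol\lambda})$. On $\cH_{\boldsymbol\lambda}\cong\bigotimes_v\mathcal V_v(\boldsymbol\lambda)$ the Gibbs operator is $\prod_v(I+tA_v)$, as in \eqref{eq:global-label-distribution}, and $Y$ is a product over vertex factors. The ratio therefore factorizes into local Gibbs expectations, one per vertex. At $v_1$ and $v_2$ the local factor is just $c_1$ or $c_2$: the local traces $z_t$, which are the only places $\lambda_{e_0}$ enters, cancel between numerator and denominator. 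The factors at $v_L$ and $v_R$ involve $\lambda_{e_-}$, $\lambda_{e_+}$ and the outer labels, but not $\lambda_{e_0}$, since $e_0$ is not incident to $v_L$ or $v_R$. Exterior vertices do not touch $e_0$ either. Summing over components by linearity, $f(\boldsymbol\lambda)$ is independent of $\lambda_{e_0}$. Subtracting the Gibbs mean preserves this property, and this is exactly where the two flank edges shield the middle edge.

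\emph{Step 3 ($f\in\ker K_{e_0}$).} Write $f=I_{e_0}\otimes\tilde f$, where $\tilde f$ is a label function of the other edges. Every Bohr component of $L_g^{(e_0)}$, $R_g^{(e_0)}$ and $Q_h^{(e_0)}$ is built from the coupling and the spectral projectors of $-J(A_{v_1}+A_{v_2})$ (\cref{lem:vertex-commutativity-locality}). These preserve every edge label other than $\lambda_{e_0}$, so they commute with $\tilde f$, and hence with $f$. The converse direction of \cref{lem:davies-fixed-point-commutant} then gives $K_{e_0}f=0$, which proves the claim.

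I expect the main obstacle to be Step 2: making the vertex-wise factorization of $\operatorname{Tr}(\rho_\beta P_{\boldsymbol\lambda}Y)$ rigorous for a general sum of ordered-block components. In particular, one must check that the endpoint intertwiners $Y_{v_L}$ and $Y_{v_R}$, and the flank identities, produce factors with no dependence on $\lambda_{e_0}$ after normalization.
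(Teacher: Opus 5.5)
Your proposal is correct and follows essentially the same route as the paper. The paper also works on the diagonal blocks $P_{\boldsymbol\lambda}XP_{\boldsymbol\lambda}$ and uses the exact path kernel together with Schur's lemma to make the internal-vertex factors scalar. It then cancels the common factor $z_t(\lambda_{e_-}^*,\nu,\mu_1)\,z_t(\nu^*,\lambda_{e_+},\mu_2)$ between the numerator and denominator of $\mathcal E_{\mathrm{lab}}$, and concludes $K_{e_0}f=0$ from the commutant characterization of Davies fixed points. The only difference is presentational: you write the cancellation as a product of per-vertex Gibbs expectations for each simple-tensor component, while the paper collects the $\nu$-independent traces into a single constant.
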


\begin{proof}
The first step proves cancellation of the middle-label dependence under
$\mathcal E_{\mathrm{lab}}$; the second converts that cancellation into
fixedness for the physical middle-edge update.  We begin with the
representation-theoretic notation used in the cancellation step.

The cancellation argument uses only Peter--Weyl blocks diagonal in their
ket/bra labels.  By \eqref{eq:endpoint-outer-actions}, matching labels
canonically identify the endpoint actions as
\[
  U_L^{\mathrm{bra}}=U_L^{\mathrm{ket}}=:U_L,
  \qquad
  U_R^{\mathrm{bra}}=U_R^{\mathrm{ket}}=:U_R.
\]
We likewise write $W_{v_L}$ and $W_{v_R}$ for their common carrier spaces.
Thus $U_L$ and $U_R$ are the tensor-product representations carried jointly
by the two frozen outer legs at the left and right endpoints.

Orient the central path from left to right and write its labels as
$\lambda_{e_-},\lambda_{e_0},\lambda_{e_+}$.  Let $\mu_1,\mu_2$ be the single outer
irreps at the two internal vertices.  The geometric placement of these
labels and of the endpoint representations is shown in
\Cref{fig:path-channel-labels}.

\begin{figure}[H]
  \centering
  \begin{tikzpicture}[
    path vertex/.style={circle, draw=figureGeometry, fill=figureGeometry!7,
      minimum size=5.5mm, inner sep=0pt},
    outer vertex/.style={circle, draw=black!45, fill=white,
      minimum size=3.8mm, inner sep=0pt},
    path edge/.style={-{Latex[length=2mm]}, figureGeometry,
      line width=1.15pt},
    outer edge/.style={black!55, line width=0.65pt},
    endpoint representation/.style={draw=black!38, dashed, rounded corners=2pt,
      inner sep=4pt},
    every node/.style={font=\small}
  ]
    \node[path vertex] (v0) at (-3,0) {};
    \node[path vertex] (v1) at (-1,0) {};
    \node[path vertex] (v2) at (1,0) {};
    \node[path vertex] (v3) at (3,0) {};

    \draw[path edge] (v0) -- node[above] {$\lambda_{e_-}$} (v1);
    \draw[path edge] (v1) -- node[above] {$\lambda_{e_0}$} (v2);
    \draw[path edge] (v2) -- node[above] {$\lambda_{e_+}$} (v3);

    \node[outer vertex] (mu1) at (-1,1.25) {};
    \node[outer vertex] (mu2) at (1,1.25) {};
    \draw[outer edge] (v1) -- node[right] {$\mu_1$} (mu1);
    \draw[outer edge] (v2) -- node[right] {$\mu_2$} (mu2);

    \node[outer vertex] (la1) at (-4.15,0.72) {};
    \node[outer vertex] (la2) at (-4.15,-0.72) {};
    \draw[outer edge] (v0) -- (la1);
    \draw[outer edge] (v0) -- (la2);
    \node[endpoint representation, fit=(la1)(la2),
      label={[font=\small]left:$U_L$}] {};

    \node[outer vertex] (rb1) at (4.15,0.72) {};
    \node[outer vertex] (rb2) at (4.15,-0.72) {};
    \draw[outer edge] (v3) -- (rb1);
    \draw[outer edge] (v3) -- (rb2);
    \node[endpoint representation, fit=(rb1)(rb2),
      label={[font=\small]right:$U_R$}] {};
  \end{tikzpicture}
  \caption{Representation labels around one centered three-edge path.  The
  blue arrows are the oriented path labels
  $\lambda_{e_-},\lambda_{e_0},\lambda_{e_+}$.  Each internal vertex has one outer irrep
  $\mu_i$, whereas the two frozen outer legs at the left and right endpoints
  jointly carry $U_L$ and $U_R$, respectively.  Their later isotypic
  decompositions and possible multiplicities are suppressed in the schematic.
  Orientation-dependent duals are displayed explicitly in the formulas below
  and suppressed in this schematic.}
  \label{fig:path-channel-labels}
\end{figure}

Write $d_\nu:=\dim V_\nu$ for the dimension of an irrep and recall that
$t=e^{\beta J}-1$.  For a candidate middle-edge label $\nu$, put
\begin{align*}
  m_1(\nu)&:=\dim\operatorname{Inv}
    (V_{\lambda_{e_-}}^*\otimes V_\nu\otimes V_{\mu_1}),
  &m_2(\nu)&:=\dim\operatorname{Inv}
    (V_\nu^*\otimes V_{\lambda_{e_+}}\otimes V_{\mu_2}).
\end{align*}
They are the fusion multiplicities at the two internal vertices in the sense
of \cref{sec:prelim-invariant-spaces-intertwiners}, and are the only
representation-theoretic data needed to track the dependence on the middle
label.

\medskip\noindent\textbf{Cancellation of the Middle-label Factor.}\par\smallskip
We first show that, for every $X\in\ker K_p$, the label observable
$\mathcal E_{\mathrm{lab}}X$ is independent of the middle-edge irrep label
$\lambda_{e_0}$.
Fix $\boldsymbol{\lambda}_{\setminus e_0}$.  For a candidate value for the middle edge $\nu\in\widehat\Gamma$, let
$\boldsymbol{\lambda}^{(\nu)}$ be its completion with
$\lambda_{e_0}^{(\nu)}=\nu$ and
$\lambda_e^{(\nu)}=\lambda_e$ for $e\ne e_0$.  Because
$P_{\boldsymbol{\lambda}^{(\nu)}}$ commutes with the
Gibbs state, \eqref{eq:label-coarse-graining-formula} gives, at the completion
$\boldsymbol{\lambda}^{(\nu)}$,
\begin{equation}
  \label{eq:label-kms-projection-formula}
  (\mathcal E_{\mathrm{lab}}X)(\boldsymbol{\lambda}^{(\nu)})
  =
  \frac{\operatorname{Tr}
    (\rho_\beta P_{\boldsymbol{\lambda}^{(\nu)}}X)}
       {\operatorname{Tr}
    (\rho_\beta P_{\boldsymbol{\lambda}^{(\nu)}})}.
\end{equation}
The point is that the numerator and denominator in
\eqref{eq:label-kms-projection-formula} contain the same positive factor
$w_{\mathrm{mid}}(\nu)$ from the two internal vertices.  Explicitly,
\[
  w_{\mathrm{mid}}(\nu)
  :=z_t(\lambda_{e_-}^*,\nu,\mu_1)
    z_t(\nu^*,\lambda_{e_+},\mu_2),
\]
the product of their two local vertex weights from
\eqref{eq:vertex-label-weight}.  Expanding those two weights gives
\begin{align*}
  w_{\mathrm{mid}}(\nu)
  ={}&d_{\lambda_{e_-}}d_\nu^2d_{\lambda_{e_+}}d_{\mu_1}d_{\mu_2}
  +t\left(
      d_{\lambda_{e_-}}d_\nu d_{\mu_1}m_2(\nu)
      +d_\nu d_{\lambda_{e_+}}d_{\mu_2}m_1(\nu)
    \right)
    +t^2m_1(\nu)m_2(\nu).
\end{align*}
The three terms arise by selecting neither, exactly one, or both of the two
internal vertex projectors in
$(I+tA_{v_1})(I+tA_{v_2})$.  The first term is strictly positive, so
$w_{\mathrm{mid}}(\nu)>0$.

Cancellation reduces to finding quantities
$W_X$ and $W_I>0$, both independent of $\nu$, for which
\begin{equation}
  \label{eq:middle-label-trace-factorization-targets}
  \begin{aligned}
  &
    \operatorname{Tr}
      (\rho_\beta P_{\boldsymbol{\lambda}^{(\nu)}}X)
    =Z_\beta^{-1}w_{\mathrm{mid}}(\nu)W_X
  ,\\[1ex]
  &
    \operatorname{Tr}
      (\rho_\beta P_{\boldsymbol{\lambda}^{(\nu)}})
    =Z_\beta^{-1}w_{\mathrm{mid}}(\nu)W_I
  .
  \end{aligned}
\end{equation}
We will show these factorizations.

\smallskip\noindent\textbf{Denominator Factorization.}\par\smallskip
Factorization for the denominator is straightforward.
Using the local labels at each vertex, set
\[
  W_I:=\prod_{v\notin\{v_1,v_2\}}
    z_t(\lambda_{v,1},\lambda_{v,2},\lambda_{v,3})>0.
\]
The second line of \eqref{eq:middle-label-trace-factorization-targets} follows
directly from \eqref{eq:global-label-distribution}.  Only the two omitted
vertex weights involve $\nu$, so $W_I$ is independent of $\nu$.

\Needspace{5\baselineskip}
\smallskip\noindent\textbf{Numerator Factorization.}\par\smallskip
It remains to prove the numerator identity in
\eqref{eq:middle-label-trace-factorization-targets}.
Only matching blocks contribute to
$\operatorname{Tr}(\rho_\beta P_{\boldsymbol\lambda^{(\nu)}}X)$.
On such a block, the exact kernel formula
\eqref{eq:exact-ordered-path-kernel-with-spectator} and Schur's lemma at
the two internal vertices give a finite expansion
\[
  P_{\boldsymbol\lambda^{(\nu)}}XP_{\boldsymbol\lambda^{(\nu)}}
  =\sum_j I_p^{(\nu)}\otimes Y_j^L
    \otimes I_{V_{\mu_1}}\otimes I_{V_{\mu_2}}
    \otimes Y_j^R\otimes Z_j,
\]
where $Y_j^L\in\operatorname{Hom}_\Gamma(U_L,U_L)$,
$Y_j^R\in\operatorname{Hom}_\Gamma(U_R,U_R)$, and $Z_j$ acts on the
exterior factor; coefficients are absorbed into these operators.
Here $I_p^{(\nu)}$ is the identity on the
$(\lambda_{e_-},\nu,\lambda_{e_+})$ block of the three active edges.
The operators $Y_j^L,Y_j^R,Z_j$ can be chosen independently of $\nu$:
before selecting the active-edge labels, $X$ is the identity on the
full active-edge factor.

For each vertex, set
$G_v^{(\nu)}:=I+tA_v|_{\mathcal V_v(\boldsymbol\lambda^{(\nu)})}$.
Only $G_{v_1}^{(\nu)}$ and $G_{v_2}^{(\nu)}$ depend on $\nu$; write
$G_v$ for the other factors and
$G_{\mathrm{ext}}:=\bigotimes_{v\notin\{v_L,v_1,v_2,v_R\}}G_v$.
Regrouping each summand by vertex and taking the tensor-product trace gives
\[
  Z_\beta\operatorname{Tr}(\rho_\beta P_{\boldsymbol\lambda^{(\nu)}}X)
  =\left(\operatorname{Tr}_{\mathcal V_{v_1}}G_{v_1}^{(\nu)}\right)
   \left(\operatorname{Tr}_{\mathcal V_{v_2}}G_{v_2}^{(\nu)}\right)W_X,
\]
where the remaining traces are collected in
\[
\begin{aligned}
  W_X:=\sum_j{}
    \operatorname{Tr}_{\mathcal V_{v_L}}
      \!\left[G_{v_L}(I\otimes Y_j^L)\right]
    \operatorname{Tr}_{\mathcal V_{v_R}}
      \!\left[G_{v_R}(I\otimes Y_j^R)\right]
\operatorname{Tr}_{\mathrm{ext}}(G_{\mathrm{ext}}Z_j).
\end{aligned}
\]
Here the endpoint identities act on the incident active-edge factors.
Every term in $W_X$ is independent of $\nu$.  At the internal vertices,
the inserted operators are identities, so
\[
  \left(\operatorname{Tr}_{\mathcal V_{v_1}}G_{v_1}^{(\nu)}\right)
  \left(\operatorname{Tr}_{\mathcal V_{v_2}}G_{v_2}^{(\nu)}\right)
  =z_t(\lambda_{e_-}^*,\nu,\mu_1)
   z_t(\nu^*,\lambda_{e_+},\mu_2)
  =w_{\mathrm{mid}}(\nu).
\]
This proves the numerator identity in
\eqref{eq:middle-label-trace-factorization-targets}.
Together with the denominator identity and
\eqref{eq:label-kms-projection-formula}, it yields
$(\mathcal E_{\mathrm{lab}}X)(\boldsymbol\lambda^{(\nu)})=W_X/W_I$,
independent of $\nu$.

\medskip\noindent\textbf{From Cancellation to Middle-edge Fixedness.}\par\smallskip
The cancellation established above may be summarized as
\[
  \mathcal E_{\mathrm{lab}}(\ker K_p)
  \subseteq
  \{\text{label observables independent of }\lambda_{e_0}\}.
\]
We now apply this statement to the projected fixed space.  Let
$f\in\operatorname{ran}(\Pi E_p)$.  Then there exists
$X\in\operatorname{ran}E_p=\ker K_p\cap L^2_0(\rho_\beta)$ such that
$f=\Pi X$.  Since $\Pi$ is the restriction of
$\mathcal E_{\mathrm{lab}}$ to $L^2_0(\rho_\beta)$, we have
\[
  f=\Pi X=\mathcal E_{\mathrm{lab}}X.
\]
The cancellation statement therefore implies that $f$ is independent of
the middle-edge label $\lambda_{e_0}$.
Equivalently, there exists a label observable $f_{\setminus e_0}$ on the
edges other than $e_0$ such that
\[
  f=f_{\setminus e_0}\otimes I_{e_0}.
\]
Thus $f$ may depend nontrivially on all labels outside $e_0$, but acts as the
identity on the entire middle-edge factor.  

Every label observable commutes
with $H_{\cG}$, and this factorization shows, more strongly, that $f$ commutes
with every bare middle-edge coupling
$L_g$, $R_g$, and $|h\rangle\langle h|$.  Since it also commutes with the
spectral projectors of $H_{\cG}$, it commutes with every Bohr component of
those couplings.  Applying
\cref{lem:davies-fixed-point-commutant} to the positive middle-edge Davies
operator gives $K_{e_0}f=0$, and hence
$A_{e_0}f=\Pi K_{e_0}\Pi f=0$.  Thus
$\operatorname{ran}(\Pi E_p)\subseteq\ker A_{e_0}$, which is
\eqref{eq:middle-edge-shielding}.
\end{proof}

\subsubsection{From Decoupling to the Local Schur Certificate}
\label{sec:decoupling-to-local-schur-certificate}

The shielding statement concerns the matching blocks reached from $\Pi$, but
the variational short allows an arbitrary $\Pi^\perp$-component.  We therefore use
the all-block quotient lower bound
$g_{\Gamma,\mathrm{loc},*}$ from
\cref{lem:uniform-ordered-path-quotient-gap}, rather than introducing a
matching-only gap.  Using the ordered-block type set
$\Theta_{\Gamma}$ from
\eqref{eq:ordered-block-type-set}, define
\begin{equation}
  \label{eq:middle-comparison-ceiling}
  C_{\Gamma,\mathrm{mid},*}
  :=\max\left\{1,
    \max_{\substack{\boldsymbol{\theta}\in\Theta_{\Gamma}\\0\le t\le t_*}}
    \lVert K^{\mathrm{path}}_{\boldsymbol{\theta},t}\rVert
  \right\}.
\end{equation}
The norm in the maximum is the operator norm in the local KMS Hilbert space
of the ordered block.

\begin{lemma}[Uniform middle-edge upper bound]
\label{lem:uniform-middle-edge-ceiling}
The constant in \eqref{eq:middle-comparison-ceiling} is finite and, uniformly
over all graphs in the theorem family, all centered paths, and all
$t\in[0,t_*]$, satisfies
\[
  \lVert A_{\operatorname{mid}(p)}\rVert
  \le C_{\Gamma,\mathrm{mid},*}.
\]
\end{lemma}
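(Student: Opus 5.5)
Two things must be shown: the constant in \eqref{eq:middle-comparison-ceiling} is finite, and $\lVert A_{\operatorname{mid}(p)}\rVert$ is bounded by it.

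\emph{Finiteness.} The type set $\Theta_{\Gamma}$ is finite. For each fixed $\boldsymbol{\theta}$, $K^{\mathrm{path}}_{\boldsymbol{\theta},t}$ acts on the fixed finite-dimensional space $\mathcal X^{\mathrm{path}}_{p,\boldsymbol\theta}$. Its matrix entries depend continuously on $t$: the logistic rates, the Bohr projectors of the local star Hamiltonian (whose eigenspaces do not depend on $t$), and the path KMS inner product all vary continuously. A continuous function on the compact interval $[0,t_*]$ attains a finite maximum, and a maximum over finitely many types is finite. By the orientation-reversal unitary $W$ used in the proof of \cref{lem:uniform-ordered-path-quotient-gap}, one reference orientation per type suffices. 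Together with the identification of every path neighbourhood with a reference path, this makes the bound independent of the graph and of $p$.

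\emph{The norm bound.} I would chain three inequalities. First, $\Pi$ is a KMS-orthogonal projection, so $\lVert A_{e_0}\rVert=\lVert\Pi K_{e_0}\Pi\rVert\le\lVert K_{e_0}\rVert$. Second, since $K_{e_\pm}\succeq0$, we have $0\preceq K_{e_0}\preceq K_p$, hence $\lVert K_{e_0}\rVert\le\lVert K_p\rVert$. Third, by \eqref{eq:ordered-block-kp-direct-sum} the operator $K_p$ is block diagonal over KMS-orthogonal ordered blocks, so $\lVert K_p\rVert=\max_{\boldsymbol\theta}\lVert K_{p,\boldsymbol\theta}\rVert$. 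By \eqref{eq:ordered-block-generator-factorization} and the tensor factorization of the KMS inner product from \eqref{eq:ordered-block-gibbs-factorization}, $K_{p,\boldsymbol\theta}=K^{\mathrm{path}}_{\boldsymbol\theta,t}\otimes I_{\mathrm{ext}}$ has the same norm as $K^{\mathrm{path}}_{\boldsymbol\theta,t}$. Combining the three gives $\lVert A_{e_0}\rVert\le\max_{\boldsymbol\theta}\lVert K^{\mathrm{path}}_{\boldsymbol\theta,t}\rVert\le C_{\Gamma,\mathrm{mid},*}$.

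One care point is that $A_{e_0}$ lives on the mean-zero space while the block decomposition is stated on all of $L^2(\rho_\beta)$. This causes no trouble: $K_p$ annihilates $I$ and preserves $L^2_0(\rho_\beta)$, so restricting to the mean-zero space can only decrease the norm.

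\emph{Main obstacle.} The delicate step is the continuity and uniformity claim behind finiteness. The unnormalized path Gibbs factors, and hence the path KMS norm, are only defined up to the global normalization $Z_\beta$ and the exterior factor. I would fix this by normalizing the path factor (a common scalar on ket and bra sectors). That does not change the operator norm, because $K^{\mathrm{path}}$ is self-adjoint and its norm is its spectral radius, which is independent of a rescaled inner product. So the spectral radius is a continuous function of $t$ on $[0,t_*]$, and no delicate estimate is needed.
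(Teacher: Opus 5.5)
Your proposal is correct and follows the paper's proof essentially step for step. Finiteness comes from the finite type set $\Theta_{\Gamma}$ and continuity on the compact interval $[0,t_*]$, and the bound comes from the same chain: contractivity of compression, then $0\preceq K_{\operatorname{mid}(p)}\preceq K_p$, then the ordered-block and exterior-tensor identity, with the orientation conjugacy giving uniformity.

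Your extra remarks tighten two points the paper leaves implicit. Restricting to $L^2_0(\rho_\beta)$ can only lower the norm. And the norm does not depend on how the unnormalized path Gibbs factors are scaled, because it equals the spectral radius of a KMS-self-adjoint map.
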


\begin{proof}
For fixed $\Gamma$, the set $\Theta_{\Gamma}$ is finite, and
each local matrix depends continuously on $t\in[0,t_*]$.
Its norm attains a finite maximum on this closed, bounded interval, so
\eqref{eq:middle-comparison-ceiling} is finite.  For
every centered path,
\[
  \lVert A_{\operatorname{mid}(p)}\rVert
  \le \lVert K_{\operatorname{mid}(p)}\rVert
  \le \lVert K_p\rVert
  =\max_{\boldsymbol{\theta}}
    \lVert K^{\mathrm{path}}_{\boldsymbol{\theta},t}\rVert
  \le C_{\Gamma,\mathrm{mid},*}.
\]
The first inequality is contractivity of orthogonal compression, the second
uses $0\preceq K_{\operatorname{mid}(p)}\preceq K_p$, and the equality follows
from the ordered-block decomposition with exterior tensor factors.  The
orientation conjugacy used in the proof of
\cref{lem:uniform-ordered-path-quotient-gap} shows that the same finite
collection controls every centered path and orientation pattern, independently
of the size of the ambient graph.
\end{proof}

Combining this uniform upper bound with the shielding inclusion and the uniform
local quotient lower bound now gives the required pathwise shorted-form comparison.

\begin{proposition}[Common local Schur certificate]
\label{prop:common-local-schur-certificate}
Every centered path satisfies
\begin{equation}
  \label{eq:common-local-schur-certificate}
  \mathcal S_{\Pi}(K_p)
  \succeq c_{\Gamma,3,*}A_{\operatorname{mid}(p)},
  \qquad
  c_{\Gamma,3,*}
  :=\frac{g_{\Gamma,\mathrm{loc},*}}
          {C_{\Gamma,\mathrm{mid},*}}>0.
\end{equation}
By \cref{lem:uniform-ordered-path-quotient-gap,lem:uniform-middle-edge-ceiling},
both constants defining
$c_{\Gamma,3,*}$ are uniform in the centered path and the ambient graph.
Hence the resulting positive shorted-form constant is uniform over the
entire theorem family and over $t\in[0,t_*]$.
\end{proposition}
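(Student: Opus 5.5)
Fix a centered path $p$ with middle edge $e_0$, and fix $x\in\operatorname{ran}\Pi$. By \cref{lem:shorted-operator-characterizations} it suffices to show that, for every $y\in\operatorname{ran}\Pi^\perp$,
\[
  \langle x+y,K_p(x+y)\rangle\ \ge\ c_{\Gamma,3,*}\langle x,A_{e_0}x\rangle .
\]
The plan is to lower-bound the left side with the local quotient gap and to upper-bound the right side by a distance to the path fixed space, using middle-edge decoupling. Set $z:=x+y$. Then \cref{lem:uniform-ordered-path-quotient-gap} gives
\[
  \langle z,K_pz\rangle\ \ge\ g_{\Gamma,\mathrm{loc},*}\lVert E_p^\perp z\rVert^2 .
\]

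Next, decompose $z=E_pz+E_p^\perp z$, with both pieces in the mean-zero space. Since $x=\Pi z$, we get
\[
  x=\Pi E_pz+\Pi E_p^\perp z .
\]
By \cref{prop:middle-edge-shielding}, $\Pi E_pz\in\ker A_{e_0}$. Because $A_{e_0}\succeq0$ is self-adjoint, the first piece drops out of both arguments of the quadratic form, so
\[
  \langle x,A_{e_0}x\rangle=\langle w,A_{e_0}w\rangle,\qquad w:=\Pi E_p^\perp z .
\]
This is the step that uses the kernel structure. It is worth checking that the positive-form cancellation is legitimate, i.e.\ that $A_{e_0}u=0$ implies $\langle u+w,A_{e_0}(u+w)\rangle=\langle w,A_{e_0}w\rangle$; this follows from self-adjointness.

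We then bound using \cref{lem:uniform-middle-edge-ceiling} and the contractivity of $\Pi$:
\[
  \langle w,A_{e_0}w\rangle\le \lVert A_{e_0}\rVert\,\lVert\Pi E_p^\perp z\rVert^2\le C_{\Gamma,\mathrm{mid},*}\lVert E_p^\perp z\rVert^2 .
\]
Combining with the first display gives
\[
  \langle x,A_{e_0}x\rangle\le \frac{C_{\Gamma,\mathrm{mid},*}}{g_{\Gamma,\mathrm{loc},*}}\langle x+y,K_p(x+y)\rangle
\]
for all $y$. Taking the infimum over $y$ yields $\mathcal S_\Pi(K_p)\succeq c_{\Gamma,3,*}A_{e_0}$, with $C_p=A_{e_0}=\Pi A_{e_0}\Pi$ as required.

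The only delicate point is matching the spaces. $E_p$ is defined within $L^2_0(\rho_\beta)$, and $\Pi$ acts there, so $x,y,z$ and $E_pz$ all lie in the mean-zero space and \cref{prop:middle-edge-shielding} applies directly. Uniformity then follows because both constants are uniform by the two cited lemmas, and they are positive since $C_{\Gamma,\mathrm{mid},*}\ge1$.
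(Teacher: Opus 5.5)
Your proof is correct and follows the paper's argument: you bound $\langle x+y,K_p(x+y)\rangle$ below by the local quotient bound of \cref{lem:uniform-ordered-path-quotient-gap}, discard the $\operatorname{ran}(\Pi E_p)$ component using middle-edge shielding, bound the remainder with the uniform ceiling on $\lVert A_{\operatorname{mid}(p)}\rVert$, and then minimize over $y$. The only difference is cosmetic. You use the explicit witness $\Pi E_p(x+y)$, whereas the paper computes $\inf_y\operatorname{dist}(x+y,\operatorname{ran}E_p)=\operatorname{dist}(x,\operatorname{ran}(\Pi E_p))$ exactly; your one-sided inequality is all the argument needs.
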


\begin{proof}
For $f\in\operatorname{ran}\Pi$ and $y\in\operatorname{ran}\Pi^\perp$, the above-kernel
Poincar\'e form \eqref{eq:above-kernel-poincare} of the local quotient bound
in \cref{lem:uniform-ordered-path-quotient-gap} gives
\[
  \langle f+y,K_p(f+y)\rangle
  \ge g_{\Gamma,\mathrm{loc},*}
  \operatorname{dist}_{\mathrm{KMS}}
    (f+y,\operatorname{ran}E_p)^2.
\]
The passage from the centered fixed space to its label projection is exact.
Indeed, for $z\in\operatorname{ran}E_p$, KMS orthogonality of the
$\Pi\oplus\Pi^\perp$ decomposition gives
\[
  \lVert f+y-z\rVert_{\mathrm{KMS}}^2
  =\lVert f-\Pi z\rVert_{\mathrm{KMS}}^2
   +\lVert y-\Pi^\perp z\rVert_{\mathrm{KMS}}^2.
\]
For fixed $z$, the second term is minimized by $y=\Pi^\perp z$; taking the
infimum over $z\in\operatorname{ran}E_p$ therefore yields
\[
  \inf_{y\in\operatorname{ran}\Pi^\perp}
  \operatorname{dist}_{\mathrm{KMS}}
    (f+y,\operatorname{ran}E_p)^2
  =\operatorname{dist}_{\mathrm{KMS}}
    \bigl(f,\operatorname{ran}(\Pi E_p)\bigr)^2.
\]
By
\cref{prop:middle-edge-shielding}, this projected fixed space lies in the
kernel of $A_{\operatorname{mid}(p)}$.  Hence, for every
$z\in\operatorname{ran}(\Pi E_p)$,
\[
  \langle f,A_{\operatorname{mid}(p)}f\rangle
  =\langle f-z,A_{\operatorname{mid}(p)}(f-z)\rangle
  \le C_{\Gamma,\mathrm{mid},*}
       \lVert f-z\rVert_{\mathrm{KMS}}^2.
\]
Taking the infimum over $z$ gives
\[
  \langle f,A_{\operatorname{mid}(p)}f\rangle
  \le C_{\Gamma,\mathrm{mid},*}
  \operatorname{dist}_{\mathrm{KMS}}
    \bigl(f,\operatorname{ran}(\Pi E_p)\bigr)^2.
\]
We now minimize the original quadratic form over the complementary component
$y$.  The variational characterization
\eqref{eq:short-variational-definition}, followed by the local quotient
bound and the distance identity above, gives
\[
\begin{aligned}
  \langle f,\mathcal S_{\Pi}(K_p)f\rangle
  &=\inf_{y\in\operatorname{ran}\Pi^\perp}
    \langle f+y,K_p(f+y)\rangle\\
  &\ge g_{\Gamma,\mathrm{loc},*}
    \inf_{y\in\operatorname{ran}\Pi^\perp}
    \operatorname{dist}_{\mathrm{KMS}}
      (f+y,\operatorname{ran}E_p)^2\\
  &=g_{\Gamma,\mathrm{loc},*}
    \operatorname{dist}_{\mathrm{KMS}}
      \bigl(f,\operatorname{ran}(\Pi E_p)\bigr)^2\\
  &\ge\frac{g_{\Gamma,\mathrm{loc},*}}
              {C_{\Gamma,\mathrm{mid},*}}
    \langle f,A_{\operatorname{mid}(p)}f\rangle.
\end{aligned}
\]
The last step uses the preceding upper bound on the middle-edge form,
rearranged as a lower bound on the squared distance.  Since this holds for
every $f\in\operatorname{ran}\Pi$, it is precisely the operator inequality
\eqref{eq:common-local-schur-certificate}.
\end{proof}

This proves Condition~(iii) of \cref{thm:abstract-two-axis-gap-assembly} with
$c_{\mathrm{Sch}}=c_{\Gamma,3,*}$.

\section{The \texorpdfstring{$S_3$}{S3} Strengthening Beyond One-Site Influence}
\label{sec:s3-strengthening}

The common fixed-group theorem isolates the temperature restriction in the
label-subspace condition.  For $S_3$, we replace the one-site estimate by a
conditioned vertex-star certificate.  We first state the resulting conditional
criterion and then verify its two hypotheses on $0\le t\le2$ using the finite
calculation in \cref{app:s3-finite-certificates}, which may be treated as a
black box on a first reading.  The local quotient and Schur conditions,
together with all three geometric conditions, remain unchanged.  

\subsection{A Conditional Vertex-Star Criterion}
\label{sec:s3-star-certificate}

Write $\widehat{S_3}$ for its three irreducible representation types and
$d_\lambda=\dim V_\lambda$.
By \cref{lem:group-averaging-invariants}, $A_v$ projects onto
$\operatorname{Inv}\mathcal V_v$, so its trace equals the dimension of
this invariant space.  The vertex weight in
\eqref{eq:vertex-label-weight} therefore becomes
\[
 z_t(\lambda_1,\lambda_2,\lambda_3)
 =d_{\lambda_1}d_{\lambda_2}d_{\lambda_3}
  +t\,\dim\operatorname{Inv}
   (V_{\lambda_1}\otimes V_{\lambda_2}\otimes V_{\lambda_3}).
\]
Here $\operatorname{Inv}$ is the subspace fixed by the simultaneous group
action.  For this commuting model, the Peter--Weyl
decomposition separates the vertex factors at fixed edge labels.
The global irrep-label distribution is therefore proportional to the
product of these vertex weights, with dual labels at incoming endpoints.
Let $\mathsf K_{\mathrm{lab}}$ be its unit-rate single-edge heat-bath
generator, defined in \eqref{eq:label-heat-bath-generator}.

For a vertex $v$, let $E(v)$ be its three incident edges and let
$\mathbb E_v^\star$ denote conditional expectation over the three labels in
$E(v)$, with all other edge labels fixed.  The unit-rate vertex-star
heat-bath generator is
\begin{equation}
  \label{eq:s3-star-block-generator}
  \mathsf K^\star:=\sum_{v\in V(\cG)}(I-\mathbb E_v^\star).
\end{equation}
Fixing the labels outside $E(v)$ leaves a conditional distribution on the
three star labels
$\boldsymbol{\lambda}_v
:=(\lambda_1,\lambda_2,\lambda_3)\in\widehat{S_3}^{3}$.
For the $i$th star edge, let $b_{i,1},b_{i,2}$ be the labels on the other two
edges at its outer endpoint, and collect these six boundary labels as
$\boldsymbol b=(b_{i,j})_{1\le i\le3,\,1\le j\le2}
\in\widehat{S_3}^{6}$, see \Cref{fig:s3-conditioned-star-geometry}.  

The purpose of this star is to reduce the interface-gap estimate to
fixed-size conditional problems, uniformly over their boundary labels.

\begin{figure}[bht]
  \centering
  \begin{tikzpicture}[
    star vertex/.style={circle, draw=figureGeometry, fill=figureGeometry!8,
      minimum size=7mm, inner sep=0pt, font=\small},
    boundary vertex/.style={circle, draw=black!55, fill=black!4,
      minimum size=4.5mm, inner sep=0pt},
    star edge/.style={draw=figureGeometry, line width=1.25pt},
    boundary edge/.style={draw=black!45, line width=.8pt,
      shorten >=2.2mm},
    edge label/.style={font=\scriptsize, fill=white, inner sep=1pt},
    boundary label/.style={font=\scriptsize, inner sep=0pt},
    every node/.style={font=\small}
  ]
    \node[star vertex] (v) at (0,0) {$v$};
    \node[star vertex] (u1) at (0,1.55) {$u_1$};
    \node[star vertex] (u2) at (-1.75,-1.00) {$u_2$};
    \node[star vertex] (u3) at (1.75,-1.00) {$u_3$};

    \coordinate (b11) at (-.78,2.62);
    \coordinate (b12) at (.78,2.62);
    \coordinate (b21) at (-3.05,-.48);
    \coordinate (b22) at (-2.72,-2.02);
    \coordinate (b31) at (3.05,-.48);
    \coordinate (b32) at (2.72,-2.02);
    \foreach \x in {b11,b12,b21,b22,b31,b32} {
      \node[boundary vertex] at (\x) {};
    }

    \draw[boundary edge] (u1) --
      node[pos=.58, boundary label, xshift=-5pt, yshift=-5pt] {$b_{1,1}$} (b11);
    \draw[boundary edge] (u1) --
      node[pos=.58, boundary label, xshift=5pt, yshift=-8pt] {$b_{1,2}$} (b12);
    \draw[boundary edge] (u2) --
      node[pos=.56, boundary label, xshift=6pt, yshift=4pt] {$b_{2,1}$} (b21);
    \draw[boundary edge] (u2) --
      node[pos=.56, boundary label, xshift=12pt, yshift=0pt] {$b_{2,2}$} (b22);
    \draw[boundary edge] (u3) --
      node[pos=.56, boundary label, xshift=-5pt, yshift=4pt] {$b_{3,1}$} (b31);
    \draw[boundary edge] (u3) --
      node[pos=.56, boundary label, xshift=-7pt, yshift=-2pt] {$b_{3,2}$} (b32);

    \draw[star edge] (v) -- node[edge label, right] {$\lambda_1$} (u1);
    \draw[star edge] (v) -- node[edge label, above left] {$\lambda_2$} (u2);
    \draw[star edge] (v) -- node[edge label, above right] {$\lambda_3$} (u3);

    \draw[star edge] (-3.05,-2.56) -- (-2.60,-2.56);
    \node[anchor=west, font=\scriptsize] (sampled-legend) at (-2.50,-2.56)
      {jointly sampled labels $\boldsymbol{\lambda}_v$};
    \draw[draw=black!45, line width=.8pt]
      ([xshift=3mm]sampled-legend.east) -- ++(.45,0)
      coordinate (boundary-legend-end);
    \node[anchor=west, font=\scriptsize, xshift=1mm] at (boundary-legend-end)
      {fixed boundary labels $\boldsymbol b$};
  \end{tikzpicture}
  \caption{The conditioned vertex star.  The three blue edges form $E(v)$
  and carry the jointly sampled labels $\lambda_1,\lambda_2,\lambda_3$.  The six gray
  edges carry the fixed boundary condition $\boldsymbol b$.  Given
  $\boldsymbol b$,
  \eqref{eq:s3-conditioned-star-distribution} is the distribution on the
  three blue labels.}
  \label{fig:s3-conditioned-star-geometry}
\end{figure}

Why does this neighborhood suffice?  After conditioning on all labels
outside $E(v)$, only the weight at $v$
and those at its three neighbors can vary; all other weights cancel
in the conditional normalization.  Thus the three blue labels and six
gray boundary labels determine the entire conditional problem, regardless
of the size of the graph.  For $S_3$, this leaves only $3^3=27$ internal label 
configurations and $3^6=729$ boundary label configurations.

The global step uses the cubic geometry and the girth assumption.
Consider two global label configurations differing only on an edge
$e=\{u,v\}$.  As illustrated in \Cref{fig:s3-star-update-roles},
this edge is updated by an endpoint star, but is a boundary edge for
a star centered at another neighbor of either endpoint.
A disagreement on one edge is removed by either of its two endpoint-star
updates, and can create new disagreements through only four other star
updates.  A uniform bound on the response to one changed boundary label
therefore gives contraction of the global star chain.  A second local
input, a uniform gap for the three single-edge updates within a
conditioned star, transfers this star-chain gap to
$\mathsf K_{\mathrm{lab}}$; each edge belongs to two stars, so this
comparison also has a size-independent constant.
Finally, \cref{prop:finite-group-davies-label-comparison} transfers the
label-chain bound to the interface compression $A=\Pi K\Pi$.

The two local inputs are stated precisely in
\cref{prop:s3-finite-certificate-to-gns}.  Their finite verification uses
exact polynomial inequalities to certify the whole interval
$0\le t\le2$, not just a finite sample of temperatures.

\begin{figure}[htbp]
  \centering
  \begin{tikzpicture}[
    x=.83cm, y=.83cm,
    fixed edge/.style={draw=black!40, line width=.7pt},
    updated edge/.style={draw=figureGeometry, line width=1.6pt},
    vertex/.style={circle, draw=black!55, fill=white,
      minimum size=4.5mm, inner sep=0pt, font=\scriptsize},
    active center/.style={vertex, draw=figureGeometry,
      fill=figureGeometry!15, line width=1pt},
    every node/.style={font=\small}
  ]
    \foreach \panel/\offset in {1/0,2/7} {
      \begin{scope}[xshift=\offset cm]
        \coordinate (vm) at (0,0);
        \coordinate (vp) at (1.35,0);
        \coordinate (u1) at (-1.35,0);
        \coordinate (u2) at (0,-1.05);
        \coordinate (u3) at (2.2,.85);
        \coordinate (u4) at (2.2,-.85);
        \coordinate (a) at (-2.2,.85);
        \coordinate (b) at (-2.2,-.85);
        \coordinate (r) at (-.65,-1.9);
        \coordinate (s) at (.65,-1.9);
        \foreach \left/\right in {vm/vp,vm/u1,vm/u2,vp/u3,vp/u4,u1/a,u1/b,u2/r,u2/s}
          \draw[fixed edge] (\left) -- (\right);
        \foreach \leaf/\angle in {a/135,b/225,u3/45,u4/315,r/240,s/300} {
          \draw[fixed edge] (\leaf) -- ++({\angle-25}:.45);
          \draw[fixed edge] (\leaf) -- ++({\angle+25}:.45);
        }
        \ifnum\panel=1
          \foreach \neighbor in {vp,u1,u2}
            \draw[updated edge] (vm) -- (\neighbor);
        \else
          \foreach \neighbor in {vm,a,b}
            \draw[updated edge] (u1) -- (\neighbor);
        \fi
        \foreach \point in {a,b,r,s,u2,u3,u4}
          \node[vertex, minimum size=2.5mm] at (\point) {};
        \node[vertex] at (vm) {$u$};
        \node[vertex] at (vp) {$v$};
        \node[vertex] at (u1) {$w$};
        \ifnum\panel=1
          \node[active center] at (vm) {$u$};
        \else
          \node[active center] at (u1) {$w$};
        \fi
        \draw[draw=red!75!black, line width=1pt,
          dash pattern=on 2pt off 1.5pt] (.35,.18) -- (1,.18);
        \node[text=red!75!black, above] at (.675,.22) {$e$};
        \ifnum\panel=1
          \node at (0,1.65) {Star update $\mathbb E_u^\star$};
          \node[align=center, font=\scriptsize] at (0,-2.8)
            {$e$ is updated\\Same boundary; discrepancy removed};
        \else
          \node at (0,1.65) {Star update $\mathbb E_w^\star$};
          \node[align=center, font=\scriptsize] at (0,-2.8)
            {$e$ is a boundary edge\\Changed boundary; new discrepancies possible};
        \fi
      \end{scope}
    }
  \end{tikzpicture}
  \caption{Two roles of the same discrepant edge $e$.
  Both panels show the same local graph; blue edges are jointly updated,
  gray edges are fixed, and the red dashed mark above $e$ indicates the only initial
  disagreement between two global label configurations.
  Left: the star at $u$ contains $e$, and identical conditional samples
  remove the disagreement.  Right: the star at $w$ sees $e$ in its
  boundary; the disagreement on $e$ remains and can affect the updated
  labels.  Here $u,v,w$ correspond respectively to $v_-,v_+,u_1$ in
  \Cref{fig:s3-star-coupling-count}.  The rest of the graph is omitted.}
  \label{fig:s3-star-update-roles}
\end{figure}

Concretely, let $\boldsymbol{\xi}_{\boldsymbol b}$ be any label configuration
on $E\setminus E(v)$ whose six boundary labels are $\boldsymbol b$.
The conditional distribution is independent of the remaining exterior
labels and is given by
\begin{equation}
  \label{eq:s3-conditioned-star-distribution}
  \begin{aligned}
  \mu_{\boldsymbol b}^t(\boldsymbol{\lambda}_v)
  &=\mu_{\mathrm{lab}}\!\left(
    \boldsymbol{\lambda}_{E(v)}=\boldsymbol{\lambda}_v
    \,\middle|\,
    \boldsymbol{\lambda}_{E\setminus E(v)}=\boldsymbol{\xi}_{\boldsymbol b}
    \right)
   =\frac{w_{\boldsymbol b}^t(\boldsymbol{\lambda}_v)}{Z_{\boldsymbol b}(t)},\\
  w_{\boldsymbol b}^t(\boldsymbol{\lambda}_v)
  &:=z_t(\lambda_1,\lambda_2,\lambda_3)
    \prod_{i=1}^3 z_t(\lambda_i^*,b_{i,1},b_{i,2}),
  \end{aligned}
\end{equation}
where
$Z_{\boldsymbol b}(t):=\sum_{\boldsymbol{\eta}\in\widehat{S_3}^{3}}
w_{\boldsymbol b}^t(\boldsymbol{\eta})$.
The first factor in $w_{\boldsymbol b}^t$ is the weight of the central vertex, while the
$i$th factor in the product is the weight of the outer endpoint of the $i$th
star edge.
Although every $S_3$ irrep is self-dual, the duals retain the orientation
convention.

We equip label configurations with the edgewise Hamming distance
$d_{\mathrm H}$, which counts the edges on which two configurations have
different labels. 
Explicitly, $d_{\mathrm H}(\boldsymbol{\lambda},\boldsymbol{\lambda}')
=\sum_e \mathbf 1_{\{\lambda_e\ne\lambda'_e\}}.$ 
For probability distributions $\mu$ and $\nu$ on the same
configuration space, the associated Wasserstein distance is
\begin{equation}
  \label{eq:s3-hamming-wasserstein}
  W_1(\mu,\nu)
  :=\min_{\pi\in\mathcal C(\mu,\nu)}
    \sum_{\boldsymbol\lambda,\boldsymbol\lambda'}
    \pi(\boldsymbol\lambda,\boldsymbol\lambda')
    d_{\mathrm H}(\boldsymbol\lambda,\boldsymbol\lambda'),
\end{equation}
where $\mathcal C(\mu,\nu)$ is the set of couplings, namely joint
distributions with marginals $\mu$ and $\nu$, and
$\pi(\boldsymbol\lambda,\boldsymbol\lambda')$ is the joint probability
assigned to the pair of configurations
$(\boldsymbol\lambda,\boldsymbol\lambda')$.
For the conditioned stars, this specializes to
\[
  W_1(\mu_{\boldsymbol b}^t,\mu_{\boldsymbol b'}^t)
  =\min_{\pi\in\mathcal C(\mu_{\boldsymbol b}^t,\mu_{\boldsymbol b'}^t)}
    \sum_{\boldsymbol\lambda,\boldsymbol\lambda'\in\widehat{S_3}^{\,3}}
    \pi(\boldsymbol\lambda,\boldsymbol\lambda')
    \sum_{i=1}^3\mathbf 1_{\{\lambda_i\ne\lambda_i'\}}.
\]
Thus this distance is the minimum expected number of disagreements among
the three updated edge labels; the fixed boundary labels are not counted.
We write $\boldsymbol b\sim\boldsymbol b'$ when the two boundary
configurations differ in exactly one of their six entries.

\Needspace{8\baselineskip}
For fixed $t$ and $\boldsymbol b$, let
$\mathbb E_{i\mid\boldsymbol b}^{t,\star}$ be conditional expectation under
$\mu_{\boldsymbol b}^t$ that resamples only $\lambda_i$, conditioned on the
other two star labels.  Thus the six labels in $\boldsymbol b$ remain fixed,
while each of the three star labels has its own rate-one update.  The
corresponding internal single-edge heat-bath generator is
\begin{equation}
  \label{eq:s3-conditioned-internal-generator}
  \mathsf K_{\mathrm{int},\boldsymbol b}^{t,\star}
  :=\sum_{i=1}^3\bigl(I-\mathbb E_{i\mid\boldsymbol b}^{t,\star}\bigr)
  \quad\text{on }L^2(\mu_{\boldsymbol b}^t).
\end{equation}

The following proposition converts two finite conditions on these
conditioned stars into a label-chain gap.  Its corollary transfers this
bound to the full quantum system.

\begin{proposition}[Finite star certificate for the label-chain gap]
\label{prop:s3-finite-certificate-to-gns}
Fix $T<\infty$.  Suppose that there are constants $B_T<1/2$ and
$\delta_{\mathrm{int},T}>0$ such that, uniformly for $0\le t\le T$, every
conditioned $S_3$ vertex star satisfies
\begin{align}
  W_1(\mu_{\boldsymbol b}^t,\mu_{\boldsymbol b'}^t)
  &\le B_T
  \quad(\boldsymbol b\sim\boldsymbol b'),
  \label{eq:s3-certificate-wasserstein-assumption}\\
  \gap(\mathsf K_{\mathrm{int},\boldsymbol b}^{t,\star})
  &\ge\delta_{\mathrm{int},T}
  \quad\text{for every }\boldsymbol b.
  \label{eq:s3-certificate-internal-gap-assumption}
\end{align}
Then, for every finite simple cubic graph of girth at least six and every
$0\le t\le T$,
\begin{equation}
  \label{eq:s3-certificate-label-consequence}
  \gap(\mathsf K_{\mathrm{lab}})
  \ge \frac{\delta_{\mathrm{int},T}}2(2-4B_T)>0.
\end{equation}
\end{proposition}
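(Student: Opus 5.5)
The proof has two steps: first a spectral gap for the vertex-star heat-bath generator $\mathsf K^\star$ from a Wasserstein contraction argument, then a transfer of that gap to the single-edge chain $\mathsf K_{\mathrm{lab}}$ by a local comparison.

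\textbf{Step 1: gap for $\mathsf K^\star$.} For two global label configurations $\boldsymbol\lambda,\boldsymbol\lambda'$ differing only on one edge $e=\{u,v\}$, I would couple one step of the continuous-time star chain: each vertex star fires at rate one, using the optimal coupling of the two conditional distributions. If the firing star is centered at $u$ or at $v$, both configurations see the same boundary, so identical samples remove the discrepancy. The rate of decrease is therefore $2$.

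A discrepancy is created only when a star centered at one of the four other neighbors of $u$ or $v$ fires. By the girth assumption these stars are distinct, and each sees $e$ as exactly one of its six boundary entries, so $\boldsymbol b\sim\boldsymbol b'$. By \eqref{eq:s3-certificate-wasserstein-assumption}, each such firing creates at most $B_T$ expected new disagreements. All other stars see identical boundaries and contribute nothing.

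The Hamming distance therefore contracts at rate at least $2-4B_T>0$ for an adjacent pair. By the triangle inequality along Hamming paths (path coupling), the semigroup is a $W_1$ contraction with rate $2-4B_T$. The standard argument (Chen's theorem, or Lipschitz contraction implying an eigenvalue bound on a finite space, applied with finite-dimensional eigenfunctions) then gives $\gap(\mathsf K^\star)\ge 2-4B_T$. I expect this step to be the main obstacle: the Wasserstein-to-spectral-gap passage for a reversible chain must be justified carefully, using that $\mathsf K^\star$ is self-adjoint on $L^2(\mu_{\mathrm{lab}})$ and that a real eigenfunction $f$ satisfies $\mathrm{Lip}(e^{-s\mathsf K^\star}f)\le e^{-s(2-4B_T)}\mathrm{Lip}(f)$, which forces the eigenvalue to be at least $2-4B_T$.

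\textbf{Step 2: comparison with $\mathsf K_{\mathrm{lab}}$.} For a fixed star $v$ and a fixed exterior configuration, $I-\mathbb E_v^\star$ is the conditional variance under $\mu^t_{\boldsymbol b}$. By \eqref{eq:s3-certificate-internal-gap-assumption},
\[
\operatorname{Var}_{\mu_{\boldsymbol b}^t}(f)\le \delta_{\mathrm{int},T}^{-1}\sum_{i=1}^3\langle f,(I-\mathbb E^{t,\star}_{i\mid\boldsymbol b})f\rangle.
\]
The single-edge conditional expectation inside the star, conditioned also on the exterior, coincides with the global $\mathbb E_e^{\mathrm{lab}}$, since both condition on all other labels. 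Averaging over the exterior configurations, as in the averaging step of \cref{prop:finite-group-davies-label-comparison}, gives
\[
\langle f,(I-\mathbb E_v^\star)f\rangle\le\delta_{\mathrm{int},T}^{-1}\sum_{e\in E(v)}\langle f,(I-\mathbb E_e^{\mathrm{lab}})f\rangle.
\]
Summing over $v$, each edge is counted twice, so $\mathsf K^\star\preceq (2/\delta_{\mathrm{int},T})\mathsf K_{\mathrm{lab}}$.

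Combining the two steps gives
\[
\gap(\mathsf K_{\mathrm{lab}})\ge\frac{\delta_{\mathrm{int},T}}{2}\gap(\mathsf K^\star)\ge\frac{\delta_{\mathrm{int},T}}{2}(2-4B_T),
\]
which is \eqref{eq:s3-certificate-label-consequence}.
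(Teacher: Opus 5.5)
Your proposal is correct and follows the paper's proof essentially step for step. The paper likewise uses synchronous-clock path coupling with drift $-2+4B_T$: the two endpoint stars remove the discrepancy, and each of the four boundary-sharing stars adds at most $B_T$. It then extends this to a $W_1$ contraction and applies it to an eigenfunction to get $\gap(\mathsf K^\star)\ge 2-4B_T$. Finally, it applies the conditioned Poincar\'e inequality, averages over exterior labels and sums over vertices, which gives $2\mathsf K_{\mathrm{lab}}\succeq\delta_{\mathrm{int},T}\mathsf K^\star$. The Lipschitz-contraction step you flag as the delicate point is handled in the paper in one sentence, by the same eigenfunction argument you describe.
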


The proof is postponed to \cref{sec:proof-s3-certificate-to-gns}.

\begin{corollary}[Full-GNS gap from the star certificate]
\label{cor:s3-certificate-to-gns}
Under the assumptions of \cref{prop:s3-finite-certificate-to-gns}, there is
a constant $c_{S_3,T}>0$, uniform over $0\le t\le T$ and all finite simple
cubic graphs of girth at least six,
such that the complete positive Davies operator satisfies
\begin{equation}
  \label{eq:s3-certificate-gns-consequence}
  \gap_{\mathrm{GNS}}(K_{\cG,\beta})\ge c_{S_3,T}.
\end{equation}
\end{corollary}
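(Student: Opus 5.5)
The plan is to rerun the proof of \cref{thm:finite-group-davies-gaps}(i), with the label-chain gap supplied by \cref{prop:s3-finite-certificate-to-gns} in place of the one-site Dobrushin estimate of \cref{lem:uniform-label-influence}. Everything else in the abstract assembly stays unchanged.

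\textbf{Step 1: the interface condition.} Under the hypotheses, \cref{prop:s3-finite-certificate-to-gns} gives
\[
  \mathsf K_{\mathrm{lab}}\succeq\delta_T I
  \quad\text{on }L^2_0(\mu_{\mathrm{lab}}),
  \qquad
  \delta_T:=\tfrac{\delta_{\mathrm{int},T}}{2}(2-4B_T)>0,
\]
uniformly for $0\le t\le T$ and over the graph family. Next I apply \cref{prop:finite-group-davies-label-comparison} with $t_*=T$ and $q=6$. It holds for every finite $t_*$, since only the Dobrushin step used $t_*<2/3$. This gives
\[
  A=\Pi K\Pi\succeq c_{S_3,\mathrm{lab},T}\,\mathsf K_{\mathrm{lab}}
\]
on $\operatorname{ran}\Pi$, transported through the identification \eqref{eq:mean-zero-label-identification}. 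Combining the two bounds yields Condition~(i) with
\[
  a=\delta_T\,c_{S_3,\mathrm{lab},T}>0 .
\]

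\textbf{Step 2: the remaining conditions.} \cref{lem:uniform-ordered-path-quotient-gap} holds for arbitrary finite $t_*$, so it gives Condition~(ii) with $g_{\mathrm{loc}}=g_{S_3,\mathrm{loc},T}$. \cref{prop:common-local-schur-certificate}, together with \cref{lem:uniform-middle-edge-ceiling}, gives Condition~(iii) with $C_p=A_{\operatorname{mid}(p)}$ and $c_{\mathrm{Sch}}=c_{S_3,3,T}$. The geometric conditions carry over unchanged: \cref{lem:cubic-defect-frame} gives $\kappa_{\mathrm{cov}}=6$, and \cref{lem:cubic-path-ownership} gives $r_K=12$ and $r_A=4$. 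None of these depends on temperature. \cref{thm:abstract-two-axis-gap-assembly} then yields
\[
  K_{\cG,\beta}\succeq\bigl(1-\sqrt{1-\eta}\bigr)\min\{a,\tau\}\,I
  \quad\text{on }L^2_0(\rho_\beta),
\]
with $\tau=g_{\mathrm{loc}}/2$ and $\eta=\min\{c_{\mathrm{Sch}}/3,\,1\}$. This constant is positive and uniform.

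\textbf{Step 3: passing to the GNS gap.} This bound on the mean-zero space forces $\ker K_{\cG,\beta}=\C I$. \cref{prop:davies-kms-gns-gap-bridge} then converts the KMS bound into the stated GNS bound, with $c_{S_3,T}$ taken to be the constant above.

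The main point to check is uniformity in $t$ for the constants borrowed from part~(i). Those constants were derived for $t\in[0,t_*]$ with $t_*<2/3$. I need to confirm that the compactness argument of \cref{lem:finite-compactness-fixed-kernel} and the comparison constant in \cref{prop:finite-group-davies-label-comparison} only require a closed bounded interval. They do: the kernel is $t$-independent, and $b_*$ and $\gamma_{\min,*}$ depend only on the upper endpoint. So the substitution $t_*\mapsto T$ is legitimate.
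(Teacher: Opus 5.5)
Your proposal is correct and follows essentially the same route as the paper: the star-certificate label gap from \cref{prop:s3-finite-certificate-to-gns}, combined with \cref{prop:finite-group-davies-label-comparison} at $t_*=T$, supplies Condition~(i). The local quotient, Schur, cover, and incidence conditions are then reused on $[0,T]$ before invoking \cref{thm:abstract-two-axis-gap-assembly} and \cref{prop:davies-kms-gns-gap-bridge}. Your explicit constants $\tau=g_{\mathrm{loc}}/2$ and $\eta=\min\{c_{\mathrm{Sch}}/3,1\}$ are accurate, as is your check that only the Dobrushin step required $t_*<2/3$.
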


\begin{proof}
The label-chain bound \eqref{eq:s3-certificate-label-consequence} and the
Davies-to-label comparison in
\cref{prop:finite-group-davies-label-comparison} give
\[
  A=\Pi K\Pi\succeq a_T\Pi,
  \qquad
  a_T:=c_{\mathrm{cmp},T}
       \frac{\delta_{\mathrm{int},T}}2(2-4B_T)>0,
\]
where $c_{\mathrm{cmp},T}>0$ is the comparison constant with $t_*=T$.
This verifies Condition~(i) of
\cref{thm:abstract-two-axis-gap-assembly}, without assuming $[K,\Pi]=0$.
The local quotient and Schur bounds in
\cref{lem:uniform-ordered-path-quotient-gap,prop:common-local-schur-certificate}
hold on every fixed finite interval $[0,T]$: their proofs use positive
Davies rates and continuous positive local bounds for finitely many path
types, not the one-site influence restriction $T<2/3$.
Together with the fixed-space cover \eqref{eq:common-cubic-defect-frame}
and the incidence identities in \cref{lem:cubic-path-ownership}, these
verify the remaining hypotheses of
\cref{thm:abstract-two-axis-gap-assembly} and give a uniform positive KMS
gap on $L^2_0(\rho_\beta)$.  Centering any fixed observable shows that
$\ker K_{\cG,\beta}=\C I$, and
\cref{prop:davies-kms-gns-gap-bridge} gives the stated full-GNS gap.
\end{proof}

\subsection{The Exact Certificate at \texorpdfstring{$t=2$}{t=2}}
\label{sec:s3-exact-certificate-application}

The exact finite certificate in \cref{prop:exact-s3-star-certificate}
of \cref{app:s3-interval-certificate} verifies the two hypotheses of
\cref{prop:s3-finite-certificate-to-gns} on $0\le t\le2$, with
\[
  B_2=\frac{2818}{7395}<\frac12,
  \qquad
  \delta_{\mathrm{int},2}=\frac{11}{35}>0.
\]
Thus we take $T=2$, $B_T=B_2$, and
$\delta_{\mathrm{int},T}=\delta_{\mathrm{int},2}$.
All auxiliary transport and influence calculations are confined to
\cref{app:s3-finite-certificates}.
At $t=2$, the one-site influence criterion is inconclusive, whereas
the star criterion yields a positive gap; see
\eqref{eq:s3-endpoint-certificate-contrast} for the quantitative comparison.

\Needspace{8\baselineskip}
\medskip\noindent\textbf{Quantitative bounds.}\par\smallskip
Before concluding the proof of \cref{thm:finite-group-davies-gaps}\textnormal{(ii)},
we record the numerical bounds supplied by the certificate.
Throughout $0\le t\le2$, the label-gap conclusion
\eqref{eq:s3-certificate-label-consequence} of
\cref{prop:s3-finite-certificate-to-gns} gives
\begin{equation}
  \label{eq:exact-s3-label-gap-floor}
  \gap(\mathsf K_{\mathrm{lab}})
  \ge \frac{\delta_{\mathrm{int},2}}2(2-4B_2)
  =\frac12\frac{11}{35}\frac{3518}{7395}
  =\frac{19349}{258825}.
\end{equation}
With $t_*=2$ and $|S_3|=6$, the comparison constant in
\cref{prop:finite-group-davies-label-comparison} is
\begin{equation}
  \label{eq:s3-physical-label-comparison-constant}
  c_{S_3,\mathrm{cmp},2}
  =\frac1{3\cdot6^2\cdot3^2\cdot(1+3^2)}
  =\frac1{9720}.
\end{equation}
Consequently, the physical label compression satisfies
\begin{equation}
  \label{eq:s3-physical-p-sector-floor}
  A=\Pi K\Pi
  \succeq \frac{19349}{2515779000}\Pi.
\end{equation}
These bounds are uniform in the graph size and in $0\le t\le2$.

\Needspace{8\baselineskip}
\begin{proof}[Proof of \cref{thm:finite-group-davies-gaps}\textnormal{(ii)}]
By \cref{prop:exact-s3-star-certificate}, uniformly for $0\le t\le2$ and
boundary conditions $\boldsymbol b\sim\boldsymbol b'$,
\[
  W_1(\mu_{\boldsymbol b}^t,\mu_{\boldsymbol b'}^t)
  \le B_2=\frac{2818}{7395}<\frac12,
  \qquad
  \gap(\mathsf K_{\mathrm{int},\boldsymbol b}^{t,\star})
  \ge\delta_{\mathrm{int},2}=\frac{11}{35}>0.
\]
Hence assumptions \eqref{eq:s3-certificate-wasserstein-assumption} and
\eqref{eq:s3-certificate-internal-gap-assumption} of
\cref{prop:s3-finite-certificate-to-gns} hold with $T=2$.  Applying
\cref{cor:s3-certificate-to-gns} yields the volume-uniform full-GNS gap throughout
$0\le t\le2$, equivalently $0\le\beta J\le\log3$, as asserted in
\cref{thm:finite-group-davies-gaps}\textnormal{(ii)}.
\end{proof}

\subsection{Proof of the Label-Chain Certificate}
\label{sec:proof-s3-certificate-to-gns}

Finally, we prove the classical label-chain bound used above.

\begin{proof}[Proof of \cref{prop:s3-finite-certificate-to-gns}]
Recall the two positive heat-bath operators defined in
\eqref{eq:s3-star-block-generator} and
\eqref{eq:label-heat-bath-generator}, respectively:
\[
  \mathsf K^\star=\sum_{v\in V(\cG)}(I-\mathbb E_v^\star),
  \qquad
  \mathsf K_{\mathrm{lab}}
  =\sum_{e\in E(\cG)}(I-\mathbb E_e^{\mathrm{lab}}).
\]
Here $\mathsf K^\star$ jointly resamples the three labels incident to a
vertex at rate one per vertex, whereas $\mathsf K_{\mathrm{lab}}$ resamples
one edge label at rate one per edge.  In both cases, all labels outside
the updated set are held fixed.

We use path coupling to prove a gap for the star chain, using the
Wasserstein assumption \eqref{eq:s3-certificate-wasserstein-assumption}.
It suffices to construct a uniformly contracting coupling for pairs of
label configurations that \emph{differ on a single edge}: any two configurations
can be joined by changing one edge label at a time, and the local
contraction extends along this sequence to give contraction in Hamming
Wasserstein distance.  This yields a spectral gap for $\mathsf K^\star$.
We then use the conditioned internal-gap assumption
\eqref{eq:s3-certificate-internal-gap-assumption} to compare the star
chain with the single-edge label chain.

\medskip\noindent\textbf{From the finite certificate to star-chain
contraction.}\par\smallskip
Start two copies $X_s,Y_s$ from configurations that differ only on one
edge $e$, so $d_{\mathrm H}(X_0,Y_0)=1$.  Here $s$ denotes Markov-chain
time, not the temperature parameter $t$.  Couple the rate-one clocks at
each vertex synchronously.  There are three types of updates:
\begin{itemize}
\item \emph{The two endpoint stars of $e$.}
Their exterior labels agree, so their conditional distributions are
identical.  Using the same sample in both copies removes the sole
discrepancy, changing the distance from $1$ to $0$.
\item \emph{The four other stars of $u_1, u_2, u_3, u_4$ having $e$ in their boundary.}
Their boundary configurations differ in exactly one entry.  The
Wasserstein assumption \eqref{eq:s3-certificate-wasserstein-assumption}
supplies a coupling with at most $B_T$ expected
disagreements among the three resampled labels.  Since $e$ itself is not
updated, its discrepancy remains: the expected distance after the update
is at most $1+B_T$, an increase of at most $B_T$.
\item \emph{All remaining stars.}
Their boundary labels agree and they do not update $e$.  Identical samples
therefore leave the distance equal to $1$, with no change.
\end{itemize}
Cubicity and girth at least six ensure that the four affected outer stars
are distinct.  The two removing stars and the four potentially creating
stars are shown in \Cref{fig:s3-star-coupling-count}.

\begin{figure}[H]
  \centering
  \begin{tikzpicture}[
    removing/.style={circle, draw=figureInterface, fill=figureInterface!9,
      line width=1.2pt, minimum size=8mm, inner sep=0pt, font=\small},
    affected/.style={circle, draw=figureGeometry, fill=figureGeometry!7,
      dashed, line width=1pt, minimum size=7mm, inner sep=0pt,
      font=\scriptsize},
    ordinary edge/.style={draw=black!50, line width=.85pt},
    discrepant edge/.style={draw=figureCoupling, line width=1.7pt},
    legend/.style={font=\scriptsize, anchor=west},
    every node/.style={font=\small}
  ]
    \node[removing] (vm) at (-1.65,0) {$v_-$};
    \node[removing] (vp) at (1.65,0) {$v_+$};
    \node[affected] (u1) at (-3.05,1.08) {$u_1$};
    \node[affected] (u2) at (-3.05,-1.08) {$u_2$};
    \node[affected] (u3) at (3.05,1.08) {$u_3$};
    \node[affected] (u4) at (3.05,-1.08) {$u_4$};

    \draw[ordinary edge] (vm) -- (u1);
    \draw[ordinary edge] (vm) -- (u2);
    \draw[ordinary edge] (vp) -- (u3);
    \draw[ordinary edge] (vp) -- (u4);
    \draw[discrepant edge] (vm) -- node[above=2pt, font=\scriptsize,
      text=figureCoupling] {discrepant edge $e$} (vp);

    \node[legend, text=figureInterface] at (3.72,.42)
      {two endpoint-star clocks: $-2$};
    \node[legend, text=figureGeometry] at (3.72,-.18)
      {four affected star clocks: at most $+4B_T$};
    \node[font=\small] at (0,-1.72)
      {$\displaystyle \text{infinitesimal drift}\le -2+4B_T$};
  \end{tikzpicture}
  \caption{The path-coupling count for one discrepant edge.  Updating either
  green endpoint star includes $e$ and removes the discrepancy.  The same
  edge appears in the boundary of the four dashed blue stars, each of which
  creates at most $B_T$ expected discrepancies.}
  \label{fig:s3-star-coupling-count}
\end{figure}

Each clock has rate one, so we sum the expected changes in distance,
each measured relative to the initial distance $1$:
\begin{equation}
  \label{eq:s3-star-path-coupling-drift}
  \left.\frac{d}{ds}\right|_{s=0}
  \mathbb E[d_{\mathrm H}(X_s,Y_s)]
  \le 2\cdot(0-1)+4\cdot\bigl((1+B_T)-1\bigr)
  =-2+4B_T<0.
\end{equation}
Here, $\mathbb E$ denotes expectation under the coupled dynamics started from the fixed pair $(X_0,Y_0)$.
The first term accounts for the two updates that remove the discrepancy;
the second accounts for the four updates that can create new discrepancies.
Since $B_T<1/2$, this gives the \emph{positive contraction rate}
\[
  \kappa_{\star,T}:=2-4B_T>0.
\]
With this contraction rate for configurations differing on one edge,
the path-coupling extension gives contraction by
$e^{-\kappa_{\star,T}s}$ for arbitrary initial configurations: concatenate
optimal couplings along a shortest Hamming path and use the triangle
inequality for $W_1$.  
Applying that contraction to a nonconstant
eigenfunction of $e^{-s\mathsf K^\star}$ shows that
\begin{equation}
  \label{eq:s3-general-star-gap}
  \gap(\mathsf K^\star)\ge\kappa_{\star,T}.
\end{equation}

\medskip\noindent\textbf{From the star chain to the edge-update label
chain.}\par\smallskip
Fix the labels outside $E(v)$ and write $\mu=\mu_{\boldsymbol b}^t$ for
the resulting three-label distribution.  In the following conditional
calculation, $f$ denotes the restriction of the global observable to these
fixed exterior labels.  The internal-gap assumption
\eqref{eq:s3-certificate-internal-gap-assumption}, in Poincar\'e form, gives
\[
  \langle f,\mathsf K_{\mathrm{int},\boldsymbol b}^{t,\star} f\rangle_\mu
  \ge\delta_{\mathrm{int},T}\operatorname{Var}_\mu(f).
\]
By \eqref{eq:s3-conditioned-internal-generator}, the operator on the left
is the sum of the three single-edge updates.  With the exterior labels
fixed, each internal conditional expectation agrees with the corresponding
$\mathbb E_e^{\mathrm{lab}}$.  On the right, $\mathbb E_v^\star f$ is
the constant $\mu(f)$ on this conditional space, so
\[
  \operatorname{Var}_\mu(f)
  =\lVert f-\mathbb E_v^\star f\rVert_\mu^2
  =\langle f,(I-\mathbb E_v^\star)f\rangle_\mu.
\]
Substituting these two identities yields
\begin{equation}
  \label{eq:s3-conditioned-star-to-edge-form}
  \sum_{e\in E(v)}
  \langle f,(I-\mathbb E_e^{\mathrm{lab}})f\rangle_\mu
  \ge
  \delta_{\mathrm{int},T}
  \langle f,(I-\mathbb E_v^\star)f\rangle_\mu.
\end{equation}
Averaging over the exterior labels under $\mu_{\mathrm{lab}}$ turns these
conditional inner products into global $L^2(\mu_{\mathrm{lab}})$ inner
products.  Summing over $v$ then counts each edge twice, once for each
endpoint star, and gives
\begin{equation}
  \label{eq:s3-edge-to-star-form-comparison}
  2\mathsf K_{\mathrm{lab}}
  \succeq\delta_{\mathrm{int},T}\mathsf K^\star.
\end{equation}
Combining this form comparison with
\eqref{eq:s3-general-star-gap} gives
\begin{equation}
  \label{eq:s3-general-label-gap}
  \mathsf K_{\mathrm{lab}}
  \succeq \delta_{\mathrm{lab},T}I
  \quad\text{on }L^2_0(\mu_{\mathrm{lab}}),
  \qquad
  \delta_{\mathrm{lab},T}
  :=\frac{\delta_{\mathrm{int},T}}2\kappa_{\star,T}>0.
\end{equation}
This is \eqref{eq:s3-certificate-label-consequence}.
\end{proof}

\section*{Acknowledgement}
RH was supported by JST PRESTO Grant Number JPMJPR23F9 and JST ASPIRE Grant Number JPMJAP26A4, Japan. 

\section*{AI Usage}
OpenAI GPT-5.5, GPT-5.6 Sol, and GPT-6 Astra (via Codex), and Anthropic
Claude Opus 4.8 and Claude Fable 5 (via Claude Code), were used 
to assist with research development, manuscript preparation, and review.
The Python code for the finite verification was generated by GPT-5.6 Sol
and verified by the authors.  The authors assessed and revised the
AI-assisted work and take full responsibility for the results, code,
and text.

\bibliographystyle{alpha}
\bibliography{references}

\appendix

\section{Proof of the Finite-Block Davies-to-Label Comparison}
\label{app:finite-block-davies-label-comparison-proof}

This appendix proves \cref{thm:finite-block-davies-label-comparison}.  The
argument is a finite-dimensional conductance comparison; all model-specific
work enters only through the three displayed hypotheses of that theorem.

Recall that $\mathscr S$ is the full weighted family of bare couplings of
$K$, and that
$\mathscr S^{\mathrm{sel}}\subseteq\mathscr S$ is the adjoint-closed
subfamily selected for the comparison.  Thus the full Davies form below is
summed over $S\in\mathscr S$, while the hypotheses and the resulting lower
bound retain only $S\in\mathscr S^{\mathrm{sel}}$.

For convenience, we restate the three conditions in the notation and under the
names used in the proof.  For every pair of block labels $i,j\in\mathsf I$:
\begin{enumerate}[label=(C\arabic*),leftmargin=2.5em]
\item \emph{Block-density lower bound.}  The Gibbs density is uniformly spread
  inside each block:
  $
    \sigma P_i\succeq b\frac{\pi_i}{r_i}P_i.
  $
\item \emph{Davies-rate lower bound.}  Every Bohr component of every
  $S\in\mathscr S^{\mathrm{sel}}$ has rate
  $
    \gamma_\beta(\omega)\ge \gamma_{\min}.
  $
\item \emph{Block-overlap lower bound.}  The family
  $\mathscr S^{\mathrm{sel}}$ carries sufficient Hilbert--Schmidt weight
  from block $i$ to block $j$:
  $
    \sum_{S\in\mathscr S^{\mathrm{sel}}}w_S
      \lVert P_jSP_i\rVert_{\mathrm{HS}}^2
    \ge \kappa r_i\pi_j.
  $
\end{enumerate}
Here $b,\gamma_{\min},\kappa>0$ are the same constants as in
\cref{eq:abstract-block-density-floor,eq:abstract-block-rate-floor,eq:abstract-block-overlap-floor}.
The proof first uses (C1)--(C2) to lower-bound each energy-resolved physical
transition and then uses (C3) to sum those transitions into the label
heat-bath form.

\begin{proof}[Proof of \cref{thm:finite-block-davies-label-comparison}]
Let $H_0=\sum_E EP_E$ be the spectral resolution of the Hamiltonian and put
$P_{i,E}:=P_iP_E$.  This is again a family of orthogonal projectors because
$[P_i,H_0]=0$.  A block-scalar observable
$f=\sum_i f_iP_i$ commutes with $H_0$ and with $\sigma$, so its KMS pairing
becomes
\[
  \langle f,Y\rangle_\sigma
  =\operatorname{Tr}(\sigma f^\dagger Y).
\]
By the carr\'e-du-champ identity
\eqref{eq:lindblad-carre-du-champ}, each Lindblad jump $V$ satisfies
\[
  \mathcal L_V(f^\dagger f)
  -\mathcal L_V(f^\dagger)f
  -f^\dagger\mathcal L_V(f)
  =[V,f]^\dagger[V,f].
\]
Apply this identity to $V=S_\omega$ and sum it with the Davies weights.
Since
\[
  \mathcal L
  =\sum_{S\in\mathscr S,\omega}
    w_S\gamma_\beta(\omega)\mathcal L_{S_\omega},
\]
the resulting operator identity is
\begin{align*}
  \sum_{S\in\mathscr S,\omega}
    w_S\gamma_\beta(\omega)
    [S_\omega,f]^\dagger[S_\omega,f]
  =\mathcal L(f^\dagger f)
   -\mathcal L(f^\dagger)f
   -f^\dagger\mathcal L(f).
\end{align*}
Now take expectation in $\sigma$.  Stationarity $\operatorname{Tr}\!\left(
    \sigma\mathcal L(f^\dagger f)
  \right)=0$ is used precisely in the
first term on the right.
Consequently,
\begin{align*}
  \sum_{S\in\mathscr S,\omega}
    w_S\gamma_\beta(\omega)
    \operatorname{Tr}\!\left(
      \sigma[S_\omega,f]^\dagger[S_\omega,f]
    \right)
  &
  =-\operatorname{Tr}\!\left(
      \sigma\mathcal L(f^\dagger)f
    \right)
   -\operatorname{Tr}\!\left(
      \sigma f^\dagger\mathcal L(f)
    \right)
  \\
  &
  =-2\operatorname{Tr}\!\left(
      \sigma f^\dagger\mathcal L(f)
    \right).
\end{align*}
The two traces in the first equality are complex conjugates because
$\mathcal L$ is $*$-preserving.  Since $[f,\sigma]=0$, the
second trace equals
$\langle f,\mathcal Lf\rangle_\sigma$, which is real by KMS
symmetry; hence the two traces coincide.  Writing
$K=-\mathcal L$ therefore gives
\begin{equation}
  \label{eq:abstract-label-davies-dirichlet-form}
  \langle f,Kf\rangle_\sigma
  =\frac12\sum_{S\in\mathscr S,\omega}
    w_S\gamma_\beta(\omega)
    \operatorname{Tr}\!\left(
      \sigma[S_\omega,f]^\dagger
             [S_\omega,f]
    \right).
\end{equation}
Every summand in \eqref{eq:abstract-label-davies-dirichlet-form} is
nonnegative, so we may restrict directly to
$S\in\mathscr S^{\mathrm{sel}}$.  Since $\sigma$ commutes with the joint
projectors $P_{i,E}$, the restricted form can then be decomposed by inserting
their resolutions of the identity on the source and target sides.  Only the
component with $\omega=E'-E$ survives, and hence
\begin{equation}
  \label{eq:abstract-selected-davies-dirichlet-form}
  \begin{aligned}
    \langle f,Kf\rangle_\sigma
    &\ge \frac12\sum_{S\in\mathscr S^{\mathrm{sel}},\omega}
      w_S\gamma_\beta(\omega)
      \operatorname{Tr}\!\left(
        \sigma[S_\omega,f]^\dagger[S_\omega,f]
      \right)
      \\
    &=\frac12
      \sum_{S\in\mathscr S^{\mathrm{sel}}}w_S
      \sum_{i,j}\sum_{E,E'}\gamma_\beta(E'-E)
      \operatorname{Tr}\!\left(
        \sigma
        \bigl(P_{j,E'}[S_{E'-E},f]P_{i,E}\bigr)^\dagger
        \bigl(P_{j,E'}[S_{E'-E},f]P_{i,E}\bigr)
      \right).
  \end{aligned}
\end{equation}
For $S\in\mathscr S^{\mathrm{sel}}$ and $\omega=E'-E$, taking the
$(i,E)\to(j,E')$ block of $[S_\omega,f]$ gives
\begin{equation}
  \label{eq:abstract-resolved-commutator-block}
  P_{j,E'}[S_\omega,f]P_{i,E}
  =(f_i-f_j)P_{j,E'}SP_{i,E}.
\end{equation}
For each energy-resolved block,
$P_{j,E'}SP_{i,E}$ has source in $\operatorname{ran}P_i$.  Consequently,
the positive operator
$(P_{j,E'}SP_{i,E})^\dagger(P_{j,E'}SP_{i,E})$ is supported on
$\operatorname{ran}P_i$, and the block-density lower bound (C1) gives
\begin{equation}
  \label{eq:abstract-resolved-density-floor}
  \operatorname{Tr}\!\left(
    \sigma
    (P_{j,E'}SP_{i,E})^\dagger
    (P_{j,E'}SP_{i,E})
  \right)
  \ge b\frac{\pi_i}{r_i}
    \lVert P_{j,E'}SP_{i,E}\rVert_{\mathrm{HS}}^2.
\end{equation}
The Davies-rate lower bound (C2) supplies
$\gamma_\beta(E'-E)\ge \gamma_{\min}$.  Substituting
\cref{eq:abstract-resolved-commutator-block,eq:abstract-resolved-density-floor}
into \eqref{eq:abstract-selected-davies-dirichlet-form} therefore yields
\begin{equation}
  \label{eq:abstract-energy-resolved-davies-lower-bound}
  \begin{aligned}
  \langle f,Kf\rangle_\sigma
  &\ge \frac{b\gamma_{\min}}{2}
    \sum_{i,j}|f_i-f_j|^2\frac{\pi_i}{r_i}
    \sum_{S\in\mathscr S^{\mathrm{sel}}}w_S
    \sum_{E,E'}
      \lVert P_{j,E'}SP_{i,E}\rVert_{\mathrm{HS}}^2.
  \end{aligned}
\end{equation}
For fixed $i,j,S$, the joint energy projectors give
\[
  P_jSP_i=\sum_{E,E'}P_{j,E'}SP_{i,E}.
\]
The summands have orthogonal source or target energy spaces for distinct
pairs $(E,E')$ and are therefore Hilbert--Schmidt orthogonal.  Hence
\begin{equation}
  \label{eq:abstract-energy-block-hs-orthogonality}
  \sum_{E,E'}
    \lVert P_{j,E'}SP_{i,E}\rVert_{\mathrm{HS}}^2
  =\lVert P_jSP_i\rVert_{\mathrm{HS}}^2.
\end{equation}
Substituting \eqref{eq:abstract-energy-block-hs-orthogonality} into
\eqref{eq:abstract-energy-resolved-davies-lower-bound} gives
\begin{equation}
  \label{eq:abstract-block-resolved-davies-lower-bound}
  \langle f,Kf\rangle_\sigma
  \ge \frac{b\gamma_{\min}}{2}
    \sum_{i,j}|f_i-f_j|^2\frac{\pi_i}{r_i}
    \sum_{S\in\mathscr S^{\mathrm{sel}}}w_S
      \lVert P_jSP_i\rVert_{\mathrm{HS}}^2.
\end{equation}
The block-overlap lower bound (C3) now replaces the weighted coupling
sum in \eqref{eq:abstract-block-resolved-davies-lower-bound} by
$\kappa r_i\pi_j$.  The factor $r_i$ cancels the denominator in
$\pi_i/r_i$, and therefore
\begin{align*}
  \langle f,Kf\rangle_\sigma
  &\ge \frac{b\gamma_{\min}\kappa}{2}
    \sum_{i,j}\pi_i\pi_j|f_i-f_j|^2
   =b\gamma_{\min}\kappa\,
    \langle f,\mathsf K_{\mathrm{hb},\pi}f\rangle_\pi,
\end{align*}
which is \eqref{eq:abstract-davies-label-comparison}.
\end{proof}

\section{Proof of the Exact Centered-Path Kernel}
\label{app:exact-ordered-path-kernel-proof}

This appendix provides the proof of \cref{lem:exact-ordered-path-kernel}.

\begin{proof}
\medskip\noindent\textbf{From Fixed Points to the Bohr Commutant.}\par\smallskip
The complete star bath is adjoint closed and all logistic rates are
strictly positive.  Specializing
\cref{lem:davies-fixed-point-commutant} to the three edge update operators in $K_p$ gives
the exact starting point
\begin{equation}
  \label{eq:path-kernel-bohr-commutant}
  \ker K_p
  =
  \left\{
    X:\ [X,S_\omega]=0
    \ \text{for every }e\in\{e_-,e_0,e_+\},\
       S\in\mathscr S_e,\text{ and every }\omega
  \right\}.
\end{equation}
Recall the full bare coupling family
\[
\mathscr S_e=
\{L_g^{(e)}:g\in\Gamma\}
  \sqcup\{R_g^{(e)}:g\in\Gamma\}
  \sqcup\{Q_h^{(e)}:h\in\Gamma\}
\]
from \eqref{eq:active-edge-bare-coupling-family}.  Here $S_\omega$ is the
Bohr component of $S\in\mathscr S_e$.  Since $S=\sum_\omega S_\omega$,
\eqref{eq:path-kernel-bohr-commutant} gives, for every active bare coupling,
\[
  [X,S]=\sum_\omega[X,S_\omega]=0.
\]

\medskip\noindent\textbf{Removing the Active-Edge Factors.}\par\smallskip
On one active edge, the full coupling family $\mathscr S_e$ generates every
matrix unit because
\[
  Q_hL_{hk^{-1}}Q_k
  =|h\rangle\!\langle k|.
\]
Since these matrix units span the full edge algebra,
\begin{equation}
  \label{eq:active-edge-generated-algebra}
  \operatorname{Alg}(\mathscr S_e)
  =\mathcal B(\C[\Gamma]).
\end{equation}
Applying this to all three active edges, $X\in\ker K_p$ commutes with
their full matrix algebra and therefore acts as the identity on those
factors.  This recovers
\eqref{eq:fixed-observable-active-identity}, with $I_p$ defined in
\eqref{eq:active-path-identity}:
\[
  X=I_p\otimes Y.
\]
Here $Y$ includes both the outer-edge factors owned by the four path
vertices and the exterior factors owned by the remaining vertices.
On a fixed ordered block, expand the exterior factor in a basis
$\{Z_\ell\}_\ell$ of
$\mathcal X_{p,\boldsymbol{\theta}}^{\mathrm{ext}}$, using
\eqref{eq:ordered-path-exterior-factorization}.  Then
\[
  Y=\sum_\ell Y_{\mathrm{out},\ell}\otimes Z_\ell,
  \qquad
  X=\sum_\ell (I_p\otimes Y_{\mathrm{out},\ell})\otimes Z_\ell.
\]
Since $K_p$ acts trivially on the exterior factor, linear independence of
the $Z_\ell$ reduces the fixed-point condition to each path coefficient.
Fix one such coefficient and suppress $\ell$, writing
\begin{equation}
  \label{eq:active-edge-scalar-fixed-observable}
  X_{\mathrm{path}}
  =I_p\otimes Y_{\mathrm{out}}.
\end{equation}
The operator
$Y_{\mathrm{out}}$ acts on the outer-edge representation factors owned by the
four path vertices, not on the exterior factor; see the distinction between
``out'' and ``ext'' in \Cref{fig:ordered-path-kernel-factors}.
This is only the constraint coming from the
active-edge matrix algebras; the four vertex-commutation conditions below
will further restrict $Y_{\mathrm{out}}$ to the corresponding intertwiner
spaces.

\medskip\noindent\textbf{Separating the Four Vertex Constraints.}\par\smallskip
Recall from \eqref{eq:explicit-local-gauge-action} that $U_v(g)$ is the
tensor product of the left or right actions on the edges incident to $v$.
Grouping those actions according to whether the edge is active or outer
gives, on the path neighborhood,
\begin{equation}
  \label{eq:path-vertex-active-outer-action}
  U_v(g)=C_v(g)\otimes U_v^{\mathrm{out}}(g).
\end{equation}
Here $C_v(g)$ acts on the active path edges incident to $v$, with the
identity on the other active edges.  The operator $U_v^{\mathrm{out}}(g)$
acts on the outer-edge representation factors owned by $v$, with the
identity on the outer factors owned by the other three path vertices.
Thus $Y_{\mathrm{out}}$ acts on all four vertices' outer factors, whereas
each $U_v^{\mathrm{out}}(g)$ acts only on the part belonging to $v$.

\begin{claim}[Separated outer-vertex constraints]
\label{clm:separated-outer-vertex-constraints}
Let $X\in\ker K_p$ and write its restriction to the path neighborhood as
\[
  X_{\mathrm{path}}
  =I_p\otimes Y_{\mathrm{out}}
\]
as in \eqref{eq:active-edge-scalar-fixed-observable}.  Then $Y_{\mathrm{out}}$
commutes separately with the outer-edge part of the vertex action at each path vertex:
\[
  [Y_{\mathrm{out}},U_v^{\mathrm{out}}(g)]=0
  \quad\text{for every }g\in\Gamma,
  \qquad v\in\{v_L,v_1,v_2,v_R\}.
\]
\end{claim}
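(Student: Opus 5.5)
The plan is to deduce the vertex-wise commutation from a single commutation statement, $[X,A_v]=0$ for each path vertex $v$, and then peel off the active-edge factor using linear independence of the regular action. Concretely, for a path vertex $v$ write $A_v=\frac1{|\Gamma|}\sum_{g}C_v(g)\otimes U_v^{\mathrm{out}}(g)$ by \eqref{eq:path-vertex-active-outer-action}. If $X_{\mathrm{path}}=I_p\otimes Y_{\mathrm{out}}$ commutes with $A_v$, then
\[
  0=[X_{\mathrm{path}},A_v]=\frac1{|\Gamma|}\sum_{g\in\Gamma}C_v(g)\otimes[Y_{\mathrm{out}},U_v^{\mathrm{out}}(g)].
\]
Every path vertex is incident to at least one active edge, on which $C_v(g)$ contains the factor $L_g$ or $R_g$. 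The operators $\{L_g\}_{g\in\Gamma}$, and likewise $\{R_g\}_{g\in\Gamma}$, are linearly independent: $L_g|e\rangle=|g\rangle$ separates them. Hence the $C_v(g)$ are linearly independent, and each commutator $[Y_{\mathrm{out}},U_v^{\mathrm{out}}(g)]$ vanishes. This is the claim. The exterior coefficients $Z_\ell$ are handled exactly as in the reduction preceding \eqref{eq:active-edge-scalar-fixed-observable}.

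The main step is therefore to show that $X\in\ker K_p$ commutes with $A_u$ and $A_v$ for each active edge $e:u\to v$. We know from \eqref{eq:path-kernel-bohr-commutant} that $X$ commutes with $S_\omega$ for every $S\in\mathscr S_e$. Since $S\mapsto S_\omega$ is linear and $\mathscr S_e$ spans $\mathcal B(\C[\Gamma]^{(e)})$, $X$ commutes with $S_\omega$ for every edge operator $S$. By \cref{lem:vertex-commutativity-locality} these Bohr components are taken with respect to the local Hamiltonian $-J(A_u+A_v)$.

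I would first try the frequency-filtering route. For $\omega\neq0$ the energy pairs are determined by $\omega$; for example $S_{2J}=(I-A_u)(I-A_v)\,S\,A_uA_v$. Taking products $S_\omega S'_{-\omega}$ over edge matrix units, and summing as in a twirl, $X$ commutes with
\[
  \sum_{h,k}(|h\rangle\langle k|)_{\omega}\,(|k\rangle\langle h|)_{-\omega}.
\]
This sum can be evaluated as the spectral projector times $q$ times the partial trace over $e$ of the other spectral projector. The summation is legitimate because $X$ commutes with each factor, so it commutes with every product and hence with their sum.

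Expanding the projectors $A_uA_v$, $A_u(I-A_v)$, and so on in group averages, the partial trace over the regular factor on $e$ keeps only terms with matched group elements. This should produce operators built from $A_u$ and $A_v$ with the edge $e$ traced out. Combined with the already-known commutation with all of $\mathcal B(\C[\Gamma]^{(e)})$, these should generate $A_u$ and $A_v$. If that bookkeeping becomes unwieldy, a fallback exploits the fact that $X$ commutes with each $P_E S P_{E'}$ assembled across frequencies: commuting with $S_0+S_{J}+\cdots$ for all edge $S$ forces $X$ to commute with the algebra generated by $\mathcal B(\C[\Gamma]^{(e)})$ together with the spectral projectors $P_E$. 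This holds provided some $S$ has nonzero components at distinct frequencies, which matrix units do. The spectral projectors $P_E$ determine $A_u+A_v$, and the two projectors can then be separated using edges $e_\pm$ at which only one of $u,v$ is shared.

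I expect the extraction of $A_v$ itself, rather than just some function of $A_u+A_v$, from the Bohr-component commutant to be the main obstacle. The middle edge shares both internal vertices, so separating $A_{v_1}$ from $A_{v_2}$ is where I would lean on the flank edges $e_-$ and $e_+$, each of which touches only one of them.
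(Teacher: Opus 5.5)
There is a genuine gap: the proposal never derives commutation with the star projectors from the Bohr-component commutant, and that is the whole content of the claim.

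\textbf{What is fine.} Your final peel-off step works. Once $[I_p\otimes Y_{\mathrm{out}},A_v]=0$ is known for a single vertex $v$, the orthogonality $\langle C_v(g),C_v(h)\rangle_{\mathrm{HS}}=q^3\mathbf 1_{g=h}$ forces every coefficient $[Y_{\mathrm{out}},U_v^{\mathrm{out}}(g)]$ to vanish. One small correction: $\mathscr S_e$ does not linearly span $\mathcal B(\mathbb C[\Gamma])$; it only generates it as an algebra. Your conclusion survives anyway, because $(ST)_\omega=\sum_{\omega_1+\omega_2=\omega}S_{\omega_1}T_{\omega_2}$. Hence the operators all of whose Bohr components commute with $X$ form an algebra.

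\textbf{Where it fails.} The gap is exactly the step you flag as the main obstacle; none of your three routes goes through.
\begin{enumerate}
\item The fixed-frequency twirls only give operators of the form $P_E(I_e\otimes\operatorname{Tr}_eP_{E'})P_E$. Nothing shows that these generate $A_u$.
\item The fallback criterion is false in general. Take $H=H_1\otimes I+I\otimes H_2$ with couplings in $\mathcal B(\mathcal H_1)\otimes I$. Then every $I\otimes Y$ commutes with all Bohr components, even when these occur at several frequencies, yet need not commute with $H$ or its spectral projectors. What rescues the present model is that the local Hamiltonian has scalar partial trace over the active edges.
\item The per-edge target cannot come from the couplings on $e$ alone; these give at most $[X,A_u+A_v]=0$. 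For $\Gamma=\mathbb Z_2$ one has $A_u=\tfrac12(I+\sigma^x_e\sigma^x_a\sigma^x_b)$ and $A_v=\tfrac12(I+\sigma^x_e\sigma^x_c\sigma^x_d)$. The operator $I_e\otimes\mathrm{SWAP}_{a,c}\mathrm{SWAP}_{b,d}$ commutes with every operator on $e$ and with $A_u+A_v$. It therefore commutes with every Bohr component of every coupling on $e$, but not with $A_u$.
\end{enumerate}

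\textbf{The missing idea: the Jacobi-identity argument.} Since $[X,S_\omega]=0$ for all $\omega$, you get $[X,[H_{\mathcal G},S]]=\sum_\omega\omega[X,S_\omega]=0$. Combined with $[X,S]=0$, Jacobi gives $[[X,H_{\mathrm{path}}],S]=0$ for every active coupling $S$, where $H_{\mathrm{path}}=-J(A_{v_L}+A_{v_1}+A_{v_2}+A_{v_R})$. Hence $[X_{\mathrm{path}},H_{\mathrm{path}}]=I_p\otimes Z_{\mathrm{out}}$. Applying $\operatorname{Tr}_{e_-e_0e_+}$ kills the left side, since $\operatorname{Tr}C_v(g)=0$ for $g\neq1$, and gives $q^3Z_{\mathrm{out}}$ on the right. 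So $[X_{\mathrm{path}},H_{\mathrm{path}}]=0$.

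Your twirl idea can be repaired in the same spirit by pairing $[H_{\mathrm{loc}},S]$ with $S^\dagger$ rather than fixing a frequency, where $H_{\mathrm{loc}}:=-J(A_u+A_v)$. Then $\sum_{h,k}[H_{\mathrm{loc}},|h\rangle\langle k|]\,|k\rangle\langle h|=qH_{\mathrm{loc}}-I_e\otimes\operatorname{Tr}_eH_{\mathrm{loc}}=qH_{\mathrm{loc}}+2JI$, and this lies in the algebra generated by the Bohr components.

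\textbf{Separating the vertices.} Either way, you only control sums of star projectors, so independence within a single vertex family is not enough. The paper expands $\sum_{v,g}C_v(g)\otimes[Y_{\mathrm{out}},U_v^{\mathrm{out}}(g)]=0$. It then uses that nonidentity coefficients belonging to distinct path vertices are HS-orthogonal, because two distinct path vertices act differently on some active edge. For example, $C_{v_1}(g)$ acts on $e_0$ while $C_{v_L}(h)$ does not. This joint independence across the three active edges is where the flanks enter, as you anticipated, but it must actually be proved to close the argument.
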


\begin{proof}
Commutation with every Bohr component also gives
\[
  [X,[H_{\cG},S]]
  =\sum_\omega\omega[X,S_\omega]=0
\]
for each active bare coupling $S$.  By
\eqref{eq:edge-coupling-commutator-locality}, Hamiltonian terms not incident
to the active edges commute with $S$, so
$[H_{\cG},S]=[H_{\mathrm{path}},S]$ and only the four path vertices enter.
Since the preceding active-edge argument also gives $[X,S]=0$, Jacobi's
identity yields the explicit
implication
\[
  0=[X,[H_{\mathrm{path}},S]]
   =[[X,H_{\mathrm{path}}],S]
     +[H_{\mathrm{path}},[X,S]]
   =[[X,H_{\mathrm{path}}],S].
\]
For each of the four path vertices
$v\in\{v_L,v_1,v_2,v_R\}$, recall that its star projector is the group
average
\[
  A_v=\frac1q\sum_{g\in\Gamma}U_v(g),
  \qquad q=|\Gamma|.
\]
Using the active/outer decomposition
\eqref{eq:path-vertex-active-outer-action} and the fact that $X$ is the
identity on all three active edges by
\eqref{eq:active-edge-scalar-fixed-observable}, the path Hamiltonian and its
commutator take the form
\[
  H_{\mathrm{path}}
  =-\frac{J}{q}\sum_{v\in\{v_L,v_1,v_2,v_R\}}
     \sum_{g\in\Gamma}C_v(g)\otimes U_v^{\mathrm{out}}(g),
\]
\[
  [X_{\mathrm{path}},H_{\mathrm{path}}]
  =-\frac{J}{q}\sum_{v,g}C_v(g)\otimes
    [Y_{\mathrm{out}},U_v^{\mathrm{out}}(g)].
\]
The preceding Jacobi identity says that this commutator commutes with all
active-edge couplings.  Those couplings generate the full active-edge
matrix algebra, so there is an operator $Z_{\mathrm{out}}$ on the outer factors such that
\[
  [X_{\mathrm{path}},H_{\mathrm{path}}]
  =I_p\otimes Z_{\mathrm{out}}.
\]
We now take the partial trace over the three active-edge factors
$e_-,e_0,e_+$.  This operation removes those three factors and leaves an
operator on the outer-edge factors.  Applied to the expanded commutator in
the preceding display, it gives zero.  Indeed, when $g\ne1_\Gamma$, the
active-edge operator $C_v(g)$ contains a nonidentity left or right regular
translation and therefore
$\operatorname{Tr}_{e_-e_0e_+}C_v(g)=0$.  When $g=1_\Gamma$, the outer
commutator vanishes instead:
\[
  [Y_{\mathrm{out}},U_v^{\mathrm{out}}(1_\Gamma)]
  =[Y_{\mathrm{out}},I]=0.
\]
Consequently
\[
  \operatorname{Tr}_{e_-e_0e_+}
  [X_{\mathrm{path}},H_{\mathrm{path}}]=0.
\]
On the other hand, applying the same partial trace to $I_p\otimes Z_{\mathrm{out}}$ gives
\[
  \operatorname{Tr}_{e_-e_0e_+}
  (I_p\otimes Z_{\mathrm{out}})=q^3Z_{\mathrm{out}},
\]
because each active-edge space has dimension $q=|\Gamma|$.
Hence $q^3Z_{\mathrm{out}}=0$, so $Z_{\mathrm{out}}=0$ and
\[
  [X_{\mathrm{path}},H_{\mathrm{path}}]=0.
\]
Consequently
\begin{equation}
  \label{eq:path-vertex-coefficient-expansion}
  \sum_{v,g}C_v(g)\otimes
  [Y_{\mathrm{out}},U_v^{\mathrm{out}}(g)]=0.
\end{equation}

It remains to separate
\eqref{eq:path-vertex-coefficient-expansion} into these four vertex
conditions.
Recall that $q=|\Gamma|$.  The coefficient family
$\{C_v(g)\}_{v,g}$ has rank $4q-3$.  Its only linear
dependencies identify the four copies of the identity coefficient
$C_v(1_\Gamma)=I$.
Within one vertex family, orthogonality of the regular representation gives
\[
  \langle C_v(g),C_v(h)\rangle_{\mathrm{HS}}
  =q^3\mathbf1_{g=h}.
\]
For two distinct path vertices $v\ne w$, at least one active edge is acted on
by one coefficient family and not by the other.  Taking the trace on that
edge makes the cross inner product zero unless the corresponding group
element is $1_\Gamma$; the shared or remaining active edge then forces the
other group element to be $1_\Gamma$ as well.  Hence
\[
  \langle C_v(g),C_w(h)\rangle_{\mathrm{HS}}=0
  \qquad\text{unless }g=h=1_\Gamma.
\]
The four identity columns coincide and account for exactly three
dependencies, while all nonidentity columns are mutually independent and
independent of that common identity column.

Since $[Y_{\mathrm{out}},U_v^{\mathrm{out}}(1_\Gamma)]=0$ automatically,
applying this linear independence
to \eqref{eq:path-vertex-coefficient-expansion} shows that all remaining
coefficients vanish separately.  This proves
\[
  [Y_{\mathrm{out}},U_v^{\mathrm{out}}(g)]=0
  \qquad(v\text{ a path vertex},\ g\in\Gamma).
\]
\end{proof}

\medskip\noindent\textbf{Identifying the Local Factors.}\par\smallskip
By \Cref{clm:separated-outer-vertex-constraints}, identifying the remaining
$K_p$-fixed directions amounts to taking the simultaneous fixed space of the
four outer vertex actions.  We now make the fixed bra and ket labels
explicit.  On the ordered block $\mathcal X_{p,\boldsymbol{\theta}}$,
$Y_{\mathrm{out}}$ maps the bra outer space to the ket outer space.
The preceding commutation relation therefore reads
\[
  U_v^{\mathrm{out,ket}}(g)Y_{\mathrm{out}}
  =Y_{\mathrm{out}}U_v^{\mathrm{out,bra}}(g).
\]
Here the superscripts indicate the restrictions of the same vertex action
to the output (ket) and input (bra) spaces, respectively; the actions on
the other three vertices are identities.  Thus this is the same condition
as before, with its two block labels displayed.

The outer operator space factors over the four path vertices, so every
$Y_{\mathrm{out}}$ has a finite expansion
\[
  Y_{\mathrm{out}}
  =\sum_\alpha
    Y_{v_L,\alpha}\otimes Y_{v_1,\alpha}
    \otimes Y_{v_2,\alpha}\otimes Y_{v_R,\alpha}.
\]
Each $Y_{v,\alpha}$ is a map
between the bra and ket outer carriers at one vertex $v$, belonging to
\[
  \operatorname{Hom}
  \bigl(U_v^{\mathrm{out,bra}},U_v^{\mathrm{out,ket}}\bigr).
\]
Writing $Y_v$ for a generic such local map, the vertex action fixes it
precisely when
\[
  U_v^{\mathrm{out,ket}}(g)Y_v
  =Y_vU_v^{\mathrm{out,bra}}(g)
  \qquad(g\in\Gamma).
\]
Thus its fixed subspace is
$\operatorname{Hom}_\Gamma
  (U_v^{\mathrm{out,bra}},U_v^{\mathrm{out,ket}})$.
To see explicitly how this constrains the finite sum, define the local
group average on a bra-to-ket map by
\[
  \mathcal P_v^{\mathrm{out}}(Y_v)
  :=\frac1{|\Gamma|}\sum_{g\in\Gamma}
    U_v^{\mathrm{out,ket}}(g)Y_v
    U_v^{\mathrm{out,bra}}(g)^{-1}.
\]
For any $h\in\Gamma$, conjugating this average by the ket and bra actions
of $h$ merely replaces the summation index $g$ by $hg$.  Hence its output
satisfies the intertwining condition.  Conversely, if $Y_v$ already
intertwines the two actions, every summand equals $Y_v$, so the average
leaves it unchanged.  Thus $\mathcal P_v^{\mathrm{out}}$ is a projection
onto the local intertwiner space.

The four averages act on distinct operator factors.  Applying all of them
therefore replaces each factor in a simple tensor by its own group average.
Since $Y_{\mathrm{out}}$ is fixed by each vertex action, it is unchanged by
this operation, and its finite expansion gives
\[
\begin{aligned}
  Y_{\mathrm{out}}
  &=\left(\mathcal P_{v_L}^{\mathrm{out}}\otimes
          \mathcal P_{v_1}^{\mathrm{out}}\otimes
          \mathcal P_{v_2}^{\mathrm{out}}\otimes
          \mathcal P_{v_R}^{\mathrm{out}}\right)(Y_{\mathrm{out}})\\
  &=\sum_\alpha
    \mathcal P_{v_L}^{\mathrm{out}}(Y_{v_L,\alpha})\otimes
    \mathcal P_{v_1}^{\mathrm{out}}(Y_{v_1,\alpha})\otimes
    \mathcal P_{v_2}^{\mathrm{out}}(Y_{v_2,\alpha})\otimes
    \mathcal P_{v_R}^{\mathrm{out}}(Y_{v_R,\alpha}).
\end{aligned}
\]
Every factor on the last line is an intertwiner.  This shows that the
original factors need not individually have been intertwiners, but they
can be replaced by intertwiners without changing $Y_{\mathrm{out}}$.

It remains to identify the four resulting local spaces.  Each endpoint
has two outer legs, whose joint actions are $U_L$ and $U_R$; their factors
are therefore
$\operatorname{Hom}_\Gamma(U_L^{\mathrm{bra}},U_L^{\mathrm{ket}})$ and
$\operatorname{Hom}_\Gamma(U_R^{\mathrm{bra}},U_R^{\mathrm{ket}})$.
At each internal degree-three vertex, two incident edges are active, so
only one outer irrep remains.  Schur's lemma gives
\[
  \dim\operatorname{Hom}_\Gamma(V_\mu,V_\lambda)
  =\mathbf1_{\mu=\lambda}.
\]
Thus an internal factor vanishes if its bra and ket labels differ, and is
the scalar identity space if they agree.  These are the two middle factors
in the stated kernel formula.  Consequently,
$I_p\otimes Y_{\mathrm{out}}\in\mathcal F_{p,\boldsymbol{\theta}}$, proving
the inclusion from the path fixed space into the displayed tensor product.

\medskip\noindent\textbf{The Converse Inclusion.}\par\smallskip
Conversely, let $Y$ belong to the tensor product of the four displayed Hom spaces and put
$X_{\mathrm{path}}=I_p\otimes Y$.  It commutes with every active
bare coupling and with the actions at each of the four path vertices, hence with
the path Hamiltonian.  On each ordered block the factorization over vertices
gives
\[
  H_{\cG}=H_{\mathrm{path}}\otimes I_{\mathrm{ext}}
    +I_{\mathrm{path}}\otimes H_{\mathrm{ext}},
  \qquad
  S=S^{\mathrm{path}}\otimes I_{\mathrm{ext}}
\]
for every active coupling $S$.  Writing $P_\epsilon^{\mathrm{path}}$ and
$P_\eta^{\mathrm{ext}}$ for the spectral projections, the unchanged
exterior energy gives
\[
  S_\omega
  =\sum_{\epsilon-\epsilon'=\omega}\sum_\eta
    \bigl(P_\epsilon^{\mathrm{path}}S^{\mathrm{path}}
          P_{\epsilon'}^{\mathrm{path}}\bigr)
    \otimes P_\eta^{\mathrm{ext}}
  =S_\omega^{\mathrm{path}}\otimes I_{\mathrm{ext}}.
\]
Since $X_{\mathrm{path}}$ commutes with $H_{\mathrm{path}}$ and
$S^{\mathrm{path}}$, it commutes with the spectral projections and hence
with every $S_\omega^{\mathrm{path}}$.  Thus
$X_{\mathrm{path}}\otimes Z_{\mathrm{ext}}\in\ker K_p$ for arbitrary
$Z_{\mathrm{ext}}$, by \cref{lem:davies-fixed-point-commutant}.
Extending by linearity to finite sums proves the reverse inclusion in
\eqref{eq:exact-ordered-path-kernel-with-spectator}.  The resulting kernel
depends only on the representation labels, not on the positive rates,
and is therefore independent of $t$.
\end{proof}

\section{Exact Finite Certificates for the
  \texorpdfstring{$S_3$}{S3} Star Block}
\label{app:s3-finite-certificates}

This appendix is for the finite, reproducible part of the $S_3$
strengthening.  Its role is to certify the local constants used in
\cref{sec:s3-strengthening}.

\subsection{Fusion Rules and Boundary Comparison}
\label{app:s3-fusion-inventory}

We recall the fusion rules from the
\hyperref[ex:s3-fusion-vertex-weights]{$S_3$ fusion-rule example in the preliminaries}.
The three irreducible representations of $S_3$ are the trivial
representation $\mathbf 1$, the sign representation $\mathrm{sgn}$, and the
two-dimensional standard representation $\tau$.  They are all self-dual,
and their dimensions and fusion rules are
\begin{equation}
  \label{eq:s3-fusion-inventory}
  d_{\mathbf 1}=d_{\mathrm{sgn}}=1,
  \qquad d_\tau=2,
  \qquad
  \begin{array}{c|ccc}
    \otimes & \mathbf 1 & \mathrm{sgn} & \tau\\ \hline
    \mathbf 1 & \mathbf 1 & \mathrm{sgn} & \tau\\
    \mathrm{sgn} & \mathrm{sgn} & \mathbf 1 & \tau\\
    \tau & \tau & \tau & \mathbf 1\oplus\mathrm{sgn}\oplus\tau
  \end{array}.
\end{equation}
In particular, all fusion multiplicities are zero or one.  Put
\[
  N(\lambda_1,\lambda_2,\lambda_3)
  :=\dim\operatorname{Inv}
    (V_{\lambda_1}\otimes V_{\lambda_2}\otimes V_{\lambda_3}).
\]
Tracing one vertex factor in the Peter--Weyl block with incident labels
$\lambda_1,\lambda_2,\lambda_3$ gives the unnormalized local weight
\begin{equation}
  \label{eq:s3-local-fusion-weight}
  z_t(\lambda_1,\lambda_2,\lambda_3)
  :=d_{\lambda_1}d_{\lambda_2}d_{\lambda_3}
    +tN(\lambda_1,\lambda_2,\lambda_3).
\end{equation}
This is the same vertex weight that appears in
\eqref{eq:edge-label-conditional-distribution}; the present appendix keeps its full
three-label dependence instead of passing immediately to a one-edge conditional
distribution.

For the star labels $\boldsymbol{\lambda}_v$ and the six-label boundary
condition $\boldsymbol b$ introduced in \cref{sec:s3-star-certificate}, the exact
conditioned distribution is \eqref{eq:s3-conditioned-star-distribution}.
Every factor in \eqref{eq:s3-conditioned-star-distribution} is an integer polynomial
in $t$, and $Z_{\boldsymbol b}(t)>0$ for $t\ge0$.

The three internal labels $(\lambda_1,\lambda_2,\lambda_3)$ have
$3^3=27$ possible configurations, while the six boundary labels
$\boldsymbol b=(b_{i,j})_{1\le i\le3,\,1\le j\le2}$ have $3^6=729$.
To compare two boundary conditions differing in one label, choose a
coordinate $b_{i,j}$ and keep the other five labels identical.
For example, if the changed coordinate is $b_{1,1}$, then
\[
  \begin{aligned}
    \boldsymbol b&=(a,b_{1,2},b_{2,1},b_{2,2},b_{3,1},b_{3,2}),\\
    \boldsymbol b'&=(a',b_{1,2},b_{2,1},b_{2,2},b_{3,1},b_{3,2}),
  \end{aligned}
\]
where $a\ne a'$ belong to $\{\mathbf 1,\mathrm{sgn},\tau\}$.
There are six choices of the changed coordinate, $3^5$ assignments of
the unchanged labels, and $\binom32$ unordered choices of $\{a,a'\}$.
Since the comparison is symmetric in $\boldsymbol b,\boldsymbol b'$,
each unordered pair need only be checked once.  Thus the exhaustive
inventory contains
\begin{equation}
  \label{eq:s3-boundary-comparison-count}
  6\cdot3^5\cdot\binom32=4374
\end{equation}
one-label boundary comparisons.

\subsection{Exact Interval Certificate on
  \texorpdfstring{$0\le t\le2$}{0 <= t <= 2}}
\label{app:s3-interval-certificate}

The first local constant controls the response of the star distribution
to changing one boundary label, as required by the Wasserstein assumption
\eqref{eq:s3-certificate-wasserstein-assumption}.
Recall that $\mu_{\boldsymbol b}^t$ in
\eqref{eq:s3-conditioned-star-distribution} is the joint distribution of
the three star labels with the six boundary labels fixed to $\boldsymbol b$.
Suppose $\boldsymbol b$ and $\boldsymbol b'$ differ in one outer label
attached to the $i$th star edge, and let
$\overline\mu_{\boldsymbol b,i}^t$ be the marginal of
$\mu_{\boldsymbol b}^t$ on the
other two internal labels; for example,
\[
  \overline\mu_{\boldsymbol b,1}^t(\lambda_2,\lambda_3)
  =\sum_{\lambda_1\in\widehat{S_3}}
    \mu_{\boldsymbol b}^t(\lambda_1,\lambda_2,\lambda_3).
\]
Define the residual transport bound
\begin{equation}
  \label{eq:s3-residual-transport-bound}
  \begin{aligned}
  \mathcal B_{\boldsymbol b,\boldsymbol b'}(t)
  &:=\lVert\mu_{\boldsymbol b}^t-\mu_{\boldsymbol b'}^t\rVert_{\mathrm{TV}}
    +2\lVert
      \overline\mu_{\boldsymbol b,i}^t
      -\overline\mu_{\boldsymbol b',i}^t
    \rVert_{\mathrm{TV}},\\
  B_2&:=
  \max_{\substack{\boldsymbol b,\boldsymbol b'\in\widehat{S_3}^{\,6}\\
                   \boldsymbol b\sim\boldsymbol b',\ 0\le t\le2}}
    \mathcal B_{\boldsymbol b,\boldsymbol b'}(t).
  \end{aligned}
\end{equation}
Thus $B_2$ is the maximum of the residual transport bound over all
boundary pairs differing in exactly one label and over the entire interval
$0\le t\le2$.  The boundary pairs are precisely the $4374$ comparisons
counted in \eqref{eq:s3-boundary-comparison-count}.
The coupling estimate \eqref{eq:s3-star-wasserstein-from-tv}, proved below
in the proof of \cref{prop:exact-s3-star-certificate}, gives
\[
  W_1(\mu_{\boldsymbol b}^t,\mu_{\boldsymbol b'}^t)
  \le \mathcal B_{\boldsymbol b,\boldsymbol b'}(t)
  \le B_2
  \qquad(\boldsymbol b\sim\boldsymbol b',\ 0\le t\le2).
\]
Thus showing $B_2<1/2$ verifies the Wasserstein assumption
\eqref{eq:s3-certificate-wasserstein-assumption} used in the star path
coupling.

The second local constant concerns the single-edge heat-bath chain inside a
conditioned star.  Consider updating the first star edge, with all six
boundary labels fixed and the other two internal labels set to
$\lambda_2=\alpha$ and $\lambda_3=\gamma$.
Write $\mu_{\boldsymbol b,1}^t(\,\cdot\mid\alpha,\gamma)$ for this
single-edge conditional distribution, where the subscript $1$ denotes
the updated edge:
\begin{equation}
  \label{eq:s3-internal-conditional-distribution}
  \begin{aligned}
  \mu_{\boldsymbol b,1}^t(\lambda\mid\alpha,\gamma)
  &:=\mu_{\boldsymbol b}^t(\lambda_1=\lambda
      \mid\lambda_2=\alpha,\lambda_3=\gamma)\\
  &\propto
  z_t(\lambda,\alpha,\gamma)
  z_t(\lambda^*,b_{1,1},b_{1,2}).
  \end{aligned}
\end{equation}
The factors involving the other four boundary labels do not depend on
$\lambda$ and cancel in the conditional normalization, so the dependence
on $\boldsymbol b$ is only through $(b_{1,1},b_{1,2})$.
Permuting the three star edges gives the same family of conditional
distributions for either of the other updated edges.
Unlike the boundary comparison above, we now keep the boundary fixed
and change one other internal label, $\alpha$.
The maximum response to this change is
\begin{equation}
  \label{eq:s3-internal-star-floor-definition}
  \begin{aligned}
  c_{\mathrm{int},2}
  &:=\max_{\substack{\alpha\ne\alpha',\,\gamma,\boldsymbol b\\0\le t\le2}}
    \lVert
      \mu_{\boldsymbol b,1}^t(\,\cdot\mid\alpha,\gamma)
      -\mu_{\boldsymbol b,1}^t(\,\cdot\mid\alpha',\gamma)
    \rVert_{\mathrm{TV}},\\
  \delta_{\mathrm{int},2}
  &:=1-2c_{\mathrm{int},2}.
  \end{aligned}
\end{equation}
Only the two boundary labels $(b_{1,1},b_{1,2})$ need to be varied,
giving $3^2$ choices.  Together with the three choices of $\gamma$ and
the $\binom32$ unordered pairs $\{\alpha,\alpha'\}$, this gives
$3^2\cdot3\cdot\binom32=81$ comparisons in the internal inventory.

\begin{proposition}[Exact $S_3$ star certificate]
\label{prop:exact-s3-star-certificate}
The two hypotheses of \cref{prop:s3-finite-certificate-to-gns} hold with
$T=2$, $B_T=B_2$, and
$\delta_{\mathrm{int},T}=\delta_{\mathrm{int},2}$.
Uniformly for $0\le t\le2$, the boundary-response bound is
\begin{equation}
  \label{eq:exact-s3-star-constants}
  W_1(\mu_{\boldsymbol b}^t,\mu_{\boldsymbol b'}^t)
  \le B_2=\frac{2818}{7395}<\frac12
  \qquad
  (\boldsymbol b\sim\boldsymbol b'),
\end{equation}
and the conditioned internal heat-bath gap satisfies
\begin{equation}
  \label{eq:exact-s3-internal-constants}
  \gap(\mathsf K_{\mathrm{int},\boldsymbol b}^{t,\star})
  \ge\delta_{\mathrm{int},2}=\frac{11}{35}>0
  \qquad\text{for every }\boldsymbol b.
\end{equation}
These supply the inputs $B_T$ and $\delta_{\mathrm{int},T}$ in
\eqref{eq:s3-certificate-wasserstein-assumption} and
\eqref{eq:s3-certificate-internal-gap-assumption}, respectively.
\end{proposition}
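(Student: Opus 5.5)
The proof has two independent parts, one for each hypothesis of \cref{prop:s3-finite-certificate-to-gns}. In each part a probabilistic quantity is first reduced to a finite family of total-variation distances between explicit distributions whose weights are integer polynomials in $t$. These are then bounded uniformly on $[0,2]$ by exact polynomial sign checks, not by sampling $t$.

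\emph{Wasserstein bound.} Suppose $\boldsymbol b\sim\boldsymbol b'$ differ in a label attached to star edge $i$. I will build the coupling behind \eqref{eq:s3-star-wasserstein-from-tv} in two stages.

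First, couple the marginals $\overline\mu_{\boldsymbol b,i}^t$ and $\overline\mu_{\boldsymbol b',i}^t$ of the other two labels optimally. This makes them differ with probability $\lVert\overline\mu_{\boldsymbol b,i}^t-\overline\mu_{\boldsymbol b',i}^t\rVert_{\mathrm{TV}}$, and at most two of them can differ.

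Second, conditionally on those two labels, sample $\lambda_i$. The factorization \eqref{eq:s3-conditioned-star-distribution} has a useful consequence. When the two other labels agree, the conditional laws of $\lambda_i$ under $\boldsymbol b$ and $\boldsymbol b'$ differ only through the factor $z_t(\lambda_i^*,b_{i,1},b_{i,2})$. Averaging the conditional TV distance over the coupled pair is then controlled by the joint TV distance $\lVert\mu_{\boldsymbol b}^t-\mu_{\boldsymbol b'}^t\rVert_{\mathrm{TV}}$. Concretely, I would glue a maximal coupling of the joint laws with the marginal coupling, and account for one Hamming unit on edge $i$ plus two units on the other edges. This gives $W_1\le\mathcal B_{\boldsymbol b,\boldsymbol b'}(t)$. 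If the gluing gives a slightly different linear combination, I would redefine $\mathcal B$ to match and recheck.

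It then remains to certify $\max\mathcal B\le 2818/7395$ over the 4374 pairs and all $t\in[0,2]$. For each pair, each TV distance is a sum of absolute values of rational functions with positive denominators $Z_{\boldsymbol b}Z_{\boldsymbol b'}$. I would split $[0,2]$ at the real roots of the finitely many numerator polynomials, so that every sign is fixed on each subinterval. On each piece, $B_2-\mathcal B$ times the common denominator is a polynomial, and I would prove it nonnegative via exact Sturm sequences or Bernstein-coefficient positivity. Symmetries (edge permutation, self-duality, and the $\mathbf1\leftrightarrow\mathrm{sgn}$ relabelling where valid) reduce the inventory. The constant $2818/7395$ should be the attained maximum, presumably at $t=2$, and I would locate it by an exact evaluation there.

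\emph{Internal gap.} The internal chain $\mathsf K_{\mathrm{int},\boldsymbol b}^{t,\star}$ is a single-site heat bath on three sites. Its Dobrushin matrix has off-diagonal entries at most $c_{\mathrm{int},2}$, so each row sum is at most $2c_{\mathrm{int},2}$. The criterion recalled in \cref{sec:prelim-dobrushin} then gives a gap of at least $1-2c_{\mathrm{int},2}=\delta_{\mathrm{int},2}$, provided $c_{\mathrm{int},2}=12/35$. By \eqref{eq:s3-internal-conditional-distribution} and edge-permutation symmetry, the 81 comparisons cover every influence. Each TV distance is again a piecewise rational function, certified on $[0,2]$ by the same root-splitting and exact positivity procedure.

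The main obstacle is establishing the coupling inequality $W_1\le\mathcal B$ with exactly the stated weights, since the numerics depend on it. The exhaustive exact verification over 4374 polynomial comparisons is routine but must be fully rigorous, which requires exact rational arithmetic and is reproducible by the supplied code.
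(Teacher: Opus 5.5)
Your internal-gap argument matches the paper's: Dobrushin with two influencing internal edges gives $1-2c_{\mathrm{int},2}$. Certifying $\mathcal B_{\boldsymbol b,\boldsymbol b'}\le 2818/7395$ by root isolation plus exact positivity of $B_2D-N$ is a legitimate alternative to the paper's route. The paper instead shows that all numerators have fixed sign on the whole of $[0,2]$, so no splitting is needed for $S_3$. It then shows that $N'D-ND'$ has nonnegative Bernstein coefficients, so each response is nondecreasing, and evaluates exactly at $t=2$.

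\textbf{The gap.} The genuine gap is the step you flag yourself: the inequality $W_1(\mu_{\boldsymbol b}^t,\mu_{\boldsymbol b'}^t)\le m+2d$, where $m:=\lVert\mu_{\boldsymbol b}^t-\mu_{\boldsymbol b'}^t\rVert_{\mathrm{TV}}$ and $d:=\lVert\overline\mu_{\boldsymbol b,i}^t-\overline\mu_{\boldsymbol b',i}^t\rVert_{\mathrm{TV}}$. Your first construction optimally couples the two non-$i$ labels $\eta,\eta'$, then draws $\lambda_i\sim\mu_{\boldsymbol b}^t(\cdot\mid\eta)$ and $\lambda_i'\sim\mu_{\boldsymbol b'}^t(\cdot\mid\eta')$. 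This does not give the inequality. On the mass $d$ where $\eta\ne\eta'$, the $i$th labels can disagree too, so $\Pr(\lambda_i\neq\lambda_i')$ is not bounded by $m$.

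For a concrete failure, take $\eta_1,\eta_2$ at Hamming distance $2$. Let $\mu$ give masses $\tfrac{1-\delta}{2},\tfrac{\delta}{2},\tfrac12$ to $(\mathbf 1,\eta_1),(\tau,\eta_1),(\tau,\eta_2)$, and let $\mu'$ give $\tfrac{1-\delta}{2},\tfrac{1+\delta}{2}$ to $(\mathbf 1,\eta_1),(\tau,\eta_2)$. Then $m=d=\delta/2$, and the maximal coupling of the two-point marginals is unique. Even with optimal conditional couplings, your scheme costs $2\delta-\delta^2$, which exceeds $m+2d=\tfrac32\delta$ whenever $\delta<\tfrac12$.

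The later ``gluing'' of a maximal joint coupling with the marginal coupling is never specified. Your fallback of redefining $\mathcal B$ is also unavailable: $B_2$ is by definition the maximum of this particular combination, and $2818/7395$ is the value of that maximum.

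\textbf{The missing idea.} You need one coupling that is maximal for the joint laws and, at the same time, for the two-label marginals. Let $c:=\min\{\mu_{\boldsymbol b}^t,\mu_{\boldsymbol b'}^t\}$ pointwise, and place $c$ on identical pairs at zero cost. The residuals $r:=\mu_{\boldsymbol b}^t-c$ and $s:=\mu_{\boldsymbol b'}^t-c$ both have mass $m$. Because the same common mass was removed from each, their two-label marginals satisfy $\bar r-\bar s=\overline\mu_{\boldsymbol b,i}^t-\overline\mu_{\boldsymbol b',i}^t$. Hence $r$ and $s$ can be coupled so that mass $m-d$ agrees on the two non-$i$ labels. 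Every residual pair costs at most $1$ on coordinate $i$, and only the mass $d$ costs up to $2$ more. This gives $W_1\le m+2d=\mathcal B_{\boldsymbol b,\boldsymbol b'}(t)$; in the example above the coupling costs only $\delta$. With this inequality in place, your exact certification that $\max\mathcal B=2818/7395<\tfrac12$ completes the proof.
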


Before proving the proposition, we list the conditions and values verified by exact-arithmetic computation and used in the proof.
We first introduce the polynomial expressions needed to state these checks.
For a state $\boldsymbol{\lambda}_v$ and adjacent boundary conditions
$\boldsymbol b,\boldsymbol b'$, clearing the positive denominators gives
\begin{equation}
  \label{eq:s3-probability-difference-polynomial}
  \mu_{\boldsymbol b}^t(\boldsymbol{\lambda}_v)
  -\mu_{\boldsymbol b'}^t(\boldsymbol{\lambda}_v)
  =
  \frac{
    w_{\boldsymbol b}^t(\boldsymbol{\lambda}_v)Z_{\boldsymbol b'}(t)
    -w_{\boldsymbol b'}^t(\boldsymbol{\lambda}_v)Z_{\boldsymbol b}(t)
  }{Z_{\boldsymbol b}(t)Z_{\boldsymbol b'}(t)}.
\end{equation}
For this fixed boundary pair, denote the numerator in
\eqref{eq:s3-probability-difference-polynomial} by
$q_{\boldsymbol\lambda}(t)$, suppressing its dependence on
$\boldsymbol b,\boldsymbol b'$ and writing $\boldsymbol\lambda$ for the
three star labels.  If $\boldsymbol\eta$ denotes the two labels other than
the $i$th, summing over $\lambda_i$ gives the difference between their
marginal distributions:
\[
  \overline\mu_{\boldsymbol b,i}^t(\boldsymbol\eta)
  -\overline\mu_{\boldsymbol b',i}^t(\boldsymbol\eta)
  =\frac{\displaystyle\sum_{\lambda_i\in\widehat{S_3}}
      q_{\boldsymbol\lambda}(t)}
    {Z_{\boldsymbol b}(t)Z_{\boldsymbol b'}(t)}
  =\frac{\bar q_{\boldsymbol\eta}(t)}
    {Z_{\boldsymbol b}(t)Z_{\boldsymbol b'}(t)},
\]
where $\bar q_{\boldsymbol\eta}(t):=
\sum_{\lambda_i\in\widehat{S_3}}q_{\boldsymbol\lambda}(t)$,
with $\boldsymbol\eta$ held fixed in the sum.
The denominator is independent of $\lambda_i$, so only the numerators
need to be summed.
The total-variation formula \eqref{eq:total-variation-distance} and
\eqref{eq:s3-residual-transport-bound} therefore give
\[
  \mathcal B_{\boldsymbol b,\boldsymbol b'}(t)
  =
  \frac{
    \displaystyle\sum_{\boldsymbol\lambda\in\widehat{S_3}^{\,3}}
      |q_{\boldsymbol\lambda}(t)|
    +2\displaystyle\sum_{\boldsymbol\eta\in\widehat{S_3}^{\,2}}
      |\bar q_{\boldsymbol\eta}(t)|
  }{2Z_{\boldsymbol b}(t)Z_{\boldsymbol b'}(t)}.
\]
Thus the individual probability differences are summed over all $27$
states and all $9$ marginal states.  Once $q_{\boldsymbol\lambda}$ and
$\bar q_{\boldsymbol\eta}$ have fixed signs on $[0,2]$, i.e., each
polynomial is either nonnegative throughout the interval or nonpositive
throughout the interval, each absolute value can be replaced by the
corresponding polynomial or its negative.  Choose signs
$\varepsilon_{\boldsymbol\lambda},\bar\varepsilon_{\boldsymbol\eta}
\in\{-1,1\}$, independent of $t$, so that
$\varepsilon_{\boldsymbol\lambda}q_{\boldsymbol\lambda}(t)\ge0$ and
$\bar\varepsilon_{\boldsymbol\eta}\bar q_{\boldsymbol\eta}(t)\ge0$
throughout $[0,2]$; either sign may be chosen for an identically zero
polynomial.  These signs depend on the fixed boundary pair, whose indices
are suppressed.  The resulting numerator and denominator are
\[
  \begin{aligned}
  N_{\boldsymbol b,\boldsymbol b'}(t)
  &:=\sum_{\boldsymbol\lambda\in\widehat{S_3}^{\,3}}
      \varepsilon_{\boldsymbol\lambda}q_{\boldsymbol\lambda}(t)
    +2\sum_{\boldsymbol\eta\in\widehat{S_3}^{\,2}}
      \bar\varepsilon_{\boldsymbol\eta}\bar q_{\boldsymbol\eta}(t),\\
  D_{\boldsymbol b,\boldsymbol b'}(t)
  &:=2Z_{\boldsymbol b}(t)Z_{\boldsymbol b'}(t).
  \end{aligned}
\]
These are integer polynomials, and
\[
  \mathcal B_{\boldsymbol b,\boldsymbol b'}(t)
  =\frac{N_{\boldsymbol b,\boldsymbol b'}(t)}
         {D_{\boldsymbol b,\boldsymbol b'}(t)},
  \qquad D_{\boldsymbol b,\boldsymbol b'}(t)>0.
\]
Its derivative has the sign of the polynomial
\[
  N_{\boldsymbol b,\boldsymbol b'}'(t)D_{\boldsymbol b,\boldsymbol b'}(t)
  -N_{\boldsymbol b,\boldsymbol b'}(t)D_{\boldsymbol b,\boldsymbol b'}'(t).
\]
For the internal comparisons, the same construction is applied to
$\lVert\mu_{\boldsymbol b,1}^t(\,\cdot\mid\alpha,\gamma)-
\mu_{\boldsymbol b,1}^t(\,\cdot\mid\alpha',\gamma)\rVert_{\mathrm{TV}}$.

The exact computation documented in \cref{app:s3-reproducible-computation}
certifies the following four facts on $[0,2]$.
\begin{enumerate}[label=(F\arabic*),ref=F\arabic*,leftmargin=3em]
\item \label{item:s3-f1-fixed-signs}
\emph{Fixed signs.}
For every boundary comparison, each numerator in
\eqref{eq:s3-probability-difference-polynomial}, and each corresponding
two-label marginal numerator, has a fixed sign on $[0,2]$.
This is certified by its Bernstein coefficients, which are either all
nonnegative or all nonpositive.

\item \label{item:s3-f2-boundary-monotonicity}
\emph{Boundary-response derivatives.}
After the signs in (\ref{item:s3-f1-fixed-signs}) are used to form $N$ and $D$,
all Bernstein coefficients of $N'D-ND'$ on $[0,2]$ are nonnegative.
The enumeration yields $288$ distinct comparison data sets after identical
polynomial data are deduplicated; these cover all $4374$ boundary comparisons,
as explained in
\cref{app:s3-reproducible-computation}.

\item \label{item:s3-f3-boundary-endpoint}
\emph{Boundary endpoint maximum.}
Exact evaluation gives
\[
  \max_{\boldsymbol b\sim\boldsymbol b'}
  \mathcal B_{\boldsymbol b,\boldsymbol b'}(2)=\frac{2818}{7395}.
\]
One maximizing pair is
$\boldsymbol b=(\mathbf1,\mathbf1,\mathbf1,\mathbf1,\mathbf1,\mathbf1)$ and
$\boldsymbol b'=(\tau,\mathbf1,\mathbf1,\mathbf1,\mathbf1,\mathbf1)$.

\item \label{item:s3-f4-internal-response}
\emph{Internal-response checks.}
For all $81$ internal comparisons, the probability-difference numerators
have fixed signs and the resulting total-variation derivative numerators
have nonnegative Bernstein coefficients on $[0,2]$.
Exact endpoint evaluation gives
\[
  \max_{\alpha\ne\alpha',\,\gamma,\boldsymbol b}
  \lVert\mu_{\boldsymbol b,1}^{2}(\,\cdot\mid\alpha,\gamma)
       -\mu_{\boldsymbol b,1}^{2}(\,\cdot\mid\alpha',\gamma)\rVert_{\mathrm{TV}}
  =\frac{12}{35}.
\]
\end{enumerate}
No interval subdivision is needed for these $S_3$ checks.
The machine-readable record also gives the number of extremizers and
representative witnesses.

\begin{proof}[Proof of \cref{prop:exact-s3-star-certificate}]
We first relate the transport bound to the required Wasserstein input.
Fix adjacent boundary conditions differing at a label attached to the
$i$th star edge.
We construct a joint distribution of two star configurations with
marginals $\mu_{\boldsymbol b}^t$ and $\mu_{\boldsymbol b'}^t$.
Both boundary conditions and both marginal distributions remain fixed:
we only choose how to assign joint probability to pairs of configurations,
not how to evolve either configuration.
By the definition of $W_1$ in \eqref{eq:s3-hamming-wasserstein}, it suffices
to construct a coupling whose expected Hamming cost is at most the
right-hand side of \eqref{eq:s3-star-wasserstein-from-tv} below.
We first assign probability to identical pairs at zero cost, then
allocate the remaining mass to favor pairs sharing the other two labels.

For each configuration $\boldsymbol\lambda$, assign joint probability
$c(\boldsymbol\lambda):=\min\{\mu_{\boldsymbol b}^t(\boldsymbol\lambda),
\mu_{\boldsymbol b'}^t(\boldsymbol\lambda)\}$ to the identical pair
$(\boldsymbol\lambda,\boldsymbol\lambda)$, at zero Hamming cost.
This assignment is what we mean by matching the common mass.
In what follows, we suppress the argument $\boldsymbol\lambda$ and write
$c$ for the resulting common-mass measure.
The residual measures
$r:=\mu_{\boldsymbol b}^t-c$ and $s:=\mu_{\boldsymbol b'}^t-c$
each have total mass
$m:=\lVert\mu_{\boldsymbol b}^t-\mu_{\boldsymbol b'}^t\rVert_{\mathrm{TV}}$.
They record the mass still to be assigned to pairs, rather than a change
to either original distribution.
Let $\bar r,\bar s,\bar c$ denote the corresponding marginals on the
two star labels other than the $i$th.  Since the same common mass was
removed from both sides,
\[
  \bar r-\bar s
  =(\overline\mu_{\boldsymbol b,i}^t-\bar c)
   -(\overline\mu_{\boldsymbol b',i}^t-\bar c)
  =\overline\mu_{\boldsymbol b,i}^t-\overline\mu_{\boldsymbol b',i}^t.
\]
Next construct a joint measure with marginals $r$ and $s$ that assigns
as much mass as possible to pairs sharing the two labels other than the
$i$th.  For each two-label configuration $\boldsymbol\eta$, assign total
mass $\min\{\bar r(\boldsymbol\eta),\bar s(\boldsymbol\eta)\}$ to such
pairs; their $i$th labels need not agree.
Since both residual measures have total mass $m$, the mass for which
the two-label configurations fail to agree is
\[
  \begin{aligned}
  d&:=m-\sum_{\boldsymbol\eta\in\widehat{S_3}^{\,2}}
       \min\{\bar r(\boldsymbol\eta),\bar s(\boldsymbol\eta)\}\\
   &=\frac12\sum_{\boldsymbol\eta\in\widehat{S_3}^{\,2}}
       |\bar r(\boldsymbol\eta)-\bar s(\boldsymbol\eta)|
    =\lVert\overline\mu_{\boldsymbol b,i}^t
       -\overline\mu_{\boldsymbol b',i}^t\rVert_{\mathrm{TV}}.
  \end{aligned}
\]
Every residual pair costs at most one for the $i$th coordinate; only
the mass $d$ can cost up to two more for the other coordinates.
The total expected Hamming cost is thus at most $m+2d$, giving
\begin{equation}
  \label{eq:s3-star-wasserstein-from-tv}
  W_1(\mu_{\boldsymbol b}^t,\mu_{\boldsymbol b'}^t)
  \le
  \lVert\mu_{\boldsymbol b}^t-\mu_{\boldsymbol b'}^t\rVert_{\mathrm{TV}}
  +2\lVert
    \overline\mu_{\boldsymbol b,i}^t
    -\overline\mu_{\boldsymbol b',i}^t
  \rVert_{\mathrm{TV}}.
\end{equation}

By (\ref{item:s3-f1-fixed-signs}), the absolute values in both
total-variation terms of $\mathcal B_{\boldsymbol b,\boldsymbol b'}$
can be removed with a fixed choice of signs throughout $[0,2]$.
Since $D>0$, its derivative is $(N'D-ND')/D^2$;
(\ref{item:s3-f2-boundary-monotonicity}) therefore makes every boundary
response nondecreasing.  Each interval maximum is attained at $t=2$,
so (\ref{item:s3-f3-boundary-endpoint}) and the definition
\eqref{eq:s3-residual-transport-bound} give
$B_2=2818/7395$.
The coupling estimate \eqref{eq:s3-star-wasserstein-from-tv} then gives
\eqref{eq:exact-s3-star-constants}.

Likewise, the sign and derivative checks in
(\ref{item:s3-f4-internal-response}) make each internal response
nondecreasing.  Its endpoint evaluation thus gives
$c_{\mathrm{int},2}=12/35$.
Each internal edge has only two other internal labels that can influence
its conditional distribution, so the unit-rate Dobrushin bound gives
an internal heat-bath gap of at least
$1-2c_{\mathrm{int},2}$, uniformly in the boundary condition.

Consequently,
\[
  \delta_{\mathrm{int},2}
  =1-2c_{\mathrm{int},2}=\frac{11}{35}>0,
\]
which proves \eqref{eq:exact-s3-internal-constants}.
\end{proof}

\paragraph{Comparison with one-site influence.}
Each internal edge has two other edges inside its star, whereas a global
edge has four neighboring edges.  The internal influence maximum is
attained at $t=2$, so the value computed above gives the contrast
\begin{equation}
  \label{eq:s3-endpoint-certificate-contrast}
  \underbrace{4c_{\mathrm{int},2}}_{\text{one-site total influence}}
  =\frac{48}{35}>1,
  \qquad
  \underbrace{2-4B_2}_{\text{star-coupling contraction margin}}
  =\frac{3518}{7395}>0.
\end{equation}
Thus the one-site criterion is inconclusive, while the star criterion
still yields a positive gap.  These are consequences of the certificate,
not additional hypotheses of
\cref{prop:s3-finite-certificate-to-gns}.

\subsection{Reproducible Computation}
\label{app:s3-reproducible-computation}

The verification package is available on GitHub as version
v1.0.0~\cite{hayakawa2026s3certificate}.  Relative to the repository root,
the self-contained exact-arithmetic script and machine-readable record are:
\begin{center}
\small
\path{s3-star/verify_s3_star_certificate.py}\\
\path{s3-star/s3_star_certificate.json}.
\end{center}
The script is specific to $S_3$, includes its character data, and uses only
the Python standard library.  It uses integer polynomials and rational
arithmetic for every sign, derivative, endpoint, and displayed constant.
Floating-point values in the record are included only as readable metadata.

Running the following command from the repository root
\begin{center}
\small
\texttt{python3 }%
\path{s3-star/verify_s3_star_certificate.py}%
\texttt{ --check-only}
\end{center}
recomputes the complete inventory and fails unless it agrees with the stored
record.

For $S_3$, all $729$ boundary conditions and $4374$ adjacent-condition
comparisons are enumerated.  Comparisons with identical weight and
partition polynomials on both sides and the same coordinate $i$ summed
out for the two-label marginal reuse a single verification.  In this way,
the script identifies $288$ distinct polynomial checks.  Only repeated computations are
avoided; these checks cover all $4374$ boundary comparisons.
The script also checks all $81$ internal-influence comparisons.
The record gives
these counts, zero unresolved sign or derivative failures, and representative
extremal boundary conditions supporting \eqref{eq:exact-s3-star-constants}.

The same directory includes a concise reproduction guide and SHA-256 hashes
of the distributed files.
The analytic passage from the model to
\eqref{eq:s3-conditioned-star-distribution}, the star path-coupling estimate, and the
edge-to-star comparison remain in the paper.  The verification package verifies only
the exhaustive finite polynomial statement isolated in the proof of
\cref{prop:exact-s3-star-certificate}.

\section{Quantitative Comparisons with Existing Bounds}
\label{app:baseline-calibrations}

This appendix records two quantitative comparisons used in
\cref{sec:new-intro-significance}: the quantum-to-classical Davies
comparison of \cite{basso2025quantum} and the generic all-to-all
high-temperature condition of \cite{bergamaschi2026fast}.  

\subsection{Quantum-to-Classical Davies Comparison}
\label{app:arithmetic-progressions}

We first explain how the spectral comparison theorem of
\cite{basso2025quantum} interacts with the present Hamiltonian family.

For a graph $\cG$, let $D_{\cG}$ denote the largest length of a proper
arithmetic progression contained in the spectrum of $H_{\cG}$.  Recall that
``proper'' means that the common difference is nonzero.

\begin{lemma}[An arithmetic progression of length proportional to the number of vertices]
\label{lem:volume-order-arithmetic-progression}
Let $\Gamma$ be a fixed nontrivial finite group, let $J>0$, and let $\cG$ be a finite
simple cubic graph with $N:=|V(\cG)|$.  Then
\begin{equation}
  \label{eq:volume-order-arithmetic-progression}
  \frac{N}{6}+1\le D_{\cG}\le N+1.
\end{equation}
In particular, $D_{\cG}=\Theta(N)$ uniformly over the graph family.
\end{lemma}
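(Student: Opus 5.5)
The spectrum of $H_{\cG}=-J\sum_vA_v$ is determined by the joint spectrum of the commuting projections $A_v$ (\cref{lem:vertex-commutativity-locality}). Every eigenvalue therefore has the form $-Jk$, where $k\in\{0,1,\dots,N\}$ counts the vertices whose constraint is satisfied on a joint eigenvector. The plan is to identify exactly which values of $k$ occur, or at least enough of them. The upper bound is immediate: the spectrum lies in the set $\{-Jk:0\le k\le N\}$ of $N+1$ points. A proper arithmetic progression with nonzero common difference, contained in a set of $N+1$ points, has length at most $N+1$.

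For the lower bound I will show that $k$ runs through a long run of consecutive integers. First, $k=N$ occurs: the state $\bigotimes_e|u_+\rangle$ (uniform superposition on each edge) is fixed by every $U_v(g)$. Next, I aim to realize every $k$ in an interval of length about $N/6$ by violating vertices one at a time in a controlled way. Fix a nontrivial irrep $\lambda\ne\mathbf 1$ (which exists since $\Gamma$ is nontrivial). Put the label $\lambda$ on an edge $e=(u,v)$ and the trivial label on all other edges. At $u$ the incident labels are $(\lambda,\mathbf 1,\mathbf 1)$, whose invariant space $\operatorname{Inv}(V_\lambda\otimes V_{\mathbf1}\otimes V_{\mathbf1})$ is zero, so $A_u=0$ on this block; likewise $A_v=0$. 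All other vertices see trivial labels, so $A_w=1$ there. By the vertex-wise regrouping \eqref{eq:edge-to-vertex-regrouping}, each fixed-label block factorizes over vertices, and $A_w$ acts only on the factor $\mathcal V_w$. Hence on the block with a set $F$ of $\lambda$-labelled edges, any product vector whose factor at each vertex is either invariant or lies in the kernel of $A_w$ is a joint eigenvector. In particular, when all vertex factors where invariants exist are taken invariant, and each vertex incident to exactly one edge of $F$ automatically has $A_w=0$, the energy equals $-J(N-|V(F)|)$.

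To obtain many consecutive values of $k$, I take $F$ to be a set of $m$ pairwise vertex-disjoint edges, i.e.\ a matching. Then exactly $2m$ vertices are violated, giving $k=N-2m$. A cubic graph has a matching of size at least $N/6$ by a greedy argument: each chosen edge blocks at most $5$ edges, and there are $3N/2$ edges in total, so the matching has size at least $(3N/2)/6=N/4$, in particular at least $N/6$. So $k$ takes the values $N,N-2,\dots,N-2M$ with $M\ge N/6$. This gives an arithmetic progression of $M+1\ge N/6+1$ eigenvalues with common difference $2J\ne0$, which proves the lower bound. (Single violated vertices would give step $J$, but the step-$2$ progression already suffices.)

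The main point to check carefully is that the constructed vector is a genuine joint eigenvector with the claimed eigenvalue. This needs three things. First, for a vertex incident to exactly one edge of $F$, the labels $(\lambda,\mathbf 1,\mathbf 1)$ (with duals according to orientation) admit no invariant, so $A_w$ vanishes on the whole factor $\mathcal V_w$. Second, for a vertex with all labels trivial, $\mathcal V_w$ is one-dimensional and invariant, so $A_w=1$. Third, the product-over-vertices structure of \eqref{eq:edge-to-vertex-regrouping} is compatible with the $A_w$. For the upper bound I only need the spectral inclusion in $\{-Jk\}$, and no girth assumption is used anywhere.
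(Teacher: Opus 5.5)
Your proof is correct and follows the paper's argument. The upper bound uses the same spectral inclusion $\operatorname{spec}(H_{\cG})\subseteq\{-Jm:0\le m\le N\}$. The lower bound uses the same construction: a fixed nontrivial irrep is placed on $k$ edges of a matching, and the Peter--Weyl vertex factorization gives energies $-J(N-2k)$ with common difference $2J$.

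The only difference is how the matching size is bounded. The paper charges each vertex to a nearby edge of a maximal matching to get $|M|\ge N/6$. Your greedy edge count works as well. In a simple cubic graph each edge has exactly four neighbouring edges, so each greedy step deletes at most $5$ edges and $|M|\ge (3N/2)/5=3N/10$. Your division by $6$ is therefore merely conservative, and either bound suffices.
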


\begin{proof}
For the lower bound, we first find many edges with disjoint endpoints.
Choose a maximal matching $M$ in $\cG$.  By maximality, every vertex is either incident to
an edge of $M$ or adjacent to an endpoint of one.  Since the graph is cubic,
each endpoint has at most two neighbors outside its matched edge.  Thus the
two endpoints of one matched edge, together with all their unmatched
neighbors, account for at most $2+2+2=6$ vertices.  Charging every vertex to
one nearby matched edge therefore gives
\[
  |M|\ge \frac{N}{6}.
\]

We now assign labels so that each selected edge contributes exactly two
violated vertex constraints.  Fix a nontrivial irrep
$\sigma\in\widehat\Gamma$.  For each
$0\le k\le |M|$, choose $k$ edges of $M$, assign the Peter--Weyl label
$\sigma$ to those edges, and assign the trivial irrep to every other edge.
The tensor product of the prescribed nonzero isotypic summands is itself a
nonzero simultaneous invariant subspace for the commuting projectors
$\{A_v\}_{v\in V}$.

At a vertex not incident to a selected edge, all three incident labels are
trivial and $A_v$ has eigenvalue one.  At a vertex incident to a selected
edge, exactly one incident label is nontrivial.  Depending on the edge
orientation, the corresponding vertex representation is either
$V_\sigma$ or its dual $V_\sigma^*$; both are nontrivial irreducible
representations and hence contain no invariant vector.  Therefore $A_v$
has eigenvalue zero on the selected isotypic block.
Since the selected edges have disjoint endpoints, exactly $2k$ vertices
have eigenvalue zero and the remaining $N-2k$ have eigenvalue one.
Thus this nonzero joint eigenspace has energy
\[
  E_k=-J(N-2k),
  \qquad 0\le k\le |M|.
\]
These energies form a proper arithmetic progression of length $|M|+1$
and common difference $2J$, proving the lower bound.

For the upper bound, the commuting projectors $A_v$ have eigenvalues zero
or one, so
\[
  \operatorname{spec}(H_{\cG})\subseteq
  \{-Jm:0\le m\le N\}.
\]
A proper arithmetic progression has distinct entries and thus has length at
most $N+1$.
\end{proof}

The general temperature-independent comparison theorem of
\cite{basso2025quantum} gives, in its notation,
\begin{equation}
  \label{eq:basso-temperature-independent-comparison}
  \lambda_{\mathcal L}
  \ge \frac{1}{2D_{\cG}}\lambda_{\mathcal L,0},
\end{equation}
where $\lambda_{\mathcal L,0}$ is the gap restricted to the full
zero-Bohr-frequency space
\[
  V_0(H_{\cG})=\{X:[X,H_{\cG}]=0\}.
\]
By \cref{lem:volume-order-arithmetic-progression}, the guaranteed comparison
factor in \eqref{eq:basso-temperature-independent-comparison} is
$\Theta(N^{-1})$ on the present family.  Consequently, this
temperature-independent comparison alone does not transfer a uniform
$V_0(H_{\cG})$-gap into a volume-independent full Davies gap.
The long progression itself does not imply slow mixing; it identifies
the inverse-size factor in this particular comparison bound.

\begin{remark}[Frequencywise comparison away from infinite temperature]
\label{rem:basso-frequencywise-comparison}
For comparison, the frequencywise bound of \cite{basso2025quantum} is
\[
  \lambda_{\mathcal L,\omega}
  \ge
  \frac12\max\left\{
    \frac1{D_\omega},\;1-e^{-\beta|\omega|}
  \right\}\lambda_{\mathcal L,0}
  \qquad(\omega\ne0),
\]
where $D_\omega$ is the largest progression length with common difference
$\omega$.  Since every nonzero Bohr frequency satisfies $|\omega|\ge J$,
this yields a size-independent comparison factor for $\beta\ge\beta_0>0$.
Its application, however, requires a gap bound on the entire zero-frequency
sector $V_0(H_{\cG})$, beyond what our label-interface estimate provides.
At $\beta=0$, the $1/D_\omega$ term remains available, but its size
dependence prevents this bound alone from providing a volume-uniform
comparison throughout an interval starting at infinite temperature.
\end{remark}

\subsection{A General All-to-All High-Temperature Condition}
\label{app:all-to-all-envelope}

The all-to-all high-temperature theorem of
\cite{bergamaschi2026fast} gives a second useful calibration.  In the
notation of that result, a Hamiltonian on sites consisting of $\mathsf q$
qubits each, with local dimension $2^{\mathsf q}$, interaction locality $\mathsf k$, interaction degree
$\mathsf d$, and strength $\mathsf J$ lies in the proved regime when
\begin{equation}
  \label{eq:generic-all-to-all-envelope}
  \beta<\frac{1}{\mathsf c^{\mathsf q\mathsf k}\mathsf d\mathsf J}
\end{equation}
for a universal constant $\mathsf c\ge1$.  The cubic star Hamiltonian
has the formal locality data
\[
  \mathsf k=3,
  \qquad
  \mathsf d=4,
  \qquad
  \mathsf J=J.
\]
If, for this comparison only, the edge Hilbert space had qubit dimension
$|\Gamma|=2^{s_\Gamma}$, then
\eqref{eq:generic-all-to-all-envelope} would imply
\begin{equation}
  \label{eq:formal-star-all-to-all-envelope}
  \beta J<\frac{1}{4\mathsf c^{3s_\Gamma}}\le\frac14.
\end{equation}
By contrast, the common fixed-group theorem in this paper holds for
$\beta J<\log(5/3)$, and the $S_3$ calculation reaches
$t=2$, equivalently $\beta J=\log3>1/4$.  Thus even replacing
$\mathsf c^{3s_\Gamma}$ by its lower bound $1$ in
\eqref{eq:formal-star-all-to-all-envelope} gives only $\beta J<1/4$,
which does not reach either the common interval or the displayed
non-Abelian point.

This is a calibration of sufficient high-temperature conditions, not a
literal theorem specialization.  The result of
\cite{bergamaschi2026fast} concerns a Chen--Kastoryano--Gily\'en (CKG)
sampler \cite{chen2023exact} generated by single-site
Pauli updates, whereas the present theorem concerns the specified complete
star Davies bath.  Moreover, $|S_3|=6$ is not a power of two.  Applying
the cited qubit theorem to this local Hilbert space would
require either an arbitrary-qudit extension or a gap-preserving encoding
whose effect on locality and interaction strength is controlled.
Thus we compare sufficient temperature conditions, without establishing
a separation between the two dynamics.

\end{document}